\documentclass[
onecolumn, aps, pra,
tightenlines,
11pt,
longbibliography,
showpacs,
nofootinbib,
notitlepage,
superscriptaddress]{revtex4-2}

\usepackage{subfiles}
\usepackage{amsmath, amssymb, amsthm, mathtools}
\usepackage{braket}
\usepackage{booktabs}
\usepackage{graphicx}
\usepackage{quantikz}
\usetikzlibrary{arrows.meta, positioning}
\usepackage{subcaption}
\usepackage{ragged2e}
\usepackage{enumitem}

\makeatletter
\long\def\revtex@flush@makecaption#1#2{%
  \par
  \vskip\abovecaptionskip
  \begingroup
    \footnotesize\rmfamily
    \samepage
    \flushing
    \let\footnote\@footnotemark@gobble
    \@make@capt@title{#1}{#2}\par
  \endgroup
  \vskip\belowcaptionskip
}%
\AtBeginDocument{\let\@makecaption\revtex@flush@makecaption}%

\makeatother

\newcommand{\eps}{\varepsilon}
\newcommand{\osc}{\operatorname{osc}}
\newcommand{\Ncl}{N_{\mathrm{cl}}}
\newcommand{\Nq}{N_{\mathrm{q}}}
\newcommand{\R}{\mathbb{R}}

\newcommand{\T}{\mathbb{T}}
\newcommand{\Z}{\mathbb{Z}}

\newcommand{\vol}{\operatorname{vol}}

\newcommand{\onenorm}[1]{\Vert #1\Vert_1}
\newcommand{\N}{\mathbb{N}}

\newcommand{\Om}{\Omega_0}

\providecommand{\phantomsection}{}

\usepackage[colorlinks, allcolors=black]{hyperref}

\usepackage{setspace}
\usepackage{float}
\makeatletter
\let\float@end\float@end@ltx
\let\newfloat\newfloat@ltx
\makeatother

\usepackage{algorithm, algcompatible}
\usepackage{algpseudocode}

\allowdisplaybreaks


\theoremstyle{plain}
\newtheorem{theorem}{Theorem}[section]
\newtheorem{lemma}[theorem]{Lemma}
\newtheorem{corollary}[theorem]{Corollary}

\newtheorem{definition}[theorem]{Definition}

\newtheorem{proposition}[theorem]{Proposition}

\usepackage[capitalize, nameinlink]{cleveref}
\Crefname{prop}{Proposition}{Propositions}
\Crefname{assmpt}{Assumption}{Assumptions}
\Crefname{alg-line}{Line}{Lines}
\crefname{alg}{Algorithm}{Algorithms}
\Crefname{cor}{Corollary}{Corollaries}
\Crefname{appendix}{Appendix}{Appendices}
\Crefname{lem}{Lemma}{Lemmas}
\Crefname{subsection}{Subsection}{Subsections}

\DeclareMathOperator{\polylog}{polylog}

\DeclareMathOperator{\diag}{diag}

\renewcommand{\vec}[1]{\overrightarrow{#1}}
\renewcommand{\epsilon}{\varepsilon}
\renewcommand{\tilde}[1]{\widetilde{#1}}
\renewcommand{\hat}[1]{\widehat{#1}}

\newcommand{\norm}[1]{\left\lVert#1\right\rVert}

\mathcode`l="8000
\begingroup
\makeatletter
\lccode`\~=`\l
\DeclareMathSymbol{\lsb@l}{\mathalpha}{letters}{`l}
\lowercase{\gdef~{\ifnum\the\mathgroup=\m@ne  \ell \else \lsb@l \fi}}%
\endgroup

\begin{document}

\title{Provable Quantum--Classical Separation for Continuous Gibbs Sampling}

\author{Enrico~Olivucci}
\altaffiliation{Equal contribution authors.}
\affiliation{Irréversible Inc., Sherbrooke, Québec, Canada}

\author{Mariia~Sobchuk}
\altaffiliation{Equal contribution authors.}
\affiliation{Institute for Quantum Computing, University of Waterloo, Waterloo, Ontario N2L 3G1, Canada}
\affiliation{Department of Physics and Astronomy, University of Waterloo, Waterloo, Ontario N2L 3G1, Canada}

\author{Sehmimul~Hoque}
\altaffiliation{Equal contribution authors.}
\affiliation{Institute for Quantum Computing, University of Waterloo, Waterloo, Ontario N2L 3G1, Canada}
\affiliation{Department of Physics and Astronomy, University of Waterloo, Waterloo, Ontario N2L 3G1, Canada}
\affiliation{Perimeter Institute for Theoretical Physics, Waterloo, Ontario N2L 2Y5, Canada}

\author{Jeffrey~Hnybida}
\affiliation{Irréversible Inc., Sherbrooke, Québec, Canada}

\author{Kyungho W.~Kim}
\affiliation{Institute for Quantum Computing, University of Waterloo, Waterloo, Ontario N2L 3G1, Canada}
\affiliation{Department of Physics and Astronomy, University of Waterloo, Waterloo, Ontario N2L 3G1, Canada}

\author{Ala~Shayeghi}
\affiliation{Institute for Quantum Computing, University of Waterloo, Waterloo, Ontario N2L 3G1, Canada}
\affiliation{National Research Council Canada, Waterloo, Ontario, N2L 3G1, Canada}

\author{Pooya~Ronagh}
\email[Corresponding author: ]{pooya.ronagh@uwaterloo.ca}
\affiliation{Institute for Quantum Computing, University of Waterloo, Waterloo, Ontario N2L 3G1, Canada}
\affiliation{Department of Physics and Astronomy, University of Waterloo, Waterloo, Ontario N2L 3G1, Canada}
\affiliation{Perimeter Institute for Theoretical Physics, Waterloo, Ontario N2L 2Y5, Canada}
\affiliation{Microsoft, Redmond, WA 98052, USA}

\begin{abstract}
We prove the first quantum--classical separation for a sampling problem over a continuous domain. For a class of Gibbs states $p\propto e^{-\beta E}$ on the torus $\mathbb{T}^d$ with smooth ($s$-Gevrey) potential and barrier amplitude $\alpha=e^{\beta\Delta}$, where $\Delta = \max E-\min E$, every classical algorithm---querying the value, gradient, or any higher-order derivatives of the log-density---requires $\Omega(\alpha)$ queries to sample at constant accuracy in total variation distance, while a quantum algorithm based on quantum singular value thresholding and temperature annealing samples with $\tilde{O}\left(\sqrt{\alpha}\right)$ queries to an oracle for the gradient. The advantage is quadratic in the barrier amplitude, which becomes exponential in the dimension, $e^{\Omega(d)}$, at low temperature. The classical bound is information-theoretic, holding for every classical algorithm with query access to the Gibbs potential and its derivatives at any order.
\end{abstract}

\maketitle

\tableofcontents

\section{Introduction}
\label{sec:introduction}

Provable separations between quantum and classical computation are rare, and the canonical examples concern discrete (and often contrived) tasks: Grover's unstructured search~\cite{grover1996search}, the hidden-structure problems of Bernstein--Vazirani and Simon~\cite{bernstein1997quantum,simon1997power}, promise problems such as Deutsch--Jozsa and the collision problem~\cite{deutsch1992rapid,brassard1998collision,shi2002collision}, the quantum-walk speedups for welded trees and element distinctness~\cite{childs2003welded,ambainis2007distinctness}, and the forrelation problem~\cite{aaronson2015forrelation,raz2018oracle}. No analogous separation has been proven for problems over the continuous domain. Several recent papers have proposed quantum Gibbs samplers for continuous potentials~\cite{motamedi2022gibbs,LengDingChenLin2025,ozgul2024stochastic,childs2022logconcave} (also see~\cite{terhal2000equilibration,wojcan-thermal-gibbs,chowdhury2016gibbs} for Gibbs samplers on discrete domains). However, each result has been benchmarked against the best classical competitors known, and such comparisons cannot rule out faster algorithms. A guaranteed separation requires a lower bound against all possible classical algorithms.

\begin{figure}[b]
\begin{minipage}[t]{0.5\textwidth}
\includegraphics[width=\linewidth, trim=80pt 90pt 80pt 145pt, clip]{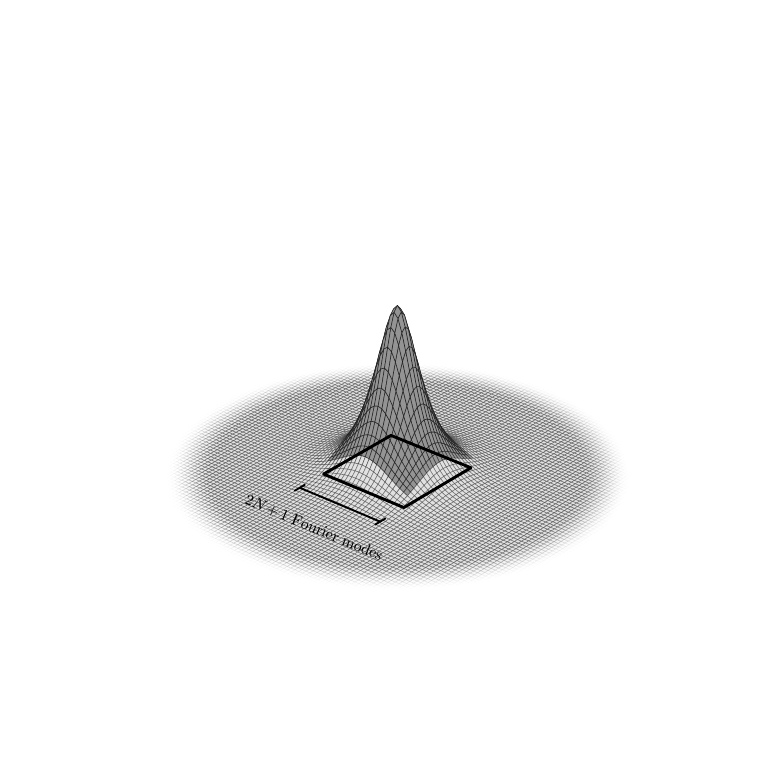}
\subcaption{}
\end{minipage}\hfill
\begin{minipage}[t]{0.5\textwidth}
\includegraphics[width=\linewidth, trim=23pt 36pt 14pt 137pt, clip]{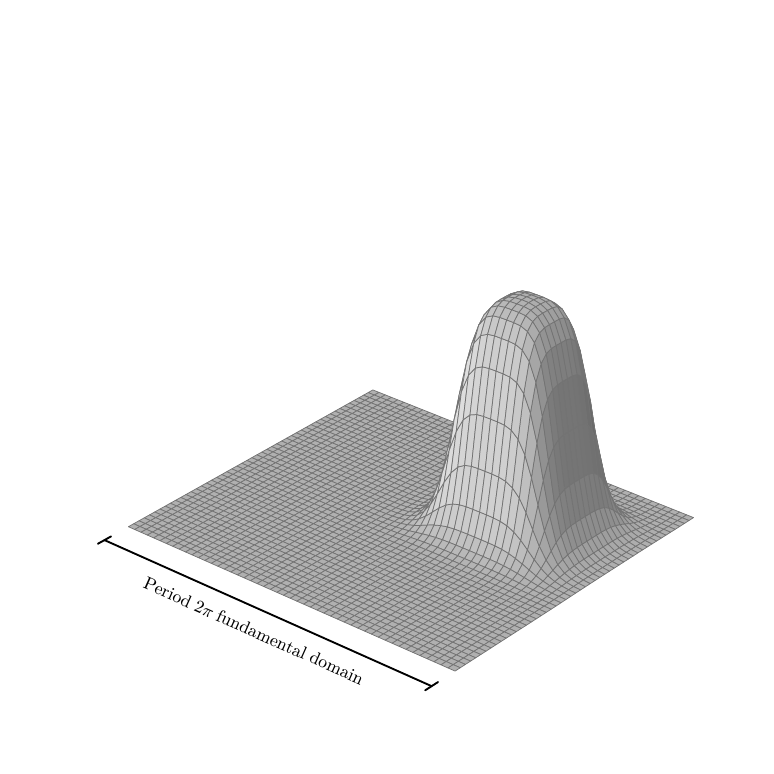}
\subcaption{}
\end{minipage}
\caption{The role of Gevrey smoothness in our separation result. (a) The schematic of the decay of the Fourier modes of a periodic smooth function. The QSVT-based quantum algorithm requires truncating the Fourier transform of the Gibbs density to a threshold $N$. Mere smoothness (i.e., the $(s=\infty)$-Gevrey class) is too weak to guarantee $N= \polylog(1/\epsilon)$, required for our quantum upper bound result. (b) The schematic of a bump function defined on a torus. Such functions hide their center of mass in a small area (patch) on the torus, and queries of any order to points outside the patch provide no information to help the search for the patch. These functions are not analytic (because any analytic function that is constant on an open interval, is constant everywhere on the domain). Therefore analyticity (i.e., the $(s=1)$-Gevrey class) is too strong to guarantee existence of hard instances required for our classical lower bound result.}
\label{fig:hero-plot}
\vspace{-1pt}
\end{figure}

We prove such a separation. Sampling from a Gibbs distribution $p\propto e^{-\beta E}$ of a potential $E$ at inverse temperature $\beta$ is a recurring primitive of computational science: it underlies Bayesian inference and generative modeling~\cite{du2019ebm,ho2020ddpm,song2021train}, thermal averages in statistical physics~\cite{ruelle1999statmech}, and the elementary step of annealing-type optimization. On a continuous domain, its difficulty is governed less by the dimension alone than by the landscape of the potential: wells separated by barriers concentrate probability mass in regions that local exploration struggles to find. The natural metric for that difficulty is therefore the barrier amplitude $\alpha=e^{\beta\Delta}$, with $\Delta\coloneqq \max E-\min E$, i.e., the worst-case ratio between density values, which, by the Holley--Stroock theorem~\cite{holley1987logsobolev,schlichting-cpi-bound}, upper-bounds the Poincar\'e constant of the Gibbs measure. Kramers-type metastability analysis shows that classical algorithms such as Langevin dynamics and related Markov chain Monte Carlo methods, while fast in the log-concave setting~\cite{dwivedi2018logconcave}, take time exponential in $\Delta$ to move between wells~\cite{bovier2004metastability,berglund2013kramers}.

This paper asks whether quantum computation provably reduces the price of Gibbs sampling and answers in the affirmative: for Gibbs distributions of controlled smoothness on the torus $\T^d=\R^d/ (2\pi \Z)^d$, quantum sampling is quadratically cheaper in $\alpha$, backed by an unconditional, information-theoretic classical lower bound. The controlled-smoothness criterion requires the gradient of the Gibbs potential $\beta\nabla E$ to be $s$-Gevrey for some $s > 1$. The Gevrey parameter $s$ quantifies the amount of smoothness, ranging from real-analytic ($s=1$) to merely smooth ($s=\infty$)~\cite{Rodino1993}. See \cref{fig:hero-plot} for an intuitive explanation of the critical role of Gevrey smoothness for the separation result proven in this paper. We focus on the subclass of $s$-Gevrey functions for which $s \neq 1, \infty$ and whose barrier amplitude is at most $\alpha$: the class $\mathcal{W}^s(\alpha, \xi , \rho)$ as defined in \cref{sec:main_result}.

On this class, at constant total-variation accuracy, every classical algorithm querying \emph{any local oracle} of the potential $E$ that returns the value and derivatives of $E$ of any order at the query point $x$ must pay $\Ncl=\Omega(\alpha)$ queries. In comparison, a quantum algorithm with coherent access to the first-order oracle samples with $\Nq=\tilde{O}(\sqrt{\alpha})$\footnote{Throughout the paper for the asymptotic order
$f(x)=O(g(x))$, and equivalently $g(x) = \Omega(f(x))$, the notation $\tilde{O}$ and $\tilde{\Omega}$ refer to dropping polylogarithmic dependencies from the leading order, namely $O(f(x)\log f(x)) =\tilde{O}(f(x))$ and $\Omega(f(x)/\log f(x)) =\tilde{\Omega}(f(x))$.} queries, up to factors polynomial in the dimension and the inverse Gevrey radius, and logarithmic in precision. The query complexity ratio is $\Ncl/\Nq = \tilde{\Omega}(\alpha^{1/2-s/d})$ up to such factors (\cref{thm:main_result}), that is a quadratic quantum advantage in the barrier amplitude, which becomes $e^{\Omega(d)}$ once the barrier amplitude is exponential in the dimension, $\log\alpha = \Theta(d)$. This regime is reached by a fixed multi-well potential at low temperature $\beta =\Omega(d)$, and for any accuracy that is not hyper-exponentially small, $\log\log\tfrac1\eps=o(d)$. To our knowledge, this is the first provable quantum--classical separation for a sampling problem over a continuous domain.

The lower bound is an adversarial argument of the hide-and-seek type, in the tradition of the local-oracle lower bounds of information-based complexity and convex optimization~\cite{nemirovski1983problem,traub1988ibc}: the adversary hides a small patch of anomalous probability mass behind a perfectly flat landscape, at a uniformly random position, and locality forces every oracle to return the same instance-independent answer until a query lands on the patch. Effectively, each query probes only one out of $M \in O(\alpha)$ patches. The argument developed in \cref{lower_bound} holds for every local oracle and identifies exactly where regularity matters: hiding probability mass in a flat landscape is impossible in the analytic classes $s \leq 1$ but possible for every $s > 1$. Query lower bounds for the sampling task itself are scarce: prior results concern log-concave targets~\cite{chewi2022oned,chewi2024focs} or, in the non-log-concave regime, accuracy measured in Fisher information~\cite{ChewiGerberLeeLu2023} (cf.~\cite{talwar2019separations} for computational separations between sampling and optimization). To our knowledge, the bounds proven here are the first for total variation with exponential-in-$d$ hardness, and the first stated uniformly over arbitrary local oracles.

The upper bound builds on our revamped version of the original quantum singular value thresholding (QSVTh) Gibbs sampler of \cite{LengDingChenLin2025}, which itself builds on \cite{motamedi2022gibbs}'s analysis of the discretized Fokker--Planck operator. The former samples from the Gibbs distribution $\propto e^{-\beta E}$ by preparing the discretized Gibbs state via projection onto the kernel (assumed to exist and be unique) of a discretized Witten Laplacian, followed by continuous upsampling, but makes several simplifying assumptions that leave it short of an end-to-end, quadratic-speedup Gibbs sampler; we remove or prove each of them (see \cref{sec:quantum_gibbs_sampling} for the complete list). In particular, the assumptions that the ground state of the discretized Witten Laplacian is close to the discretized Gibbs state and that its spectral gap tracks that of the continuous operator~\cite[Assumption~11]{LengDingChenLin2025} are proven here: comparing the discretized operator with the discretized modified Fokker--Planck generator of~\cite{motamedi2022gibbs} through pseudo-spectral error bounds that are exponentially convergent for Gevrey densities~\cite{pde-paper}, transfers the kernel and the spectral gap between the two by Weyl and Davis--Kahan perturbation estimates. This allows the discretized Gibbs state to be prepared by projecting onto the ground state of the Witten Laplacian, followed by Fourier upsampling with a discretization size that remains exponentially small in the precision, where the last step again uses the Gevrey property of the distribution, following~\cite{pde-paper}. The warm start required by the filtering step is dispensed with by temperature annealing~\cite{somma2007quantumsimulatedannealing,ozgul2024stochastic}: a schedule of inverse temperatures with increments small enough that consecutive Gibbs states overlap, so that each filtered state warm-starts the next, thus running the algorithm end to end on a maximally mixed state as input with gradient queries only (\cref{sec:quantum_gibbs_sampling}).

The paper is organized as follows. \cref{sec:main_result} states the main theorem together with the two bounds it combines. \cref{lower_bound} proves the classical lower bound: the hide-and-seek argument (\cref{ssec:hide_seek}) and the Gevrey bump construction (\cref{ssec:bumps}), leaving the formal query model and the Gevrey estimates in \cref{app:model,app:gevrey}. \cref{sec:quantum_gibbs_sampling} presents the quantum algorithm and its end-to-end cost, with the discretization (\cref{app:discretization}) and annealing (\cref{app:qsvt-anneal}) analysis in the appendix. Finally, \cref{sec:separation} works out the choice of class parameters (\cref{ssec:parameter_choice}) and evaluates the quantum cost at the class constants of the lower bound, proving the separation.

\section{Statement of the Main Result}
\label{sec:main_result}

\smallskip\noindent\emph{Setting.}
We consider classical and quantum sampling algorithms for the Gibbs state of $d$-dimensional $2\pi$-periodic potentials $E: \R^d \to \R$ with density $p=e^{-\beta E}/Z$. To be precise, these are distributions on the flat torus $\T^d=\R^d/(2\pi\Z)^d$, i.e. the cube $[0,2\pi]^d$ with opposite faces identified,
\begin{equation}
Z= \int_{\T^d} e^{-\beta E(x)} dx := \int_{[0,2\pi)^d} e^{-\beta E(x)} dx\,,
\end{equation}
and $p(A) = \int_A p(x) dx$ for every (measurable) set $A \subseteq [0, 2\pi)^d$. We blur the distinction between periodic functions on the Euclidean domain and their quotient counterparts on the compact torus, and likewise between sets $A \subseteq [0, 2\pi)^d$ and their counterparts in $\T^d$; a more formal account is given in \cref{app:lb-sec} for the interested reader. Throughout, the difference of two distributions $p$ and $q$ is measured in total variation distance,
\begin{equation}
\label{eq:TV_def}
    \mathrm{TV}(p,q)=\sup_{A\subseteq\T^d}\vert p(A)-q(A)\vert\,.
\end{equation}

A classical randomized algorithm for sampling accesses the potential through an \emph{oracle of order $k$}, $k\in\N \cup \{\infty\}$: queried at a point $x\in\T^d$, it answers with the value and the derivatives, up to order $k$, of the log-density at $x$,
\begin{equation}
\label{eq:order_k_answer}
    \phi_k(p,x)\;=\;\bigl(\log p(x),\,\nabla\log p(x),\,\dots,\,\nabla^{\otimes k}\log p(x)\bigr)\,,
\end{equation}
that is with the value of the scaled potential $\beta E$ up to an additive constant, and its derivatives (\cref{def:query_model}, \cref{app:model}). The case $k=0$ is the evaluation (zeroth-order) query and $k=1$ the gradient (first-order) query of the sampling literature~\cite{ChewiGerberLeeLu2023}; the first-order entry $\nabla\log p$ is the score driving Langevin and diffusion samplers. A randomized algorithm $\mathcal{A}$ adaptively chooses queries and obtains oracle answers, using a random seed $\omega$ drawn independently of the instance, and after $n$ queries outputs a point $\hat X_n\in\T^d$, whose distribution over the seed is the \emph{output law} $\hat{p}_n$~(\cref{def:samp_alg}, \cref{app:model}). The first query depends on the seed alone: instance-dependent initialization, such as a warm start, is excluded. We use the number $N$ of queries to an oracle for the target such algorithm requires to guarantee an output law $X\sim\hat{p}_N$ that is $\eps$-accurate in total variation as the cost of the algorithm.

\begin{definition}[Classical query complexity]
\label{def:classical_query_complexity}
The query cost of a classical randomized algorithm $\mathcal{A}$ on an instance class $\mathcal{P}$ at accuracy $\eps$ is
\begin{equation}
\label{eq:query_cost}
    N_{\mathcal{A}}(\mathcal{P},\eps) = \inf \bigl\lbrace N \in \N \,\big\vert\, \sup_{p \in \mathcal{P}}\,\mathrm{TV}(\hat{p}_N,p)<\eps \bigr\rbrace\,,
\end{equation}
and the classical query complexity of sampling $\mathcal{P}$ is
\begin{equation}
\label{eq:query_complexity_def}
    \Ncl(\mathcal{P},\eps)\;=\;\inf_{\mathcal{A}}\,N_{\mathcal{A}}(\mathcal{P},\eps)\,,\qquad \inf\varnothing=+\infty\,.
\end{equation}
\end{definition}
\noindent
The cost \eqref{eq:query_cost} is a worst-case guarantee for any classical algorithm to provide $\eps$-accurate samples on every instance in the class $\mathcal P$. This definition matches \cite[Definition~5]{ChewiGerberLeeLu2023} with total variation distance in place of the Fisher information.

\smallskip\noindent\emph{Instance class.}
A lower bound is formulated relative to a class of instances $\mathcal{P}$ with certain features of smoothness, for the later sake of separation. We require such smoothness to be strong enough for the quantum algorithm to succeed and weak enough for hiding probability mass that fuel the lower bounds (see \cref{fig:hero-plot}). The currency of the tradeoff is the potential's derivative growth. For a multi-index $a\in\N^d$ write $a!\equiv\prod_j a_j!$, $\partial^{a}\equiv\prod_j(\partial/\partial x_j)^{a_j}$, $\onenorm{a}=\sum_j a_j$.

\begin{definition}[Nonzero-order Gevrey class]
\label{def:tGev_seminorm}
For $s>0$, $\rho>0$ and a $2\pi$-periodic smooth function $f\in\mathcal{C}^{\infty}(\R^d)$, define
\begin{equation}
\label{eq:seminorm}
[f]_{s,\rho}\coloneqq\sup_{a\in\N^d\setminus\{0\}}\ \sup_{x\in{[0,2\pi)^d}}\frac{\vert\partial^{a} f(x)\vert\,\rho^{\onenorm{a}}}{(a!)^{s}}\,,
\end{equation}
and write $f\in\dot{\mathcal{G}}^{s}(\xi,\rho,\T^d)$ if $[f]_{s,\rho}\leq{\xi}$, that is every derivative obeys $\vert\partial^a f\vert\leq{\xi}\,(a!)^{s}\,\rho^{-\onenorm{a}}$.
\end{definition}

At $s \leq 1$ this derivative growth is that of \emph{real-analytic} functions, i.e. smooth functions globally determined by their derivatives at a single point via the Taylor series. The regime $s>1$ relaxes the regularity of the function and crucially allows in compactly supported functions, such as the classical mollifier {$e^{-1/(1-x^2)^t}$, for which $s=1+1/t$}.
In this case, a function constant on an open set is constant everywhere: no probability mass can be hidden in an otherwise flat landscape, and one order-$\infty$ query reveals the density globally. So, for $s>1$ hiding probability density in a small area becomes possible, at a price in the radius paramter $\rho$ as we will tune.

Unlike the conventional definition of the Gevrey criteria, this variant ignores the order-zero term, matching the gauge freedom of a Gibbs potentials, $E\mapsto E+\mathrm{const}$, due to the invariance of the Gibbs state $E\mapsto e^{-\beta E}/Z$. The companion \emph{full} Gevrey class $\mathcal{G}^{s}(C,r)$, which constrains the value as well, is recalled in \cref{def:tGev_f_Torus} (\cref{app:gevrey}). In \cref{cor:class_reg_equiv} we show that \cref{def:tGev_seminorm} is equivalent to Gevrey regularity of the gradient $\nabla f$ with slightly different Gevrey parameters.

The other parameter defining the instance class is the range of the potential over the energy landscape: for a potential $E$ write $\Delta\coloneqq\max E-\min E$ and call
\begin{equation}
    \alpha\coloneqq e^{\beta\,\Delta}\,,
\end{equation}
the \emph{barrier amplitude}, namely the worst-case ratio of density values seen by any sampler.

\begin{definition}[$s$-Gevrey potential class]
\label{def:pot_class}
For $s>0$, $\alpha\geq1$ and $\xi,\rho>0$, the \emph{$s$-Gevrey potential class} is the family of Gibbs states
\begin{equation}
\label{eq:pot_class}
\mathcal{W}^{s}(\alpha,\xi,\rho)\coloneqq\left\{p=e^{-\beta E}/Z \,\bigm|\, [\beta E]_{s,\rho}\leq \xi\ \text{and}\ {\beta{\Delta}}= \log \alpha \leq{\frac{\pi d\, \xi}\rho} \right\}\,.
\end{equation}
\end{definition}

\noindent The class grows with $\alpha$ and $\xi$ and shrinks with $\rho$: $\mathcal{W}^{s}(\alpha',{\xi'},{\rho'})\subseteq\mathcal{W}^{s}(\alpha,\xi,\rho)$ for $\alpha\geq\alpha'$, $\xi\geq\xi'$, $\rho\leq\rho'$. The ceiling on $\alpha$ in \eqref{eq:pot_class} is a consistency requirement: taking $a=e_j$ in \eqref{eq:seminorm} bounds the gradient, $\sup_j\Vert\partial_j(\beta E)\Vert_\infty\leq{\xi/\rho}$, and since two points of $\T^d$ are at $\ell^1$-distance at most $\pi d$, the fundamental theorem of calculus forces for every member:
\begin{equation}
\label{eq:osc_ceiling}
    \beta \Delta = \log \alpha \;\leq\;{{\pi d\, \xi}/\rho}\,.
\end{equation}

\smallskip\noindent\emph{Upper and Lower bounds.}
Each side of the separation consumes one feature of the class \eqref{eq:pot_class}: the quantum cost is governed by the smoothness data $(\xi,\rho)$, through the resolution at which the potential must be discretized and loaded in a quantum circuit; the classical cost by the barrier amplitude $\alpha$, which measures how much probability mass can be hidden from a classical search.

\smallskip
$\circ$ \emph{Quantum upper bound.} In \cref{sec:quantum_gibbs_sampling} we show that QSVTh with temperature annealing applied to a square root of the Witten Laplacian, with access to the coherent gradient oracle,
\begin{equation}
\label{eq:gradient_oracle}
O_{\nabla E}: \ket{x}\ket{b}\mapsto \ket{x}\ket{b\oplus \nabla E\,(x)}\,,
\end{equation}
and its inverse, requires
\begin{equation}
\label{eq:quantum_UB}
\Nq \in \tilde{\mathcal{O}}\left(\sqrt{\alpha}\,d^{\frac{1}{2}}\left(\frac{d+\xi}\rho\right)^{\!s}\,
\log^{s}\!\left(\frac{\alpha^{2}\,d\,(d+\xi)}{\rho\,\eps}\right)\log{\frac{1}\epsilon}\right)\,,
\end{equation}
queries to the gradient oracle \eqref{eq:gradient_oracle} to sample from any Gibbs state $p\in\mathcal{W}^{s}(\alpha,\xi,\rho)$ at accuracy $\eps<1/2$ in total variation distance. We call such a quantum algorithm a first-order quantum algorithm (for its access to first-order derivatives). The hidden factors in this complexity order are independent of $d,\alpha, \xi, \rho$ and $s$.

The bound combines the cost of QSVTh per annealing step (\cref{thm:JiaqiWithWarmStart}) with the length and success probability of the annealing schedule of \cref{alg:cap} (\cref{lem:success-probability}), consolidated into the total cost \eqref{eq:quantum_UB} in \cref{thm:main-theorem-quantum}. Membership in the $s$-Gevrey potential class $\mathcal{W}^{s}(\alpha,\xi,\rho)$ guarantees that the density function satisfies $p\in\mathcal G^{s}(\alpha,\,\rho/(d+\xi+1))$ and $\sqrt{p}\in\mathcal G^{s}(\sqrt\alpha,\,\rho/(d+\xi/2+1))$ (\cref{cor:class_reg}, \cref{app:gevrey}), both necessary for the filtering step of the quantum sampler.

\smallskip
$\circ$ \emph{Classical lower bound.} \cref{thm:classical_LB}, whose information-theoretic core is proved in \cref{lower_bound}, shows that on a suitably chosen subclass of \eqref{eq:pot_class} every classical algorithm pays a number of queries linear in the barrier amplitude $\alpha$. The bound derived in~\cref{thm:LowerBounds_Inform} below is information-theoretic and valid for any adaptive sampler and for queries of every order, such as value, gradient, even the full tower of derivatives. In fact, it holds for any oracle whose answer is \emph{local}, i.e. at query point $x$ it is determined by the density on an arbitrarily small neighbourhood of $x$ (see~\hyperref[rem:locally_determined]{Remark}, \cref{app:model}).

\smallskip
\noindent Our main theorem is the statement that these two prices meet at $\sqrt{\alpha}$ versus $\alpha$.

\begin{theorem}[Main Result]
\label{thm:separation}
\label{thm:main_result}
Let $s>1$, $d>2s$, fix an accuracy cap $\eps_0<\tfrac12$, and let the barrier amplitude satisfy $\alpha\geq1+4/(\tfrac12-\eps_0)$. On the class of Gibbs states $\mathcal{W}^{s}_{\alpha,d}\coloneqq\mathcal{W}^{s}(\alpha,{\xi_{\alpha,d}},{\rho_{\alpha,d}})$ of \eqref{eq:pot_class}, with
\begin{equation}
\label{eq:the_constants_main}
{\xi_{\alpha,d}}=\frac{1}{2}\,,\qquad \frac{1}{{\rho_{\alpha,d}}}=\Theta\bigl(d^{\,1+s}\,\alpha^{1/d}\,\log^{s}\alpha\bigr)\,,
\end{equation}
and with implicit constants depending only on $s$, the following holds for every accuracy $0<\eps\leq\eps_0$. The classical query complexity of sampling from $\mathcal{P}=\mathcal{W}^{s}_{\alpha,d}$, with an oracle of any order $k\in\N\cup\{\infty\}$ obeys
\begin{equation}
\label{eq:main_LB}
\Ncl\;\geq\;\frac{1-2\eps_0}{16}\,(\alpha-1)\,,
\end{equation}
while there exists a first-order quantum algorithm with query complexity \eqref{eq:quantum_UB} at $\xi=\xi_{\alpha,d}$, $\rho=\rho_{\alpha,d}$. The ratio of the query complexities is
\begin{equation}
\label{eq:separation_asymp_notation}
    \frac{\Ncl}{\Nq}=\tilde{\Omega}\left(\frac{\alpha^{c(d)}}{d^{\,\frac{1}{2}+(2+s)s}\,
    \log^{s+1}\tfrac{1}\eps}\right),
\qquad
c(d)=\frac12-\frac sd\,,
\end{equation}
with implicit constants depending only on $s$ and $\eps_0$. In particular, in the regime $\log \alpha =\Omega(d)$, $\log \log \frac{1}\eps=o(d)$, reached at low temperature, say $\beta=\Omega(d)$, we have
\begin{equation}
\label{eq:separation_regime}
    \Ncl=e^{\Omega(d)}\, \Nq\,.
\end{equation}
\end{theorem}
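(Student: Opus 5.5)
The proof splits into three parts: the classical lower bound, the quantum upper bound evaluated at the chosen constants, and the computation of the ratio.

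\emph{Classical lower bound.} I would use a hide-and-seek construction. Fix a Gevrey bump $\psi$ on $\T^d$ of class $s>1$, for instance a product of the one-dimensional mollifiers $e^{-1/(1-x^2)^t}$ with $s=1+1/t$. It is supported in a cube of side $\sim 2\pi/M^{1/d}$, so that $M$ disjoint translates fit in $\T^d$. For each position $j\in\{1,\dots,M\}$, define the instance $\beta E_j = -\log\alpha\cdot\psi_j$, so that $\beta\Delta=\log\alpha$. Choose $M\approx\alpha-1$, up to a constant, so that the patch carries constant mass:
\begin{equation*}
\frac{\alpha\,\vol(\mathrm{patch})}{\alpha\,\vol(\mathrm{patch})+(2\pi)^d}\geq\tfrac12-\eps_0 \quad\text{roughly}.
\end{equation*}
This is the origin of the hypothesis $\alpha\geq 1+4/(\tfrac12-\eps_0)$.

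Every local oracle returns the flat answer off the support of $\psi_j$. A Yao-type averaging over $j$ uniform then shows that after $n$ queries the algorithm has hit the patch with probability at most $n/M$. Conditioned on a miss, its output law $q$ is independent of $j$. Since the patches are disjoint, $q$ can place mass at least $\tfrac12-\eps_0$ on only $O(1)$ of them. Hence the average TV error exceeds $\eps_0$ unless $n\gtrsim(1-2\eps_0)M$. This yields \eqref{eq:main_LB} with the constant $1/16$ after bookkeeping.

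\emph{Gevrey cost of the bump.} The key estimate is $[\log\alpha\cdot\psi_j]_{s,\rho}\leq\tfrac12$. Rescaling a unit bump to width $w\sim M^{-1/d}\sim\alpha^{-1/d}$ multiplies the $a$-th derivative by $w^{-\onenorm{a}}$. Taking the product over $d$ coordinates, and using the one-dimensional Gevrey constants of the mollifier (\cref{app:gevrey}), the admissible radius is
\begin{equation*}
\rho\sim w/(d^{1+s}\log^s\alpha).
\end{equation*}
Here $\log^s\alpha$ absorbs the amplitude $\log\alpha$: one uses $\xi(a!)^s\rho^{-|a|}$ with the factor $\log\alpha$ spread across derivative orders via $\sup_k (\log\alpha)^{1/k}$-type bounds. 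This gives $1/\rho=\Theta(d^{1+s}\alpha^{1/d}\log^s\alpha)$ as in \eqref{eq:the_constants_main}. The consistency ceiling $\log\alpha\leq\pi d\xi/\rho$ then holds automatically. I expect this derivative bookkeeping — getting exactly $d^{1+s}$ and $\log^s\alpha$, uniformly over all multi-indices — to be the main obstacle. I would handle it through the Gevrey algebra lemmas of \cref{app:gevrey}: products, composition with scaling, and the equivalence in \cref{cor:class_reg_equiv}.

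\emph{Quantum side and ratio.} Invoke \cref{thm:main-theorem-quantum}, i.e.\ \eqref{eq:quantum_UB}, with $\xi=\tfrac12$ and $\rho=\rho_{\alpha,d}$. Then
\begin{equation*}
\left(\frac{d+\xi}{\rho}\right)^{s}=O\!\left(d^{s+s(1+s)}\alpha^{s/d}\log^{s^2}\alpha\right),
\end{equation*}
and the $\log^s(\cdot)$ factor is $\polylog(\alpha,d)\cdot\log^s(1/\eps)$. Dividing $\Omega(\alpha)$ by this upper bound, and absorbing the polylogarithmic factors in $\alpha$ into $\tilde\Omega$, gives \eqref{eq:separation_asymp_notation}: the exponent of $\alpha$ is $1-\tfrac12-\tfrac sd=c(d)$, and the power of $d$ is $\tfrac12+(2+s)s$. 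Finally, $d>2s$ gives $c(d)>0$. With $\log\alpha=\Theta(d)$, we have $\alpha^{c(d)}=e^{\Omega(d)}$, which dominates $\mathrm{poly}(d)$ and $\log^{s+1}(1/\eps)=e^{o(d)}$ when $\log\log\tfrac1\eps=o(d)$. This gives \eqref{eq:separation_regime}.
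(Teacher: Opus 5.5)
There is a genuine gap in the hard instances; the averaging argument, the quantum cost and the ratio arithmetic otherwise follow the paper's (averaging over $M$ disjoint cells instead of a uniform random shift gives the same bound, as the paper notes).

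\textbf{The concentration estimate is false for your bump.} You compute the patch mass as if the density ratio were $\alpha$ on the whole patch, $\alpha\vol(\mathrm{patch})/(\alpha\vol(\mathrm{patch})+(2\pi)^d)$. With $\psi_j$ a rescaled product of mollifiers, the density $e^{\Lambda\psi_j}$, $\Lambda=\log\alpha$, reaches $\alpha$ only at the centre $c_j$. (Also, $\psi_t(0)=e^{-1}$, so as written your product peaks at $e^{-d}$ and $\beta\Delta=e^{-d}\log\alpha$; normalize each factor to $e^{1-(1-y^2)^{-t}}$.) With that normalization, Bernoulli's inequality gives $\psi_j(x)\le e^{-t|x-c_j|^2/w^2}$ on the cube of half-side $w$ around $c_j$. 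For $x$ uniform on that cube, $\mathbb{P}(|x-c_j|^2\le rw^2)\le(\pi e r/(2d))^{d/2}$. Splitting at $r=\tfrac1t\log(\Lambda/\eta)$ gives
\[
\frac{1}{\vol(\mathrm{patch})}\int_{\mathrm{patch}}\bigl(e^{\Lambda\psi_j}-1\bigr)\,dx\;\le\;\bigl(e^{\eta}-1\bigr)+\alpha\Bigl(\frac{\pi e\,\log(\Lambda/\eta)}{2t\,d}\Bigr)^{d/2}.
\]
In the regime $\log\alpha=\Theta(d)$ the last term is $e^{O(d)}d^{-\Omega(d)}\to0$. So the patch carries only $(1+o(1))$ times its volume share of the mass, $M\delta\le 1+o(1)$, and $M(\delta-\eps)-1$ is vacuous exactly where the exponential separation is claimed. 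Even at fixed $d$, Laplace's method around the peak shows you lose a factor of order $(\log\alpha)^{d/2}$.

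\textbf{What is missing is a mesa.} You need a profile equal to $1$ on an inner plateau that occupies a constant fraction of the cell. The paper uses $\chi_{\nu,\ell}$, built from integrated mollifiers, which equals $1$ on $[2\nu,\ell-2\nu]^d$. The density $(1+(\alpha-1)\chi_{\nu,\ell})/Z$ then has concentration $\delta=(1+S)/(M+S)$ with $S=(\alpha-1)(1-2\varrho)^d$. Keeping $S=\Theta(\alpha)$ forces the ramp fraction $\varrho=\nu/\ell$ to be $\Theta(1/d)$; the paper takes $\varrho_*=1/(2d)$, so $(1-1/d)^d\ge\tfrac14$.

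This is also where the dimension factors come from. The certified radius is linear in the ramp width $\nu$, so $\varrho_*=1/(2d)$ produces one factor $d$ in $1/\rho_{\alpha,d}=\Theta(d^{1+s}\alpha^{1/d}\log^s\alpha)$. The remaining $d^{s}$ comes from converting $\onenorm{a}!$ to $a!$ after composing with $\log$ (\cref{lem:D}, \cref{thm:V_gevrey_bump}). Your radius paragraph asserts $d^{1+s}$ with no mechanism producing it. For a plain product of rescaled mollifiers, the product-convention bound carries no factor of $d$ at all, which is a symptom of the missing plateau.

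With such a plateau profile of ramp width $\Theta(\ell/d)$, your linear form $\beta E_j=-\log\alpha\cdot\chi$ would also serve. Its seminorm is just $\log\alpha\,[\chi]_{s,\rho}$, which certifies a radius no smaller than $\rho_{\alpha,d}$ up to constants. The lower bound then transfers to the stated class by monotonicity in $\rho$.
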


\section{Lower Bound for Classical Gibbs Sampling on a Torus}
\label{lower_bound}

This section proves query-complexity lower bounds for sampling smooth Gibbs distributions on the torus $\T^d$: any classical algorithm sampling a natural class of targets $p\propto e^{-\beta E}$ must pay a number of queries growing \emph{linearly} in the barrier amplitude $\alpha=e^{\beta\Delta}$ of the class (\cref{cor:lower_bounds_Gevrey}). The algorithm knows the target only through queries that, at a point of its choice, return the value of the potential and its derivatives, such as evaluation (zeroth-order) queries, gradient (first-order) queries, or higher derivatives, as in \eqref{eq:order_k_answer}. Its cost, according to~\cref{def:classical_query_complexity}, is the number of queries needed to produce one sample $\eps$-close to $p$ in total variation, on every instance of the class~(\cref{def:pot_class}).

The classical lower bound is an adversarial argument of hide-and-seek type, the sampling analogue of the query bound for unstructured search: the adversary hides a small patch carrying a constant fraction of the probability mass inside an otherwise flat landscape, and queries that miss the patch all return the same flat answer, so each query probes at most a bounded fraction of the possible patches. The technical content of the section is to make this robust for derivative queries of every order (\cref{ssec:hide_seek}), and to work out the quantitative bound within class membership \cref{def:pot_class}, in \cref{ssec:bumps}. Smoothness is measured by a Gevrey exponent $s$: at $s\leq1$ hiding is impossible, at $s>1$ it is possible at an explicitly quantified price in the potential's derivative growth, and the price is computed exactly. Proofs not given in the main text are in Appendices~\ref{app:model} and~\ref{app:gevrey}.

\subsection{The adversarial \emph{hide-and-seek} argument}
\label{ssec:hide_seek}

The hard instances are \emph{bumps}: distributions that are flat outside a small patch and concentrate an anomalous probability mass on it. Let the patch $\Om\subset\T^d$ be closed. A bump on $\Om$ with concentration $\delta\in(0,1)$ is a distribution with density $p$ such that
\begin{equation}
\label{eq:offpatch_density}
    p(\Om) =\delta >\frac{\vol(\Om)}{(2\pi)^d}  \;\;\;\mathrm{and}\;\;\; p(x) = \kappa\,,\;\; \forall\,x \in \T^d \setminus \Om\,,
\end{equation}
\noindent
with $\kappa=(1-\delta)/\bigl((2\pi)^d-\vol(\Om)\bigr)$  (\cref{def:bump}, \cref{app:model}). The reason behind this definition is what the oracle sees: for a query point $x \in \T^d \setminus \Om$, the log-density is constant $\log p =\log\kappa$ on an open neighbourhood of $x$, hence all its derivatives in $x$ vanish. Then, the oracle \eqref{eq:order_k_answer} returns the \emph{flat answer}
\begin{equation}
\label{eq:null_response}
    \phi_{\mathrm{null}}\;=\;(\log\kappa,\,0,\,0,\,\dots)\,,
\end{equation}
the same for every bump with background value $\kappa$, for query points outside the patch.

Consider the \emph{null run} of an algorithm $\mathcal{A}$, that is the run obtained by feeding $\mathcal{A}$ the flat answer~\eqref{eq:null_response} for each query, a random process depending on the algorithm's randomness seed $\omega$ but independent of the instance. On any bump instance, any algorithm with access model~\eqref{eq:order_k_answer} gets exactly such answers until one of its queries lands in the patch, revealing its position. Accordingly, a run of $\mathcal{A}$ coincides with the null run up until the first hit (\cref{lem:null_coupling}, \cref{app:model}). The requirement of accurate sampling forces an algorithm to place mass $\approx\delta$ on the patch, which essentially requires to find the patch, mapping accurate sampling to a search problem~\cite{ChewiGerberLeeLu2023}. Adopting the adversarial strategy to hide the patch at a uniformly random position provides a quantitative lower bound on the query cost:
\begin{theorem}[Hide-and-seek lower bound on $\T^d$]
\label{thm:LowerBounds_Inform}
\label{thm:LowerBounds_Info}
Let $p_0$ be a bump \eqref{eq:offpatch_density} on a closed patch $\Om\subset\T^d$ of volume $v_0\in(0,(2\pi)^d)$ and with concentration $\delta$. Let $\mathcal{P}$ be any instance class accessed by the oracle and containing all translates $p_g(A)\coloneqq p_0(A-g)$, $g\in\T^d$. For every oracle order $k\in\N\cup\{\infty\}$ and every accuracy $\eps<\delta$, the query complexity~\cref{def:classical_query_complexity} satisfies the lower bound inequality
\begin{equation}
\label{eq:compact_G_lower_bounds}
    \Ncl(\mathcal{P},\eps)\;\geq\;(2\pi)^d\,\frac{\delta-\eps}{v_0}-1\,.
\end{equation}
\end{theorem}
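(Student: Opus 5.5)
The plan is a Yao-type averaging argument over a uniformly random translate. Fix any randomized algorithm $\mathcal{A}$ that achieves $\sup_{p\in\mathcal P}\mathrm{TV}(\hat p_N,p)<\eps$ with $N$ queries. Draw $g$ uniformly from $\T^d$ (Haar measure) and run $\mathcal{A}$ on $p_g$. Every $p_g$ is a bump with the same background value $\kappa$, so outside its patch $\Om+g$ the oracle returns the flat answer $\phi_{\mathrm{null}}$. By \cref{lem:null_coupling}, the run on $p_g$ coincides with the null run until the first query lands in $\Om+g$. The null run consists of query points $x_1(\omega),\dots,x_N(\omega)$ and an output $\hat X^{\mathrm{null}}(\omega)$, and it depends only on the seed $\omega$.

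Define the event $H_g=\{\exists i\le N:\ x_i(\omega)\in\Om+g\}$ of a hit during the null run. On the complement of $H_g$, the true output equals $\hat X^{\mathrm{null}}(\omega)$. Testing accuracy on the set $A=\Om+g$ gives $\delta=p_g(\Om+g)$, which is less than $\hat p_N(\Om+g)+\eps$. Then
\[
\hat p_N(\Om+g)\le \Pr_\omega[H_g]+\Pr_\omega[\hat X^{\mathrm{null}}\in\Om+g].
\]
Averaging over $g$ and applying Fubini, for any fixed point $y$ we have $\Pr_g[y\in\Om+g]=\Pr_g[y-g\in\Om]=v_0/(2\pi)^d$, by translation invariance of Haar measure. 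The union bound then gives $\mathbb E_g\Pr_\omega[H_g]\le N v_0/(2\pi)^d$ and $\mathbb E_g\Pr_\omega[\hat X^{\mathrm{null}}\in\Om+g]=v_0/(2\pi)^d$. Combining these yields
\[
\delta-\eps<\frac{(N+1)v_0}{(2\pi)^d},
\]
which rearranges to \eqref{eq:compact_G_lower_bounds}.

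Measurability needs care: the null-run maps $\omega\mapsto x_i(\omega)$ are measurable by the model in \cref{def:samp_alg}, and the set $\{(\omega,g): x_i(\omega)-g\in\Om\}$ is measurable because $\Om$ is closed, so Fubini applies.

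The step that needs genuine justification is the coupling. We must check that the flat answer really is instance-independent for all translates, which holds because every $p_g$ has the same $\kappa$ (the $\log p$ value included). We must also check that all derivatives of every order vanish at points of the open set $\T^d\setminus\Om$. The hit must be defined with the closed patch $\Om+g$, so that a query exactly on the boundary counts as a hit. Otherwise a boundary query could return nonzero derivative information while we still treated the run as null, which would break the argument. With closed patches this works, and since no adaptivity enters before the first hit, the union bound over the $N$ null queries is legitimate.
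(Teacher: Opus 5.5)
Your proposal is correct and follows essentially the same route as the paper's proof: test total variation on the shifted patch $\Om+g$, use the null coupling of \cref{lem:null_coupling} with a union bound over the $N$ null queries plus the null output, and average over a uniform shift using the coverage identity $\vol(x-\Om)=v_0$. Your only additions are the explicit Fubini/measurability justification and keeping the inequality strict; the paper leaves both implicit, and neither changes the argument.
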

\noindent
Read $M\coloneqq(2\pi)^d/v_0$ as the number of hiding places: the bound is $N\geq M(\delta-\eps)-1$, the classical sampling analogue of the search bound familiar from Grover's problem, corresponding to the case $\eps=0$, $\delta=1$. In the following we outline the proof, leaving its detailed presentation for \hyperref[prf:LowerBounds_Info]{\Cref*{app:model}}. For each shift $g \in \T^d$, with patch $\Omega_g=g+\Om$ and $\tau_g$ the first time the null run queries inside $\Omega_g$, accuracy requirements give
\begin{equation}
\label{eq:sketch_decomposition}
    \eps\;\geq\;\delta-\hat{p}_{g,N}(\Omega_g)\;\geq\;\delta-\sum_{n=1}^{N}\mathbb{P}\bigl(X^{\mathrm{null}}_n\in\Omega_g\bigr)-\mathbb{P}\bigl(\hat X^{\mathrm{null}}_N\in\Omega_g\bigr)\,.
\end{equation}
\noindent
The two terms in rhs are a proxy on the probability $\hat{p}_{g,N}(\Omega_g)$ of the output sitting on the hidden patch: either some query found it (the sum term: the null run hits the patch), or by blind luck (the second term: the entire run is null, yet the output hits the patch). Now, average over a random shift $g \sim \mathrm{Unif}(\T^d)$, i.e. on translations on $[0,2\pi)^d$ with periodic boundary conditions. Since the null run does not depend on $g$, and any $x \in \T^d$ is covered by a fraction $v_0/(2\pi)^d$ of the patches, by
\begin{equation}
\label{eq:coverage_identity}
    \vol\bigl(\{g\,\vert\,x\in g+\Om\}\bigr)=\vol(x-\Om)=v_0\,,
\end{equation}
each of the $N+1$ points $X^{\mathrm{null}}_1,\dots,X^{\mathrm{null}}_N,\hat X^{\mathrm{null}}_N$ is covered with probability $v_0/(2\pi)^d$. Accordingly the average of \eqref{eq:sketch_decomposition} evaluates to $\eps\geq\delta-(N+1)\,v_0/(2\pi)^d$, which is \eqref{eq:compact_G_lower_bounds}.

We note that our construction differs from arguments based on packing $\T^d$ with patches in the style of~\cite{chewi2024focs,ChewiGerberLeeLu2023}. Consider a grid of cubic cells of side $\ell$ that tile the torus, hence in number $M=(2\pi/\ell)^d \in \mathbb{N}$: a finite average over the $M$ disjoint bumps located each on a cell gives the same bound as \eqref{eq:compact_G_lower_bounds}. Our choice of a uniformly random shift removes the integrality constraint on $2\pi/\ell$ allowing to choose this knob continuously. It is also the best the adversary can do: however a patch of volume $v_0$ is placed at random, the average over $x$ of the probability that $x$ is covered equals $v_0/(2\pi)^d$, so some point is covered with probability at least $v_0/(2\pi)^d$, and the choice of uniform random $g$ attains this bound.

\subsection{The hard instances: Gevrey bumps}
\label{ssec:bumps}

It remains to show how to construct bump distributions within the class $\mathcal{W}^{s}(\alpha,\xi,\rho)$. Here the choice $s>1$ is crucial, and the price of hiding probability mass is quantified via the class parameters. We propose a construction in three steps which uses as a building block the mollifier $\psi_t(x)=e^{-1/(1-x^{2})^{t}}$, supported on $[-1,1]$, whose derivatives obey the $s$-Gevrey bound with $s=1+1/t$ and an explicit radius (\cref{prop:psi_Gevrey}, \cref{app:gevrey}). First, $\psi_t$ is integrated to give a smoothened step-function. Second, products of steps give a smooth indicator function $\chi_{\nu,\ell}:\R^d\to[0,1]$ of the cell $[0,\ell]^d$: equal to $1$ on the inner plateau $[2\nu,\ell-2\nu]^d$, zero outside $[0,\ell]^d$, monotone across ramps of width $2\nu\leq\ell/2$ (\cref{prop:Gevrey_H} and \cref{prop:Gevrey_profile}, \cref{app:gevrey}). Third and final, the indicator is shifted by $1$ and normalized, and the resulting density is a two-level profile built on $\chi_{\nu,\ell}$, shown in \cref{fig:bump_profile}: a mesa of height ratio $\alpha$ over a flat background.
The ramp width $\nu$ is the price knob: as $\nu \to 0^+$ the ramp steepens, hiding the mesa in a smaller cell costs larger derivatives and pushes towards a small Gevrey radius $\rho$.

\begin{figure}[t]
  \begin{minipage}[t]{0.48\textwidth}
    \includegraphics[width=\linewidth]{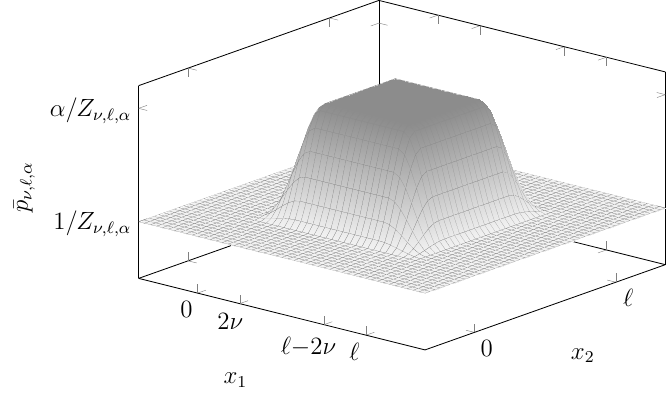}
  \end{minipage}\hfill
  \begin{minipage}[t]{0.48\textwidth}
    \includegraphics[width=\linewidth]{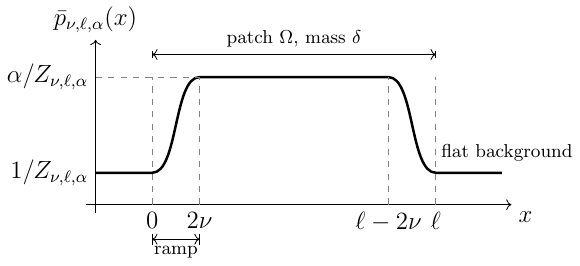}
  \end{minipage}
  \caption{\justifying The bump density $\bar{p}_{\nu,\ell,\alpha}$ \eqref{eq:ref_bump_torus}. Left: the two-dimensional case ($d=2$), a mesa of height $\alpha/Z_{\nu,\ell,\alpha}$ over the inner plateau $[2\nu,\ell-2\nu]^2$, falling to the flat background across ramps of width $2\nu$. Right: one-dimensional section. The barrier amplitude $\alpha$ is the plateau-to-background ratio, and the patch carries probability mass $\delta(\nu,\ell,\alpha)$ \eqref{eq:Z_delta_MS}.
  }
\label{fig:bump_profile}
\end{figure}

\begin{definition}[Bump distribution $p_{\nu,\ell,\alpha}$ on $\mathbb{T}^d$]
\label{def:bump_torus}
For $\alpha>1$, $0<\ell<2\pi$, $0<\nu\leq\ell/4$, the bump distribution $p_{\nu,\ell,\alpha}$ is defined via the periodic lift of its density
\begin{equation}
\label{eq:ref_bump_torus}
\bar{p}_{\nu,\ell,\alpha}(x) = \frac{1 + (\alpha-1)\,\chi_{\nu,\ell}(x)}{Z_{\nu,\ell,\alpha}}\,,\;\; x \in [0,2\pi)^d\,,\qquad
Z_{\nu,\ell,\alpha}=(2\pi)^d+(\alpha-1)\,\ell^d\,(1-2\varrho)^d\,,
\end{equation}
where $\varrho\coloneqq\nu/\ell$ and the normalization uses $\int_{[0,2\pi)^d}\chi_{\nu,\ell}=\ell^d(1-2\varrho)^d$ (\cref{lem:fill}, \cref{app:gevrey}).
\end{definition}
\noindent
In what follows we use two reduced variables: the number of hiding places $M$ and the relative strength of the mesa $S$,
\begin{equation}
\label{eq:S_def}
    M\coloneqq\left(\frac{2\pi}\ell\right)^{d}\,,\qquad S \coloneqq (\alpha-1)(1-2\varrho)^d\,.
\end{equation}
\begin{proposition}[$s$-Gevrey bump family on $\T^d$]
\label{prop:bump_family}
Let $s>1$, $t=1/(s-1)$, $\alpha>1$, $0<\ell<2\pi$, $0<\nu\leq\ell/4$. The distribution $p_{\nu,\ell,\alpha}$ satisfies:
\begin{enumerate}
\item it is a bump \eqref{eq:offpatch_density} on the closed cell $\Om=[0,\ell]^d\subset\T^d$, with $\kappa=1/Z_{\nu,\ell,\alpha}$ and
\begin{equation}
\label{eq:Z_delta_MS}
    Z_{\nu,\ell,\alpha}/(2\pi)^d=\Bigl(1+\frac{S}M\Bigr)\in[1,\alpha]\,,\qquad \delta(\nu,\ell,\alpha)=p_{\nu,\ell,\alpha}(\Om)=\frac{1+S}{M+S}>\frac{\vol(\Om)}{(2\pi)^d}\,;
\end{equation}
\item its translates $p_g=p_{\nu,\ell,\alpha}(\cdot-g)$, $g\in\T^d$, are Gibbs states with $\mathcal{C}^\infty$ potentials satisfying $\beta\Delta=\log\alpha$.
\end{enumerate}
\end{proposition}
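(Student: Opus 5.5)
The plan is to verify both items by direct computation from \cref{def:bump_torus}, using the stated properties of the smooth indicator $\chi_{\nu,\ell}$. These are: $\chi_{\nu,\ell}$ is smooth, takes values in $[0,1]$, equals $1$ on $[2\nu,\ell-2\nu]^d$, vanishes outside $[0,\ell]^d$ (\cref{prop:Gevrey_H}, \cref{prop:Gevrey_profile}), and has integral $\ell^d(1-2\varrho)^d$ (\cref{lem:fill}). Since $\ell<2\pi$, the cell $[0,\ell]^d$ sits strictly inside $[0,2\pi)^d$ and $\chi_{\nu,\ell}$ vanishes near the faces of the fundamental cube. Hence its periodic lift is $\mathcal{C}^\infty$ on $\R^d$, and so is $\bar p_{\nu,\ell,\alpha}$.

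\emph{Item 1.} Off the patch $\Om=[0,\ell]^d$ we have $\chi_{\nu,\ell}=0$, so $\bar p=1/Z_{\nu,\ell,\alpha}$ there. Integrating \eqref{eq:ref_bump_torus} with \cref{lem:fill} gives total mass $1$, which confirms the normalization. Dividing $Z_{\nu,\ell,\alpha}$ by $(2\pi)^d$ and substituting $\ell^d/(2\pi)^d=1/M$ yields $1+S/M$. Because $0\le 1-2\varrho<1$, we get $0\le S\le \alpha-1$. Since $M\ge1$ (indeed $M>1$), it follows that $1\le 1+S/M\le \alpha$.

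For the mass on the patch, the background contributes $\ell^d/Z$ and the mesa contributes $(\alpha-1)\ell^d(1-2\varrho)^d/Z$, the latter because $\chi$ is supported in $\Om$. Hence
\[
\delta=\frac{\ell^d(1+S)}{(2\pi)^d(1+S/M)}=\frac{1+S}{M+S}.
\]
The strict inequality $\delta>1/M=\vol(\Om)/(2\pi)^d$ is equivalent to $M(1+S)>M+S$, i.e.\ $S(M-1)>0$. This holds since $M>1$ and $S>0$; the latter uses $\alpha>1$ and $\varrho\le1/4<1/2$.

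We also need the value $\kappa=(1-\delta)/((2\pi)^d-\vol\Om)$ required by \eqref{eq:offpatch_density} to equal $1/Z$. This follows by consistency: the mass off the patch is $((2\pi)^d-\ell^d)/Z=1-\delta$. Finally, $\delta\in(0,1)$ because $S<\infty$ and $M>1$.

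\emph{Item 2.} Translation commutes with the periodic lift, so it suffices to treat $g=0$. The density is smooth and strictly positive, being at least $1/Z$. Therefore $\beta E\coloneqq-\log\bar p$, fixed up to an additive constant, is a $\mathcal{C}^\infty$ $2\pi$-periodic potential, and $p=e^{-\beta E}/Z'$. Its oscillation is
\[
\beta\Delta=\log\frac{\max \bar p}{\min\bar p}.
\]
Here $\max\bar p=\alpha/Z$, attained on the plateau, which is nonempty since $\nu\le\ell/4$; and $\min\bar p=1/Z$, attained off the patch, which is nonempty since $\ell<2\pi$. These use $\chi\in[0,1]$. Hence $\beta\Delta=\log\alpha$, and translation preserves both smoothness and oscillation.

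The main obstacle is minor: ensuring that smoothness of the periodic lift and the nonemptiness of both the plateau and the exterior genuinely follow from $\ell<2\pi$ and $\nu\le\ell/4$, and that the bound $\chi\le1$ from \cref{prop:Gevrey_profile} is available. All of these reduce to the cited appendix properties of $\chi_{\nu,\ell}$.
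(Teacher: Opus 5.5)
Your proposal is correct and follows the paper's proof essentially step for step. Like the paper, you use the support of $\chi_{\nu,\ell}$ in $[0,\ell]^d$ for the off-patch value $1/Z_{\nu,\ell,\alpha}$, \cref{lem:fill} for $Z_{\nu,\ell,\alpha}$ and $\delta$, the inequality $S(M-1)>0$ (the paper writes it as $y(1+x)/(1+xy)>y$) for $\delta>\vol(\Om)/(2\pi)^d$, and the range $[1/Z_{\nu,\ell,\alpha},\alpha/Z_{\nu,\ell,\alpha}]$ of $\bar p_{\nu,\ell,\alpha}$ for $\beta\Delta=\log\alpha$.

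One small fix, which also applies to the paper's wording, concerns the smoothness of the periodic lift. The cell $[0,\ell]^d$ touches the faces $x_k=0$, so $\chi_{\nu,\ell}$ does not vanish on a neighbourhood of the boundary of $[0,2\pi)^d$, and your phrase "sits strictly inside \dots\ vanishes near the faces" is not literally true. Argue instead that for $\ell<2\pi$ the $2\pi\Z^d$-translates of $\chi_{\nu,\ell}$ have pairwise disjoint supports. The periodization is then a locally finite sum of smooth functions, hence smooth.
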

\noindent
\hyperref[prf:bump_family]{The proof} is in \cref{app:gevrey}. Tracking the Gevrey property of $\psi_t$ through the construction of the profile $\bar{p}_{\nu,\ell,\alpha}$, via composition with the logarithm that defines the scaled potential, $\beta \bar{E}_{\nu,\ell,\alpha} = -\log \bar{p}_{\nu,\ell,\alpha}$ up to an additive constant, one gets the class membership $\beta E_{\nu,\ell,\alpha}\in\dot{\mathcal{G}}^{s}(\xi_E,\rho_E,\T^d)$ for
\begin{equation}
\label{eq:cV_rV_delta_bump}
\xi_E =\frac{1}{2}\,,\qquad \rho_E(\nu,\ell,\alpha) =\frac{\mathfrak{r}_t\,\nu}{2\,d^{\,s}\,(\log\alpha)^{s}}\,,
\end{equation}
where $\mathfrak{r}_t\in(0,\tfrac{1}{32}]$ depends only on $t$ \eqref{eq:constants} (\cref{thm:V_gevrey_bump} and \hyperref[rem:rho_s_bounds]{Remark}, \cref{app:gevrey}). Note that the radius is linear in the ramp width $\nu$. Combining \eqref{eq:cV_rV_delta_bump} with \cref{thm:LowerBounds_Inform} at $v_0=\ell^d$ yields classical lower bounds for the Gibbs class \eqref{eq:pot_class} with the constant $\xi\geq \xi_E,\, \rho\leq \rho_E$.

\begin{corollary}[Classical sampling lower bounds over $\mathcal{W}^{s}(\alpha,\xi,\rho)$]
\label{cor:lower_bounds_Gevrey}
Let $s=1+1/t>1$, $\ell<2\pi$, $0<\nu\leq\ell/4$, and $\alpha>e$. The query complexity of sampling the class $\mathcal{W}^{s}(\alpha,\xi,\rho)$, for any $\xi\geq \xi_E$ and $\rho\leq \rho_E(\nu,\ell,\alpha)$, through queries of any order $k\in\N\cup\{\infty\}$ and at any accuracy $\eps<\delta(\nu,\ell,\alpha)$, obeys
\begin{equation}
\label{eq:Lower_Bounds_on_W}
    \Ncl(\mathcal{W}^{s}(\alpha,\xi,\rho),\eps) \;\geq\;\left( \tfrac{2\pi}\ell\right)^d\bigl(\delta(\nu,\ell,\alpha)-\eps\bigr)-1\,.
\end{equation}
\end{corollary}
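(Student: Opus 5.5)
The plan is to instantiate \cref{thm:LowerBounds_Inform} with the bump $p_0=p_{\nu,\ell,\alpha}$ of \cref{def:bump_torus} on the patch $\Om=[0,\ell]^d$, whose volume is $v_0=\ell^d$. That theorem has two hypotheses. First, $p_0$ must be a bump in the sense of \eqref{eq:offpatch_density}. Second, the instance class $\mathcal{P}=\mathcal{W}^{s}(\alpha,\xi,\rho)$ must contain every translate $p_g$, $g\in\T^d$. Once both are checked, \eqref{eq:compact_G_lower_bounds} with $(2\pi)^d/v_0=(2\pi/\ell)^d$ gives \eqref{eq:Lower_Bounds_on_W} for every $\eps<\delta(\nu,\ell,\alpha)$.

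The first hypothesis is item 1 of \cref{prop:bump_family}. It gives the flat background $\kappa=1/Z_{\nu,\ell,\alpha}$ outside $\Om$ and the concentration $\delta=(1+S)/(M+S)>1/M=v_0/(2\pi)^d$, and it is valid for all $0<\nu\le \ell/4$ and $\ell<2\pi$. Note that $v_0\in(0,(2\pi)^d)$ because $\ell<2\pi$.

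For the second hypothesis I would check the defining conditions of \eqref{eq:pot_class} for each translate $p_g$, with potential $\beta E_g=-\log p_g$ up to an additive constant.

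\begin{enumerate}
\item \emph{Barrier amplitude.} Item 2 of \cref{prop:bump_family} gives $\beta\Delta=\log\alpha$ exactly. This holds because the plateau-to-background ratio of $1+(\alpha-1)\chi_{\nu,\ell}$ is $\alpha$, and $\chi_{\nu,\ell}$ attains both $0$ and $1$ since $\nu\le\ell/4$ leaves a nonempty plateau.
\item \emph{Gevrey seminorm.} The seminorm \eqref{eq:seminorm} takes a supremum over all $x$ of a periodic function, so it is invariant under translation: $[\beta E_g]_{s,\rho}=[\beta E_0]_{s,\rho}$. By \cref{thm:V_gevrey_bump}, i.e.\ \eqref{eq:cV_rV_delta_bump}, $[\beta E_0]_{s,\rho_E}\le\xi_E$. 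The map $\rho\mapsto[f]_{s,\rho}$ is nondecreasing, since every term carries $\rho^{\onenorm{a}}$ with $\onenorm{a}\ge1$. Hence for $\rho\le\rho_E$ and $\xi\ge\xi_E$ we get $[\beta E_g]_{s,\rho}\le\xi_E\le\xi$.
\item \emph{Ceiling.} I would not verify $\log\alpha\le\pi d\xi/\rho$ separately. It follows from the seminorm bound by the gradient argument \eqref{eq:osc_ceiling}, which holds for every function with $[\beta E]_{s,\rho}\le\xi$. One could also check it directly from $\rho\le\rho_E\le\nu/(64 d^s\log^s\alpha)$, where the assumption $\alpha>e$ gives $\log\alpha>1$.
\end{enumerate}

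The class monotonicity noted after \cref{def:pot_class} then places all translates in $\mathcal{W}^{s}(\alpha,\xi,\rho)$.

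The main obstacle is the Gevrey estimate \eqref{eq:cV_rV_delta_bump}. It has to track the mollifier bound through integration, products, and composition with $\log$. The statement lets us assume it from \cref{thm:V_gevrey_bump}, so the remaining care is in two places. One is that translation invariance is justified on the torus rather than only on $[0,2\pi)^d$, which needs the periodic lift of the density to be smooth across the cell boundary. The other is that the oracle answers for $p_g$ off $\Omega_g$ are exactly the flat answer \eqref{eq:null_response}, which is what \cref{thm:LowerBounds_Inform} requires.
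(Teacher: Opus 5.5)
Your proposal is correct and follows the paper's proof essentially step for step. You take the bump property from \cref{prop:bump_family}, get seminorm membership of every translate from \cref{thm:V_gevrey_bump} together with monotonicity of $[\cdot]_{s,\rho}$ in $\rho$, then apply \cref{thm:LowerBounds_Inform} with $v_0=\ell^d$ and transfer the bound by inclusion. The one difference is the ceiling $\log\alpha\le\pi d\xi/\rho$: the paper verifies it by explicit arithmetic ($\mathfrak{r}_t\nu\le\pi/64\le\pi d^{1+s}\log^{s-1}\alpha$), whereas you correctly observe that it already follows from the seminorm bound via the gradient argument \eqref{eq:osc_ceiling}, so that separate check is redundant.
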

\noindent
\hyperref[prf:lower_bounds_Gevrey]{The proof} is in \cref{app:gevrey}. The hide-and-seek argument never used smoothness, so the classical lower bound~\eqref{eq:compact_G_lower_bounds} is regularity-agnostic; membership \eqref{eq:pot_class} with data $\alpha,\xi,\rho$ only certifies that the hard family is regular enough for the quantum algorithm of \cref{sec:quantum_gibbs_sampling}. \Cref{sec:separation} chooses the free geometry $(\nu,\ell)$ against the quantum cost, turning \cref{cor:lower_bounds_Gevrey} into the classical lower bound at explicit class constants (\cref{thm:classical_LB}) and deriving the quantum--classical gap.

\section{Upper Bound for Quantum Gibbs Sampling on a Torus}
\label{sec:quantum_gibbs_sampling}

In this section we develop an end-to-end quantum algorithm for Gibbs sampling on the class of potentials $\mathcal{W}^s(\alpha,\xi,\rho)$ defined in \cref{def:pot_class}. Sampling from $p(x)$ can be recast as finding a steady-state solution of the time-dependent, parabolic Fokker--Planck equation
\begin{equation}\label{eq:FP_equation}
\partial_t f(x,t) = \mathcal{L}(f(x,t))\,,
\end{equation}
where the Fokker--Planck operator $\mathcal{L}$ is defined by
\begin{equation}\label{eq:FP_operator}
\mathcal{L}(f(x))\coloneqq\beta^{-1}\nabla \cdot \left(e^{-\beta E}\nabla\left(e^{\beta E}f(x)\right)\right)=\nabla \cdot (\nabla E(x) f(x))+\beta^{-1}\nabla^2 f(x)\,.
\end{equation}
equivalently, this steady state is the solution of the time-independent, elliptic Fokker--Planck equation
\begin{equation}\label{eq:FP_steady}
\mathcal{L}(f(x)) = 0\,.
\end{equation}
It is a classical fact that $p(x)$ is the unique steady state of the Fokker--Planck operator \cite{pavliotis_2014}; \cite{motamedi2022gibbs} establishes the analogous statement for periodic potentials on a torus, the setting relevant here, while \cite{LengDingChenLin2025} uses the same fact in the Euclidean setting.

We rely on \cite{motamedi2022gibbs}'s analysis of the discretized Fokker--Planck operator and \cite{LengDingChenLin2025}'s quantum singular value thresholding (QSVTh) method for solving the time-independent equation \eqref{eq:FP_steady} directly to achieve a rigorous quadratic speedup in Gibbs sampling. \cite{motamedi2022gibbs} faces two obstructions to solving \eqref{eq:FP_steady} directly, and instead solves the time-dependent equation \eqref{eq:FP_equation} directly, by simulating its time evolution until convergence to the Gibbs state. First, \eqref{eq:FP_steady} is homogeneous: one seeks a vector in the kernel of the discretized operator rather than the solution to a linear system with a nonzero right-hand side, and quantum linear system algorithms (QLSA) are not designed for this kernel-finding task. Second, even setting this aside, the discretized Fokker--Planck operator has an exponentially large condition number, on the order of the Poincar\'e constant $C_{\mathrm{PI}}$ (the inverse spectral gap of the generator) itself. For these reasons \cite{motamedi2022gibbs} does not achieve a general, operator-level quadratic speedup, i.e., a complexity scaling as $\sqrt{C_{\mathrm{PI}}}$; such a speedup is obtained there only for Morse potentials, for which a favorable Poincar\'e constant is available.

\cite{LengDingChenLin2025} overcomes both obstructions in solving \eqref{eq:FP_steady} directly. First, it introduces the QSVTh method, which targets the kernel directly via singular value thresholding rather than solving a linear system. Second, rather than discretizing $\mathcal{L}$, it exploits the square-root decomposition of the Witten Laplacian $\mathcal{H}:=-e^{\beta E/2}\mathcal{L}e^{-\beta E/2}$ as $\mathcal{H} = \sum^d_{j=1}L^\dagger_j L_j\,,$ where $L_{j}e^{-\beta E/2} = 0$ for all $j\in[1,\ldots,d]$.\footnote{In the following we use calligraphic font to denote continuous operators (such as $\mathcal{H}$), and reserve blackboard bold (such as $\mathbb{H}_N$, with subscript $N$ for the grid size), for the discretized ones.}
Discretizing each $L_j$ on the grid gives the operator\footnote{This is denoted $\mathbb{L}_N$ in \cite{LengDingChenLin2025} (Eqn.~44); we rename it here to avoid a clash with our own notation $\mathbb{L}_N$ for the discretization of $\mathcal{L}$ (used below in the definition of $\mathbb{A}_N$).}
\begin{equation}
    \mathbb{W}_N = [\mathbb{L}_{1,N}^\top,\dots,\mathbb{L}_{d,N}^\top]^\top\,,
\end{equation}
such that the kernel of $\mathbb{W}_N$ encodes the discretized state $\ket{\sqrt{p}_N}$. QSVTh is then used to prepare a state approximating this vector \cite{LengDingChenLin2025}. Applying QSVTh to these components rather than to the Witten Laplacian directly reduces the condition number from $C_{\mathrm{PI}}$ to $\sqrt{C_{\mathrm{PI}}}$, yielding the desired operator-level quadratic speedup.

However, \cite{LengDingChenLin2025} makes several simplifying assumptions that leave it short of an end-to-end, quadratic-speedup Gibbs sampler. We enumerate these assumptions together with how we remove or prove each of them.
\begin{enumerate}[itemsep=2pt, topsep=2pt, parsep=0pt]
\item \emph{Warm start.} The algorithm requires access to an initial state with constant overlap with the target Gibbs state. We remove this requirement using a temperature-annealing procedure (\cref{sec:annealing}; full statements and proofs are given in \cref{app:qsvt-anneal}).
\item \emph{Dissipativity.} \cite{LengDingChenLin2025} works on Euclidean space and assumes the potential is dissipative, e.g., grows at least quadratically away from the origin, so as to truncate the domain to a hypercube of side length $O(\log(d/\epsilon))$ around the origin at target precision $\epsilon$ (\cite[Section~D.1, footnote~7]{LengDingChenLin2025}). We instead work on the torus $\mathbb{T}^d$ with periodic boundary conditions and impose $s$-Gevrey smoothness on the potential (\cref{def:pot_class}), which controls the discretization error without any dissipativity assumption.
\item \emph{Ground-state proximity.} \cite{LengDingChenLin2025} assumes, rather than proves (their Assumption~11, item~1), that the ground state of the discretized Witten Laplacian is close to the discretized square-root Gibbs state. We prove this via the Davis--Kahan theorem (\cref{DavisKahan}); see \cref{thm:Overlap_guarantee}.
\item \emph{Spectral-gap tracking.} \cite{LengDingChenLin2025} similarly assumes (their Assumption~11, item~2) that the spectral gap of the discretized Witten Laplacian tracks that of the continuous operator. We prove this in \cref{lem:least_EValues_of_J}, building on the operator-norm proximity established in \cref{lem:Operator_Distance_AN_JN}.
\end{enumerate}
In the rest of this section we present the components that fill the gaps identified above.

\subsection{Gibbs state preparation from a warm start}
\label{sec:warmstart}

We use QSVTh to project onto the ground states of $\mathbb{W}_N$ (its smallest-singular-value subspace) and prove that this subspace approximates the target state $\ket{\sqrt{p}_N}$, without assuming in advance that $\mathbb{W}_N$ has a one-dimensional kernel. Applying QSVTh to $\mathbb{W}_N$ requires a threshold lying strictly between the least and second-least singular value of $\mathbb{W}_N$, so that the filter amplifies the ground states and damps everything above it.

For convenience we work with the spectrum of $\mathbb{H}_N := \mathbb{W}_N^\dagger \mathbb{W}_N$, whose eigenvalues are the squared singular values of $\mathbb{W}_N$: the ground states of $\mathbb{H}_N$ are exactly the right singular vectors of $\mathbb{W}_N$ associated with its smallest singular value. Then, we demonstrate that the mixture of ground states of $\mathbb{H}_N$ approximates the discretized Gibbs state within total-variation error $\epsilon$ when $N\in \polylog\left({1}/\epsilon\right)$. This demonstration relies, as an intermediate step, on showing the proximity of $\mathbb{H}_N$ to the operator
\begin{equation}
\label{eq:AN_def}
\mathbb{A}_N \coloneqq\frac{1}\beta[e^{\beta E/2}]_N\mathbb{L}_N[e^{-\beta E/2}]_N\,,
\end{equation}
introduced in \cite{motamedi2022gibbs},\footnote{Up to the normalization convention (\cite{motamedi2022gibbs} fixes $\beta=1$), $\mathbb{A}_N$ coincides with the operator denoted $\mathbb{L}'$ in the proof of Lemma~B.7 (and in Remark~B.2) of \cite{motamedi2022gibbs}.} where $\mathbb{L}_N$ is the direct discretization of $\mathcal{L}$ and $[f]_{N}$ stands for the function $f$ evaluated on the grid.

The operator $\mathbb{A}_N$ is in fact yet another discretization of $\mathcal{H}$ featuring a kernel spanned by the discretized Gibbs state $\ket{\sqrt{\sigma}_{N}}$ (\cite[Lemma~B.7(b)]{motamedi2022gibbs}). Assuming the potential satisfies \cref{def:pot_class} guarantees that we can choose $N\in \polylog\left({1}/\epsilon\right)$ such that the discretized operators $\beta \mathbb{A}_N$ and $\mathbb{H}_N$ are sufficiently close in the operator norm.

In order to apply QSVTh, we must determine a lower bound on the spectral gap of $\mathbb{H}_N$. For this purpose, we first lower bound the spectral gap of $\mathbb{A}_N$ (\cref{lem:spectral_gap_AN}), then exploit the proximity of operators $\mathbb{A}_N$ and $\mathbb{H}_N$ (see \cref{sec:porximityAJ} for details) to bound the spectral gap of the latter. This way provides us with an algorithm that projects a vector onto the ground eigenspace of $\mathbb{H}_N.$ Moreover, by the Davis--Kahan theorem (\cref{DavisKahan}) we are able to conclude that any mixture of ground states of $\mathbb{H}_N$ is approximately equal to the ground state of $\mathbb{A}_N,$ i.e. the discretized Gibbs state. The following corollary shows how to obtain an approximate Gibbs state using queries to the block encoding of $\mathbb{W}_{N}$.

\begin{corollary}(\cref{cor:ThresholdChoice})
\label{corB4Main}
Suppose $0<\tilde\epsilon<0.5$ is given. Provided $p$ belongs to the class $\mathcal{W}^{s}(\alpha,\xi,\rho)$ as defined in \cref{def:pot_class} and
\begin{align}
   N \in \max \left \{ \tilde\Omega \left(\frac{d+\xi}\rho \log \frac{d(d+\xi)\alpha^2}{\rho \tilde{\epsilon}}\right)^s,\,\tilde\Omega \left(\frac{(d+ 2)s}\rho\right)^s\right\},
\end{align}
there is an algorithm preparing a state $\ket{\tilde{\sqrt{p}}}$ satisfying  $\norm{\ket{\tilde{\sqrt{p}}}-\ket{\sqrt{p}_{N}}}_2<\tilde\epsilon$ with \\$O\left(\sqrt{\alpha d}(\pi N+\xi/2\rho)\log\left(\frac{1}{\tilde\epsilon}\right)\right)$ queries to the $(\gamma,3)$-block encoding of $\mathbb{W}_N$.
\end{corollary}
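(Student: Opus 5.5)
The plan is to run QSVTh on the block encoding of $\mathbb{W}_N$ with a singular-value threshold placed inside the gap of $\mathbb{H}_N=\mathbb{W}_N^\dagger\mathbb{W}_N$, starting from a warm start, and then use Davis--Kahan to identify the filtered output with $\ket{\sqrt{p}_N}$. The argument has three steps.

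\textbf{Step 1 (proximity).} First I show that $\mathbb{H}_N$ and $\beta\mathbb{A}_N$ of \eqref{eq:AN_def} are close in operator norm. Both are discretizations of the same Witten Laplacian $\mathcal{H}$. Their difference involves only commutators between multiplication by grid-sampled smooth functions ($e^{\pm\beta E/2}$, $\partial_j\beta E$) and the pseudo-spectral derivative. For Fourier collocation, these commutators are controlled by the aliased Fourier tail of the multiplier beyond frequency $N/2$.

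By \cref{cor:class_reg}, membership in $\mathcal{W}^s(\alpha,\xi,\rho)$ gives $p\in\mathcal G^s(\alpha,\rho/(d+\xi+1))$ and $\sqrt p\in\mathcal G^s(\sqrt\alpha,\rho/(d+\xi/2+1))$. Gevrey-$s$ functions have Fourier coefficients decaying like $\exp(-c(\rho|k|/(d+\xi))^{1/s})$, with prefactors polynomial in $\alpha$, $d$ and $N^d$. The tail is therefore below any target $\eta$ once $N\gtrsim ((d+\xi)/\rho\,\log(d(d+\xi)\alpha^2/(\rho\eta)))^s$. The second branch of the max, $((d+2)s/\rho)^s$, absorbs the polynomial-in-$N$ prefactors, so that the super-polynomial decay dominates.

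\textbf{Step 2 (gap transfer).} Next I lower bound the gap of $\mathbb{H}_N$. By the result of \cite{motamedi2022gibbs}, $\ker\mathbb{A}_N=\mathrm{span}\ket{\sqrt{p}_N}$ (their Lemma~B.7(b)). \Cref{lem:spectral_gap_AN} gives a second eigenvalue of order $1/C_{\mathrm{PI}}$, and by Holley--Stroock $C_{\mathrm{PI}}\lesssim\alpha$. Weyl's inequality with $\eta\ll 1/\alpha$ from Step 1 then shows that $\lambda_1(\mathbb{H}_N)\le\eta$ and $\lambda_2(\mathbb{H}_N)\ge c/\alpha-\eta$. Hence a threshold on the singular values of $\mathbb{W}_N$ at about $1/(2\sqrt{\alpha})$ separates the smallest singular value from the rest. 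Davis--Kahan (\cref{DavisKahan}) bounds the sine of the angle between the bottom eigenspace of $\mathbb{H}_N$ and $\ket{\sqrt{p}_N}$ by $\eta\alpha$. Choosing $\eta\sim\tilde\epsilon/\alpha$ makes every normalized vector in that eigenspace $\tilde\epsilon/2$-close to $\ket{\sqrt{p}_N}$. This choice is what puts $\alpha^2$ inside the logarithm of $N$.

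\textbf{Step 3 (QSVTh cost).} Finally I apply a polynomial thresholding filter to the $(\gamma,3)$-block encoding. The normalization is $\gamma\lesssim\sqrt d\,(\pi N+\xi/(2\rho))$, since $\|\mathbb{L}_{j,N}\|\le\pi N+\|\partial_j\beta E\|_\infty/2$. The relative gap is $\Theta(1/(\sqrt\alpha\,\gamma))$ and the filter error is $\tilde\epsilon/2$, so the polynomial degree is $O(\gamma\sqrt\alpha\log(1/\tilde\epsilon))$, which matches the stated query count. A warm start with constant overlap is assumed here, and amplitude amplification handles the constant success probability.

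\textbf{Main obstacle.} I expect Step 1 to be the hardest part. It requires a careful commutator/aliasing bound for pseudo-spectral multiplication operators in operator norm, uniform over the grid, with explicit Gevrey constants. It is this bound that fixes the exact form of the $N$ condition.
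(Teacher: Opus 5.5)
Your three steps are the paper's own route: operator-norm proximity of $\beta\mathbb{H}_N$ and $-\beta\mathbb{A}_N$ (\cref{lem:Operator_Distance_AN_JN}), then Weyl and Davis--Kahan at tolerance $\sim\tilde\epsilon/\alpha$ (\cref{thm:Overlap_guarantee}, \cref{lem:least_EValues_of_J}), then the filter of \cref{prop:sv-filter} at cost $\gamma/\mathfrak{s}=O(\sqrt{\alpha d}\,(\pi N+\xi/2\rho))$. Steps~2--3 are fine up to bookkeeping: it is $-\mathbb{A}_N$ that approximates $\mathbb{H}_N$, and the factors $\beta^{-1/2}$ you drop from $\gamma$ and from the gap cancel in the ratio.

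\textbf{The gap is Step~1, and careful bookkeeping will not close it: the commutator bound you plan is false in operator norm.} Gevrey decay of a multiplier $g$ controls $[\tilde\partial_j,[g]_N]-[\partial_j g]_N$ only on grid vectors band-limited to about $N/2$. On the top grid modes, even the \emph{lowest} Fourier modes of $g$ alias. On the $(2N+1)$-point grid, $e^{ix}e^{iNx}=e^{-iNx}$, so for $g=e^{ix}$ and $f=e^{iNx}$ one gets $([\tilde\partial,[g]_N]-[g']_N)f=-i(2N+1)e^{-iNx}$. Now write $-\beta\mathbb{A}_N=\sum_j T_j^{\top}T_j$ with $T_j=[e^{-\beta E/2}]_N\tilde\partial_j[e^{\beta E/2}]_N$, and $\beta\mathbb{H}_N=\sum_j S_j^{\top}S_j$ with $S_j=\tilde\partial_j+\tfrac{\beta}{2}[\partial_jE]_N$. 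For $e^{\beta E/2}\propto 1+\mu\cos x_1$, a first-order computation in $\mu$ gives a matrix element of $\beta\mathbb{H}_N+\beta\mathbb{A}_N$ between $e^{iNx_1}$ and $e^{-iNx_1}$ of size $\approx\mu N(2N+1)$. So $\|\beta\mathbb{H}_N+\beta\mathbb{A}_N\|$ grows like $N^2$, and no $N$ makes it $\le\tilde\epsilon/(2\alpha)$. The error comes from the low modes of the multiplier, not its tail, so Gevrey decay cannot help.

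The paper does not bound these commutators. Its proof of \cref{lem:Operator_Distance_AN_JN} applies the exact Leibniz rule of \cref{lem:LemmFourierProductRule}, which collapses the discrepancy to a diagonal matrix of pointwise derivative errors controlled by \cref{thm:high-precision-dft}. That identity fails for the same aliasing reason: with $u=v=\cos Nx$ on the grid, $\tilde\partial(uv)=-\tfrac12\sin x$, while $2u\,\tilde\partial u=N\sin x$. So following the paper does not rescue your Step~1; your commutator description is the accurate one.

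To reach the corollary's conclusions you have to give up full operator-norm proximity. One way:
\begin{enumerate}
\item Bound the residual $\|\mathbb{W}_N\ket{\sqrt{p}_N}\|$ directly. Since $S_j(e^{-\beta E/2})_N=(\tilde\partial_j-\partial_j)(e^{-\beta E/2})_N$, this is a pseudo-spectral error on a fixed Gevrey function and is exponentially small. This makes $\sigma_{\min}(\mathbb{W}_N)$ small.
\item Lower-bound $\lambda_2(\mathbb{H}_N)$ by a frequency split. On modes below $N/2$, $S_j$ agrees with $T_j$ up to Gevrey tails, and $T_j$ has kernel exactly $\sqrt{p}_N$ with gap from \cref{lem:spectral_gap_AN}. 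On the remaining modes, $\|\tilde\partial x\|\gtrsim N\|x\|$ dominates $\|\tfrac{\beta}{2}\partial_jE\|_\infty\le\xi/(2\rho)$.
\item Replace perturbative Davis--Kahan by the Rayleigh-quotient bound $\sin^2\theta\le\langle v,\mathbb{H}_Nv\rangle/\lambda_2(\mathbb{H}_N)$ for $v=\ket{\sqrt{p}_N}$.
\end{enumerate}
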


We denote the algorithm which prepares the approximate Gibbs state in \cref{corB4Main} as Grid-based Gibbs sampling (GbGS). This algorithm thus produces an approximate discretized Gibbs state. For Gibbs distributions in~\cref{def:pot_class}, using \cite{pde-paper}, we only require that the prepared quantum state approximate the discretized Gibbs state well for successful sampling. We use the fact that the Gibbs distribution belongs to the $s$-Gevrey class to ensure that $N$ is polylogarithmic in inverse precision \cite{pde-paper}. The pseudocode for GbGS is provided in \cref{alg:GbGS}.

\begin{algorithm}[t]
\algnewcommand{\Target}[1]{\item[\textbf{Target:}] #1}
\algnewcommand{\Requires}[1]{\item[\textbf{Require:}] #1}
\caption{GbGS($\beta$) (Grid-based Gibbs sampling)}\label{alg:GbGS}
\begin{algorithmic}
\Require Inverse temperature $\beta\ge0$, target precision $\tilde\epsilon>0$ \\
$\qquad\,$ Warm start state $\ket{\psi}$, $(\gamma, 3)$ block encodings of $\mathbb{W}_N$ for all $N$
\Requires $\langle\psi|\sqrt{p}\rangle\in\Omega(1)$ where $p$ is a Gibbs distribution as per \cref{def:pot_class}, \\
$\qquad\qquad\!$ $\gamma > 0$ satisfying $\pi N \sqrt{d/\beta}\le \gamma \le \sqrt{d/\beta}(\pi N+\xi/2\rho)$,
\Ensure $\ket{\tilde{\sqrt{p}}}$ \text{such that} $\norm{\ket{\tilde{\sqrt{p}}} - \ket{\sqrt{p}_{N}}}_{2}\le\tilde\epsilon$
\State Choose $N$ according to \cref{corB4Main}.
\State Apply QSVT (\cref{prop:sv-filter}) approximating the following filter function:
\begin{align}
    f(x) = \begin{cases}
        1, & x \in \left[-\sqrt{\frac{1}{2\alpha\gamma^{2}\beta}},\sqrt{\frac{1}{2\alpha\gamma^{2}\beta}}\right],\\
        0, & x \in \left[-1, -\sqrt{\frac{3}{4\alpha\gamma^{2}\beta}}\right]\cup \left[\sqrt{\frac{1}{4\alpha\gamma^{2}\beta}}, 1\right].
    \end{cases}
\end{align}
\end{algorithmic}
\end{algorithm}

\begin{lemma}[\cref{cor:dealWithResolutionBoostercor}]
\label{NfromResBoosterMain}
Suppose the output $\ket{\tilde{\sqrt{p}}}$ of GbGS (\cref{alg:GbGS}) is such that $\norm{\ket{\tilde{\sqrt{p}}}-\ket{\sqrt{p}_M}}_{2}\le \tilde\epsilon/2$ for some $0< \tilde\epsilon < 0.5$. Then, provided
\begin{equation}
\label{eq:resbooster_N_choice}
N\in\max\left\{M,\,\tilde{\Omega}\left(\frac{d+\xi/2+1}{2\rho}\, \log \frac{(d+\xi/2+1)2^{d/2}\alpha ds}{\rho\sqrt{\tilde\epsilon}}\right)^{s}\right\}\,,
\end{equation}
there is a quantum algorithm that outputs a random variable $X\sim \eta$ such that $\rm{TV}(\eta, \sigma) \le \tilde\epsilon$ with $1$ copy of the state $\ket{\tilde{\sqrt{p}}}$, and an additional $d\cdot \polylog(1/\tilde\epsilon)$ elementary gates.
\end{lemma}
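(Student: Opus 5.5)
We take the state $\ket{\tilde{\sqrt{p}}}$, which lives on the $M^d$-point grid, and upsample it coherently to a finer grid of size $N^d$ by Fourier interpolation. We then measure in the computational basis and smooth the resulting discrete sample into a continuous one. This is the standard spectral-interpolation route of \cite{pde-paper}, so the plan has three stages.

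\textbf{Upsampling.} We apply the quantum Fourier transform on the $M^d$ grid and zero-pad the frequency register to $N^d$ modes. In each coordinate this is a register enlargement plus a controlled shift of the negative frequencies. We then apply the inverse QFT on the $N^d$ grid. The map is an isometry, since Parseval holds on both grids. It costs $O(d\log^2 N)$ gates, which is $d\cdot\polylog(1/\tilde\epsilon)$ once $N$ is polylogarithmic. Because it is an isometry, the error $\norm{\ket{\tilde{\sqrt{p}}}-\ket{\sqrt{p}_M}}_2\le\tilde\epsilon/2$ carries over unchanged. It then remains to compare the upsampled exact vector $U\ket{\sqrt{p}_M}$ with $\ket{\sqrt{p}_N}$, the normalized samples of $\sqrt{p}$ on the fine grid.

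\textbf{Interpolation error.} Here we use $\sqrt{p}\in\mathcal G^{s}(\sqrt\alpha,\rho/(d+\xi/2+1))$ from \cref{cor:class_reg}. The $s$-Gevrey bound implies Fourier coefficient decay of the form $|\widehat{\sqrt p}(k)|\lesssim \sqrt\alpha\, e^{-c\,(r\|k\|)^{1/s}}$ with $r=\rho/(d+\xi/2+1)$, up to polynomial factors. Aliasing and truncation errors of the trigonometric interpolant are therefore bounded by the tail $\sum_{\|k\|_\infty>M/2}|\widehat{\sqrt p}(k)|$. Bounding this tail by $2^{d/2}\alpha\, d\,s\, e^{-c(rM)^{1/s}}$, and requiring it to be at most $\sqrt{\tilde\epsilon}$, gives the second term in \eqref{eq:resbooster_N_choice}. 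The $2^{d/2}$ comes from counting modes and from the grid normalization. The same tail bound controls the normalization constants, via the ratio of the discrete norm to $\|\sqrt p\|_{L^2}$.

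\textbf{Sampling and TV.} We measure the fine-grid register and obtain a grid point $j$ with probability $|\langle j|\psi\rangle|^2$. We output $X=x_j+U$, where $U$ is uniform on the grid cell, which uses $d\log N$ random bits. The output density is piecewise constant. Its TV distance to $p=\sigma$ splits into three parts:
\begin{itemize}
\item the $\ell^1$ distance between the measured distributions, bounded by $2\|\psi-\phi\|_2\le\tilde\epsilon$ (up to constants), via $\sum||a|^2-|b|^2|\le\|a-b\|\,\|a+b\|$;
\item the distance between $|\sqrt{p}_N|^2$ and the cell masses of $p$;
\item the within-cell variation of $p$, which is at most $\max|\nabla p|\cdot\sqrt d\,(2\pi/N)$ in aggregate.
\end{itemize}
The gradient bound again comes from the Gevrey estimate on $p$, namely $|\nabla p|\le\alpha\, d/r$ relative to its mean. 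So taking $N$ larger than roughly $\alpha d^{3/2}/(r\tilde\epsilon)$ suffices, and this is dominated by the stated polylog-type choice. If it were not, we would instead sample exactly from the trigonometric interpolant.

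The main obstacle is the second stage: turning the Gevrey derivative bound into an explicit multidimensional Fourier tail bound with exactly the $2^{d/2}\alpha ds$ prefactor, and tracking the normalizations consistently. I would first derive the one-dimensional decay by optimizing $|\hat f(k)|\le C(a!)^s r^{-a}|k|^{-a}$ over $a$, and then tensorize over coordinates.
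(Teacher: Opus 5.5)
There is a genuine gap in your third stage. Measuring on the $N$-grid and adding uniform jitter produces a piecewise-constant output density. Both your Riemann-sum term and your within-cell term are therefore first order in the mesh $2\pi/N$. By your own estimate they have size about $\alpha d^{3/2}/(rN)$, and no Gevrey regularity of $p$ lets a piecewise-constant density do better than first order. Closing your bound needs $N\gtrsim\alpha d^{3/2}/(r\tilde\epsilon)$, and any jitter scheme on the $N$-grid needs $N$ at least polynomial in $1/\tilde\epsilon$. By contrast, \eqref{eq:resbooster_N_choice} only provides $N$ of order $\bigl(r^{-1}\log(2^{d/2}\alpha ds/(r\sqrt{\tilde\epsilon}))\bigr)^{s}$, which is polylogarithmic in $\alpha$ and $1/\tilde\epsilon$. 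The domination you claim runs the other way, so at the resolution the statement allows your TV bound does not close. Your fallback, ``sample exactly from the trigonometric interpolant'', is precisely the missing ingredient, and you assert it rather than construct it.

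The paper never does this step by hand. It proves only the $L^2$ statement $\norm{I_N\ket{\tilde{\sqrt{p}}}-\sqrt{p}}_{L^2}\le\tilde\epsilon/2$. This comes from the Gevrey Fourier tail plus Parseval in \cref{lem:InterpolationDistance}, followed by the isometry of $I_N$ and the triangle inequality in \cref{dealWithResolutionBooster}. It then hands this bound to the resolution booster \cref{lem:ResolutionBooster} of \cite{LengDingChenLin2025}, which converts $L^2$ closeness of the interpolant into a TV-accurate continuous sample using one copy and $d\cdot\polylog(1/\tilde\epsilon)$ gates. Your second stage plays the role of \cref{lem:InterpolationDistance}; your third stage tries to replace that black box.

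You can repair it in either of two ways.
\begin{itemize}
\item Keep your three-term decomposition, but measure on a separate zero-padded grid whose size is polynomial in $\alpha d/(r\tilde\epsilon)$. The QFTs cost only polylogarithmically in the grid size, so your first-order terms then close.
\item More cleanly, let $I\ket{g}$ be the trigonometric interpolant of the state you hold and compare the output with $q=|I\ket{g}|^2$. Your inequality $\sum\bigl||a|^2-|b|^2\bigr|\le\norm{a-b}\norm{a+b}$, applied in $L^2(\T^d)$, gives $\mathrm{TV}(q,p)\le\norm{I\ket{g}-\sqrt{p}}_{L^2}$. Moreover $q$ can be sampled exactly: draw a uniform shift $u$ within one grid cell, apply the phases $e^{i\langle k,u\rangle}$ between the two QFTs, measure $j$, and output $x_j+u$. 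The frequencies of $q$ are at most twice the mode cutoff, hence below the number of grid points per axis. So the equispaced rule on every shifted grid integrates $q$ exactly, and the output law is exactly $q$ (up to the precision of the phase rotations).
\end{itemize}

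A smaller point concerns your second stage. Zero-padding does not change the interpolant, so its aliasing and truncation error is governed by the grid $M$ on which the state lives. The tail condition you derive therefore constrains $M$, not $N$. The paper's \cref{dealWithResolutionBooster} glosses over the same point by applying the hypothesis to $\ket{\sqrt{p}_N}$, and \cref{thm:JiaqiWithWarmStart} sidesteps it by using a single grid that meets all the conditions. You should make this explicit.
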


\subsection{End-to-end Gibbs sampling without a warm start}
\label{sec:annealing}

In \cref{sec:warmstart} we introduced a discrete Gibbs state preparation algorithm (GbGS, \cref{alg:GbGS}) that prepares the Gibbs state from an initial state satisfying a warm-start condition. In this section we remove this requirement using a temperature annealing process, applying the filtering subroutine at each step of annealing. This construction builds on the temperature-annealing strategies of \cite{somma2007quantumsimulatedannealing,ozgul2024stochastic}; full statements and proofs are given in \cref{app:qsvt-anneal}. In the annealing, we first fix a target inverse temperature $\beta>0$ and error tolerance $\epsilon>0$. We start from the uniform superposition of basis states corresponding to inverse temperature $\beta_0=0$, and increase $\beta$ in steps of size $\delta := 4/(\beta \Delta^2)$. At each step the filtering subroutine of \cref{alg:GbGS} is applied to the current state $\ket{\tilde\psi_k}$, targeting the Gibbs state at the next inverse temperature $\beta_{k+1}=\beta_k+\delta$ until $\beta^2\Delta^2/4$ steps have been taken. This schedule is chosen such that the success probability converges. For this annealing process to be well defined, every intermediate approximate Gibbs state must be a warm start for the next filtering step. In \cref{lemma:overlap-approx} we demonstrate that our choice of the annealing schedule satisfies this requirement. The full algorithm is given in \cref{alg:cap}.

\begin{algorithm}[t]
\algnewcommand{\Target}[1]{\item[\textbf{Target:}] #1}
\algnewcommand{\Requires}[1]{\item[\textbf{Require:}] #1}
\caption{Gibbs state preparation}\label{alg:cap}
\begin{algorithmic}
\Require Inverse temperature $\beta\ge0$, parameter $\Delta = \max E-\min E$, target precision $\epsilon>0$
\Requires{Access to subroutine GbGS($\beta)[\ket{\psi}]$ applying \cref{alg:GbGS} on initial state $\ket{\psi}$ to obtain a $\frac{4\epsilon}{\beta^{2}\Delta^{2}}$-precise Gibbs state in TVD at target temperature $\beta$}
\Target{$\ket{\sqrt{p}} = \sum_{x}\sqrt{p(x)}\ket{x} \text{ where } p(x)\propto e^{-\beta E(x)}$}
\Ensure $\ket{\tilde{\sqrt{p}}}$ \text{such that} $\norm{I_N\ket{\tilde{\sqrt{p}}} - {\sqrt{p}}}_{TV}\le\epsilon$
\State $\ket{\sqrt{p_{0}}} =\frac{1}{\sqrt{2N+1}^d} \sum_{x}\ket{x}$ \Comment{Initial state of the annealing process}
\State $k \gets 1,\ \ \beta_{1} \gets  \frac{4}{\beta\Delta^{2}}$
\State $\ket{\tilde{\sqrt{p_{1}}}} \gets \text{GbGS}(\beta_{1})[\ket{\sqrt{p_{0}}}]$
\For{$k\le \beta^{2}\Delta^{2}/4$}
    \State $\beta_{k+1} = \beta_{k} + \frac{4}{\beta
    \Delta^2}$
    \State $\ket{\tilde{\sqrt{p_{k+1}}}} \gets \text{GbGS}(\beta_{k+1})[\ket{\tilde{\sqrt{p_{k}}}}]$
    \State $k \gets k + 1$
    \EndFor
\State Measure $\ket{\tilde{\sqrt{p}}}$ in computational basis \Comment{Final output of the annealing process}
\State Upsample using Fourier interpolation according to \cref{cor:dealWithResolutionBoostercor}
\end{algorithmic}
\end{algorithm}

\cref{corB4Main} determines the choice of $N$ guaranteeing desired precision at each annealing step, and choosing $N$ according to \cref{NfromResBoosterMain} ensures that continuous sampling after the last step approximates the final distribution well. Combining \cref{corB4Main} and \cref{NfromResBoosterMain}, we have the overall complexity for the annealing protocol defined in \cref{alg:cap}:

\begin{theorem}
\label{thm:main-theorem-quantum}
Let $E:\mathbb{T}^{d}\rightarrow\mathbb{R}$ be a Gibbs potential in $\mathcal{W}^s(\alpha,\xi,\rho)$ as defined in \cref{def:pot_class} and $p(x)= e^{-\beta E(x)}/Z$ the corresponding Gibbs state.
There exists a quantum algorithm that outputs a random variable $X\sim\eta$ after $N_q$ queries to the quantum gradient oracle $O_{\nabla E}$, such that $\mathrm{TV}(\eta, p) \le \eps$, where
\begin{align}
\begin{aligned}
\label{eqn:final-query_FINAL_FINAL}
N_q=\tilde{O}\left(\sqrt{\alpha}\,\frac{d^{1/2}(d+\xi)^s}{\rho^s} 
\left(\log\frac{d(d+\xi)\alpha^2}{\rho \varepsilon}\right)^s \,\log\left(\frac{1}\epsilon\right)\right)\,.
    \end{aligned}
\end{align}
\end{theorem}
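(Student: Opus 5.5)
The plan is to assemble \cref{thm:main-theorem-quantum} from the two ingredients it names: the per-step cost of GbGS from a warm start (\cref{corB4Main}, \cref{alg:GbGS}) and the annealing schedule of \cref{alg:cap}. After annealing, the final state is upsampled by \cref{NfromResBoosterMain}. The final step converts block-encoding queries into gradient-oracle queries.

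\emph{Step 1: query cost of the block encoding.} I first show that a $(\gamma,3)$-block encoding of $\mathbb{W}_N$ costs $O(1)$ queries to $O_{\nabla E}$ and its inverse. Each $\mathbb{L}_{j,N}$ is a pseudo-spectral derivative (a QFT-diagonal operator) plus the diagonal multiplication by $\tfrac{\beta}{2}\partial_j E$ on the grid. The diagonal part is block-encoded with one call to $O_{\nabla E}$, one controlled rotation, and one uncomputation. The derivative part is block-encoded using gates only. A linear combination of unitaries over $j$ then gives the stacked operator, with normalization $\gamma \le \sqrt{d/\beta}(\pi N+\xi/2\rho)$. This uses $\Vert\beta\partial_j E\Vert_\infty\le\xi/\rho$, which comes from the seminorm with $a=e_j$. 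Hence the count in \cref{corB4Main} is, up to a constant, the gradient-query count per filtering step:
\[
O\bigl(\sqrt{\alpha d}\,(\pi N+\xi/2\rho)\log(1/\tilde\epsilon)\bigr).
\]

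\emph{Step 2: annealing.} Next I invoke \cref{lemma:overlap-approx} and \cref{lem:success-probability}. The argument goes as follows. Consecutive states $\sqrt{p_{\beta_k}}$ and $\sqrt{p_{\beta_{k+1}}}$ have overlap at least $1-O(\delta_\beta^2\Delta^2)$ with $\delta_\beta=4/(\beta\Delta^2)$. Each intermediate class member satisfies $\beta_k\Delta\le\beta\Delta$, so it still lies in $\mathcal{W}^s(\alpha,\xi,\rho)$, and the per-step bounds apply uniformly. Over $K=\beta^2\Delta^2/4$ steps, the product of per-step success probabilities is bounded below by a constant, because $K\cdot O(1/(\beta^2\Delta^2))=O(1)$. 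Running each step at precision $\tilde\epsilon=4\epsilon/(\beta^2\Delta^2)$ makes the accumulated error at most $\epsilon$ by the triangle inequality.

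\emph{Step 3: total count and main obstacle.} This step absorbs the $K$ factor. Naively, the cost is $K\sqrt{\alpha}$, which is $\log^2\alpha\cdot\sqrt\alpha$. The $\log^2\alpha$ factor is polylogarithmic in $\alpha$ and so is absorbed into $\tilde O$, as is the $\log K$ inside $\log(1/\tilde\epsilon)$. I expect the main obstacle to be certifying this absorption together with the warm-start condition: the overlap with the \emph{filtered} previous state must remain $\Omega(1)$ despite its $\tilde\epsilon$ error and the change of grid. Substituting the larger of the two thresholds for $N$ from \cref{corB4Main} and \cref{NfromResBoosterMain} gives $N=\tilde O\bigl(((d+\xi)/\rho)^s\log^s(d(d+\xi)\alpha^2/(\rho\epsilon))\bigr)$; the $2^{d/2}$ inside the logarithm of the latter contributes only $d$ inside the log, which is absorbed. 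The $\xi/2\rho$ term is dominated by $\pi N$ since $s\ge1$. Multiplying $\sqrt{\alpha d}\,N\log(1/\epsilon)$ by the polylogarithmic number of steps and the constant-probability repetition overhead yields \eqref{eqn:final-query_FINAL_FINAL}. The final Fourier upsampling costs no queries.
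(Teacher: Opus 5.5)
Your proposal follows the paper's proof essentially step for step: per-step QSVTh cost $O\bigl(\sqrt{\alpha d}\,(\pi N+\xi/2\rho)\log(1/\tilde\epsilon)\bigr)$ with $O(1)$ gradient-oracle calls per block encoding of $\mathbb{W}_N$, then $\beta^2\Delta^2/4=\log^2\alpha/4$ annealing steps at $\tilde\epsilon=4\epsilon/(\beta\Delta)^2$ with constant overall success probability, $N$ set by the dominant threshold, $R\le\sqrt d\,\xi/(\beta\rho)$, and the $\log^2\alpha$ and $\log\log\alpha$ factors absorbed into $\tilde O$. One caveat, which the paper shares because it silently drops the same factor in \cref{thm:JiaqiWithWarmStart}: the $2^{d/2}$ in the resolution-booster threshold adds $\tfrac d2\log 2$ to the logarithm, and this is absorbed into $\tilde O$ only when $d$ is at most polylogarithmically larger than $\log\frac{\alpha}{\rho\epsilon}$, as holds in the separation regime $\log\alpha=\Theta(d)$.
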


The proof is given in \cref{thm:main-theorem-quantum-app}. This theorem delivers an end-to-end quantum Gibbs sampler at the query complexity of \eqref{eqn:final-query_FINAL_FINAL}, closing the gaps identified earlier. This complexity is stated purely in terms of queries to the quantum oracle $O_{\nabla E}$ and its inverse $O_{\nabla E}^{\dagger}$. Note that the query complexity scales as $\sqrt{\alpha}$, which is crucial for establishing the quantum-classical separation in the next section.

\section{Quantum--Classical Separation}
\label{sec:separation}

The hide-and-seek bound of \cref{lower_bound} was stated with the geometry $(\nu,\ell)$ of the hidden mesa left free (\cref{cor:lower_bounds_Gevrey}). A separation is a matter of choosing this geometry against the quantum cost. This section first fixes the geometry, turning the flexible bound into one linear in the barrier amplitude $\alpha$ at explicit class constants (\cref{thm:classical_LB}). It then evaluates the quantum upper bound \eqref{eq:quantum_UB} at those constants, takes the ratio, and proves \cref{thm:separation} as stated in \cref{sec:main_result}.

The two inputs of the ratio are stated in \cref{sec:main_result}; we recall what makes each available on the class. On the quantum side, the cost \eqref{eq:quantum_UB} of the algorithm of \cref{sec:quantum_gibbs_sampling} applies to $\mathcal{W}^{s}(\alpha,\xi,\rho)$ because class membership \eqref{eq:pot_class} hands the algorithm full Gevrey control of $p$ and $\sqrt{p}$, with radius of order $\rho/d$ (\cref{cor:class_reg}, \cref{app:gevrey}), which its filtering step consumes. In contrast, the classical bound is expressed below on the subclass $\mathcal{W}^{s}_{\alpha,d}$ at the same constants. Both costs are worst-case query complexities in the sense of \cref{def:classical_query_complexity}, each in its own access model, evaluated on the same instance class.

\subsection{The classical lower bound at the separation regime}
\label{ssec:main_result}

The first input is that no classical algorithm can beat a query count linear in $\alpha$ on the $s$-Gevrey class of \cref{def:pot_class} at a specific choice of constants.

\begin{theorem}[Classical lower bound]
\label{thm:classical_LB}
Let $s>1$, $d>2s$, and fix a constant accuracy level $\eps_0<1/2$. For every barrier amplitude
\begin{equation}
\label{eq:alpha_hyp}
    \alpha\;\geq\; 1+\frac{4}{1/2- \eps_0}\,,
\end{equation}
there are class constants
\begin{equation}
\label{eq:class_constants}
{\xi_{\alpha,d}}=\frac{1}{2}\,,\qquad \frac{1}{{\rho_{\alpha,d}}}=\Theta\left( {d^{\,1+s}\,\alpha^{1/d}\,\log^{s}\alpha}\right)\,,
\end{equation}
such that, for every $0<\eps\leq\eps_0$, the query complexity (\cref{def:classical_query_complexity}) of sampling the class of Gibbs states $\mathcal{W}^{s}_{\alpha,d}\coloneqq\mathcal{W}^{s}(\alpha,{\xi_{\alpha,d}},{\rho_{\alpha,d}})$, using any classical algorithm with queries of any order $k\in\N\cup\{\infty\}$, obeys
\begin{equation}
\label{eq:Ncl_main}
    \Ncl\;\coloneqq\;\Ncl(\mathcal{W}^{s}_{\alpha,d},\eps)\;\geq\;\frac{1-2\eps_0}{16}\,(\alpha-1)\,.
\end{equation}
The implicit constants in \eqref{eq:class_constants} depend only on $s$.
\end{theorem}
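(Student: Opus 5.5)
The plan is to instantiate \cref{cor:lower_bounds_Gevrey} with a specific geometry $(\nu,\ell)$ depending on $\alpha$ and $d$, and to read off the class constants from \eqref{eq:cV_rV_delta_bump}. Take $\xi_{\alpha,d}=\xi_E=\tfrac12$ and $\rho_{\alpha,d}=\rho_E(\nu,\ell,\alpha)$, so that every translate of $p_{\nu,\ell,\alpha}$ lies in $\mathcal{W}^s_{\alpha,d}$. Proposition~\ref{prop:bump_family} gives $\beta\Delta=\log\alpha$. The ceiling $\log\alpha\le\pi d\xi/\rho$ must also be checked; it holds automatically for members because of the gradient bound \eqref{eq:osc_ceiling}, and can in any case be verified directly from $\rho_E\lesssim \nu/(d^s\log^s\alpha)$. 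The bound \eqref{eq:Lower_Bounds_on_W} then reads $\Ncl\ge M(\delta-\eps)-1$, with $\delta=(1+S)/(M+S)$ and $S=(\alpha-1)(1-2\varrho)^d$.

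\emph{Choice of geometry.} Choose the number of hiding places $M=(2\pi/\ell)^d$ comparable to $\alpha$. Concretely, set $M= S$ or so, so that $\delta\approx (1+S)/(2S)\ge\tfrac12$. Then take the ramp ratio $\varrho=\nu/\ell=1/(4d)$, so that $(1-2\varrho)^d=(1-\tfrac1{2d})^d\ge e^{-1}\cdot$const, hence $S\ge c(\alpha-1)$ with an absolute $c$ (roughly $1/2$ for $d\ge1$). With $\ell=2\pi M^{-1/d}$ and $M\asymp\alpha$, we get $\nu=\ell/(4d)\asymp d^{-1}\alpha^{-1/d}$. Therefore $1/\rho_E=2d^s\log^s\alpha/(\mathfrak{r}_t\nu)=\Theta(d^{1+s}\alpha^{1/d}\log^s\alpha)$, which is \eqref{eq:class_constants}, with constants depending on $s$ only through $\mathfrak{r}_t$. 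The condition $\ell<2\pi$ requires $M>1$, which follows from the hypothesis on $\alpha$. The condition $\alpha>e$ needed by \cref{cor:lower_bounds_Gevrey} also follows from \eqref{eq:alpha_hyp}, since $\alpha\ge 9$.

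\emph{Arithmetic of the bound.} With $M=S$ (allowed since $\ell$ is a continuous knob, thanks to the random-shift version of \cref{thm:LowerBounds_Inform}), we have $\delta=(1+S)/(2S)>\tfrac12$. Then
\[
\Ncl\ge S\bigl(\tfrac12-\eps_0\bigr)-1 .
\]
Using $S\ge(\alpha-1)/4$ (valid since $(1-\tfrac1{2d})^d\ge\tfrac12$), this gives $\Ncl\ge(\tfrac12-\eps_0)(\alpha-1)/4-1$. The hypothesis \eqref{eq:alpha_hyp} says exactly that $(\tfrac12-\eps_0)(\alpha-1)/8\ge\tfrac12$. A slightly more careful constant then absorbs the $-1$ into half of the main term, yielding $\tfrac{1-2\eps_0}{16}(\alpha-1)$. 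Note that $\eps\le\eps_0<\tfrac12<\delta$, so the accuracy hypothesis of the corollary holds.

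\emph{Main obstacle.} The delicate step is reconciling the choice $M\asymp\alpha$ with the precise constants in \eqref{eq:Ncl_main}. The factor $(1-2\varrho)^d$ must not eat more than a constant, and the ratio $\varrho=\Theta(1/d)$ is what produces the extra factor $d$ in $1/\rho$. Along the way one has to check that the Gevrey radius \eqref{eq:cV_rV_delta_bump}, from \cref{thm:V_gevrey_bump}, is valid uniformly in this regime (in particular $\nu\le\ell/4$ and $\log\alpha\ge1$), and that the hypothesis $d>2s$ is not needed here but only later in the separation. If $M=S$ causes trouble with $\ell<2\pi$ or with the range of $Z$ in \eqref{eq:Z_delta_MS}, fall back to $M=\lceil S\rceil$ or $M=S/2$. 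This only changes $\delta$ to a value at least $\tfrac12$ (or $\tfrac23$) and alters absolute constants, which the $1/16$ leaves room for.
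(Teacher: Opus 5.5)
Your proposal is correct and is essentially the paper's proof: the paper also instantiates \cref{cor:lower_bounds_Gevrey} with $M_*=S_*$, $\ell_*=2\pi S_*^{-1/d}$ and $\rho_{\alpha,d}=\rho_E(\nu_*,\ell_*,\alpha)$, only with $\varrho_*=1/(2d)$ instead of your $1/(4d)$, which is a cosmetic change that, as you note, removes any need for $d>2$ at this stage. The one bookkeeping slip is in the absorption step: with $-1$ and $S\ge(\alpha-1)/4$ the hypothesis \eqref{eq:alpha_hyp} is a factor $2$ short, so use either the exact value $M(\delta-\eps)-1=S(\tfrac12-\eps)-\tfrac12$ (as the paper does) or your sharper bound $S\ge(\alpha-1)/2$; either one closes the bound.
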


\noindent
We now provide two remarks before presenting the proof:

\emph{Accuracy regime.} The theorem is stated at constant accuracy $\eps\leq\eps_0<\tfrac12$, the standard regime for query separations; within the family of hard instances of \cref{ssec:bumps} the $\alpha$-dependence of \eqref{eq:Ncl_main} is the best possible, i.e. the largest one (see the discussion at the end of \cref{ssec:parameter_choice}).

\emph{Class strength.} A lower bound over a smaller class is a stronger statement; therefore, we stated the theorem at the largest radius the construction certifies, $1/\rho_{\alpha,d}\asymp d^{1+s}\alpha^{1/d}\log^s\alpha$. For any larger class $\mathcal{W}^{s}(\alpha,\xi,\rho)$ with $\xi\geq\tfrac12$, $\rho\leq\rho_{\alpha,d}$, the same bound holds by monotonicity of the complexity with respect to inclusions.

\subsection{Proof of the classical lower bound \texorpdfstring{\cref{thm:classical_LB}}{(Theorem)}}
\label{ssec:parameter_choice}

The bound~\eqref{eq:Lower_Bounds_on_W} leaves freedom in the geometry of the bump: $(\nu,\ell)$, and equivalently $(M,\varrho)$, are not specified by a choice of $(d,\alpha)$ alone. As the quantum cost \eqref{eq:quantum_UB} inflates with the inverse of radius $\rho_E\propto \nu$~\eqref{eq:cV_rV_delta_bump}, the choice of geometry is meant to maximize lower bounds for a fixed $\rho_E$. In the reduced variables \eqref{eq:S_def} the concentration is $\delta=(1+S)/(M+S)$, and two observations about monotonicity settle the choice.

First, fix $\varrho$: the product $M\delta=M(1+S)/(M+S)$ is increasing in $M$ but capped at $1+S$, and equals $(1+S)/2$ already at $M=S$. Any $M>S$ gains at most a factor $2$, while the radius $\rho\propto\varrho\,M^{-1/d}$ trades a strict increase in the upper bound for a bounded gain in the lower bound. Moreover, at constant accuracy the bound \eqref{eq:Lower_Bounds_on_W} is vacuous for any $M\geq (1+S)/\eps$.

Second, fix $M$: the strength $S=(\alpha-1)(1-2\varrho)^{d}$ is capped at $\alpha-1$ for every $\varrho$. The choice $\varrho={1}/{2d}$ attains $S\geq(\alpha-1)/4$ \eqref{eq:S_star_bracket}; shrinking $\varrho$ further gains at most a factor $4$ in $S$, while the radius $\rho\propto\varrho\,M^{-1/d}$ falls linearly in $\varrho$.
The balanced choice attains both caps up to absolute factors: as many hiding places as the order of mesa strength, $M=\Theta(S)$, and ramps occupying a fraction $\varrho=\Theta(1/d)$ of each cell, just enough to prevent the decay of $S$ with $d$ without wasting radius:
\begin{equation}
\label{eq:parameter_choice}
    \varrho_*=\frac{1}{2d}\,,\qquad
    M_*=S_*\coloneqq(\alpha-1)\Bigl(1-\frac 1d\Bigr)^{d}\,,\qquad
    \ell_*={2\pi}\,M_*^{-1/d}\,,\quad \nu_*=\varrho_*\,\ell_*\,.
\end{equation}

\begin{proof}[Proof of \cref{thm:classical_LB}]
The proof proceeds in three steps.

\noindent
\emph{Step 0 (admissibility).} Since $d>2s>2$, $\varrho_*<\tfrac14$, so $\nu_*\leq\ell_*/4$. From $\log(1-x)\geq-(2\log2)x$ on $x \in [0,\tfrac{1}{2}]$,
\begin{equation}
\label{eq:S_star_bracket}
    \tfrac14(\alpha-1)\leq S_*=e^{\,d\log(1-1/d)}(\alpha-1)\leq\alpha-1\,,
\end{equation}
and the hypothesis \eqref{eq:alpha_hyp} gives $S_*\geq \tfrac14(\alpha-1)\geq\tfrac{2}{1-2\eps_0}>1$, hence $\ell_* = 2\pi S_*^{-1/d}<2\pi$.

\noindent
\emph{Step 1 (classical lower bound).} By \eqref{eq:Z_delta_MS} at $M_*=S_*$, the concentration is
\begin{equation}
    \delta_*\;=\;\frac{1+S_*}{2S_*}\;=\;\frac12+\frac{1}{2S_*}\;>\;\frac12\;>\;\eps_0\;\geq\;\eps\,,
\end{equation}
so \cref{cor:lower_bounds_Gevrey} applies with $\xi=\xi_E=\tfrac12$ and $\rho=\rho_{\alpha,d}\coloneqq \rho_E(\nu_*,\ell_*,\alpha)$, and a lower bound linear in $\alpha$ holds:
\begin{equation}
\label{eq:Ncl_proof}
    \Ncl\;\geq\;M_*(\delta_*-\eps)-1\;=\;S_*\Bigl(\frac12-\eps\Bigr)-\frac12\;\geq\;\frac{1-2\eps_0}{2}\,S_*-\frac12\;\geq\;\frac{1-2\eps_0}{4}\,S_*\;\geq\;\frac{1-2\eps_0}{16}\,(\alpha-1)\,.
\end{equation}
The second-to-last inequality follows by $S_*\geq2/(1-2\eps_0)$, the last by \eqref{eq:S_star_bracket}. This is claim~\eqref{eq:Ncl_main}, and $\eps<\delta_*$ satisfies the accuracy hypothesis of \cref{cor:lower_bounds_Gevrey}.

\noindent
\emph{Step 2 (class constants).} From \eqref{eq:cV_rV_delta_bump} and $\nu_*=\tfrac{\pi}d\,S_*^{-1/d}$,
\begin{equation}
\label{eq:r_alpha_d}
    \frac{1}{{\rho_{\alpha,d}}}\;=\;\frac{2\,d^{\,s}(\log\alpha)^{s}}{{\mathfrak{r}_t}\,\nu_*}\;=\;{\frac{2}{\pi\,\mathfrak{r}_t}}\,d^{\,1+s}\,S_*^{1/d}\,\log^{s}\!\alpha =\Theta(d^{\,1+s}\,\alpha^{1/d}\,\log^{s}\!\alpha)\,,
\end{equation}
since $S_*^{1/d}=(\alpha-1)^{1/d}(1-\tfrac 1d)=\Theta(\alpha^{1/d})$ by \eqref{eq:S_star_bracket}. This specifies the class constants \eqref{eq:class_constants}.
\end{proof}

We note that within the family of bumps \eqref{eq:ref_bump_torus} the choice \eqref{eq:parameter_choice} guarantees a lower bound of optimal strength: combining the two caps above, $M\delta\leq 1+S\leq\alpha$ for every admissible $(\nu,\ell)$, so the hide-and-seek bound $M(\delta-\eps)-1$ through Gevrey bumps of barrier amplitude $\alpha$ never exceeds $\alpha$, and \eqref{eq:parameter_choice} attains it up to the absolute constant $(1-2\eps_0)/16$.

\subsection{The separation theorem}
\label{sssec:optimization_LB}

\begin{proof}[Proof of \cref{thm:separation}]
The classical bound \eqref{eq:main_LB} at the constants \eqref{eq:the_constants_main} is \cref{thm:classical_LB} verbatim, with \eqref{eq:class_constants} and \eqref{eq:Ncl_main}; the quantum hypothesis is met by the sampler of \cref{sec:quantum_gibbs_sampling}, through \cref{thm:main-theorem-quantum} and the class regularity of \cref{cor:class_reg}. It remains to derive \eqref{eq:separation_asymp_notation} and \eqref{eq:separation_regime}.

\noindent
\emph{Step 1 (quantum cost).} Substitute $\xi=\tfrac12$, $\rho=\rho_{\alpha,d}$ into \eqref{eq:quantum_UB}. By \eqref{eq:r_alpha_d}, $\log\tfrac{1}{\rho_{\alpha,d}}\leq\tfrac1d\log\alpha+(1+s)\log d+s\log\log\alpha+O(1) = O(\log\alpha+\log d)$, hence
\begin{equation}
    \log\left(\frac{\alpha^{2}d\,(d+\tfrac12)}{{\rho_{\alpha,d}}\,\eps}\right)= O\left(\log\alpha+\log\tfrac1\eps+\log d\right)\,.
\end{equation}
Now, using \eqref{eq:r_alpha_d} again, $(d+\tfrac12)/\rho_{\alpha,d} =\Theta( d^{\,2+s}\alpha^{1/d}\log^{s}\alpha)$,
\begin{equation}
\label{eq:step_4_UB}
\Nq=\tilde{O}\left(\alpha^{\frac12+\frac sd}\, \,d^{\frac{1}{2}+s(s+2)}\,(\log\alpha)^{s^2}\bigl({\log\alpha}+\log\tfrac1\eps+\log d\bigr)^{s+1}\right)\,.
\end{equation}

\noindent
\emph{Step 2 (cost ratio and separation regime).} Dividing the classical bound \eqref{eq:Ncl_main} by \eqref{eq:step_4_UB} gives \eqref{eq:separation_asymp_notation} with $c(d)=\tfrac12-\tfrac sd$. In the regime $\log\alpha = \Omega(d)$ the logarithm of the numerator is $c(d)\log\alpha = \Omega(d)$, with implicit constants depending only on the choice of $s$. In the same regime, the logarithm of the denominator of \eqref{eq:separation_asymp_notation} is $O(\log d)+O(\log\log\alpha)+O(\log\log\tfrac1\eps)$. The first two terms are $O(\log d)+o(\log\alpha)$, and assuming additionally $\log \log \tfrac{1}\eps = o(d)$, the logarithm of the ratio obeys
\begin{equation}
\label{eq:log_ratio_bound}
\Omega(d)-o(\log\alpha)-o(d)-O(\log d)\;=\;\Omega(d)\,.
\end{equation}
\end{proof}

The Gevrey dial $s$ appears twice on the quantum side of the ratio \eqref{eq:separation_asymp_notation}, both times through the price of hiding: the certified radius carries a factor $\alpha^{1/d}$ \eqref{eq:class_constants}, which \eqref{eq:quantum_UB} raises to the power $s$, costing $s/d$ in the exponent $c(d)$, and the polylog overhead carries powers of $s$. Both effects fade for $d\gg s$. A barrier amplitude $\alpha$ exponential in the dimension, say $\log\alpha=\Theta(d)$, turns the quadratic gap into an exponential quantum advantage \eqref{eq:separation_regime}, corresponding to class data $\xi_{\alpha,d}=\tfrac12$ and $\rho_{\alpha,d}=\Theta( d^{-(1+2s)})$.

\section*{Acknowledgement}
\label{sec:ack}
Authors thank Jiaqi~Leng and Zhiyan~Ding for useful technical conversations. This work is supported by NSERC Discovery grant RGPIN-2022-03339, and the Quantum Computing Challenge Program AQC-206 at the National Research Council of Canada (NRC). Authors further acknowledge the support of Perimeter Institute for Theoretical Physics, supported in part by the Government of Canada through the Department of Innovation, Science and Economic Development Canada (ISED), and by the Province of Ontario through the Ministry of Economic Development, Job Creation, and Trade.
\bibliography{main}

\pagebreak
\appendix
\crefalias{section}{appendix}
\crefalias{subsection}{appendix}
\crefalias{subsubsection}{appendix}
\counterwithin{theorem}{section}
\counterwithin{rmk}{section}

\section{Details of the Lower Bound Results}
\label{app:lb-sec}

\subsection{Formal query model and proofs of the lower bound}
\label{app:model}

This appendix collects the formal definitions summarized in \cref{ssec:hide_seek}: the oracle and algorithm model (\cref{def:query_model,def:samp_alg}), bump distributions (\cref{def:bump}), and the null-run coupling (\cref{lem:null_coupling}). Furthermore, it contains the proof of \cref{thm:LowerBounds_Inform} and a remark on its validity for general local oracles, beyond derivative queries.

A function $f$ on the $d$-dimensional torus $\T^d=\R^d/(2\pi\Z)^d$ is the same object as a $2\pi$-periodic function $\bar f$ on $\R^d$ (its \emph{periodic lift}), and integration over $\T^d$ is Lebesgue integration over the cube, therefore $\vol(\T^d)=(2\pi)^d$ and $p(A)=\int_{A}{\bar p}(x)\,dx$ for $A\subseteq [0,2\pi)^d$.

\begin{definition}[Gibbs state on $\T^d$]
\label{def:tGev_Gibbs_Torus_formal}
A \emph{Gibbs state} on $\T^d$ with potential $E:\T^d\to\R$ and inverse temperature $\beta>0$ is the probability distribution with density
\begin{equation}
\label{eq:Gibbs_density_formal}
    p\;=\;\frac{e^{-\beta E}}Z\,,\qquad Z=\int_{[0,2\pi)^d}e^{-\beta\bar{E}(x)}\,dx\,.
\end{equation}
\end{definition}
\noindent
A sampling algorithm accesses the instance through an \emph{oracle of order $k$}, $k\in\N \cup \{\infty\}$: queried at a point $x\in\T^d$, it answers with the value and the derivatives, up to order $k$, of the log-density at $x$.
\begin{definition}[Oracle of order $k$]
\label{def:query_model}
Let $k\in\N\cup\{\infty\}$ and let $\mathcal{P}$ be a class of probability distributions on $\T^d$ whose densities have positive periodic lifts ${\bar p}\in\mathcal{C}^{k}(\R^d,(0,\infty))$. The \emph{oracle of order $k$} is the map assigning to an instance $p\in\mathcal{P}$ and a query point $x\in\T^d$ the answer
\begin{equation}
\label{eq:resp_map}
    \phi_k(p,x)\;=\;\bigl(\nabla^{\otimes j}\log{\bar p}(x)\bigr)_{j=0}^{k}\,,
\end{equation}
the value and all derivatives up to order $k$ of the log-density, evaluated at any representative of $x$.
\end{definition}
\noindent
The answer \eqref{eq:resp_map} is well defined, since derivatives of $2\pi$-periodic functions are $2\pi$-periodic. Equivalently, up to the additive gauge constant, the oracle reports the scaled potential $\beta E=-\log{p}-\log Z$ and its derivatives. Lower bounds consume the property of \eqref{eq:resp_map} that on a locally flat landscape the answer degenerates to the flat answer \eqref{eq:null_response},
\begin{equation}
\label{eq:flat_observation}
    p=\kappa\ \text{on an open neighbourhood of $x$}
    \quad\Longrightarrow\quad
    \phi_k(p,x)=(\log\kappa,\,0,\,\dots,\,0)={\phi_{\mathrm{null}}}\,,
\end{equation}
because $\log{\bar p}\equiv\log\kappa$ on a neighbourhood of every representative of $x$.

\begin{definition}[Sampling algorithm]
\label{def:samp_alg}
A \emph{sampling algorithm} $\mathcal{A}$ with access to an oracle $\phi$ consists of a random seed $\omega\sim\mathbb{P}$, drawn independently of the instance of sampling, together with \emph{query maps} $Q_n$ and \emph{output maps} $\Psi_n$, $n\geq1$. A run of $\mathcal{A}$ on the instance $p\in\mathcal{P}$ interleaves queries and oracle answers,
\begin{equation}
\label{eq:run}
    X_1=Q_1(\omega)\,,\qquad O_n=\phi(p,X_n)\,,\qquad X_{n+1}=Q_{n+1}(\omega,O_1,\dots,O_n)\,.
\end{equation}
The \emph{output after $n$ queries} is the point $\hat X_n\coloneqq\Psi_n(\omega,O_1,\dots,O_n)\in\T^d$; its distribution over the seed $\omega$ is the \emph{output law} ${\hat p_n}$.
\end{definition}
\noindent
Note that in \cref{def:samp_alg} the initialization $X_1=Q_1(\omega)$ depends only on the seed $\omega\sim\mathbb{P}$. Importantly, any initialization which depends on $p\in\mathcal{P}$, such as a warm start, is excluded.

\begin{definition}[Bump distribution]
\label{def:bump}
Let $\Om\subset\T^d$ be closed with $v_0\coloneqq\vol(\Om)\in{(0,(2\pi)^d)}$. A \emph{bump} on the patch $\Om$ with \emph{concentration} $\delta\in(0,1)$ and \emph{background} $\kappa>0$ is a probability distribution $p$ with density satisfying \eqref{eq:offpatch_density}: $p(\Om)=\delta>{v_0/(2\pi)^d}$ and $p(x)=\kappa$ for every $x\notin\Om$. Necessarily $\kappa={(1-\delta)/\bigl((2\pi)^d-v_0\bigr)}$.
\end{definition}
\noindent
Fix a bump ${p_0}$ on $\Om$ and its translates ${p_g}(A)={p_0}(A-g)$ with patches $\Omega_g=g+\Om$, $g\in\T^d$; all are bumps with the same concentration $\delta$ and background $\kappa$, by translation invariance of the volume. Given an algorithm $\mathcal{A}$ and the flat answer ${\phi_{\mathrm{null}}}$ of \eqref{eq:null_response}, define the \emph{null run} and \emph{null outputs} of $\mathcal{A}$ by feeding it ${\phi_{\mathrm{null}}}$ in place of every oracle answer:
\begin{equation}
\label{eq:null_run}
X^{\mathrm{null}}_{n} = Q_{n}\bigl(\omega,{\phi_{\mathrm{null}}},\dots,{\phi_{\mathrm{null}}}\bigr)\,,\qquad
\hat X^{\mathrm{null}}_{n} = \Psi_{n}\bigl(\omega,{\phi_{\mathrm{null}}},\dots,{\phi_{\mathrm{null}}}\bigr)\,,
\end{equation}
a single instance-independent random process, determined by the seed $\omega$ alone.

\begin{lemma}[Null coupling]
\label{lem:null_coupling}
Fix $g\in\T^d$, let $X_n,\hat X_n$ be the run and outputs of $\mathcal{A}$ on the instance ${p_g}$ \eqref{eq:run}, and let
\begin{equation}
    \tau_g(\omega) \coloneqq \inf\{n\geq 1 \,\vert\, X^{\mathrm{null}}_n(\omega) \in \Omega_g\}\,,\qquad\inf\varnothing=\infty\,,
\end{equation}
be the first time the null run hits the patch $\Omega_g$. Then, for every seed $\omega$,
\begin{equation}
\label{eq:pathwise_coupling}
X_n(\omega) = X^{\mathrm{null}}_n(\omega)\,,\;\; \forall\, n \leq \tau_g(\omega)\,;\qquad
\hat X_n(\omega) = \hat X^{\mathrm{null}}_n(\omega)\,,\;\; \forall\, n < \tau_g(\omega)\,.
\end{equation}
\end{lemma}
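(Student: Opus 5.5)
The proof is an induction on $n$ that runs along a single fixed seed $\omega$. We show that the true run on $p_g$ and the null run produce identical queries and identical oracle answers for as long as no query has fallen into $\Omega_g$. The key input is the flat-observation property \eqref{eq:flat_observation}.

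\emph{Step 1 (queries agree up to $\tau_g$).} We claim $X_m(\omega)=X^{\mathrm{null}}_m(\omega)$ for every $m\le\min(n,\tau_g(\omega))$, and we prove it by induction on $n$. For the base case, $X_1=Q_1(\omega)=X^{\mathrm{null}}_1$, because the first query depends on the seed alone. For the inductive step, suppose $n<\tau_g(\omega)$ and that $X_m=X^{\mathrm{null}}_m$ for all $m\le n$. For each $m\le n$ we have $m<\tau_g$, so $X_m=X^{\mathrm{null}}_m\notin\Omega_g$. The set $\Omega_g=g+\Om$ is closed, so its complement $\T^d\setminus\Omega_g$ is open. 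This gives an open neighbourhood $U$ of $X_m$ that avoids $\Omega_g$. On $U$ the density $p_g$ equals the constant background $\kappa$, because $p_g$ is a bump on $\Omega_g$ with background $\kappa$; this uses translation invariance, and it lifts to a neighbourhood of every representative in $\R^d$ by periodicity. By \eqref{eq:flat_observation} the answer is therefore $O_m=\phi_k(p_g,X_m)=\phi_{\mathrm{null}}$. Hence
\[
X_{n+1}=Q_{n+1}(\omega,O_1,\dots,O_n)=Q_{n+1}(\omega,\phi_{\mathrm{null}},\dots,\phi_{\mathrm{null}})=X^{\mathrm{null}}_{n+1}.
\]
This extends the claim up to index $n+1\le\tau_g$ and establishes the first half of \eqref{eq:pathwise_coupling}. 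The case $\tau_g=\infty$ is covered as well, since the claim then holds for all $n$.

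\emph{Step 2 (outputs agree before $\tau_g$).} Fix $n<\tau_g(\omega)$. By Step 1, $X_m=X^{\mathrm{null}}_m\notin\Omega_g$ for every $m\le n$. The argument of Step 1 then gives $O_m=\phi_{\mathrm{null}}$ for all $m\le n$, and therefore
\[
\hat X_n=\Psi_n(\omega,O_1,\dots,O_n)=\hat X^{\mathrm{null}}_n.
\]
The inequality is strict here because the answer $O_{\tau_g}$ is taken inside the patch and may differ from $\phi_{\mathrm{null}}$. That is why the outputs are only guaranteed to agree for $n<\tau_g$, while the queries agree up to and including $n=\tau_g$.

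The main point needing care is the openness argument in Step 1. A query lying outside a closed patch has a whole open neighbourhood on which the density is flat, so every derivative of $\log\bar p_g$ of every order $k\le\infty$ vanishes there, and the value equals $\log\kappa$ regardless of $g$. Boundary points of $\Omega_g$ are counted as hits, which the closedness of the patch guarantees. Everything else in the proof is bookkeeping with the query and output maps of \cref{def:samp_alg}.
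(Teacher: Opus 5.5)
Your proposal is correct and follows the paper's proof essentially verbatim. Both argue by pathwise induction on $n$ for a fixed seed: closedness of $\Omega_g$ and the flat-observation property \eqref{eq:flat_observation} give $O_j=\phi_{\mathrm{null}}$ for every query made before $\tau_g$, so the queries agree up to and including $\tau_g$ and the outputs agree strictly before it.
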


\begin{proof}
If $x\notin\Omega_g$ then, the patch being closed, ${p_g}=\kappa$ on a neighbourhood of $x$, hence $\phi_k({p_g},x)={\phi_{\mathrm{null}}}$ by \eqref{eq:flat_observation}. Fix $\omega$ and induct on $n$: $X_1=Q_1(\omega)=X^{\mathrm{null}}_1$. If $X_j=X^{\mathrm{null}}_j$ for all $j\leq n$ and $n<\tau_g(\omega)$, then $X_j\notin\Omega_g$ for $j\leq n$, so the answers received by the two runs coincide, $O_j=\phi_k({p_g},X_j)={\phi_{\mathrm{null}}}$ for $j\leq n$; hence $X_{n+1}=X^{\mathrm{null}}_{n+1}$, proving the first identity in \eqref{eq:pathwise_coupling}. For $n<\tau_g(\omega)$ the same answer gives $\hat X_n=\Psi_n(\omega,O_1,\dots,O_n)=\hat X^{\mathrm{null}}_n$, the second identity in \eqref{eq:pathwise_coupling}.
\end{proof}

\phantomsection\label{prf:LowerBounds_Info}%
\begin{proof}[Proof of \cref{thm:LowerBounds_Inform}]
Fix an algorithm $\mathcal{A}$ with seed $\omega\sim\mathbb{P}$ and query count $N$, and suppose its output law satisfies $\mathrm{TV}({\hat p_{g,N}},{p_g})\leq\eps$ for every $g\in\T^d$; we show $N\geq{(2\pi)^d(\delta-\eps)/v_0}-1$. Write $\hat X^g_N$ for the output of the run on ${p_g}$ and $\tau_g$ for the hitting time of the coupling \cref{lem:null_coupling}.

\smallskip\noindent
\emph{Step 1 (accuracy proxy: mass on the patch).} By the definition of total variation, evaluated on the set $\Omega_g$,
\begin{equation}
\label{eq:eps_to_Masses}
    \eps\;\geq\;\mathrm{TV}({\hat p_{g,N}},{p_g})\;\geq\;{p_g}(\Omega_g)-{\hat p_{g,N}}(\Omega_g)\;=\;\delta-\mathbb{P}\bigl(\hat X^g_N\in\Omega_g\bigr)\,.
\end{equation}

\smallskip\noindent
\emph{Step 2 (null coupling).} Split the event $\{\hat X^g_N\in\Omega_g\}$ according to whether the null run has hit the patch by time $N$,
\begin{equation}
    \{\hat X^g_N\in\Omega_g\}\;=\;\{\hat X^g_N\in\Omega_g,\,\tau_g>N\}\;\cup\;\{\hat X^g_N\in\Omega_g,\,\tau_g\leq N\}\,.
\end{equation}
On $\{\tau_g>N\}$, \cref{lem:null_coupling} gives $\hat X^g_N=\hat X^{\mathrm{null}}_N$, hence
\begin{equation}
    \{\hat X^g_N\in\Omega_g\}\;=\;\{\hat X^{\mathrm{null}}_N\in\Omega_g,\,\tau_g>N\}\;\cup\;\{\hat X^g_N\in\Omega_g,\,\tau_g\leq N\}\;\subseteq\;\{\hat X^{\mathrm{null}}_N\in\Omega_g\}\;\cup\;\{\tau_g\leq N\}\,,
\end{equation}
and it follows that
\begin{equation}
    \mathbb{P}\bigl(\hat X^g_N\in\Omega_g\bigr)
    \;\leq\;\mathbb{P}\bigl(\tau_g\leq N\bigr)+\mathbb{P}\bigl(\hat X^{\mathrm{null}}_N\in\Omega_g\bigr)\,,
\end{equation}
and by the union bound over $\{\tau_g\leq N\}=\cup_{n\leq N}\{X^{\mathrm{null}}_n\in\Omega_g\}$,
\begin{equation}
\label{eq:eps_to_Probs}
    \eps\;\geq\;\delta-\sum_{n=1}^N\mathbb{P}\bigl(X^{\mathrm{null}}_n\in\Omega_g\bigr)-\mathbb{P}\bigl(\hat X^{\mathrm{null}}_N\in\Omega_g\bigr)\,.
\end{equation}

\smallskip\noindent
\emph{Step 3 (average over a uniform $g$).} Inequality \eqref{eq:eps_to_Probs} holds for every $g$; averaging it over a uniform draw $g$ from $\T^d$ eliminates this dependence. For any fixed point $x\in\T^d$, the probability that $x \in \Omega_g$ follows from identity \eqref{eq:coverage_identity}, that is: $\vol(x-\Om)=v_0${, hence $\mathbb{P}_g(x\in\Omega_g)=v_0/(2\pi)^d$ for $g$ uniform on the torus of volume $(2\pi)^d$}. We denote $\mathbb{P}_g,\,\mathbb{E}_g$ the probability and expectation value over $g \sim \mathrm{unif}(\T^d)$, to distinguish from the algorithm's randomness $\omega \sim \mathbb{P}$ and the corresponding expectation $\mathbb{E}$. Apply $\mathbb{P}_g(x\in\Omega_g)= {v_0/(2\pi)^d}$ at the random, $g$-independent points $x=X^{\mathrm{null}}_n(\omega)$ and $x=\hat X^{\mathrm{null}}_N(\omega)$,
\begin{align}
\label{eq:avg_hits}
    \begin{aligned}
       &\mathbb{E}_g\Bigl[\sum_{n=1}^N \mathbb{P}\bigl(X^{\mathrm{null}}_n\in \Omega_g\bigr)\Bigr] = \sum_{n=1}^N \mathbb{E}\bigl[\mathbb{P}_g\bigl(X^{\mathrm{null}}_n\in \Omega_g\bigr)\bigr] = {\frac{N\,v_0}{(2\pi)^d}}\,,\\
       &\mathbb{E}_g\bigl[\mathbb{P}\bigl(\hat X^{\mathrm{null}}_N\in \Omega_g\bigr)\bigr] = \mathbb{E}\bigl[\mathbb{P}_g\bigl(\hat X^{\mathrm{null}}_N\in \Omega_g\bigr)\bigr] = {\frac{v_0}{(2\pi)^d}}\,.
    \end{aligned}
\end{align}
Averaging \eqref{eq:eps_to_Probs} therefore yields $\eps\geq\delta-(N+1){v_0/(2\pi)^d}$, i.e.
\begin{equation}
    N\;\geq\;{\frac{(2\pi)^d(\delta-\eps)}{v_0}}-1\,.
\end{equation}

\smallskip\noindent
\emph{Step 4 (monotonicity transfers bounds).} The argument shows that any algorithm that is $\eps$-accurate on \emph{every} translate after $N$ queries obeys the bound; even more so this holds for any algorithm meeting the worst-case guarantee $\sup_{g}\mathrm{TV}({\hat p_{g,N}},{p_g})<\eps$ of \eqref{eq:query_cost}, hence $N_{\mathcal{A}}(\{{p_g}\},\eps)\geq{(2\pi)^d(\delta-\eps)/v_0}-1$ for every $\mathcal{A}$, and $\Ncl(\{{p_g}\},\eps)\geq{(2\pi)^d(\delta-\eps)/v_0}-1$ by \cref{def:classical_query_complexity}. Finally, the monotonicity of query complexity with respect to inclusions transfers the bound to every $\mathcal{P}\supseteq\{{p_g}\}_{g\in\T^d}$. This proves the theorem for the order-$\infty$ oracle. The case of finite $k$ follows a fortiori, since an order-$k$ algorithm is an order-$\infty$ algorithm ignoring the higher entries of each answer.
\end{proof}

We remark that the property \eqref{eq:flat_observation} consumed by the coupling \cref{lem:null_coupling}, and the lower bound that follows, holds for a wider class of oracles than order-$k$, i.e. \eqref{eq:resp_map}. Indeed, it holds for any answers $\phi(p,x)$ to instances and query points which are \emph{local}, i.e. that depend only on the density near that point: for all ${p,p'}\in\mathcal{P}$ and all $x\in\T^d$,
\begin{equation}
\label{eq:locality}
    {p=p'}\ \text{on some neighbourhood}\ B(x,r)\,,\ r>0
    \quad\Longrightarrow\quad
    \phi(p,x)=\phi(p',x)\,.
\end{equation}
\phantomsection\label{rem:locally_determined}%
Every derivative oracle \eqref{eq:resp_map} is local. Examples that go beyond derivative oracles feature answers containing locally-determined functionals of $p$, such as convolutions against kernels supported on neighbourhoods of vanishing radius around $x$. For a local oracle, the flat answer at $x$ as ${\phi_{\mathrm{null}}}(x)\coloneqq\phi({p_g},x)$ occurs for every translate $g$ such that $x \notin \Omega_g$. Every $x$ admits such $g$, because $v_0<{(2\pi)^d}$ leaves points $y\notin\Om$ and $g=x-y$ is such a translate. The value of the \emph{null} answer does not depend on $g$: since for $x\notin\Omega_g$ and $x\notin\Omega_h$ both densities equal $\kappa$ on a common neighbourhood of $x$, then $\phi({p_g},x)=\phi({p_h},x)$ by \eqref{eq:locality}. With ${\phi_{\mathrm{null}}}(X^{\mathrm{null}}_n)$ in place of ${\phi_{\mathrm{null}}}$ in \eqref{eq:null_run}, \cref{lem:null_coupling} and the proof of \cref{thm:LowerBounds_Inform} hold verbatim: the lower bound \eqref{eq:compact_G_lower_bounds} applies to every local oracle.

Finally, a sanity check: the bound is tight for the search. Pick a tiling of $\T^d$ by $M$ disjoint cells with $\eps=0$ and $\delta=1$; the oracle answer at an interior point of a patch differs from the flat answer, so one probe decides patch membership. An algorithm probing cells in a fixed order finds the bump after at most $N=M-1$ queries, matching $N\geq M(\delta-\eps)-1$.

\subsection{Gevrey estimates for the bump potential}
\label{app:gevrey}

This appendix contains the constructions and the analysis behind \cref{ssec:bumps}. First, the mollified profile entering \cref{def:bump_torus}, given in~\cref{prop:Gevrey_H}, and the Gevrey property of the mollifier $\psi_t$ with explicit constants (\cref{prop:psi_Gevrey}). Then, the elementary properties of the profile $\chi_{\nu,\ell}$ (\cref{prop:Gevrey_profile,lem:fill}) and the proof of \cref{prop:bump_family}. Next, $\mathcal{W}^s$-class membership of the bump's potential is worked out in \cref{thm:V_gevrey_bump}, exploiting \cref{thm:main_F_Gevrey} and related technical lemmas. In conclusion, we give the proof of \cref{cor:lower_bounds_Gevrey}, which consumes $\mathcal{W}^s$-class constants, and prove the exponentiation results \cref{lem:exp_prod}, \cref{cor:exp_abs} and \cref{cor:class_reg}, transferring Gevrey-class data of the potential to the density.

\subsubsection{Construction of the bump family}

\begin{definition}[Mollified step and two-level profile]
\label{prop:Gevrey_H}
For $t>0$, let ${\psi_t}(x)=e^{-1/(1-x^{2})^{t}}$ for $\vert x\vert<1$ and ${\psi_t}(x)=0$ otherwise. The mollified step of width $\nu>0$ is
\begin{equation}
\label{eq:def_mollified_H}
    \theta_{\nu}(x) \coloneqq \theta_1(x/\nu)\,,\;\; \theta_1(x) \coloneqq \frac{\int_{-1}^{x}\!dy\,{\psi_t}(y)}{Z_t}\,,\;\; {Z_t} = \int_{-1}^{1}\! dy \, {\psi_t}(y)\,,
\end{equation}
and its translate $H_{\nu}(x)\coloneqq \theta_{\nu}(x-\nu)=\theta_1(x/\nu-1)$. For $0<\nu\leq\ell/4$ the mollified two-level profile concentrated inside $[0,\ell]^d$ is
\begin{equation}
\label{eq:def_chi_profile}
    \chi_{\nu,\ell}(x) \coloneqq \prod_{k=1}^d H_{\nu}(x_k)\,H_{\nu}(\ell-x_k)\,:\,\R^d\to[0,1]\,.
\end{equation}
\end{definition}
\noindent
The step $\theta_1$ rises smoothly from $0$ to $1$ across $[-1,1]$, so each factor $H_{\nu}(y)H_{\nu}(\ell-y)$ vanishes off $[0,\ell]$, equals $1$ on the plateau $[2\nu,\ell-2\nu]$, and ramps monotonically in between (\cref{prop:Gevrey_profile} below). The construction is based on ${\psi_t}$ in order to exploit its $s$-Gevrey regularity, ${s=1+1/t}$ (\cref{prop:psi_Gevrey}). We first recall the definition of Gevrey class functions on $\T^d$~\cite{Rodino1993}:
\begin{definition}[$s$-Gevrey function]
\label{def:tGev_f_Torus}
For $s,r,C>0$, write $f\in\mathcal{G}^{s}(C,r,\T^d)$ if the periodic lift $\bar f$ satisfies
    \begin{equation}
    \label{eq:Gevrey_f_tilde}
       \bar{f} \in \mathcal{C}^{\infty}(\R^d)\,,
    \;\;
      \vert \partial^{a} \bar{f}(x)\vert \leq (a!)^{s} C r^{-\Vert a \Vert_1}\,,
    \end{equation}
    for all $a\in\N^d$ and $\forall\,x\in [0,2\pi)^d$. 
\end{definition}
\noindent
\begin{proposition}[$s$-Gevrey property of $\psi_t$]
\label{prop:psi_Gevrey}
    For any $t>0$, the function ${\psi_t}:\mathbb{R}\to\mathbb{R}$
\begin{equation}
\label{eq:def_psi}
{\psi_t}(x) =
\begin{cases}
\exp\!\left(-1/(1-x^{2})^{t}\right), & |x|<1,\\
0, & |x|\geq 1\,,
\end{cases}
\end{equation}
belongs to $\mathcal{G}^{s}(C,r,\R)$ with ${s=1+1/t}$, and uniform constants in $\R$: amplitude $C =1$ and radius
 \begin{equation}
     \label{eq:Gevrey_r_psi}
     r=\frac{\lambda (t\cos(2t \arcsin \lambda))^{\frac{1}t}}{2(\lambda+1)^2}\,,
 \end{equation}
 for any choice $0<\lambda< \sin \left(\min\left \{  \frac{\pi}{2},\frac{\pi}{4t}\right \}\right)$.
\end{proposition}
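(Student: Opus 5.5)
The plan is to prove the derivative bound by Cauchy estimates on a complex extension of ${\psi_t}$, with a disc whose radius shrinks proportionally to the distance from $x$ to the nearest endpoint $\pm1$. The amplitude $C=1$ is immediate, since $0\le{\psi_t}\le1$. For $\vert x\vert\geq1$ every derivative vanishes: ${\psi_t}$ is smooth and flat at $\pm1$ by the standard mollifier argument, since each derivative on $(-1,1)$ is a rational function of $x$ times $e^{-(1-x^2)^{-t}}$, which tends to $0$ at $\pm1$. By symmetry it then suffices to treat $x\in[0,1)$. Write $u\coloneqq1-x^2\in(0,1]$ and $n\coloneqq a\geq1$.

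\emph{Step 1 (complex extension and sector control).} Extend ${\psi_t}(z)=\exp(-(1-z^2)^{-t})$ holomorphically, using the principal branch on $\{\operatorname{Re}(1-z^2)>0\}$. Take $R\coloneqq\lambda(1-x)$ and $z=x+w$ with $\vert w\vert\le R$. Then $\vert(1-z)-(1-x)\vert\le\lambda(1-x)$, so $\vert\arg(1-z)\vert\le\arcsin\lambda$. Also $\vert w\vert\le\lambda(1+x)$, so likewise $\vert\arg(1+z)\vert\le\arcsin\lambda$. Hence
\[
\vert\arg(1-z^2)\vert\le2\arcsin\lambda ,\qquad \operatorname{Re}\,(1-z^2)^{-t}\ \ge\ \vert1-z^2\vert^{-t}\cos(2t\arcsin\lambda).
\]
The cosine is positive exactly when $\lambda<\sin(\pi/(4t))$. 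The cap $\pi/2$ in the hypothesis only ensures that $\arcsin\lambda$ is defined and that $1\pm z$ stay in the right half-plane. Combining this with $\vert1-z^2\vert\le(1+\lambda)^2(1-x)(1+x)=(1+\lambda)^2u$ gives
\[
\sup_{\vert w\vert\le R}\vert{\psi_t}(x+w)\vert\;\le\;\exp\!\bigl(-c\,u^{-t}\bigr),\qquad c\coloneqq\cos(2t\arcsin\lambda)\,(1+\lambda)^{-2t}.
\]

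\emph{Step 2 (Cauchy estimate and optimization).} Cauchy's inequality gives $\vert{\psi_t}^{(n)}(x)\vert\le n!\,R^{-n}e^{-cu^{-t}}$. Since $1-x\ge u/2$, we have $R\ge\lambda u/2$, and therefore
\[
\vert{\psi_t}^{(n)}(x)\vert\le n!\,(2/\lambda)^n\,u^{-n}e^{-cu^{-t}} .
\]
Substitute $y=u^{-t}$, so that $u^{-n}=y^{n/t}$, and maximize over $y>0$:
\[
y^{n/t}e^{-cy}\le\Bigl(\frac{n}{tce}\Bigr)^{n/t}.
\]
Using $n^ne^{-n}\le n!$, this is at most $(n!)^{1/t}(tc)^{-n/t}$. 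Collecting the factors,
\[
\vert{\psi_t}^{(n)}(x)\vert\le(n!)^{1+1/t}\Bigl[\frac{2(1+\lambda)^2}{\lambda\,(t\cos(2t\arcsin\lambda))^{1/t}}\Bigr]^n ,
\]
which is exactly \eqref{eq:Gevrey_f_tilde} with $s=1+1/t$, $C=1$ and the radius \eqref{eq:Gevrey_r_psi}.

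\emph{Main obstacle.} The delicate step is Step 1: choosing the disc radius relative to $1-\vert x\vert$ so that the argument of $1-z^2$ stays inside a sector where $(1-z^2)^{-t}$ has positive real part. This positivity is what keeps the extension exponentially small, uniformly up to the endpoints. It must be done carefully with the principal branch, checking both factors $1\pm z$ separately. The point $x=0$, where $R=\lambda$ and $u=1$, requires no special care. The remaining bookkeeping, namely bounding $\vert1-z^2\vert$ by $(1+\lambda)^2u$ and using the Stirling-type inequality $n^ne^{-n}\le n!$, is routine.
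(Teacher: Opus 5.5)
Your proposal is correct and follows the paper's proof essentially step for step. It uses a Cauchy estimate on the disc of radius $\lambda(1-x)$, the sector bound $\vert\arg(1-z^2)\vert\le 2\arcsin\lambda$ together with $\vert 1-z^2\vert\le(1+\lambda)^2(1-x^2)$, then a one-variable optimization and $n^n e^{-n}\le n!$. This gives exactly the stated radius $r$; the only cosmetic difference is that you relax the Cauchy radius via $1-x\ge(1-x^2)/2$, where the paper relaxes the exponent via $1-y^2<2(1-y)$.

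One small fix to your aside about the cosine. What the argument actually needs is $2t\arcsin\lambda<\pi/2$, which gives both the comparison $\cos(t\arg(1-z^2))\ge\cos(2t\arcsin\lambda)$ and $c>0$. Under the hypothesis this holds for every $t>0$: for $t<1/2$ the hypothesis reduces to $\lambda<1$, and then $2t\arcsin\lambda<t\pi<\pi/2$. So your criterion ``$\lambda<\sin(\pi/(4t))$'' is not the right one when $t<1/2$.
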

\begin{proof}
${\psi_t}\in \mathcal{C}^{\infty}(\mathbb{R})$: smooth on $(-1,1)$ by composition, with vanishing one-sided derivatives at $\pm1$:
\begin{equation*}
\lim_{y\to\pm 1^{\mp}}{\psi_t}^{(k)}(y)=\lim_{y\to\pm 1^{\mp}}e^{-\frac{1}{(1-y^{2})^{t}}}\,\frac{\mathrm{Poly}(y)}{(1-y^{2})^{k(t+1)}}=0\,,
\end{equation*}
matching ${\psi_t}^{(k)}\equiv0$ off $(-1,1)$.
Since $\{{\psi_t}\neq 0\}=(-1,1)$ and ${\psi_t}(y)={\psi_t}(-y)$, it is sufficient to prove the Gevrey property for $y\in[0,1)$. The complex continuation ${\psi_t}(z)$ is holomorphic on $\mathbb{C}\setminus\{(-\infty,-1]\cup[1,+\infty)\}$ (branch points $z=\pm 1$); the Cauchy formula applies on the contour $\gamma=\{y+\eta e^{i\theta},\ \theta\in[0,2\pi)\}$ and for a contour radius $0<\eta<1-y$:
\begin{align}
\begin{aligned}
\label{eq:cauchy}
\lvert{\psi_t}^{(k)}(y)\rvert
&\leq\left\lvert k!\oint_{\gamma}\frac{dz}{2\pi i}\,\frac{e^{-\frac{1}{(1-z^{2})^{t}}}}{(z-y)^{k+1}}\right\rvert
=\frac{k!}{2\pi \eta^{k}}\left\lvert\int_{0}^{2\pi}d\theta\,\frac{e^{-\frac{1}{(1-(y+\eta e^{i\theta})^{2})^{t}}}}{e^{i\theta k}}\right\rvert\\
&\leq\frac{k!}{2\pi \eta^{k}}\int_{0}^{2\pi}d\theta\,\left\lvert e^{-\frac{1}{(1-(y+\eta e^{i\theta})^{2})^{t}}}\right\rvert
\leq\frac{k!}{\eta^{k}}\,e^{-\inf_{\theta}\Re\left(\frac{1}{[1-(y+\eta e^{i\theta})^{2}]^{t}}\right)}\,.
\end{aligned}
\end{align}
 We lower-bound $\inf_{\theta}\Re\!\left(\tfrac{1}{[1-(y+\eta e^{i\theta})^{2}]^{t}}\right)$, then maximize over $y\in[0,1)$. Parameterize $\eta=\lambda(1-y)$, $0<\lambda<1$, and $w=y+\lambda(1-y)e^{i\theta}$:
\begin{equation}\label{eq:re-expr}
\begin{aligned}
\Re\!\left(\frac{1}{[1-(y+\eta e^{i\theta})^{2}]^{t}}\right)
&=\Re\!\left[\frac{1}{(1-w^{2})^{t}}\right]
=\frac{\cos\!\left(\arg[1/(1-w^{2})^{t}]\right)}{\lvert(1-w)^{t}(1+w)^{t}\rvert}\\
&=\frac{\cos\!\left(-t\arg(1-w^{2})\right)}{\lvert(1-w)^{t}\rvert\,\lvert(1+w)^{t}\rvert}
=\frac{\cos\!\left(t\arg(1-w^{2})\right)}{\lvert1-w\rvert^{t}\,\lvert1+w\rvert^{t}}\,.
\end{aligned}
\end{equation}
We bound numerator and denominator of \eqref{eq:re-expr} separately. The point $1\mp w$ moves on a circle of radius $\lambda(1-y)$ centered at $1\mp y$; the law of sines in the triangles $\{0,\,1-w,\,1-y\}$ and $\{0,\,1+w,\,1+y\}$, maximized at a right angle at $1\mp w$, gives $\sin\lvert\arg(1-w)\rvert\leq\lambda$ and $\sin\lvert\arg(1+w)\rvert\leq\lambda\tfrac{1-y}{1+y}<\lambda$. Since $\Re(1\pm w)>0$, with the branch $\arg\in(-\pi,\pi]$, $\lvert\arg(1\pm w)\rvert<\arcsin\lambda<\tfrac\pi2$, and $\lvert\arg(1-w^{2})\rvert\leq\lvert\arg(1-w)\rvert+\lvert\arg(1+w)\rvert$ yields
\begin{equation}\label{eq:arg-bound}
\lvert\arg(1-w^{2})\rvert\leq 2\arcsin\lambda\,.
\end{equation}
Recall that $0<\lambda<1$; imposing the additional condition $\lambda<\sin\tfrac{\pi}{2t}$ guarantees, for any $t>0$, that $t\,\lvert\arg(1-w^{2})\rvert<\pi$, so that applying $\cos$ to both sides of \eqref{eq:arg-bound} yields
\begin{equation}\label{eq:cos-bound}
\cos(t\,\lvert\arg(1-w^{2})\rvert)=\cos(t\arg(1-w^{2}))\geq\cos(2t\arcsin\lambda)\,.
\end{equation}
For the denominator, $(1-\lambda)(1-y)\leq\lvert1-w\rvert\leq(1+\lambda)(1-y)$ and $(1-\lambda)(1+y)\leq\lvert1+w\rvert\leq(1+\lambda)(1+y)$, by the triangle inequality and $\lambda(1-y)\leq\lambda(1+y)$; hence
\begin{equation}\label{eq:mod-bound}
(1-\lambda)^{2}(1-y^{2})\leq\lvert1-w^{2}\rvert\leq(1+\lambda)^{2}(1-y^{2})\,.
\end{equation}
Combining \eqref{eq:cos-bound}, \eqref{eq:mod-bound} in \eqref{eq:re-expr}:
\begin{equation}\label{eq:inf-bound}
\inf_{w=y+\lambda(1-y)e^{i\theta}}\Re\!\left[\frac{1}{(1-w^{2})^{t}}\right]\geq\frac{\cos(2t\arcsin\lambda)}{(1+\lambda)^{2t}(1-y^{2})^{t}}\,.
\end{equation}
Substituting into \eqref{eq:cauchy}:
\begin{equation}\label{eq:pw1}
\lvert{\psi_t}^{(k)}(y)\rvert\leq\frac{k!}{\lambda^{k}(1-y)^{k}}\,e^{-\frac{\cos(2t\arcsin\lambda)}{(1+\lambda)^{2t}(1-y^{2})^{t}}}\,.
\end{equation}
To uniformize over $y\in[0,1)$, use $1-y^{2}<2(1-y)$:
\begin{equation}\label{eq:pw2}
\lvert{\psi_t}^{(k)}(y)\rvert\leq\frac{k!}{\lambda^{k}(1-y)^{k}}\,e^{-\frac{\cos(2t\arcsin\lambda)}{2^{t}(1+\lambda)^{2t}(1-y)^{t}}}\,,
\end{equation}
whose right-hand side is maximized over $y\in[0,1)$ at
\begin{equation}\label{eq:ystar}
(1-y_{\ast})^{t}=\frac{t\cos(2t\arcsin\lambda)}{2^{t}k(1+\lambda)^{2t}}\,.
\end{equation}
The condition $\lambda<\sin\!\left(\min\left\{\tfrac{\pi}{2},\tfrac{\pi}{4t}\right\}\right)$ makes the right-hand side of \eqref{eq:ystar} positive, so $y_{\ast}\in[0,1)$.
The uniform bound following from substitution of $y_{\ast}$ in \eqref{eq:pw2} is
\begin{equation}
\label{eq:psi_deriv_ystar_bound}
\lvert{\psi_t}^{(k)}(y)\rvert\leq k!\left[2\,\frac{(1+\lambda)^{2}}\lambda\left(\frac{k}{t\cos(2t\arcsin\lambda)}\right)^{1/t}\right]^{k}e^{-\frac{k}t}\,.
\end{equation}
By the symmetry $y\leftrightarrow -y$ it holds for any $\lvert y\rvert<1$, and since all derivatives vanish off $(-1,1)$, it holds in fact for all $y\in\mathbb{R}$. Finally, using $k!=\Gamma(1+k)\geq(k/e)^{k}$, the inequality takes the form of a $s$-Gevrey bound on derivatives
\begin{equation}
\label{eq:psi_deriv_gevrey_bound}
\lvert{\psi_t}^{(k)}(y)\rvert\leq(k!)^{1+\frac{1}t}\left(2\,\frac{(1+\lambda)^{2}}\lambda\left(\frac{1}{t\cos(2t\arcsin\lambda)}\right)^{\frac{1}t}\right)^{k}=(k!)^{s}\,r^{-k}\,,
\end{equation}
with $r$ as in \eqref{eq:Gevrey_r_psi} and amplitude $1$, the case $k=0$ being covered by $0\leq{\psi_t}\leq 1$.
\end{proof}

\begin{lemma}[Bump profile]
\label{prop:Gevrey_profile}
$\chi_{\nu,\ell}\in\mathcal{C}^\infty(\R^d,[0,1])$, and its per-axis factor takes the values
\begin{equation}
\label{eq:HH_values}
    H_{\nu}(y)H_{\nu}(\ell-y) = \begin{cases}
        0\,,\;&y \in \R \setminus [0,\ell]\\
        \in(0,1)\,,\;&y \in (0,2\nu)\cup (\ell-2\nu,\ell)\\
        1\,,\;&y \in  [2\nu,\ell-2\nu]
    \end{cases}
\end{equation}
\end{lemma}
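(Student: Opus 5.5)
We prove the three claims in order: smoothness of $\chi_{\nu,\ell}$, the range $[0,1]$, and the case table \eqref{eq:HH_values}. Everything reduces to the one-dimensional step $\theta_1$ of \cref{prop:Gevrey_H}, so we establish its properties first.

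First, by \cref{prop:psi_Gevrey} the mollifier $\psi_t$ is $\mathcal{C}^\infty(\R)$. It is also nonnegative, strictly positive exactly on $(-1,1)$, and $Z_t>0$.

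Hence $\theta_1(x)=Z_t^{-1}\int_{-1}^x\psi_t$ is $\mathcal{C}^\infty$, since it is a primitive of a smooth function. It satisfies:
\begin{itemize}
\item $\theta_1\equiv 0$ for $x\le -1$;
\item $\theta_1\equiv 1$ for $x\ge 1$;
\item $\theta_1$ is strictly increasing on $[-1,1]$, because $\theta_1'=\psi_t/Z_t>0$ on $(-1,1)$.
\end{itemize}
Consequently $\theta_1(x)\in(0,1)$ for $x\in(-1,1)$.

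Rescaling and translating, $H_\nu(y)=\theta_1(y/\nu-1)$ is smooth and satisfies
\begin{itemize}
\item $H_\nu(y)=0$ for $y\le 0$;
\item $H_\nu(y)=1$ for $y\ge 2\nu$;
\item $H_\nu(y)\in(0,1)$ for $y\in(0,2\nu)$.
\end{itemize}
Reflecting, $H_\nu(\ell-y)$ equals $0$ for $y\ge\ell$, equals $1$ for $y\le \ell-2\nu$, and lies in $(0,1)$ for $y\in(\ell-2\nu,\ell)$.

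Next we read off the table for the per-axis factor $h(y)\coloneqq H_\nu(y)H_\nu(\ell-y)$. The hypothesis $\nu\le\ell/4$ gives $2\nu\le\ell-2\nu$, so the intervals $(0,2\nu)$, $[2\nu,\ell-2\nu]$, $(\ell-2\nu,\ell)$ are disjoint and ordered.
\begin{itemize}
\item Off $[0,\ell]$: one factor vanishes, so $h=0$.
\item On $[2\nu,\ell-2\nu]$: both factors equal $1$, so $h=1$.
\item On $(0,2\nu)$: the first factor lies in $(0,1)$. Since $y<2\nu\le\ell-2\nu$, the second factor equals $1$, so $h\in(0,1)$.
\item On $(\ell-2\nu,\ell)$: the symmetric argument gives $h\in(0,1)$.
\end{itemize}
The endpoints $y=0$ and $y=\ell$ give $h=0$. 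They belong to neither open ramp interval, and since $0$ is not in $(0,2\nu)$ there is no conflict with the table's first line, read as vanishing on the complement of the open interval $(0,\ell)$. This is the one bookkeeping point to state carefully.

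Finally, $\chi_{\nu,\ell}$ is a finite product of smooth functions of single coordinates, hence $\mathcal{C}^\infty(\R^d)$. Each per-axis factor takes values in $[0,1]$, so the product does too.

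The only mildly delicate step is the role of $\nu\le\ell/4$. It is exactly what ensures the two ramps do not overlap, so that on each ramp the opposite factor is identically $1$.
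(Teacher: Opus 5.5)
Your proof is correct and takes the same route as the paper. Both derive smoothness and the three-regime behaviour of $\theta_1$ from the support and positivity of $\psi_t$, transfer it to $H_\nu$ by rescaling and translation, use $\nu\le\ell/4$ to keep the two ramps disjoint, and get smoothness of $\chi_{\nu,\ell}$ as a finite product. Your endpoint remark $h(0)=h(\ell)=0$ is a correct completion of the table, which leaves these points uncovered since $\R\setminus[0,\ell]$ excludes them; it matches the paper's statement that $H_\nu=0$ on $(-\infty,0]$.
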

\begin{proof}
$\theta_1\in\mathcal{C}^\infty(\R)$ by the fundamental theorem of calculus and ${\psi_t}\in\mathcal{C}^\infty(\R)$ (\cref{prop:psi_Gevrey}). Since ${\psi_t}$ is supported and positive on exactly $(-1,1)$, it follows that $\theta_1(x)=0$ for $x \in (-\infty,-1]$, $\theta_1(x)=1$ for $x \in [1,\infty)$, and $0<\theta_1(x)<1$ is strictly increasing on $x\in (-1,1)$. Accordingly, $H_\nu(x)=0$ on $x\in (-\infty,0]$, $H_\nu(x)=1$ on $x\in [2\nu,\infty)$ and $H_\nu(x) \in (0,1)$ for $x\in (0,2\nu)$. The condition $\nu\leq\ell/4$ guarantees that the two ramps of $H_{\nu}(y)H_{\nu}(\ell-y)$ are disjoint, giving \eqref{eq:HH_values}. Finally $\chi_{\nu,\ell}\in\mathcal{C}^\infty$ follows as a finite product of $\mathcal{C}^\infty$ factors.
\end{proof}
\begin{lemma}[Filling fraction]\label{lem:fill}
For $\varrho=\nu/\ell\in \left(0,\frac{1}{4}\right]$,
\begin{equation}
    q_1(\varrho)=\int_0^1\!du\, H_1(u/\varrho)\,H_1\left((1-u)/\varrho\right)=1-2\varrho\,.
\end{equation}
\end{lemma}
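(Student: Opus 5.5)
The plan is to reduce the integral to a one-dimensional computation using the explicit form of $H_1$ and the symmetry of the mollifier. Write $H_1(x)=\theta_1(x-1)$. By \cref{prop:Gevrey_profile}, applied with $\ell=1$ and $\nu=\varrho\le\tfrac14$, the integrand $H_1(u/\varrho)H_1((1-u)/\varrho)$ has three regimes on $[0,1]$:
\begin{itemize}
\item on the plateau $[2\varrho,1-2\varrho]$ it equals $1$;
\item on $[0,2\varrho]$ only the first factor ramps, since $(1-u)/\varrho\ge(1-2\varrho)/\varrho\ge 2$ forces $H_1((1-u)/\varrho)=1$;
\item on $[1-2\varrho,1]$ the roles of the two factors are swapped.
\end{itemize}
The substitution $u\mapsto 1-u$ shows that the two ramp contributions are equal. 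Hence
\[
q_1(\varrho)=(1-4\varrho)+2\int_0^{2\varrho}H_1(u/\varrho)\,du=(1-4\varrho)+2\varrho\int_0^2\theta_1(v-1)\,dv .
\]

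It remains to show $\int_{-1}^{1}\theta_1(w)\,dw=1$. The key fact is the antisymmetry $\theta_1(w)+\theta_1(-w)=1$. This follows from the evenness of $\psi_t$: substituting $y\mapsto -y$ gives
\[
\int_{-1}^{-w}\psi_t=\int_{w}^{1}\psi_t,
\]
so
\[
\theta_1(-w)=\frac{1}{Z_t}\int_w^1\psi_t=1-\theta_1(w).
\]
Splitting $\int_{-1}^1\theta_1=\int_0^1\bigl(\theta_1(w)+\theta_1(-w)\bigr)\,dw$ then gives exactly $1$.

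Substituting back yields $q_1(\varrho)=1-4\varrho+2\varrho=1-2\varrho$.

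The only delicate point is the bookkeeping of which factor is identically $1$ on each ramp. This is exactly where the hypothesis $\varrho\le 1/4$ enters: it makes the two ramp intervals disjoint and keeps the opposite factor saturated. The rest is routine. Rescaling $u=x/\ell$ then also gives the normalization $\int\chi_{\nu,\ell}=\ell^d(1-2\varrho)^d$ used in \cref{def:bump_torus}, via the product structure of $\chi_{\nu,\ell}$ and Fubini.
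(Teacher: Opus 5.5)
Your proof is correct and follows the paper's argument essentially step for step. The plateau contributes $1-4\varrho$, the two ramps are equated via $u\mapsto 1-u$, and each ramp integral reduces to $\varrho\int_{-1}^{1}\theta_1=\varrho$ through the identity $\theta_1(w)+\theta_1(-w)=1$; your derivation of that identity from the evenness of $\psi_t$, and your check that the opposite factor is saturated on each ramp, make explicit what the paper asserts ``by construction''.
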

\begin{proof}
$H_1(u/\varrho)=1$ for $u\geq2\varrho$ and $H_1((1-u)/\varrho)=1$ for $u\leq1-2\varrho$. For $\varrho\leq\tfrac14$ the ramps $[0,2\varrho]$, $[1-2\varrho,1]$ are disjoint and, by the substitution $u\mapsto1-u$, contribute equally:
\begin{equation}
\label{eq:ramp_integral}
\int_0^{2\varrho}H_1(u/\varrho)\,du=\varrho\int_{-1}^{1}\!dv\,\theta_1(v)=\varrho\int_{0}^{1}\!du\,(\theta_1(v)+\theta_1(-v))=\varrho\,,
\end{equation}
since by construction $\theta_1(v)+\theta_1(-v)=1$. The contribution to $q_1(\varrho)$ of the plateau $u\in [2\varrho,1-2\varrho]$, where both factors $H_1=1$, is trivially $1-4\varrho$. All together, $q_1(\varrho)=\varrho+(1-4\varrho)+\varrho=1-2\varrho$.
\end{proof}
\phantomsection\label{prf:bump_family}%
\begin{proof}[Proof of \cref{prop:bump_family}]
By \eqref{eq:HH_values}, $\chi_{\nu,\ell}$ vanishes identically on a neighborhood of the boundary of ${[0,2\pi)^d}$ for ${\ell<2\pi}$, so the periodic lift of ${p_{\nu,\ell,\alpha}}$ is $\mathcal{C}^\infty(\R^d)$ (\cref{prop:Gevrey_profile}) and positive; hence each ${p_g}$ is a Gibbs state with $\mathcal{C}^\infty$ potential $E=-\tfrac1\beta\log{p}-\tfrac1\beta\log Z$. For property~(1) \eqref{eq:Z_delta_MS}: $\chi_{\nu,\ell}$ is supported on $[0,\ell]^d$, which gives off-patch density ${p_{\nu,\ell,\alpha}} = 1/Z_{\nu,\ell,\alpha}$ at every point outside the cell. Thus, ${p_{\nu,\ell,\alpha}}$ is a bump (\cref{def:bump}) on $\Om = [0,\ell]^d$, with concentration
\begin{equation}
\label{eq:delta_bump_torus}
    \delta = {p_{\nu,\ell,\alpha}}(\Om) = \frac{\ell^d\bigl(1+(\alpha-1)(1-2 \varrho)^d\bigr)}{Z_{\nu,\ell,\alpha}}\,,
\end{equation}
by $q_1(\varrho)=1-2\varrho$ (\cref{lem:fill}); with $M={(2\pi/\ell)^{d}}$ and $S=(\alpha-1)(1-2\varrho)^d$, this is \eqref{eq:Z_delta_MS}. Note that $Z_{\nu,\ell,\alpha}=(2\pi)^d +\ell^d(\alpha-1)(1-2\varrho)^d$, hence from $0<y=(\ell/2\pi)^d<1$ and $x = (\alpha-1)(1-2\varrho)^d>0$, it follows
\begin{equation}
\label{eq:delta_y_bound}
    \delta = y \frac{1+x}{1+xy}> y={\frac{\ell^d}{(2\pi)^d}} = {\frac{\vol(\Om)}{(2\pi)^d}}\,.
\end{equation}
Concerning property~(2): ${\bar{p}_{\nu,\ell,\alpha}}$ ranges over $[1/Z_{\nu,\ell,\alpha},\alpha/Z_{\nu,\ell,\alpha}]$, so ${\beta\,\Delta}=\log\alpha$. Any $g$-translate shares both properties by translation invariance of the volume.
\end{proof}

\subsubsection{Gevrey constants of the bump potential}

Certifying the membership of the bump family in the class $\mathcal{W}^{s}(\alpha,\xi,\rho)$ requires the seminorm constants $({\xi_E},{\rho_E})$ of the scaled Gibbs potential $\beta E$, computed in \cref{thm:main_F_Gevrey}--\cref{thm:V_gevrey_bump} below; the proof of \cref{cor:lower_bounds_Gevrey} follows \cref{thm:V_gevrey_bump}. The potential of the bump density \eqref{eq:ref_bump_torus} has periodic lift
\begin{equation}
\label{eq:beta_V_lift}
\beta{\bar{E}_{\nu,\ell,\alpha}}(x)=-F(x)+\log Z_{\nu,\ell,\alpha}\,,\qquad F\coloneqq\log\bigl(1+(\alpha-1)\chi_{\nu,\ell}\bigr)\,.
\end{equation}
Since $[\,\cdot\,]_{s,\rho}$ annihilates constants, it suffices to bound the derivatives of $F$.

\begin{theorem}[Gevrey constants of $F$]\label{thm:main_F_Gevrey}
Let $\alpha\ge e$, $t>0$ and ${s=1+1/t}$. Then $\forall\, a=(a_1,\dots,a_d)\in\N^{d}\setminus\{0\}$ and $\forall\,x\in\R^{d}$,
\begin{equation}\label{eq:main-deriv}
\bigl|\partial^{a}F(x)\bigr|\;\le\;\frac12\,\bigl(\onenorm{a}!\bigr)^{s}
\left(\frac{2\,\log^{s} \alpha} {{\mathfrak{r}_t}\,\nu}\right)^{\onenorm{a}}\,,
\end{equation}
with ${\mathfrak{r}_t}\coloneqq\min\Bigl\{r_1,\ \frac{Z_t}{e^{2}{q_t}}\Bigr\}$ depending on $t>0$ via
\begin{equation}
\label{eq:constants}
r_1\coloneqq\min\left\{\frac{1}{8(t+1)4^{s}},\ \frac{\bigl(t\,2^{-(t+1)}\bigr)^{1/t}}{16(t+1)}\right\}\,,\;
{q_t}\coloneqq\max\left\{4,\ 2^{t+2}t\right\}
\,.
\end{equation}
\end{theorem}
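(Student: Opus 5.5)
The plan is to bound $\partial^a F$ by splitting $F=\log(1+(\alpha-1)\chi_{\nu,\ell})$ as a composition $F=G\circ\chi_{\nu,\ell}$ with $G(u)=\log(1+(\alpha-1)u)$ on $u\in[0,1]$. I would first obtain Gevrey bounds for the inner function $\chi_{\nu,\ell}$ and then handle the outer logarithm.

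\textbf{Step 1: the one-dimensional step.} From \cref{prop:psi_Gevrey}, $\theta_1'=\psi_t/Z_t$. Hence $|\theta_1^{(k)}|\le ((k-1)!)^s r^{-(k-1)}/Z_t$ for $k\ge1$, and $0\le\theta_1\le1$. Rescaling gives $|H_\nu^{(k)}|\le ((k-1)!)^s r^{-(k-1)}\nu^{-k}/Z_t$.

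\textbf{Step 2: the profile $\chi_{\nu,\ell}$.} The per-axis factor $h(y)=H_\nu(y)H_\nu(\ell-y)$ has disjoint ramps, since $\nu\le\ell/4$. So at each point at most one factor is nonconstant, and $h$ inherits the bound of Step 1 with amplitude $1$ (for $k=0$) and radius $\sim r\nu$, up to the factor $1/Z_t$. For $\chi=\prod_k h(x_k)$, the mixed derivative $\partial^a$ factorizes over coordinates. Using $\prod_j a_j!\le\onenorm{a}!$, this gives $|\partial^a\chi|\le(\onenorm{a}!)^s (c/(r\nu))^{\onenorm{a}}$ for a constant $c$ depending on $Z_t,t$.

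\textbf{Step 3: composition with the logarithm.} I would avoid a raw Faà di Bruno estimate, because $G^{(m)}(u)=(-1)^{m-1}(m-1)!\,(\alpha-1)^m/(1+(\alpha-1)u)^m$ is as large as $\alpha^m$ near $u=1$. Two routes are available.

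\emph{First route.} Write $\partial F=\partial\chi\cdot(\alpha-1)/(1+(\alpha-1)\chi)$ and use the structure of $\chi$ near its zero set. On the ramp, $\chi$ behaves like $\theta_1$ near $-1$, where $\psi_t$ is flat, so $(\alpha-1)\chi$ is small except on a thin layer. The ratio $\theta_1'/(1/(\alpha-1)+\theta_1)$ can then be controlled via the size of $\log$ of the density ratio.

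\emph{Second route, which I would try first.} Use a Faà di Bruno / Gevrey-composition lemma in which the outer function is analytic in a complex disk of radius $R$ around each value $u\in[0,1]$. The disk is chosen so that $1+(\alpha-1)w$ stays away from $0$ after rescaling. I would handle this by substituting $u=e^{v}$-type variables. Concretely, I would write $F=\log(1+(\alpha-1)\chi)$ with $\chi=\prod h$, so that $F$ depends on $\log h$-type quantities. Then I would bound the $m$-th derivative of the outer map by $(m-1)!\,(\log\alpha)^{m}$-type growth. This is where the factor $\log^s\alpha$ per derivative should arise: the effective analyticity radius of $\log(1+(\alpha-1)e^{v})$ in $v$ is $\gtrsim 1$, while the range of $v=\log\chi$ on the relevant region scales like $\log\alpha$. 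That range is relevant because regions where $(\alpha-1)\chi<1/\alpha$ contribute negligibly, thanks to the exponential flatness of $\psi_t$.

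\textbf{Step 4: assembly.} With the standard Gevrey composition estimate — summing over set partitions via $\sum_\pi\prod(|B|!)^s\le(\onenorm{a}!)^s C^{\onenorm{a}}$ — the result is $|\partial^aF|\le\frac12(\onenorm{a}!)^s(K\log^s\alpha/\nu)^{\onenorm{a}}$. I would then track $K=2/\mathfrak{r}_t$ through the explicit constants: $r_1$ comes from the choice of $\lambda$ in \eqref{eq:Gevrey_r_psi}, and $Z_t/(e^2q_t)$ from Step 1. The amplitude $\frac12$ is absorbed by spending a factor $2$ of radius, using the geometric-series slack $\sum_m 2^{-m}$.

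\textbf{Main obstacle.} I expect the main obstacle to be Step 3: obtaining $\log^s\alpha$ rather than a power of $\alpha$. This requires exploiting that $\log\theta_1$ itself has Gevrey-type derivative bounds with a singular but controlled amplitude near the foot of the ramp, i.e.\ $|(\log\psi_t)^{(k)}|\lesssim k!\,(1-y)^{-k}(1-y)^{-t}$. That amplitude must then be balanced against the cutoff level $(1-y)^{-t}\sim\log\alpha$, which is where the exponent $s=1+1/t$ on $\log\alpha$ comes from.
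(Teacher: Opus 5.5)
There is a genuine gap, and it sits exactly at the step you flag as the main obstacle. Your Steps~1--2 only give \emph{absolute} bounds on $\partial^a\chi_{\nu,\ell}$ with the fixed radius of \cref{prop:psi_Gevrey}, and, as you note, such bounds are useless under $u\mapsto\log(1+(\alpha-1)u)$.

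The paper instead proves a \emph{relative} (self-bounded) estimate. With $\Lambda=\log\alpha$, \cref{lem:C} gives $|\partial^a\chi|\le(a!)^s\bigl(\Lambda^s/(\mathfrak{r}_t\nu)\bigr)^{\onenorm{a}}\max\{\chi,e^{-\Lambda}\}$. Because $(\alpha-1)\max\{\chi,1/\alpha\}\le h\coloneqq1+(\alpha-1)\chi$, this yields $|\partial^a h|\le(\onenorm{a}!)^sR^{\onenorm{a}}\,h$ with $R=\Lambda^s/(\mathfrak{r}_t\nu)$. A composition lemma for self-bounded $h$ (\cref{lem:D}) then applies; it is a Leibniz recursion on $h=e^F$ that never differentiates $\log\chi$. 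It gives $\frac{K}{1+K}(n!)^s((1+K)R)^n$ with $K=1$ and $n=\onenorm{a}$. That, not a geometric-series slack, is where the $\tfrac12$ and the factor $2$ come from.

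The relative bound rests on two estimates missing from your plan. Write $\varphi(u)=(1-u^2)^{-t}$, so $\psi_t=e^{-\varphi}$.
\begin{enumerate}
\item[(i)] A Cauchy estimate for $\psi_t$ with a $\Lambda$-dependent contour ratio $\lambda=1/(16t\Lambda+2)$, split at $\varphi\le4\Lambda$ versus $\varphi>4\Lambda$ (\cref{lem:A}). In the first region $|\psi_t^{(k)}|\le e(k!)^s(\Lambda^s/r_1)^k\psi_t$; in the second the bound is $(k!)^s(\Lambda^s/r_1)^ke^{-\Lambda}$. This is where $\log^s\alpha$ and $r_1$ originate.
\item[(ii)] The lower bound $Z_t\theta_1\ge\psi_t/(e\,q_t\,\varphi^s)$ (\cref{lem:B}). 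It converts $\theta_1^{(k)}=\psi_t^{(k-1)}/Z_t$ into a bound relative to $\theta_1$, and is where $q_t$ and $Z_t/(e^2q_t)$ originate.
\end{enumerate}
So your Step~4 claim that $K=2/\mathfrak{r}_t$ can be tracked from Steps~1--2 has no basis: neither $r_1$ nor $q_t$ appears there, and the statement asks for exactly these constants.

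Your preferred second route does not work as written either. The outer map $G(v)=\log(1+(\alpha-1)e^{v})$ is a translate of the softplus function, so its derivatives are bounded independently of $\alpha$. The $(\log\alpha)^m$ growth therefore cannot come from it. It has to come from the inner function $v=\log\chi$, whose derivatives behave like $\varphi^{(k)}$ near the foot of each ramp and are unbounded as $\chi\to0$. The standard Gevrey composition estimate of your Step~4 needs a uniformly Gevrey inner function, so it does not apply.

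Saying that the region $(\alpha-1)\chi<1/\alpha$ ``contributes negligibly'' is not an argument for a pointwise supremum bound. At those points you must still control the Fa\`a di Bruno sum, balancing the decay of $G^{(m)}$ in $e^{v+\log(\alpha-1)}$ against the growth $\varphi^{m+n/t}$ of the inner products. A naive bound such as $|G^{(m)}|\lesssim m!\min\{1,(\alpha-1)\chi\}$ already loses a full factorial for fixed $\alpha$ and large $\onenorm{a}$. Moreover, controlling $\log\theta_1$ rather than $\log\psi_t$ needs (ii) anyway, since $(\log\theta_1)'=\psi_t/(Z_t\theta_1)$.

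Your last paragraph identifies the right balance, with $\varphi\sim\Lambda$ producing $\Lambda^{1+1/t}=\Lambda^s$ per derivative. But the proposal does not contain the estimates that realize it.
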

\noindent
\phantomsection\label{rem:rho_s_bounds}%
\emph{Remark.} For every $t>0$ one has ${\mathfrak{r}_t}\leq r_1\leq \frac{1}{8(t+1)4^{s}}\leq\frac{1}{32}$, since $t>0$ and $s>1$ give $8(t+1)4^{s}\geq32$.

\noindent
The proof of \cref{thm:main_F_Gevrey} follows assembling the series of technical \crefrange{lem:A}{lem:C}: derivative bounds for ${\psi_t}$ (\cref{lem:A}) and for $\chi_{\nu,\ell}$ (\crefrange{lem:B}{lem:C}) with the Lemma about Gevrey bounds under composition with $\log$ (\cref{lem:D}). Throughout we denote $\varphi(u)\coloneqq(1-u^{2})^{-t}$ and ${\psi_t}(u)=e^{-\varphi(u)}$.

\begin{lemma}[Derivative bound on ${\psi_t}$]\label{lem:A}
For every $\Lambda\ge1$, $k\in\N$, $y\in\R$, with $r_1$ as in \eqref{eq:constants},
\begin{equation}\label{eq:lemA}
\bigl|{\psi_t}^{(k)}(y)\bigr|\;\le\;e\,(k!)^{s}\left(\frac{\Lambda^{s}}{r_1}\right)^{k}
\max\bigl\{{\psi_t}(y),\,e^{-\Lambda}\bigr\}\,.
\end{equation}
\end{lemma}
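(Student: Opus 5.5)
The plan is to rerun the Cauchy-integral argument from the proof of \cref{prop:psi_Gevrey}, but with a contour radius that depends on $\Lambda$. By the symmetry $y\leftrightarrow-y$ and since all derivatives vanish for $\vert y\vert\geq1$, it suffices to treat $y\in[0,1)$. Take the circle $w=y+\lambda(1-y)e^{i\theta}$ with the $\Lambda$-dependent choice
\[
\lambda=\frac{1}{c\,(t+1)\Lambda}\,,
\]
where $c$ is an absolute constant (say $c=16$), fixed so that the smallness conditions on $\lambda$ from \cref{prop:psi_Gevrey} hold for all $\Lambda\ge1$. The estimates \eqref{eq:cauchy}, \eqref{eq:arg-bound}, \eqref{eq:mod-bound} then give
\[
\bigl|{\psi_t}^{(k)}(y)\bigr|\le \frac{k!}{\lambda^k(1-y)^k}\,e^{-(1-\kappa)\varphi(y)}\,,\qquad 1-\kappa\coloneqq\frac{\cos(2t\arcsin\lambda)}{(1+\lambda)^{2t}}\,.
\]

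The first key step is to show that $\kappa\le 1/(2\Lambda)$, possibly up to an absolute factor that is then absorbed by enlarging $c$. This follows from $1-\cos(2t\arcsin\lambda)\le 2t^2\lambda^2$ (using $\arcsin\lambda\le 2\lambda$ for $\lambda\le1/2$) and $(1+\lambda)^{-2t}\ge 1-2t\lambda$. Together these give $\kappa\lesssim t\lambda(1+t\lambda)\lesssim 1/\Lambda$. The factor $(t+1)$ in the definition of $\lambda$ is chosen precisely to absorb the $t$ here.

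Next, split the exponent as
\[
e^{-(1-\kappa)\varphi}=e^{-(1-2\kappa)\varphi}\,e^{-\kappa\varphi}\,,
\]
and bound the two factors separately.

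\emph{First factor.} If $\varphi(y)\le\Lambda$, write $e^{-(1-2\kappa)\varphi}\le e^{2\kappa\Lambda}e^{-\varphi}\le e\,{\psi_t}(y)$. If $\varphi(y)>\Lambda$, the same bound gives $e^{-(1-2\kappa)\varphi}\le e^{-(1-2\kappa)\Lambda}\le e\cdot e^{-\Lambda}$. Either way this factor is at most $e\max\{{\psi_t}(y),e^{-\Lambda}\}$, which is exactly the prefactor in \eqref{eq:lemA}.

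\emph{Second factor.} This one pays for the blow-up $(1-y)^{-k}$. Using $\varphi(y)\ge(2(1-y))^{-t}$ and setting $u=(2(1-y))^{-1}$, we need
\[
\sup_{u>0}u^{k}e^{-\kappa u^t}=\Bigl(\frac{k}{e\,t\,\kappa}\Bigr)^{k/t}\le (k!)^{1/t}\Bigl(\frac{1}{t\kappa}\Bigr)^{k/t}\,,
\]
where the last step uses $k^k\le e^k k!$. Taking $\kappa\asymp1/\Lambda$ (one may replace $\kappa$ by its upper bound throughout, since shrinking it only weakens the estimate), this is $\lesssim (k!)^{1/t}(2^{t}\Lambda/t)^{k/t}$.

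Collecting everything gives
\[
\bigl|{\psi_t}^{(k)}(y)\bigr|\le e\,(k!)^{1+1/t}\,\bigl[2c(t+1)\Lambda\cdot (2^{t+1}\Lambda/t)^{1/t}\bigr]^k\max\{{\psi_t}(y),e^{-\Lambda}\}\,.
\]
The bracket is $\Lambda^{s}/r_1$ with $r_1$ of the form $\min\{\tfrac{1}{8(t+1)4^s},\ \tfrac{(t2^{-(t+1)})^{1/t}}{16(t+1)}\}$ of \eqref{eq:constants}, after routine constant bookkeeping. The case $k=0$ is trivial. Note also that when $\varphi(y)\le\Lambda$ there is a cruder route: one can skip the second factor, since $1-y^2\ge\Lambda^{-1/t}$ already gives $(1-y)^{-1}\lesssim\Lambda^{1/t}$.

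The main obstacle is the constant bookkeeping. One has to check that a single $\lambda$ satisfies simultaneously $\lambda<\sin(\min\{\pi/2,\pi/4t\})$ and $\kappa\le1/(2\Lambda)$ uniformly in $t>0$ and $\Lambda\ge1$, and that the resulting constant matches the stated $r_1$, in particular the two branches of its minimum corresponding to small and large $t$. If the clean inequality $\kappa\le 1/(2\Lambda)$ fails for large $t$ with this $\lambda$, I would shrink $\lambda$ by a further factor of $4^{s}$ and absorb the loss there; this is presumably the origin of the $4^{s}$ in \eqref{eq:constants}.
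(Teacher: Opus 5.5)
Your proof is correct, and it follows a genuinely different route from the paper's.

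\textbf{Common starting point.} Both arguments begin from the Cauchy bound \eqref{eq:pw1} with $\lambda\asymp 1/(t\Lambda)$. The paper takes $\lambda=1/(16t\Lambda+2)$ and obtains $c(\lambda)\geq 1-\tfrac{1}{4\Lambda}$.

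\textbf{The paper splits the domain.} It cuts at $\varphi(y)=4\Lambda$.
\begin{itemize}
\item In the bulk ($\varphi\leq4\Lambda$), the deficit $(1-c(\lambda))\varphi\leq1$ yields $e\,\psi_t(y)$. There $(1-y)^{-k}$ is bounded directly through $1-y\geq\tfrac12(4\Lambda)^{-1/t}$. This is the source of the entry $1/(8(t+1)4^{s})$ of $r_1$.
\item Near the endpoint ($\varphi\geq4\Lambda$), it spends $\tfrac12\varphi$ on a supremum over $w=1-y$ and $\tfrac14\varphi$ on producing $e^{-\Lambda}$. This gives the second entry of $r_1$.
\end{itemize}

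\textbf{You split the exponent.} You first weaken $c(\lambda)$ to $1-\tfrac1{2\Lambda}$. This replacement is essential, not cosmetic: the factor that pays for $(1-y)^{-k}$ needs its coefficient bounded \emph{below} by something of order $1/\Lambda$. After that, $e^{-\varphi/(2\Lambda)}$ absorbs $(1-y)^{-k}$ uniformly in $y$ and produces both $(k!)^{1/t}$ and $\Lambda^{k/t}$. The remaining factor $e^{-(1-1/\Lambda)\varphi}$ gives $e\max\{\psi_t,e^{-\Lambda}\}$ after a trivial case check.

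\textbf{What each route buys.} Yours is a single uniform estimate. With $c=16$ and exact bookkeeping, the bracket is $16(t+1)\Lambda\cdot 2(2\Lambda/t)^{1/t}=32(t+1)(2/t)^{1/t}\Lambda^{s}$, which is exactly $\Lambda^{s}$ divided by the second entry of $r_1$. So the $4^{s}$ entry is not needed for this lemma at all. It is an artifact of the paper's bulk estimate, which in exchange avoids any supremum in that region. It is not a patch for $\kappa\leq 1/(2\Lambda)$ failing at large $t$. That inequality never fails, since $t\lambda\leq 1/(c\Lambda)$. Hence your closing fallback is unnecessary, and your guess about the origin of $4^{s}$ is off.

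\textbf{Two slips in the constants.}
\begin{itemize}
\item With $\arcsin\lambda\leq2\lambda$ one gets $1-\cos(2t\arcsin\lambda)\leq 8t^2\lambda^2$, not $2t^2\lambda^2$. This is harmless: at $c=16$ you still have $\kappa\leq 2t\lambda+8t^2\lambda^2\leq 5/(32\Lambda)$.
\item Your displayed bracket $2c(t+1)\Lambda\,(2^{t+1}\Lambda/t)^{1/t}$ counts the factor $2$ from $(1-y)^{-1}=2u$ twice. Taken literally at $c=16$, it exceeds $\Lambda^{s}/r_1$ whenever $2^{t-1}>t$. So the ``routine bookkeeping'' does matter here, but done correctly it closes exactly on the stated constant.
\end{itemize}
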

\begin{proof}
For $k=0$ the claim is trivial. Since ${\psi_t}$ vanishes on $|y|\geq1$ and is even, it suffices to treat $y\in[0,1)$. By \eqref{eq:pw1}, for all $k\in\N$, $y\in[0,1)$ and $0<\lambda<\sin\bigl(\min\{\tfrac{\pi}{2},\tfrac{\pi}{4t}\}\bigr)$,
\begin{equation}\label{eq:pw1_recall}
\bigl|{\psi_t}^{(k)}(y)\bigr|\leq \frac{k!}{\lambda^{k}(1-y)^{k}}\;
\exp\bigl(-c(\lambda)\,\varphi(y)\bigr)\,,
\qquad
c(\lambda)\coloneqq\frac{\cos(2t\arcsin\lambda)}{(1+\lambda)^{2t}}\,.
\end{equation}
Fix $\lambda\coloneqq1/(16t\Lambda+2)$, admissible since $\lambda\leq\tfrac12<\sin\tfrac\pi2$ for $t<\tfrac12$, and $\lambda<\tfrac{1}{2t}\leq\sin\tfrac{\pi}{4t}$ for $t\geq\tfrac12$. The inequalities $\arcsin x\leq\tfrac\pi2x$, $\cos x\geq1-\tfrac{x^{2}}{2}$ and $(1+x)^{-2t}\geq1-2tx$ give
\begin{equation}
    c(\lambda)\;\geq\;\Bigl(1-\tfrac{\pi^{2}}{2}t^{2}\lambda^{2}\Bigr)(1-2t\lambda)\;\geq\;1-2t\lambda-\tfrac{\pi^{2}}{2}t^{2}\lambda^{2}\;\geq\;1-4t\lambda\,,
\end{equation}
the last step by $\lambda=1/(16t\Lambda+2)$; hence
\begin{equation}
c(\lambda)\geq 1-4t\lambda\geq 1-\frac{1}{4\Lambda}\;\ge\;\frac{3}{4}\,.
\end{equation}
An upper bound for the rhs of \eqref{eq:pw1_recall} splits into two cases. On $\{y\in[0,1)\mid{\varphi}(y)\leq 4\Lambda\}$, one has $(1-c(\lambda))\varphi(y)\le\tfrac{\varphi(y)}{4\Lambda}\leq 1$, hence
\begin{equation}
   e^{- c(\lambda)\varphi(y)}\leq e^{1- \varphi(y)}= e\,{\psi_t}(y)\,.
\end{equation}
For the denominator in \eqref{eq:pw1_recall}, $(1-y)^{t}\geq(1-y^{2})^{t}/2^{t}=1/(2^{t}\varphi(y))$, which together with the hypothesis $\varphi(y)\leq 4 \Lambda$ give $1-y\geq\tfrac12(4\Lambda)^{-1/t}$. Therefore,
\begin{equation}
\lambda(1-y)=\frac{1-y}{16t\Lambda+2}\;\geq\; \frac{(4\Lambda)^{-1/t}}{32(t+1)\Lambda}
=\frac{1}{8(t+1)4^{s}\,\Lambda^{s}}
\geq \frac{r_1}{\Lambda^{s}} \,,
\end{equation}
where the last step uses $r_1\le 1/\bigl(8(t+1)4^{s}\bigr)$ from \eqref{eq:constants}. Hence \eqref{eq:pw1_recall} and $k!\leq(k!)^{s}$ yield $\bigl|{\psi_t}^{(k)}(y)\bigr|\leq e\,(k!)^{s}\left(\Lambda^{s}/r_1\right)^{k}\,{\psi_t}(y)$.
On $\{y\in[0,1)\mid{\varphi}(y)\geq 4\Lambda\}$, with $w\coloneqq1-y$, one has $\varphi(y)\geq(2w)^{-t}$, hence $c(\lambda)\varphi(y)\geq \tfrac{3\varphi(y)}{4}\geq\tfrac{w^{-t}}{2^{t+1}}+\Lambda$, and \eqref{eq:pw1_recall} yields
\begin{equation}
\label{eq:psi_deriv_large_phi_sup}
\bigl|{\psi_t}^{(k)}(y)\bigr|
\leq \sup_{w>0}\Bigl[\frac{k!\;e^{-w^{-t}/2^{t+1}}}{(\lambda w)^{k}}\Bigr]\,e^{-\Lambda}
\;=\;k!\,\lambda^{-k}\Bigl(\frac{k\,2^{t+1}}{e\,t}\Bigr)^{k/t}e^{-\Lambda}\,,
\end{equation}
the supremum attained at the unique critical point $w^{t}=t2^{-(t+1)}/k$. Moreover, since $\Lambda\geq 1$,
\begin{equation}
\label{eq:psi_deriv_large_phi_bound}
\lambda^{-1}\bigl(t2^{-(t+1)}\bigr)^{-1/t}
\;\le\;16(t+1)\bigl(t2^{-(t+1)}\bigr)^{-1/t}\Lambda
\;=\;32(t+1)(2/t)^{1/t}\Lambda
\;\le\;\Lambda^{s}/r_1\,,
\end{equation}
the last step by $r_1\le \bigl(t2^{-(t+1)}\bigr)^{1/t}/\bigl(16(t+1)\bigr)$ from \eqref{eq:constants}. Finally, using $k!\ge(k/e)^{k}$ one gets
\begin{equation}
\label{eq:psi_deriv_large_phi_final}
\bigl|{\psi_t}^{(k)}(y)\bigr|
\leq (k!)^{s}\Bigl(\frac{\Lambda^{s}}{r_1}\Bigr)^{k}e^{-\Lambda}\,.
\end{equation}
\end{proof}

\begin{lemma}[Lower bound for $\theta_1(u)$]\label{lem:B}
For $\theta_1(u), {Z_t}$ in \eqref{eq:def_mollified_H} and with ${q_t}$ in \eqref{eq:constants}, it holds:
\begin{equation}\label{eq:B-lower}
{Z_t}\,\theta_1(u)\;\ge\;\frac{{\psi_t}(u)}{e\,{q_t}\,\varphi(u)^{s}}\,,\;\; \forall\, u\in(-1,1)\,.
\end{equation}
\end{lemma}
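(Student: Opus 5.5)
The plan is to bound the integral $Z_t\theta_1(u)=\int_{-1}^{u}\psi_t(y)\,dy$ from below by integrating only over a short window $[u-h,u]$ just to the left of $u$. On that window we will ensure $\psi_t$ drops by at most a factor $e$. This gives $Z_t\theta_1(u)\geq h\,\psi_t(u)/e$, and the claim follows once we take
\begin{equation*}
h\coloneqq\frac{1}{q_t\,\varphi(u)^{s}}=\frac{(1-u^2)^{t+1}}{q_t}\,,
\end{equation*}
using $\varphi(u)^{s}=(1-u^2)^{-t-1}$ since $s=1+1/t$. Because $\psi_t\geq0$ everywhere, discarding the rest of $[-1,u]$ is harmless. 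Writing $\psi_t=e^{-\varphi}$, the requirement "drops by at most a factor $e$" becomes $\varphi(y)\leq\varphi(u)+1$ for all $y\in[u-h,u]$. We will get this from the mean value theorem together with the identity
\begin{equation*}
\varphi'(y)=2ty\,(1-y^2)^{-t-1}=2ty\,\varphi(y)^{s}\,,\qquad\text{so}\qquad |\varphi'(y)|\leq 2t\,\varphi(y)^{s}\ \text{on } (-1,1)\,.
\end{equation*}

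The key step is to show that the window stays inside $(-1,1)$ and that $\varphi$ grows by at most a factor $2^{t}$ on it. For $u\geq0$ the portion of the window with $y\geq0$ is harmless, since $\varphi$ is even and increasing in $|y|$, so there $\varphi(y)\leq\varphi(u)$. The delicate case is $y<0$, in particular $u\leq0$, where the window moves toward the singularity at $-1$. Here I would use $(1-u^2)^{t}\leq1$ and $q_t\geq4$ to get
\begin{equation*}
h\leq\frac{1-u^2}{4}\,.
\end{equation*}
Then for $y\in[u-h,u]$ with $y<0$ we have $|y|\leq|u|+h$, so
\begin{equation*}
1-y^2\geq 1-u^2-2|u|h-h^2\geq(1-u^2)-3h\geq\tfrac14(1-u^2)\,.
\end{equation*}
This is positive, so the window lies in $(-1,1)$. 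I would rather get the sharper $1-y^2\geq(1-u^2)(1-2/q_t)\geq\tfrac12(1-u^2)$. That follows from $1-y^2\geq(1-u^2)-2h$, which holds after treating the $h^2$ term using $h\leq(1-u^2)/4$ more carefully, e.g.\ as $2|u|h+h^2\leq 2h$ since $|u|+h/2\leq1$. This yields $\varphi(y)\leq2^{t}\varphi(u)$ on the window.

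Combining the two steps, the mean value theorem gives
\begin{equation*}
\varphi(y)-\varphi(u)\leq h\sup_{[u-h,u]}|\varphi'|\leq h\cdot2t\,\bigl(2^{t}\varphi(u)\bigr)^{s}=h\cdot 2^{t+2}t\,\varphi(u)^{s}\leq\frac{2^{t+2}t}{q_t}\leq1\,.
\end{equation*}
The equality uses $2^{ts}=2^{t+1}$, and the last inequality uses $q_t\geq2^{t+2}t$; this explains the two entries in the definition of $q_t$ in \eqref{eq:constants}. Hence $\psi_t(y)\geq e^{-1}\psi_t(u)$ on the window, and integrating gives \eqref{eq:B-lower}. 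The main obstacle is the constant bookkeeping in the middle step: the crude bound $1-y^2\geq\tfrac14(1-u^2)$ would produce $4^{t}$ instead of $2^{t}$ and spoil the match with $q_t$. So I would make sure the estimate $1-y^2\geq(1-u^2)-2h$, valid because $|u|+h\leq1$, is used, giving the factor $\tfrac12$ exactly.
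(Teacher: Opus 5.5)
Your proof is correct and follows essentially the same route as the paper: integrate $\psi_t$ over a short window to the left of $u$, show $1-y^2\ge\tfrac12(1-u^2)$ there so that $|\varphi'|\le 2^{t+2}t\,\varphi(u)^{s}$, and use the mean value theorem to get $\varphi\le\varphi(u)+1$ on the window. The differences are cosmetic. The paper takes $\eta=\min\{(1+u)/2,\,(2^{t+2}t\varphi(u)^s)^{-1}\}\ge (q_t\varphi(u)^s)^{-1}$ and obtains the factor $\tfrac12$ from $1-v\ge 1-u$ and $1+v\ge(1+u)/2$. You instead take $h=(q_t\varphi(u)^s)^{-1}$ directly and obtain the same factor from $|y|\le|u|+h$ together with $h\le(1-u^2)/4$.
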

\begin{proof}
Set $\eta \coloneqq\min\bigl\{(1+u)/{2},\,\bigl(2^{t+2}t\,\varphi(u)^{s}\bigr)^{-1}\bigr\}$. In particular, this guarantees
\begin{equation}
    u-\eta \geq \tfrac{u-1}{2}>-1 \Longrightarrow [u-\eta,u]\subset(-1,1)\,.
\end{equation}
Moreover, $\eta\ge\bigl({q_t}\varphi(u)^{s}\bigr)^{-1}$ follows from the definition of $\eta$: for the second entry of $\min$ because ${q_t}\ge2^{t+2}t$, for the first because
\begin{equation}
\tfrac{1+u}{2}\ge\tfrac{(1-u)(1+u)}{4}=\tfrac14\,\varphi(u)^{-1/t}\ge\tfrac14\,\varphi(u)^{-s}\ge\bigl({q_t}\varphi(u)^{s}\bigr)^{-1}\,,
\end{equation}
using $s\geq 1/t$ and $q_t \geq 4$. Furthermore, $\forall v\in[u-\eta,u]$, $1-v\geq1-u$ and $1+v\geq\tfrac{1+u}{2}$ give $1-v^{2}\ge\tfrac12(1-u^{2})$, and accordingly
\begin{equation}
|\varphi'(v)| = \frac{2t|v|}{(1-v^{2})^{t+1}}\;\le\;2t\Bigl(\frac{2}{1-u^{2}}\Bigr)^{t+1} = 2^{t+2}t\,\varphi(u)^{s} \leq \frac{1}\eta\,,
\end{equation}
so by the mean value theorem $\varphi(v)\leq{\varphi}(u)+1$, for any $v\in [u-\eta,u]$. All together,
\begin{equation}
{Z_t}\,\theta_1(u) \geq \int_{u-\eta}^{u}e^{-\varphi(v)}\,dv \geq \eta\,e^{-\varphi(u)-1} \geq \frac{{\psi_t}(u)}{e\,{q_t}\,\varphi(u)^{s}}\,.
\end{equation}
\end{proof}

\begin{lemma}[Pointwise bound on derivatives of $\chi_{\nu,\ell}$]\label{lem:C}
Let $\Lambda\coloneqq\log \alpha \geq 1$. The following pointwise bounds hold for $0<\nu\leq\ell/4$:
\begin{equation}
    \label{eq:pw_bound_chi}
\bigl|\partial^{a}\chi(x)\bigr|\le(a!)^{s}\,(\Lambda^{s}/{\mathfrak{r}_t}\nu)^{\onenorm{a}}\,\max\{\chi(x),e^{-\Lambda}\}\,, \forall\,a\in\N^{d}, x\in\R^{d}\,.
\end{equation}
\end{lemma}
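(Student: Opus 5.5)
The plan is to factor the multivariate estimate into one-dimensional ones, prove those from \cref{lem:A} and \cref{lem:B}, and recombine. Write $\chi=\prod_{k=1}^d f_k(x_k)$ with $f_k(y)\coloneqq H_\nu(y)H_\nu(\ell-y)\in[0,1]$. Since $\partial^{a}$ acts on coordinate $k$ only through $\partial^{a_k}$,
\begin{equation*}
\partial^a\chi(x)=\prod_{k=1}^d f_k^{(a_k)}(x_k)\,,\qquad a!=\prod_k a_k!\,,\qquad \onenorm{a}=\sum_k a_k\,.
\end{equation*}
So it suffices to prove, for every $m\geq1$ and $y\in\R$, the one-dimensional bound
\begin{equation*}
|f^{(m)}(y)|\leq (m!)^s\bigl(\Lambda^s/\mathfrak{r}_t\nu\bigr)^m\max\{f(y),e^{-\Lambda}\}\,.
\end{equation*}

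\emph{Reduction to a single ramp.} Since $\nu\leq\ell/4$, the two ramps $(0,2\nu)$ and $(\ell-2\nu,\ell)$ are disjoint (\cref{prop:Gevrey_profile}). At any $y$, at most one of the two factors is nonconstant near $y$, and the other equals $0$ or $1$ there. Hence $f^{(m)}(y)$ is either $0$ or equals $\pm H_\nu^{(m)}$ evaluated at $y$ or at $\ell-y$. In the latter case the evaluated $H_\nu$ coincides with $f(y)$. By the reflection $y\mapsto\ell-y$ it is therefore enough to bound $H_\nu^{(m)}$ in terms of $\max\{H_\nu,e^{-\Lambda}\}$.

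\emph{One-dimensional estimate.} With $u=y/\nu-1$ we have $H_\nu^{(m)}(y)=\nu^{-m}\,\psi_t^{(m-1)}(u)/Z_t$. Apply \cref{lem:A} with $k=m-1$:
\begin{equation*}
|H_\nu^{(m)}(y)|\leq \nu^{-m}\,\frac{e}{Z_t}\,\bigl((m-1)!\bigr)^s\Bigl(\frac{\Lambda^s}{r_1}\Bigr)^{m-1}\max\{\psi_t(u),e^{-\Lambda}\}\,.
\end{equation*}
Now split into two cases.
\begin{enumerate}
\item If $\psi_t(u)\geq e^{-\Lambda}$, then $\varphi(u)\leq\Lambda$. \cref{lem:B} then gives $\psi_t(u)\leq e\,q_t\,\Lambda^s Z_t\,\theta_1(u)=e\,q_t\,\Lambda^s Z_t\,H_\nu(y)$. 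The prefactor becomes $e^2q_t\Lambda^s/Z_t\leq\Lambda^s/\mathfrak{r}_t$, by the definition $\mathfrak{r}_t\leq Z_t/(e^2q_t)$.
\item If $\psi_t(u)<e^{-\Lambda}$, the remaining factor is $e\,e^{-\Lambda}/Z_t$. Since $e/Z_t\leq e^2q_t/Z_t\leq 1/\mathfrak{r}_t\leq\Lambda^s/\mathfrak{r}_t$ (using $\Lambda\geq1$), this is again controlled.
\end{enumerate}
In both cases, using $((m-1)!)^s\leq(m!)^s$ and $r_1\geq\mathfrak{r}_t$, we obtain the one-dimensional bound with $\max\{H_\nu(y),e^{-\Lambda}\}$.

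\emph{Recombination.} The product of the one-dimensional bounds gives $(a!)^s(\Lambda^s/\mathfrak{r}_t\nu)^{\onenorm a}\prod_k g_k$, where $g_k=f_k(x_k)$ if $a_k=0$ and $g_k=\max\{f_k(x_k),e^{-\Lambda}\}$ otherwise. Since every $f_k\in[0,1]$, one checks $\prod_k g_k\leq\max\{\chi(x),e^{-\Lambda}\}$. If every maximum is attained by $f_k$, the product is $\chi(x)$. Otherwise some $g_k=e^{-\Lambda}$ and all other $g_j\leq1$, so the product is at most $e^{-\Lambda}$. The case $a=0$ is trivial because $\chi\leq1$.

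The step I expect to be delicate is the one-dimensional estimate. There one must convert the bound in terms of $\psi_t$ from \cref{lem:A} into a bound relative to $\theta_1$. This works only because $\varphi(u)\leq\Lambda$ holds exactly on the region where the $e^{-\Lambda}$ floor is not active. The constants must also be tracked so that $e^2q_t/Z_t$ is absorbed into $1/\mathfrak{r}_t$.
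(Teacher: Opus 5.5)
Your proposal is correct and follows the paper's proof essentially step for step: you write $\theta_1^{(k)}=\psi_t^{(k-1)}/Z_t$ and apply \cref{lem:A} at order $k-1$, use \cref{lem:B} exactly on the region $\varphi(u)\le\Lambda$ (your $\psi_t(u)\ge e^{-\Lambda}$), absorb $e^2q_t/Z_t$ and $r_1$ into $1/\mathfrak{r}_t$, reduce $H_\nu(y)H_\nu(\ell-y)$ to a single ramp via the disjoint derivative supports, and conclude with $\prod_k\max\{g(x_k),e^{-\Lambda}\}\le\max\{\chi(x),e^{-\Lambda}\}$. One harmless slip: in your first case the factor $Z_t$ supplied by \cref{lem:B} cancels the $1/Z_t$, so the prefactor is really $e^2q_t\Lambda^s$, which your stated $e^2q_t\Lambda^s/Z_t$ dominates because $Z_t\le 2/e<1$.
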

\begin{proof}
By the fundamental theorem of calculus, $\theta_1^{(k)}(u)={\psi_t}^{(k-1)}(u)/{Z_t}$ for $k\geq1$; it suffices to treat $u\in(-1,1)$, since ${\psi_t}^{(k-1)}$ vanishes elsewhere. \cref{lem:B} gives ${\psi_t}(u)\le e\,{q_t}\,\varphi(u)^{s}\,{Z_t}\theta_1(u)$. For $\varphi(u)\le\Lambda$, it guarantees ${\psi_t}(u)\le e\,{q_t}\Lambda^{s}{Z_t}\theta_1(u)$, while for $\varphi(u)>\Lambda$ instead ${\psi_t}(u)=e^{-\varphi(u)}<e^{-\Lambda}$, and $e\,{q_t}\Lambda^{s}\geq 1$. In either case
\begin{equation}
\label{eq:lemma_4_intermediate}
\max\{{\psi_t}(u),e^{-\Lambda}\}\;\le\;e\,{q_t}\Lambda^{s}\,\max\{{Z_t}\theta_1(u),\,e^{-\Lambda}\}\leq e\,{q_t}\Lambda^{s}\,\max\{\theta_1(u),\,e^{-\Lambda}\}\,,
\end{equation}
the last inequality by ${Z_t}\leq2e^{-1}<1$. Define $\mathfrak{r}_t \coloneqq \min\{r_1,\,{Z_t}/(e^{2}{q_t})\}$; it satisfies $\tfrac{e^{2}{q_t}r_1}{Z_t}(\Lambda^{s}/r_1)^{k}\leq(\Lambda^{s}/{\mathfrak{r}_t})^{k}$ for $k\geq1$, hence \cref{lem:A} and \eqref{eq:lemma_4_intermediate} give
\begin{align*}
\begin{aligned}
    \bigl|\theta_1^{(k)}(u)\bigr|\;=\;\frac{\bigl|{\psi_t}^{(k-1)}(u)\bigr|}{Z_t}
&\leq \frac{e\,\bigl((k-1)!\bigr)^{s}}{Z_t}\Bigl(\frac{\Lambda^{s}}{r_1}\Bigr)^{k-1}\max\{{\psi_t}(u),e^{-\Lambda}\}\\
&\leq\frac{e^{2}{q_t}r_1}{Z_t}\,(k!)^{s}\Bigl(\frac{\Lambda^{s}}{r_1}\Bigr)^{k}\max\{\theta_1(u),e^{-\Lambda}\}\\
&\leq (k!)^{s}\Bigl(\frac{\Lambda^{s}}{{\mathfrak{r}_t}}\Bigr)^{k}\max\{\theta_1(u),e^{-\Lambda}\} \,.
\end{aligned}
\end{align*}
The same bound is now transferred to $g(u)\coloneqq H_{\nu}(u)H_{\nu}(\ell-u)$ and then to $\chi_{\nu,\ell}=\prod_{k}g(x_k)$. By the chain rule $H_\nu^{(k)}(u)=\nu^{-k}\theta_1^{(k)}(u/\nu-1)$, supported in $(0,2\nu)$ and likewise $\tfrac{d^{k}}{du^{k}}H_\nu(\ell-u)= (-1)^k \nu^{-k}\theta_1^{(k)}((\ell-u)/\nu-1)=(-1)^k H_{\nu}^{(k)}(\ell-u)$,  supported in $(\ell-2\nu,\ell)$. For $\nu\le\ell/4$ the two supports are disjoint, therefore
\begin{equation}
    g^{(k)}(u)= \begin{cases}
        H_\nu^{(k)}(u)H_\nu(\ell-u)\,,&u\in(0,2\nu)\\
        (-1)^k H_\nu^{(k)}(\ell-u)H_\nu(u)\,,& u\in(\ell-2\nu,\ell)\\
        0\,,&\mathrm{otherwise}\,.
    \end{cases}
\end{equation}
From the bound on $\vert\theta_1^{(k)}(u)\vert$ it follows that:
\begin{equation}
    \bigl|H_{\nu}^{(k)}(u)\bigr|
 \leq (k!)^{s}\Bigl(\frac{\Lambda^{s}}{\nu {\mathfrak{r}_t}}\Bigr)^{k}\max\{H_{\nu}(u),e^{-\Lambda}\} \,,
\end{equation}
and therefore, for any $u$ and since $H_{\nu}(u) \leq 1$:
\begin{equation}
    \bigl|g^{(k)}(u)\bigr|
 \leq (k!)^{s}\left(\frac{\Lambda^{s}}{\nu {\mathfrak{r}_t}}\right)^{k}\max\{g(u),e^{-\Lambda}\} \,.
\end{equation}
Finally $\partial^{a}\chi(x)=\prod_{j=1}^d g^{(a_j)}(x_j)$, so that
\begin{equation}
\bigl|\partial^{a}\chi(x)\bigr|\leq(a!)^{s}\left(\frac{\Lambda^{s}}{{\mathfrak{r}_t}\nu}\right)^{\onenorm{a}}\prod_{j=1}^{d}\max\{g(x_j),e^{-\Lambda}\}\leq(a!)^{s}\left(\frac{\Lambda^{s}}{{\mathfrak{r}_t}\nu}\right)^{\onenorm{a}}\max\{\chi(x),e^{-\Lambda}\}\,,
\end{equation}
the last step by $g\leq1$ and $e^{-\Lambda}\leq1$.
\end{proof}

\begin{lemma}[Gevrey-like bound for composition with $\log$]\label{lem:D}
Let $\Omega\subseteq\R^{d}$ be open, $h\in\mathcal C^{\infty}(\Omega,\R^+)$, such that for $s\geq 1$, $K>0$, $R>0$, it satisfies
\begin{equation}\label{eq:D-hyp}
\bigl|\partial^{a}h(x)\bigr|\;\le\;K\,\bigl(\onenorm{a}!\bigr)^{s}\,R^{\onenorm{a}}\,h(x)
\qquad\forall \,a\in\N^{d}\setminus\{0\},\ x\in\Omega\,.
\end{equation}
Then $F\coloneqq\log h$ satisfies
\begin{equation}\label{eq:D-concl}
\bigl|\partial^{a}F(x)\bigr|\;\le\;\frac{K}{1+K}\,\bigl(\onenorm{a}!\bigr)^{s}\,\bigl((1+K)R\bigr)^{\onenorm{a}}\qquad\forall \,a\in\N^{d}\setminus\{0\},\;\forall\, x\in\Omega\,.
\end{equation}
\end{lemma}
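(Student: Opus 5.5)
The plan is to prove \eqref{eq:D-concl} by induction on $n\coloneqq\onenorm{a}$, with induction hypothesis
\begin{equation*}
|\partial^{a'}F|\le A\,(\onenorm{a'}!)^{s}B^{\onenorm{a'}}\quad\text{for } 0<\onenorm{a'}<n,\qquad A\coloneqq\frac{K}{1+K},\quad B\coloneqq(1+K)R .
\end{equation*}
The key identity is $\partial_j h=h\,\partial_jF$, which holds since $h>0$. The base case $n=1$ is immediate: $|\partial_jF|=|\partial_jh|/h\le KR=AB$.

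\emph{Inductive step.} For $\onenorm{a}=n\ge2$, pick $j$ with $a_j>0$ and write $a=b+e_j$. Applying $\partial^{b}$ to $h\,\partial_jF=\partial_jh$ and using the Leibniz rule gives
\begin{equation*}
h\,\partial^{a}F=\partial^{a}h-\sum_{0<c\le b}\binom{b}{c}\,\partial^{c}h\;\partial^{a-c}F .
\end{equation*}
Divide by $h$ and insert the hypothesis \eqref{eq:D-hyp} on $\partial^{c}h/h$ and the induction hypothesis on $\partial^{a-c}F$; the latter applies because $\onenorm{a-c}\in[1,n-1]$. This yields
\begin{equation*}
|\partial^aF|\le K(n!)^sR^n+KA\sum_{0<c\le b}\binom{b}{c}(\onenorm{c}!)^s(\onenorm{a-c}!)^s R^{\onenorm{c}}B^{\,n-\onenorm{c}} .
\end{equation*}

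\emph{Combinatorics.} Group the terms by $m=\onenorm{c}\in\{1,\dots,n-1\}$. The multivariate Vandermonde identity gives $\sum_{\onenorm{c}=m,\,c\le b}\binom{b}{c}=\binom{n-1}{m}$. Since $s\ge1$ and $m!(n-m)!\le n!$,
\begin{equation*}
(m!(n-m)!)^s\le m!\,(n-m)!\,(n!)^{s-1}.
\end{equation*}
Combining these with $\binom{n-1}{m}m!(n-m)!=(n-1)!\,(n-m)\le n!$, each group is bounded by $(n!)^s$. Substituting $R^mB^{n-m}=B^n(1+K)^{-m}$ and $R^n=B^n(1+K)^{-n}$, we get
\begin{equation*}
\frac{|\partial^aF|}{(n!)^sB^n}\le K(1+K)^{-n}+KA\sum_{m=1}^{n-1}(1+K)^{-m}.
\end{equation*}
The geometric sum equals $K^{-1}\bigl(1-(1+K)^{-(n-1)}\bigr)$, so the second term is $A-A(1+K)^{-(n-1)}$. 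The first term is exactly $K(1+K)^{-n}=A(1+K)^{-(n-1)}$. The two remainders cancel and the total is exactly $A$, which closes the induction.

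The main obstacle is this last step: the constants must telescope exactly, with no loss. This is what dictates the specific choices $A=K/(1+K)$ and $B=(1+K)R$, and the induction must be set up with precisely these constants for the bound to reproduce itself. The remaining ingredients are routine: the Vandermonde grouping, and the use of $s\ge1$ to pull out $(n!)^{s-1}$.
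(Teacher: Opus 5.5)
Your proof is correct and follows the paper's argument essentially step for step. The shared steps are: induction on $\Vert a\Vert_1$; the Leibniz expansion of $\partial_j h = h\,\partial_j F$ with the top-order term isolated; grouping by order via the multivariate Vandermonde identity; the bound $\binom{n-1}{m}(m!\,(n-m)!)^s\le (n!)^s$ using $s\geq 1$; and the exact telescoping of the geometric sum, which is what forces the constants $K/(1+K)$ and $(1+K)R$. Your explicit strong-induction hypothesis over all orders $1,\dots,n-1$ is slightly more careful than the paper's wording, which invokes the hypothesis only at order $n-1$ although every lower order is needed.
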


\begin{proof}
For $n=1$, $|\partial_jF|=|\partial_jh|/h\le KR$. For $n\geq2$, one assumes the inductive hypothesis \eqref{eq:D-concl} at $n-1$ and proves it at $n$. For that, first use the Leibniz formula applied to $h(x)=e^{F(x)}$, that is
\begin{equation}
\label{eq:leibniz_expansion}
\partial_j^{n} h(x) =\partial_j^{n-1} \left(\partial_j F(x) e^{F(x)}\right)  = \sum_{m=0}^{n-1} \binom{n-1}m \partial_j^{m+1}F(x) \partial^{n-m-1} h(x)\,,
\end{equation}
and isolate the top term $m=n-m-1$ to get
\begin{equation}
\label{eq:Faa_identity_one_dim}
    \partial_j^n F(x) = \frac{\partial_j^nh(x)}{h(x)} - \sum_{m=0}^{n-2}\binom{n-1}m \frac{\partial_j^{n-m-1} h(x)}{h(x)} \partial_j^{m+1} F(x)\,.
\end{equation}
Now in general dimension, let $a \in \N^d$ such that $\onenorm{a}\geq 2$ and let $a_j\geq 1$. Define the multi-index $a'=a-e_j$, i.e.\ in components, $a'_{i\neq j}=a_i$, $a'_j =a_j-1$. As for the derivation of \eqref{eq:Faa_identity_one_dim}, isolate the top term $\partial^a F(x)$ to get
\begin{equation}
\label{eq:derivative_F_expansion}
\partial^{a}F\;=\;\frac{\partial^{a}h}h\;-\;\sum_{0\le b\lneq a'}\binom{a'}b\,\frac{\partial^{a'-b}h}h\;\partial^{b+e_j}F\,,
\qquad
\binom{a'}b\coloneqq\prod_{i=1}^{d}\binom{a_i'}{b_i}\,,
\end{equation}
where $b\lneq a'$ means that $\forall i$, $0\leq b_i\leq a'_i$ and $b \neq a'$. Denote $m\coloneqq\onenorm{b}\le n-2$; since $\onenorm{a'-b}=n-1-m\geq 1$ the hypothesis \eqref{eq:D-hyp} applies:
\begin{equation}
\label{eq:intermediate_bound_h}
    \left \vert{\partial^{a'-b}h}(x) \right \vert \leq K ((n-1-m)!)^{s}  R^{n-1-m} h(x)\,.
\end{equation}
Use now the inductive hypothesis \eqref{eq:D-concl} for $a'$ and proceed to verify the case $a=a'+e_j$. Grouping the summations on $b$ by $m$ in \eqref{eq:derivative_F_expansion}, using \eqref{eq:intermediate_bound_h} and then the multivariate identity (see e.g.~\cite[Identity~2.5]{Mestrovic2018})
\begin{equation}
\label{eq:multinomial_sum_identity}
\sum_{0\le b\lneq a'}\binom{a'}b=\sum_{m=0}^{n-2}\,\sum_{\onenorm{b}=m,\, b\le a'}\binom{a'}b =\sum_{m=0}^{n-2}
\binom{n-1}m\,,
\end{equation}
one gets
\begin{align*}
    \begin{aligned}
\bigl|\partial^{a}F\bigr|&\leq K(n!)^{s}R^{n}
+\sum_{m=0}^{n-2}\binom{n-1}m\,K\bigl((n-1-m)!\bigr)^{s}R^{n-1-m} \!\times \! \frac{K}{1+K}\bigl((m+1)!\bigr)^{s}\bigl((1+K)R\bigr)^{m+1}\\
&= K(n!)^{s}R^{n}
+\frac{K^{2}R^{n}}{1+K}\sum_{m=0}^{n-2}\binom{n-1}m\left((n-1-m)!\,(m+1)!\right)^{s}(1+K)^{m+1}\,.
    \end{aligned}
\end{align*}
Finally $(n-1-m)!\,(m+1)!\le n!$ and $s\geq 1$ imply
\begin{equation}
\label{eq:binom_factorial_bound}
\binom{n-1}m\bigl((n-1-m)!(m+1)!\bigr)^{s}\leq(n!)^{s-1}(n-1)!(m+1)\le(n!)^{s}\,,
\end{equation}
and accordingly the summation in the previous formula is bounded by
\begin{equation}
\label{eq:sum_bound_geometric}
\sum_{m=0}^{n-2}\!\binom{n-1}m\!\left((n-1-m)!\,(m+1)!\right)^{s}(1+K)^{m+1} \leq (n!)^{s}\sum_{m=0}^{n-2} (1+K)^m =(n!)^{s} \frac{(1+K)^n\!-\!(1+K)}K\,.
\end{equation}
Substitution concludes the proof of \eqref{eq:D-concl} by induction,
\begin{equation}
\label{eq:D_concl_proof}
\bigl|\partial^{a}F\bigr|\leq \frac{K}{K+1}\,(n!)^{s}\bigl((1+K) R\bigr)^{n}\,.
\end{equation}
\end{proof}

\begin{proof}[Proof of \cref{thm:main_F_Gevrey}]
Set $h(x) \coloneqq1+(\alpha-1)\chi_{\nu,\ell}(x)$, so $F=\log h$, $h\in\mathcal C^{\infty}(\R^{d},[1,\infty))$.
Let $a\neq 0$; by \cref{lem:C} with $\Lambda=\log\alpha\ge1$, and using $a!\leq\onenorm{a}!$,
\begin{equation}
\label{eq:h_deriv_bound}
\bigl|\partial^{a}h(x)\bigr|=(\alpha-1)\bigl|\partial^{a}\chi(x)\bigr|
\leq \left(\onenorm{a}!\right)^{s}\left(\frac{\Lambda^{s}}{{\mathfrak{r}_t}\,\nu}\right)^{\onenorm{a}}(\alpha-1)\max\{\chi(x),e^{-\Lambda}\}\,,
\end{equation}
and, since $e^{-\Lambda}=1/\alpha$,
\begin{equation}
\label{eq:h_alpha_bound}
(\alpha-1)\max\{\chi(x),e^{-\Lambda}\}
=\max\Bigl\{(\alpha-1)\chi(x),\,\tfrac{\alpha-1}\alpha\Bigr\}
\;\le\;1+(\alpha-1)\chi(x)\;=\;h(x)\,.
\end{equation}
Thus $h$ satisfies \eqref{eq:D-hyp} with $K=1$, $R=\Lambda^{s}/({\mathfrak{r}_t}\nu)$, and \cref{lem:D} yields \eqref{eq:main-deriv}.
\end{proof}

\begin{theorem}[Gevrey seminorm of the bump potential]\label{thm:V_gevrey_bump}
Let ${s=1+1/t}$, $\alpha\ge e$, and set $\Lambda\coloneqq\log\alpha\geq1$, with ${\mathfrak{r}_t}$ the constant of \eqref{eq:constants}. The Gibbs potential of the bump,
\begin{equation}
{E_{\nu,\ell,\alpha}} = -\tfrac{1}\beta\,\log{\bar{p}_{\nu,\ell,\alpha}} = -\tfrac{1}\beta\,F+\tfrac{1}\beta\,\log Z_{\nu,\ell,\alpha}\,,
\end{equation}
has scaled potential $\beta {E_{\nu,\ell,\alpha}}\in\dot{\mathcal{G}}^{s}({\xi_E},{\rho_E},\T^d)$ (\cref{def:tGev_seminorm}) with
\begin{equation}
\label{eq:Gev_const_V}
{\xi_E} \;=\; \frac{1}{2}\,,\qquad {\rho_E} \;=\; \frac{{\mathfrak{r}_t}\,\nu}{2\,d^{\,s}\,(\log\alpha)^{s}}\,,
\end{equation}
and barrier amplitude $\alpha=e^{\beta\,\Delta}$.
\end{theorem}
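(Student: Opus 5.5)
The plan is to read the seminorm bound directly off \cref{thm:main_F_Gevrey}. The only extra work is converting the factor $(\onenorm{a}!)^{s}$ appearing there into the multi-index factor $(a!)^{s}$ required by \cref{def:tGev_seminorm}. Since $\beta E_{\nu,\ell,\alpha}=-F+\log Z_{\nu,\ell,\alpha}$ on the periodic lift \eqref{eq:beta_V_lift}, and $[\,\cdot\,]_{s,\rho}$ only involves derivatives of order $\onenorm{a}\geq1$, the additive constant drops. We get $\partial^a(\beta E_{\nu,\ell,\alpha})=-\partial^a F$ for all $a\neq0$. The barrier-amplitude claim $\beta\Delta=\log\alpha$ is already item~(2) of \cref{prop:bump_family}, because $\bar p_{\nu,\ell,\alpha}$ ranges exactly over $[1/Z,\alpha/Z]$. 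Periodicity and smoothness of the lift are also covered there, so $\beta E_{\nu,\ell,\alpha}$ is an admissible $2\pi$-periodic $\mathcal C^\infty$ function.

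The key combinatorial step is the multinomial inequality $\onenorm{a}!\leq d^{\onenorm{a}}\,a!$, which follows from $\sum_{\onenorm{b}=n}\binom{n}{b}=d^n$, since a single term is at most the sum. Raising to the power $s$ gives $(\onenorm{a}!)^{s}\leq d^{s\onenorm{a}}(a!)^{s}$. Inserting this into \eqref{eq:main-deriv} yields, for every $a\neq0$ and $x\in\R^d$,
\begin{equation*}
\bigl|\partial^{a}(\beta E_{\nu,\ell,\alpha})(x)\bigr|\;\le\;\frac12\,(a!)^{s}\left(\frac{2\,d^{s}\log^{s}\alpha}{\mathfrak{r}_t\,\nu}\right)^{\onenorm{a}}=\frac12\,(a!)^{s}\,\rho_E^{-\onenorm{a}}\,,
\end{equation*}
with $\rho_E$ exactly as in \eqref{eq:Gev_const_V}. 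Multiplying by $\rho_E^{\onenorm{a}}/(a!)^s$ and taking suprema over $a\neq0$ and $x\in[0,2\pi)^d$ gives $[\beta E_{\nu,\ell,\alpha}]_{s,\rho_E}\leq\tfrac12=\xi_E$. This is the claimed membership in $\dot{\mathcal G}^{s}(\xi_E,\rho_E,\T^d)$.

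Two hypotheses need checking. \cref{thm:main_F_Gevrey} requires $\alpha\geq e$ and $t>0$, both of which hold, and implicitly $0<\nu\leq\ell/4$ through \cref{lem:C}, which is part of the standing setting of \cref{def:bump_torus}. I expect no real obstacle here: all the analytic work (the Cauchy estimates for $\psi_t$, the lower bound on $\theta_1$, and the log-composition induction of \cref{lem:D}) was done in \cref{thm:main_F_Gevrey}. The only delicate point is the conversion between the isotropic factorial $\onenorm{a}!$ and the product $a!$. This conversion is what costs the factor $d^{s}$ in the radius, and it must go in the direction $\onenorm{a}!\leq d^{\onenorm{a}}a!$, not the reverse.
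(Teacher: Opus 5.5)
Your proposal is correct and follows the paper's proof essentially step for step. You apply \cref{thm:main_F_Gevrey}, drop the constant $\log Z_{\nu,\ell,\alpha}$ because it does not affect nonzero-order derivatives, use $(\onenorm{a}!)^{s}\leq d^{s\onenorm{a}}(a!)^{s}$ to absorb $d^{s}$ into $\rho_E^{-1}$, and read $\beta\Delta=\log\alpha$ off the range of $\bar p_{\nu,\ell,\alpha}$; your multinomial-theorem argument for $\onenorm{a}!\leq d^{\onenorm{a}}a!$ just makes explicit a step the paper states without proof.
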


\begin{proof}[Proof of \cref{thm:V_gevrey_bump}]
\emph{Radius.} For $a\neq0$, \cref{thm:main_F_Gevrey} gives
\begin{equation}
\label{eq:F_gev_a_not_zero}
    \bigl|\partial^{a}F(x)\bigr|\;\leq\;\tfrac{1}{2}\,\bigl(\onenorm{a}!\bigr)^{s}\left(\frac{2\,\Lambda^{s}}{{\mathfrak{r}_t}\,\nu}\right)^{\onenorm{a}}\,.
\end{equation}
The additive constant $\log Z_{\nu,\ell,\alpha}$ in the affine relation $\beta{\bar E_{\nu,\ell,\alpha}}=-F+\log Z_{\nu,\ell,\alpha}$ leaves all $a\neq0$ derivatives unchanged, so $\partial^{a}(\beta E)=-\partial^{a}F$ and
\begin{equation}
\label{eq:betaE_deriv_bound_sumconv}
\bigl|\partial^{a}(\beta E)(x)\bigr|\;\leq\;\tfrac{1}{2}\,\bigl(\onenorm{a}!\bigr)^{s}\left(\frac{2\,\Lambda^{s}}{{\mathfrak{r}_t}\,\nu}\right)^{\onenorm{a}}\,.
\end{equation}
Converting from the sum-convention $\onenorm{a}!$ to the product-convention $a!$ of \cref{def:tGev_f_Torus} through $\onenorm{a}!\leq d^{\onenorm{a}}\,a!$ multiplies the inverse radius by $d^{\,s}$:
\begin{equation}
\label{eq:betaE_deriv_bound_productconv}
\bigl|\partial^{a}(\beta E)(x)\bigr|\;\leq\;\tfrac{1}{2}\,(a!)^{s}\left(\frac{2\,d^{\,s}\,\Lambda^{s}}{{\mathfrak{r}_t}\,\nu}\right)^{\onenorm{a}}\,,\qquad {\rho_E}^{-1}\;=\;\frac{2\,d^{\,s}\,(\log\alpha)^{s}}{{\mathfrak{r}_t}\,\nu}\,.
\end{equation}
The prefactor $\tfrac{1}{2}$ bounds the nonzero-order seminorm, $[\beta E]_{s,{\rho_E}}\leq {\xi_E}$, read off the display above; the additive constant $\log Z_{\nu,\ell,\alpha}$ affects only the order-zero term, which is gauge and enters solely through ${\beta\,\Delta}=\osc F=\log\alpha$, i.e.\ the invariant $\alpha=e^{\beta\,\Delta}$.
\end{proof}

\phantomsection\label{prf:lower_bounds_Gevrey}%
\begin{proof}[Proof of \cref{cor:lower_bounds_Gevrey}]
By \cref{thm:V_gevrey_bump} the potentials of the translates $\{{p_g}\}_{g\in\T^d}$ of \cref{prop:bump_family} satisfy $[\beta {E_{p_g}}]_{s,{\rho_E}}\leq\tfrac{1}{2}$ and $e^{\beta{\Delta}}=\alpha$; from $\xi\geq\tfrac{1}{2}$ and $\rho\leq {\rho_E}$, $[\beta {E_{p_g}}]_{s,\rho}\leq[\beta {E_{p_g}}]_{s,{\rho_E}}\leq\tfrac12\leq \xi$. The ceiling condition of \eqref{eq:pot_class} holds: indeed ${\mathfrak{r}_t}\leq\tfrac{1}{32}$ (see remark below~\eqref{eq:constants})
and ${\nu\leq\ell/4<\pi/2}$. Therefore one has ${\mathfrak{r}_t\,\nu\leq\tfrac{\pi}{64}\leq\pi\leq\pi\, d^{1+s}\log^{s-1}\alpha}$ for $\alpha\geq e$, that guarantees the ceiling condition $\log\alpha\leq{\pi d\,\xi_E/\rho_E}$.
Hence $\{{p_g}\}\subset{\mathcal{W}^{s}(\alpha,\xi,\rho)}$ (\cref{def:pot_class}). \cref{thm:LowerBounds_Inform} with $v_0=\vol(\Om)=\ell^d$ bounds the query complexity of the subfamily by $M(\delta-\eps)-1$, where $M=(2\pi/\ell)^d$. Finally, the bound transfers by monotonicity wrt inclusions to the family ${\mathcal{W}^{s}(\alpha,\xi,\rho)} \supset \{{p_g}\}$.
\end{proof}

\subsubsection{Density regularity from class data}

The remaining results serve the quantum upper bound, as they derive $s$-Gevrey membership of the density $p$ and of $\sqrt{p}$, consumed by the quantum algorithm, as a consequence of class membership \eqref{eq:pot_class}, i.e. the seminorm bound on the potential.

\begin{lemma}[Exponential of a Gevrey function]\label{lem:exp_prod}
Let $\Omega \subseteq \R^d$ be open,
$f \in \mathcal{C}^\infty(\Omega;\R)$, $s \ge 1$, $K, R > 0$, and
\begin{equation}\label{eq:hyp}
|\partial^b f| \le K \, (b!)^{s} R^{\onenorm{b}}\,,
\quad\forall x \in \Omega\,, \;
\forall\, b \in \N^d \setminus \{0\} \,.
\end{equation}
Then on $\Omega$ and for $K' \coloneqq d + K + 1$,
\begin{equation}
\label{eq:plain}
|\partial^a e^{f}|
\le (a!)^{s} (K' R)^{\onenorm{a}} \, e^{f}\,,
\quad a \in \N^d \,.
\end{equation}
\end{lemma}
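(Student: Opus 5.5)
The plan is to argue by induction on $\onenorm{a}$, with $g\coloneqq e^{f}$ and the identity $\partial_j g=(\partial_j f)\,g$, in the spirit of the proof of \cref{lem:D} but in the opposite direction: there we bounded $\log h$ from bounds on $h$, here we bound $e^f$ from bounds on $f$. The case $a=0$ is trivial. For $\onenorm{a}\geq1$ I pick an index $j$ with $a_j\geq1$ and set $a'=a-e_j$. Applying the multivariate Leibniz rule to $\partial^{a'}\bigl((\partial_j f)\,g\bigr)$ gives
\begin{equation*}
\partial^{a}g=\sum_{0\le b\le a'}\binom{a'}{b}\,\partial^{b+e_j}f\;\partial^{a'-b}g\,.
\end{equation*}
Into this I insert the hypothesis \eqref{eq:hyp} for $\partial^{b+e_j}f$, which applies since $b+e_j\neq0$, and the inductive hypothesis \eqref{eq:plain} for $\partial^{a'-b}g$, which applies since $\onenorm{a'-b}<\onenorm{a}$. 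This yields
\begin{equation*}
|\partial^a g|\le K\,R^{\onenorm{a}}\,g\sum_{0\le b\le a'}\binom{a'}{b}\bigl((b+e_j)!\,(a'-b)!\bigr)^{s}K'^{\,\onenorm{a'}-\onenorm{b}}\,.
\end{equation*}

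The factorial bookkeeping goes as follows. Since $\binom{a}{b+e_j}\geq1$, we have $(b+e_j)!\,(a'-b)!\le a!$, and because $s\ge1$,
\begin{equation*}
\bigl((b+e_j)!\,(a'-b)!\bigr)^{s-1}\le(a!)^{s-1}.
\end{equation*}
The remaining first power combines with the binomial coefficient exactly:
\begin{equation*}
\binom{a'}{b}(b+e_j)!\,(a'-b)!=a'!\,(b_j+1).
\end{equation*}
Using $a!=a_j\,a'!$, the claim reduces to the purely combinatorial inequality
\begin{equation*}
K\sum_{0\le b\le a'}(b_j+1)\,K'^{-\onenorm{b}-1}\;\le\;a_j\,,\qquad\text{equivalently}\qquad \frac{K}{K'}\,\Sigma_j\prod_{i\neq j}\Sigma_i\le a_j,
\end{equation*}
where the sum factorizes over coordinates into
\begin{equation*}
\Sigma_j\coloneqq\sum_{b_j=0}^{a_j-1}(b_j+1)\,x^{b_j},\qquad \Sigma_i\coloneqq\sum_{b_i=0}^{a_i}x^{b_i}\;(i\neq j),\qquad x\coloneqq 1/K'.
\end{equation*}
Next I bound $\Sigma_j\le a_j/(1-x)$ and $\Sigma_i\le 1/(1-x)$. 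This leaves the requirement
\begin{equation*}
K\,K'^{\,d-1}\le (K'-1)^{d},
\end{equation*}
with $K'-1=d+K$.

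I expect this last scalar inequality to be the main obstacle, since it is exactly where the choice $K'=d+K+1$ must be justified. The crude estimate $(1+\tfrac1{d+K})^{d}\le e^{d/(d+K)}$ is not obviously sufficient when $K$ and $d$ are comparable. To close the gap I would first sharpen the geometric-series bounds. Rather than using $\Sigma_i\le 1/(1-x)$ on every coordinate, I would organize the sum by $m=\onenorm{b}$ and use the Vandermonde-type identity \eqref{eq:multinomial_sum_identity}, as in \cref{lem:D}. This turns the sum into a one-dimensional one, $\sum_m\binom{n-1}{m}$-weighted, in which the $d$-dependence enters only through the single factor $d^{\,\cdot}$ absorbed in $K'$.

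If that still falls short, I would strengthen the induction hypothesis so that it carries a slack factor. Concretely, I would prove $|\partial^a g|\le (a!)^s\bigl((K+1)R\bigr)^{\onenorm{a}}\,\mu^{\onenorm{a}}\,g$ with $\mu$ chosen so that $(K+1)\mu\le K'$. In that form the extra $+d$ in $K'$ pays for the $d$ coordinate geometric series, each costing a factor $(1-1/K')^{-1}$. Once the scalar inequality is verified for $K'=d+K+1$, the induction closes and gives \eqref{eq:plain} for every $a\in\N^d$.
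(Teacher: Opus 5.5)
Your reduction is correct and is essentially the paper's proof. Both argue by induction on $\onenorm{a}$ through the Leibniz expansion of $\partial^{a'}\bigl((\partial_j f)\,g\bigr)$, use a factorial estimate that absorbs $\binom{a'}{b}$, and factorize the resulting geometric sum coordinatewise. Your bookkeeping is in fact slightly finer than the paper's. The paper splits off the $b=a'$ term $(\partial^a f)\,g$ and bounds $\binom{a'}{b}\bigl((b+e_j)!\,(a'-b)!\bigr)^{s}\le(a!)^{s}$, which discards your factor $(b_j+1)/a_j\le1$; it then closes with $K/K'^{\,n}\le1/K'$ for $n\ge2$. What is missing from your proposal is the last step. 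You stop at the one inequality still to be checked, $K\,K'^{\,d-1}\le(K'-1)^{d}$, call it a possible obstruction, and offer fallback strategies instead of a proof, so as written the induction is not closed.

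That inequality is elementary and holds with room to spare. It is equivalent to $K\le K'(1-1/K')^{d}$, and Bernoulli's inequality gives
\[
K'\Bigl(1-\frac{1}{K'}\Bigr)^{d}\;\ge\;K'-d\;=\;K+1\;>\;K\,.
\]
The estimate you dismissed also works. Using $e^{y}\le(1-y)^{-1}$ at $y=d/(d+K)<1$, one gets $K\bigl(1+\tfrac{1}{d+K}\bigr)^{d}\le K\,e^{d/(d+K)}\le K\cdot\tfrac{d+K}{K}=d+K<K'$. This is exactly the step $e^{d/(d+K)}\le(K+d)/K$ with which the paper concludes, so nothing breaks when $K$ and $d$ are comparable.

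The fallbacks are therefore unnecessary, and the first would not help anyway. Regrouping by $m=\onenorm{b}$ via the Vandermonde identity suits the $(\onenorm{a}!)^{s}$ convention of \cref{lem:D}, where summands depend on $b$ only through $\onenorm{b}$. For the product convention $(a!)^{s}$, the coordinatewise factorization you already have is the right tool. Incidentally, because you neither split off the $b=a'$ term nor discard the factor $(b_j+1)/a_j$, the Bernoulli bound shows your argument already closes with $K'=d+K$.
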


\begin{proof}
Set the shorthand notation $g = e^{f} > 0$. \eqref{eq:plain} follows from the inequality
\begin{equation}\label{eq:induction}
|\partial^a g| \leq  (a!)^{s} (K' R)^{n} \, g
\quad \forall x\in \Omega\,, \;
\forall\, a \in \N^d,\ \onenorm{a} = n \,,
\end{equation}
which we prove by induction on $n \geq 0$. Since $d \geq 1$,
\begin{equation}\label{eq:kappa}
K' - 1 = d + K \,, \qquad K' \ge K + 2 > 1 + K \,.
\end{equation}
The case $n = 0$ is verified by equality in \eqref{eq:induction}. For $n = 1$, $a = e_j$:
by \eqref{eq:hyp} at $b = e_j$ and using \eqref{eq:kappa}, it follows
\begin{equation}\label{eq:base}
|\partial_j g| = |\partial_j f| \, g \leq K R \, g
\le \frac{K}{1+K} \, (e_j!)^{s} (K' R) \, g \,,
\end{equation}
hence \eqref{eq:induction}. Now, for $n \geq 2$ assume the validity of \eqref{eq:induction} up to the case $n-1$. Let $a$ with $\onenorm{a} = n$, such that for index $j$, $a_j \geq 1$. Set $a' \coloneqq a - e_j$ and apply \eqref{eq:induction} at order $\onenorm{a'} = n - 1 \geq 1$.
By the Leibniz rule:
\begin{equation}\label{eq:leibniz}
\partial^a g
= \partial^{a'}\!\bigl( (\partial_j f)\, g \bigr)
= (\partial^a f)\, g
+ \sum_{b < a'} \binom{a'}b
\bigl(\partial^{\,b+e_j} f\bigr)\bigl(\partial^{\,a'-b} g\bigr) \,.
\end{equation}
For $b < a'$:
\begin{equation}\label{eq:factorial}
\binom{a'}b \bigl( (b+e_j)! \, (a'-b)! \bigr)^{s}
= \binom{a'}b^{1-s} \bigl( (b_j+1)\, a'! \bigr)^{s}
< (a!)^{s} \,.
\end{equation}
Accordingly, the inductive hypothesis and \eqref{eq:hyp} give
\begin{equation}
\begin{aligned}
\label{eq:term}
    \binom{a'}b
\bigl|\bigl(\partial^{\,b+e_j} f\bigr)\bigl(\partial^{\,a'-b} g\bigr)\bigr|
&\leq K \binom{a'}b
\bigl( (b+e_j)! \, (a'-b)! \bigr)^{s} (K' R)^{\onenorm{a'-b}}R^{\onenorm{b}+1}g\\&
\leq K
\bigl( a! \bigr)^{s} (K' R)^{\onenorm{a}}K'^{-\onenorm{b}-1}g  \,.
\end{aligned}
\end{equation}
Enlarging the range to $b \leq a'$, factorizing over
coordinates, and using
$\log(1-x) \geq \frac{x}{x-1}$ on $x\in [0,1)$,
at $x = K'^{-1}$,
\begin{equation}\label{eq:geometric}
\sum_{b < a'} K'^{-\onenorm{b}-1}
\le \sum_{b \leq a'} K'^{-\onenorm{b}-1}
= \frac{1}K' \prod_{i=1}^{d} \sum_{k=0}^{a'_i} K'^{-k}
\le \frac{\bigl(1 - K'^{-1}\bigr)^{-d}}K'
\le \frac{e^{d/(d+K)}}K' \,.
\end{equation}
From \eqref{eq:leibniz},
\begin{equation}
|\partial^a g|\leq (a!)^{s}(K' R)^n g \left[\frac{K}{K'^{n}} +K\frac{e^{d/(d+K)}}K' \right]\,,
\end{equation}
and finally using that $K/K'^{n} \leq 1/K'$ for $n\geq 2$, and that
\begin{equation}
\label{eq:log_ineq_chain}
\log (1+x) \geq \frac{x}{x+1} \,\Longrightarrow \,e^{d/(d+K)}\leq (K+d)/K\,\Longrightarrow \,K e^{d/(d+K)}/K' \leq 1 - 1/K'\,,
\end{equation}
the induction is proven:
\begin{equation}
|\partial^a g|
 \leq  (a!)^{s}(K' R)^n g \left[\frac{1}K' +1-\frac{1}K' \right]
=(a!)^{s}(K' R)^n g \,.
\end{equation}
\end{proof}

\begin{corollary}
\label{cor:exp_abs}
Let $\Omega \subseteq \R^d$ be open,
$f \in \mathcal{C}^\infty(\Omega;\R)$, and assume hypothesis \eqref{eq:hyp}. If in addition $\sup_{x\in \Omega} f(x)\le F < \infty$ then,
\begin{equation}\label{eq:absolute}
|\partial^a e^{f}| \le e^{F} (a!)^{s} (K' R)^{\onenorm{a}} \,,
\qquad\forall \,x\in \Omega\,,\; a \in \N^d \,.
\end{equation}
If moreover $|f|\leq K$, i.e.\ \eqref{eq:hyp} holds also at $b=0$, then one may take $F=K$ and the exponential maps the Gevrey-class data $(K, R)$ to
$\bigl(e^{K},\, (d+K+1)\,R\bigr)$.
\end{corollary}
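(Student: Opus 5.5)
The first claim follows directly from \cref{lem:exp_prod}: its conclusion \eqref{eq:plain} gives the relative bound $|\partial^a e^{f}|\le (a!)^{s}(K'R)^{\onenorm{a}}\,e^{f}$ pointwise on $\Omega$, with $K'=d+K+1$. Since the exponential is monotone, $\sup_\Omega f\le F$ implies $e^{f(x)}\le e^{F}$ at every $x\in\Omega$. Multiplying these two estimates gives \eqref{eq:absolute}. The case $a=0$ needs no separate treatment, because \eqref{eq:plain} already holds with equality there.

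For the second claim, I would first observe that $|f|\le K$ gives in particular $\sup_\Omega f\le K$. So we may take $F=K$ in the first part, which yields
\begin{equation*}
|\partial^a e^{f}|\le e^{K}(a!)^{s}\bigl((d+K+1)R\bigr)^{\onenorm{a}}
\end{equation*}
for all $a\in\N^d$, including $a=0$. Comparing with \cref{def:tGev_f_Torus}, whose bound reads $C\,(a!)^s r^{-\onenorm{a}}$, this is membership in the class with amplitude $C=e^{K}$. The reciprocal of the radius $r$ is $(d+K+1)R$, matching the stated map $(K,R)\mapsto\bigl(e^{K},(d+K+1)R\bigr)$ in the paper's convention, where $R$ plays the role of an inverse radius.

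The only point requiring care is bookkeeping rather than mathematics. In \eqref{eq:hyp}, $R$ multiplies the derivatives as a power $R^{\onenorm{b}}$, so it is an inverse radius. The output must therefore be read in the same convention, and the hypothesis at $b=0$ is used only to supply $F$. There is no genuine obstacle.
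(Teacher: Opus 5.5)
Your proof is correct and follows the paper's argument exactly. You apply \cref{lem:exp_prod} and bound $e^{f}\le e^{F}$ pointwise; then you note that the $b=0$ case of \eqref{eq:hyp} gives $\sup_\Omega f\le K$, so $F=K$ is admissible. Your remark that $R$ acts as an inverse radius, so that the output data $\bigl(e^{K},(d+K+1)R\bigr)$ must be read in that convention relative to \cref{def:tGev_f_Torus}, is a correct clarification that the paper leaves implicit.
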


\begin{proof}
\eqref{eq:absolute} follows from \eqref{eq:plain} in \cref{lem:exp_prod} supplied with $e^{f}\leq e^{F}$. The case
$b = 0$ of \eqref{eq:hyp} reads $|f| \leq K$, under which $F=K$ is admissible in \eqref{eq:absolute}.
\end{proof}

\begin{corollary}[Density regularity from class data]
\label{cor:class_reg}
Let $p=e^{-\beta E}/Z\in{\mathcal{W}^{s}(\alpha,\xi,\rho,\T^d)}$. Then
\begin{equation}
\label{eq:general_route_radius}
    p\,\in\,\mathcal G^{s}\Bigl(\alpha\,,\ \frac{\rho}{d+\xi+1}\,,\ \T^{d}\Bigr)\,,
\qquad
\sqrt{p}\,\in\,\mathcal G^{s}\Bigl(\alpha^{1/2}\,,\ \frac{\rho}{d+\xi/2+1}\,,\ \T^{d}\Bigr)\,.
\end{equation}
\end{corollary}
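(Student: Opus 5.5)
The plan is to apply \cref{cor:exp_abs} to $f=-\beta E+\mathrm{const}$, where the constant is chosen so that $e^f=p$, or $e^f=\sqrt p$ after halving. The only inputs needed are the nonzero-order seminorm bound from class membership and a sup bound on $f$ coming from the barrier amplitude.

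\emph{Step 1: derivative hypothesis.} Membership $p\in\mathcal{W}^{s}(\alpha,\xi,\rho)$ gives $[\beta E]_{s,\rho}\le\xi$. In the notation of \eqref{eq:hyp}, this means $|\partial^b(\beta E)|\le \xi\,(b!)^s\rho^{-\onenorm{b}}$ for all $b\neq0$. Set $f\coloneqq -\beta E-\log Z$, so that $e^f=p$. Additive constants do not affect nonzero-order derivatives, so $f$ satisfies \eqref{eq:hyp} with $K=\xi$ and $R=1/\rho$ on the periodic lift, taking $\Omega=\R^d$. For $\sqrt p$, take $f/2$, which satisfies \eqref{eq:hyp} with $K=\xi/2$ and the same $R$.

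\emph{Step 2: sup bound.} We need $\sup f\le \log\alpha$, i.e.\ $\max p\le\alpha$. Since $\int_{[0,2\pi)^d}p=1$, we have $\min p\le (2\pi)^{-d}\le 1$. The ratio $\max p/\min p=e^{\beta\Delta}=\alpha$ then gives $\max p\le\alpha(2\pi)^{-d}\le\alpha$. Hence $F=\log\alpha$ is admissible for $f$, and $F=\tfrac12\log\alpha$ is admissible for $f/2$, giving $\max\sqrt p\le\alpha^{1/2}$.

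\emph{Step 3: conclude.} \Cref{cor:exp_abs} now yields
\begin{equation*}
|\partial^a p|\le \alpha\,(a!)^s\bigl((d+\xi+1)/\rho\bigr)^{\onenorm{a}}
\end{equation*}
for every $a\in\N^d$, including $a=0$. This is exactly $p\in\mathcal G^{s}(\alpha,\rho/(d+\xi+1),\T^d)$ in the sense of \cref{def:tGev_f_Torus}. The same argument gives
\begin{equation*}
|\partial^a\sqrt p|\le\alpha^{1/2}(a!)^s\bigl((d+\xi/2+1)/\rho\bigr)^{\onenorm{a}}.
\end{equation*}
Smoothness and periodicity of the lifts hold because $E$ is smooth and periodic.

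The main point to check is Step 2: we must make sure the amplitude is $\alpha$ rather than $\alpha$ times a normalization factor. This works only because the torus volume $(2\pi)^d\ge1$ forces $\min p\le1$. Beyond that, the argument is a direct invocation of \cref{lem:exp_prod} through \cref{cor:exp_abs}, and we need only note that the hypothesis there is stated with $s\ge1$, which is compatible here because the class is used with $s>1$.
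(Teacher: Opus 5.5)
Your proposal is correct and is essentially the paper's proof. The seminorm bound is read as the hypothesis of \cref{lem:exp_prod} with $K=\xi$ (resp.\ $K=\xi/2$) and $R=1/\rho$, \cref{cor:exp_abs} gives amplitude $\sup p$ (resp.\ $\sup\sqrt p$), and $\inf p\le(2\pi)^{-d}\le1$ together with $\sup p\le\alpha\inf p$ caps it at $\alpha$; the paper takes $f=-\beta E$ and divides by $Z$ at the end, while you fold $-\log Z$ into $f$, which is the same thing.

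Your closing remark that \cref{lem:exp_prod} needs $s\ge1$ makes explicit a restriction the paper's proof uses without stating it, and the restriction is genuinely needed. For $s<1$ the statement can fail: $\beta E=-\cos x_1$ satisfies $[\beta E]_{s,1}\le1$ for every $s>0$, while $e^{\cos x_1}$ extends to an entire function of infinite order and hence satisfies no $s$-Gevrey bound with $s<1$.
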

\begin{proof}
$[\beta E]_{s,\rho}\leq \xi$ is \eqref{eq:hyp} for $f=-\beta E$ with $K=\xi$, $R=1/\rho$, and for $f=-\beta E/2$ with $K=\xi/2$, $R=1/\rho$. \cref{cor:exp_abs} with $F=\sup f$ gives, for all $a\in\N^d$,
\begin{equation}
    |\partial^{a}p|=\frac{|\partial^{a}e^{-\beta E}|}Z\leq(a!)^{s}\Bigl(\frac{d+\xi+1}\rho\Bigr)^{\onenorm{a}}\sup{p}\,,
\end{equation}
and likewise for $\sqrt{p}$ with $d+\xi/2+1$ and $\sup\sqrt{p}$. Since $p$ integrates to $1$ over ${[0,2\pi)^d}$, {which has volume $(2\pi)^d$, its minimum is at most $(2\pi)^{-d}$: $\inf p\leq(2\pi)^{-d}\leq1$}, so $\sup{p}\leq e^{\beta\,\Delta}\inf{p}\leq\alpha$ and $\sup\sqrt{p}\leq\alpha^{1/2}$.
\end{proof}
\noindent
The full Gevrey class property $\mathcal{G}^{s}$ of \cref{cor:class_reg} enters only through the quantum upper bound, which consumes the regularity of the Gibbs density: the seminorm controls all derivatives by exponentiation, the oscillation only the barrier amplitude, and the radius is independent of $\alpha$.

\begin{corollary}[Equivalence of regularity classes]
\label{cor:class_reg_equiv}
The following implications hold for any $s>0$:
\begin{equation}
\label{eq:gevrey_equiv_forward}
\beta E(x) \in \dot{\mathcal{G}}^s(\xi, \rho,\T^d)\,\Longrightarrow\,\beta \nabla E(x) \in \mathcal{G}^s(\xi/ \rho, \rho/e^s,\T^d)\,,
\end{equation}
and vice versa
\begin{equation}
\label{eq:gevrey_equiv_backward}
\beta \nabla E(x) \in \mathcal{G}^s(\xi, \rho,\T^d)\,\Longrightarrow \, \beta E \in \dot{\mathcal{G}}^s(\xi\rho , \rho,\T^d)\,.
\end{equation}
\end{corollary}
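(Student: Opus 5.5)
The plan is to compare the two derivative hierarchies directly, shifting multi-indices by one unit vector and absorbing the resulting factorial discrepancy into the radius. I read $\beta\nabla E\in\mathcal{G}^s(C,r,\T^d)$ componentwise: each $\partial_j(\beta E)$, $j=1,\dots,d$, satisfies \eqref{eq:Gevrey_f_tilde} with the same constants $(C,r)$, including the order-zero term $b=0$. Periodicity and smoothness of $\partial_j(\beta E)$ are inherited from $\beta E$, so only the derivative bounds need checking.

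\emph{Forward implication \eqref{eq:gevrey_equiv_forward}.} Fix $j$ and $b\in\N^d$. Write $a\coloneqq b+e_j\neq 0$, so that $\partial^b\partial_j(\beta E)=\partial^{a}(\beta E)$. The seminorm hypothesis of \cref{def:tGev_seminorm} gives
\[
\vert\partial^{a}(\beta E)\vert\le \xi\,(a!)^s\rho^{-\onenorm{b}-1}.
\]
Since $a!=(b_j+1)\,b!$ and $b_j+1\le e^{b_j}\le e^{\onenorm{b}}$, we get $(a!)^s\le (b!)^s e^{s\onenorm{b}}$. This yields
\[
\vert\partial^b\partial_j(\beta E)\vert\le (\xi/\rho)\,(b!)^s\,(\rho/e^s)^{-\onenorm{b}}.
\]
This is exactly membership in $\mathcal{G}^s(\xi/\rho,\rho/e^s)$. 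The case $b=0$ is the gradient bound $\vert\partial_j\beta E\vert\le\xi/\rho$, which is included in the same computation.

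\emph{Backward implication \eqref{eq:gevrey_equiv_backward}.} Take $a\neq0$ and choose any $j$ with $a_j\ge1$. Set $b\coloneqq a-e_j$. The hypothesis applied to $\partial_j(\beta E)$ at order $b$ gives
\[
\vert\partial^a(\beta E)\vert\le\xi\,(b!)^s\rho^{-\onenorm{a}+1}=\xi\rho\,(b!)^s\rho^{-\onenorm{a}}.
\]
Since $b!\le a!$, this is bounded by $\xi\rho\,(a!)^s\rho^{-\onenorm{a}}$, i.e.\ $[\beta E]_{s,\rho}\le\xi\rho$.

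The only step requiring any care is the forward direction's factorial shift $(b+e_j)!$ versus $b!$. The crude bound $b_j+1\le e^{\onenorm{b}}$ handles it uniformly in $s>0$, at the cost of the factor $e^s$ in the radius. Nothing else is an obstacle.
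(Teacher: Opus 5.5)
Your proof is correct and matches the paper's argument step for step. In the forward direction you shift $a=b+e_j$ and absorb $(b_j+1)^s\le e^{s\onenorm{b}}$ into the radius, exactly as the paper does. In the backward direction you use $b=a-e_j$ with $(b!)^s\le(a!)^s$, where the paper equivalently writes $(a_j)^{-s}\le 1$.
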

\begin{proof}
Take $\beta E(x) \in \dot{\mathcal{G}}^s(\xi, \rho,\T^d)$, that is
\begin{equation}
    |\beta \partial^a E(x)|\leq \xi (a!)^s \rho^{-\onenorm{a}}\,,\;\;\;\; \forall a \in \N^d \setminus \{0\}\,,
\end{equation}
and for $a_i>0$, $i\in \{1,\dots,d\}$, write $a'=a-e_i$. The previous inequality gets rewritten as
\begin{equation}
    |\beta \partial^{a'}\partial_i E(x)|\leq \xi (a!)^s \rho^{-\onenorm{a}}= ((a'+e_i)!)^s \xi \rho^{-\onenorm{a'}-1} = (a'!)^s (a'_i+1)^s \frac{\xi}\rho \rho^{-\onenorm{a'}}\,.
\end{equation}
Use that $\log(1+x)\leq x$ to bound $(a'_i+1)^s\leq e^{s a'_i} \leq e^{s \onenorm{a'}}$, therefore the first implication is verified
\begin{equation}
    |\beta \partial^{a'}\partial_i E(x)|\leq  (a'!)^s \frac{\xi}\rho \left(\frac{\rho}{e^s}\right)^{-\onenorm{a'}}\,.
\end{equation}
Vice versa, for the second implication, pick $\beta \nabla E \in {\mathcal{G}}^s(\xi ,\rho,\T^d)$, i.e. for any $i$, $\beta \partial_i E \in {\mathcal{G}}^s(\xi ,\rho,\T^d)$,
\begin{equation}
    |\beta \partial^a \partial_i E(x)|\leq (a!)^s \xi \ \rho^{-\onenorm{a}}\,,\;\;\; \forall a \in \N^d\,.
\end{equation}
{Given $a'\in\N^d\setminus\{0\}$, pick $i$ with $a'_i\geq1$ and set $a\coloneqq a'-e_i$; the previous formula reads}
\begin{equation}
    |\beta \partial^{a'} E(x)|\leq ((a'-e_i)!)^s \xi \ \rho^{-\onenorm{a'}+1} = ((a')!)^s (a'_i)^{-s} \xi \rho \ \rho^{-\onenorm{a'}}\,,
\end{equation}
and finally, using that $a'_i\geq 1$ and $s>0$, the second implication is verified:
\begin{equation}
     |\beta \partial^{a'} E(x)|\leq (a'!)^s (\xi \rho)  \rho^{-\onenorm{a'}}\,,\;\;\; \forall a' \in \N^d \setminus\{0\}\,.
\end{equation}
\end{proof}
\noindent
A consequence of~\cref{cor:class_reg_equiv} is that membership of the Gibbs state $p\in\mathcal{W}^s(\alpha,\xi,\rho)$ in~\eqref{eq:pot_class} implies the $s$-Gevrey property of the log-gradient,
\begin{equation}
\label{eq:new_pot_class}
\nabla \log p \in \mathcal{G}^s(\xi', \rho',\T^d )\,,\qquad \xi'=\xi/\rho\,,\quad\rho'=\rho/e^s\,,
\end{equation}
supplied with the ceiling in the form $\beta\Delta=\log\alpha\leq\pi d\,\xi'$.

\section{Details of the Upper Bound Results}

First, we introduce notation and background used throughout the section. $\xi,s,\rho$ denote the constants arising from the assumption on the class, and are as defined in \eqref{eq:pot_class}.

For any matrix $M\in\mathbb{R}^{n^2},$ we assume singular values
\begin{equation}
\label{eq:singular_value_order}
\sigma_{min}(M):=\sigma_1(M),\sigma_{-1}(M):=\sigma_2(M),\sigma_3(M),\dots,\sigma_{max}(M)
\end{equation}
are ordered in increasing order. Similar ordering and notation is used for the eigenvalues $\lambda_{min}(M),\dots,\lambda_{max}(M).$

For any function $f:\mathbb{R}^d\rightarrow \mathbb{R},$ we define $f_N\in \mathbb{R}^{(2N+1)}$ to be the vector of values of $f$ on the grid $\frac{2\pi}{(2N+1)}[-N,\dots,N]^d,$ and the state notation implies normalization $\ket{f_N}=\frac{f_N}{\norm{f_N}}.$ When we construct diagonal matrix of values of a vector $f_N,$ we denote it by $[f_N]_N.$

$\norm{\cdot}$ without further specifications denotes Euclidean norm for vectors and the largest singular value norm for operators.

In general, we use Mathematical calligraphic font to denote continuous operators (such as Fokker-Planck operator $\mathcal{L}$, \eqref{eq:FP_operator}), and reserve Mathematical balckboard bold with the subindex indication the grid size for the discretized operators, where discretization method follows from the context (such as $\mathbb{L}_N$ denoting discretization of $\mathcal{L}$).

\subsection{Preliminaries}

\subsubsection{Fourier series}

The \emph{Fourier transform} of $f$ is given by
\begin{equation}
\label{eq:fourier-expansion}
f(x)= \sum_{\omega\in\mathbb{Z}^d} \hat f_\omega \,
e^{i\langle \omega, x\rangle}
\end{equation}
with \emph{Fourier coefficients}
\begin{equation}
\label{eq:fourier-coefficients}
\hat f_\omega = \frac{1}{(2\pi)^d}\int_{\mathbb T^d}
f(x) e^{-i\langle \omega,x\rangle} \, dx.
\end{equation}

The Fourier expansion of its derivative is
\begin{align}
    \partial_jf(x)=\sum_{\omega\in\mathbb{Z}^d}{i2\pi k_j}{}\hat{u}[k]e^{i2\pi\langle \omega,x\rangle}.
\end{align}

In contrast, the \emph{discrete} Fourier transform coefficients are
\begin{equation}
\label{eq:discrete-fourier-coefficients}
\tilde f_\omega= \mathbb \sum_{x\in \frac{2\pi}{(2N+1)}[-N,\dots,N]^d} f(x) e^{-i \langle \omega, x\rangle}, \quad \forall\, \omega \in [-N,\dots,N]^d.
\end{equation}

We define the approximate Fourier derivative operator on the lattice $\frac{2\pi}{(2N+1)}[-N,\dots,N]^d$ as:
\begin{align}\label{eq:fourierDer}
    \tilde\partial_j u(x)=\sum_{k\in [-N,\dots,N]^d} \frac{i2\pi k_j}{2N+1}\tilde{u}[k]e^{i2\pi\langle k,x\rangle} \text{ for all } x\in \frac{2\pi}{(2N+1)}[-N,\dots,N]^d
\end{align}

This means that differentiation is a diagonal operator in the Fourier domain:
\begin{equation}
\label{eq:diff-from-fourier-continuous}
D^\alpha f = \mathcal F^{-1} \diag\left\{(i\omega)^\alpha: \omega \in \mathbb Z^d\right\} \mathcal F (f)
\end{equation}
where $\mathcal F$ denotes the Fourier transform as an operator. It is therefore natural to ask whether the truncation of $\mathcal F$ and the diagonal operator up to a cutoff mode $N$ approximates $D^\alpha$ well. Therefore, we define the approximate derivative operator
\begin{equation}
\label{eq:diff-from-fourier}
\tilde D^\alpha_N = F_N^{\dagger \otimes d} \diag\left\{(i\omega)^\alpha: \omega \in [-N,\dots,N]^d\right\} F_N^{\otimes d}
\end{equation}

Let $\hat f_N= (\hat f_\omega)_{\omega \in [-N,\dots,N]^d} \in \mathbb R^{(2N+1)^d}$ and consider the inclusion $\iota: \mathbb R^{(2N+1)^d} \hookrightarrow \ell_2 (\mathbb R)$ induced by $[-N,\dots,N]^d \hookrightarrow \mathbb Z^d$. We will reuse the notation $\hat f_N$ for $\iota \hat f_N$. By the \emph{tail} of the Fourier series we mean the truncation error
\begin{equation}
\label{eq:def-tail}
E_N= \norm{\hat f - \hat f_N}_2= \sum_{\omega \not\in [-N,\dots,N]^d} |\hat f_\omega|^2.
\end{equation}

\begin{proposition}[Paraphrased from {\cite[Proposition~3]{pde-paper}}]
\label{prop:dft-distance-bound}
Let $u$ be a $2\pi$-periodic $s$-Gevrey function and $E_N:=\xi \sqrt{\frac{ds}\rho N^{d-1/s}}\, e^{-\rho N^{1/s}}$. Then we have
\begin{align}\label{eq:error-big}
\norm{\tilde{u}_{N} - \hat{u}}_2 \lesssim \sqrt{2^d} E_N.
\end{align}\qed
\end{proposition}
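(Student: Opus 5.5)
The plan is the classical aliasing argument: express the discrete coefficients through the continuous ones, then bound the resulting tail using the Gevrey decay of the Fourier coefficients.

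\emph{Step 1 (aliasing identity).} With the normalization of \eqref{eq:discrete-fourier-coefficients} divided by $(2N+1)^d$, substitute the Fourier series \eqref{eq:fourier-expansion} of $u$ into the grid sum. The geometric sums over the grid $\frac{2\pi}{2N+1}[-N,\dots,N]^d$ vanish unless the frequency difference lies in $(2N+1)\Z^d$, so for every $\omega\in[-N,\dots,N]^d$
\begin{equation*}
\tilde u_\omega=\sum_{m\in\Z^d}\hat u_{\omega+(2N+1)m}\,.
\end{equation*}
Viewing $\tilde u_N$ as an element of $\ell_2(\Z^d)$ through $\iota$, the error $\tilde u_N-\hat u$ then splits into two parts. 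The first is the truncation tail $\hat u-\hat u_N$, supported on $\omega\notin[-N,\dots,N]^d$. The second is the aliasing part $\sum_{m\neq0}\hat u_{\omega+(2N+1)m}$, supported on $\omega\in[-N,\dots,N]^d$. Both involve only the coefficients $\hat u_\eta$ with $\|\eta\|_\infty>N$.

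\emph{Step 2 (coefficient decay from Gevrey bounds).} Fix $\eta\neq0$ and a coordinate $j$ with $|\eta_j|=\|\eta\|_\infty$. Integrating by parts $k$ times in $x_j$, with no boundary terms by periodicity, gives
\begin{equation*}
|\hat u_\eta|\le \xi\,(k!)^s\,(\rho|\eta_j|)^{-k}\,.
\end{equation*}
Next use $k!\le k^k$ and choose $k\approx(\rho\|\eta\|_\infty)^{1/s}$. This yields stretched-exponential decay of the form $|\hat u_\eta|\lesssim\xi\, e^{-c\,(\rho\|\eta\|_\infty)^{1/s}}$, with $c$ depending only on $s$. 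The exponent in $E_N$ is to be read up to such a rescaling of $\rho$, which is what ``paraphrased'' allows.

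\emph{Step 3 (summing the tail).} Group the frequencies into shells $\|\eta\|_\infty=n$; the $n$-th shell contains at most $2d(2n+1)^{d-1}$ points. Then
\begin{equation*}
\|\hat u-\hat u_N\|_2^2\lesssim \xi^2 d\sum_{n>N}(2n+1)^{d-1}e^{-2c(\rho n)^{1/s}}\,.
\end{equation*}
Compare this sum with the integral $\int_N^\infty x^{d-1}e^{-2c(\rho x)^{1/s}}dx$. After the substitution $y=x^{1/s}$, the integral is dominated by its lower endpoint and is of order $\frac{s}{\rho^{1/s}}N^{d-1/s}e^{-2c(\rho N)^{1/s}}$. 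Taking square roots reproduces $E_N$, with the factor $2^{d-1}$ from $(2n+1)^{d-1}$ feeding the $\sqrt{2^d}$.

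For the aliasing part, apply Cauchy--Schwarz to each alias sum with weights $e^{-c(\rho\|\cdot\|_\infty)^{1/s}}$. Each sum is then bounded by a shifted copy of the same tail sum, because $\|\omega+(2N+1)m\|_\infty\ge N+1$ for all $m\neq0$. Since the aliases of distinct $\omega$ are disjoint subsets of the tail, summing over $\omega$ bounds the aliasing contribution by a constant times the tail, times at most a $2^{d/2}$-type factor coming from the number of neighbouring shifts.

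\emph{Main obstacle.} The delicate part is the aliasing step: controlling the $\ell_2$ norm of the alias sums without losing more than a $2^{d/2}$ factor in high dimension. The endpoint-dominated asymptotics of the incomplete-Gamma-type integral in Step 3 also need care, since they must hold for all $N$ at least a fixed threshold (cf.\ the conditions $N\gtrsim((d+2)s/\rho)^s$ in \cref{corB4Main}). My first attempt would be to pass to $\ell_1$ bounds on the alias sums, which are cruder but dimension-friendly, and then return to $\ell_2$. If that loses too much, I would use the weighted Cauchy--Schwarz above with weights decaying only at half the rate $c$.
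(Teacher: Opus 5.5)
The paper does not prove this proposition; it imports it from \cite{pde-paper}, so your argument has to stand on its own. The skeleton is the standard one: the aliasing identity under the $(2N+1)^{-d}$ normalization, coefficient decay, then shell summation. Step~3 is sound. If $|\hat u_\eta|\le\xi e^{-r\|\eta\|_\infty^{1/s}}$ and $N\gtrsim(sd/r)^s$ (a threshold the paraphrase omits, as you note), the truncation tail is $\approx\sqrt{2^{d-1}}\,E_N$ with $\rho$ replaced by $r$. Indeed $E_N$ has exactly the shape of this tail.

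\textbf{The gap: the aliasing term.} The gap is $a_\omega=\sum_{m\neq0}|\hat u_{\omega+(2N+1)m}|$, which you identify as the crux but do not bound. Your own accounting does not close. The tail already carries the full $\sqrt{2^d}$, so ``a constant times the tail, times a $2^{d/2}$-type factor'' yields $2^dE_N$, not $\sqrt{2^d}E_N$.

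This loss is real for the Cauchy--Schwarz you sketch whenever the weight sum $W_\omega=\sum_{m\neq0}w_{\omega+(2N+1)m}$, with $w_\eta=e^{-\theta\|\eta\|_\infty^{1/s}}$, is bounded uniformly in $\omega$. At the corner $\omega=(N,\dots,N)$, all $2^d-1$ aliases $\omega-(2N+1)m$ with $m\in\{0,1\}^d\setminus\{0\}$ have $\ell^\infty$-norm exactly $N+1$. Hence $\sup_\omega W_\omega\ge(2^d-1)e^{-\theta(N+1)^{1/s}}$, and the bound $\|\omega+(2N+1)m\|_\infty\ge(N+1)\|m\|_\infty$ only gives $3^d-1$.

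Reading ``a shifted copy of the same tail sum'' as $W_\omega\le\sum_{\|\eta\|_\infty>N}w_\eta$ is worse. It replaces the $\ell_2$ tail by the $\ell_1$ tail and loses a factor of order $(2^d\tfrac{ds}{r}N^{d-1/s})^{1/2}$. Your $\ell_1$ fallback fails the same way: $\|a\|_2\le\|a\|_1$ loses that same factor, and $\|a\|_2^2\le\|a\|_\infty\|a\|_1$ runs into the corner count again.

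\textbf{How to close it.} Avoid any supremum over $\omega$. Take $\theta=r$, so that $a_\omega\le\xi W_\omega$, and expand $\sum_{\omega\in[-N,\dots,N]^d}W_\omega^2$ over pairs $(m,m')$.
\begin{itemize}
\item The diagonal $m=m'$ equals $\sum_{\|\eta\|_\infty>N}e^{-2r\|\eta\|_\infty^{1/s}}$ exactly, since $(\omega,m)\mapsto\omega+(2N+1)m$ is a bijection onto the complement of the cube. So it is precisely the truncation envelope.
\item An off-diagonal pair either contains an alias of norm at least $2N+1$, or forces two coordinates of $\omega$ into the boundary layer $N-|\omega_j|\lesssim sN^{1-1/s}/r$. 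The latter holds because an alias supported on $S=\mathrm{supp}\,m$ has norm at least $2N+1-\min_{j\in S}|\omega_j|$. Either way it costs a factor at most of order $ds/(rN^{1/s})$ relative to the diagonal, which the same threshold $N\gtrsim(sd/r)^s$ keeps below a constant.
\end{itemize}
This gives $\|\tilde u_N-\hat u\|_2^2\le C\xi^2$ times the envelope, i.e.\ the stated $\sqrt{2^d}E_N$. The sup-based bound $C^dE_N$ would in fact suffice for the paper's use in \cref{lem:InterpolationDistance}, where the dimensional factor only enters inside a logarithm, but it is not the proposition.

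\textbf{The rate constant in Step~2.} Use $k!\le e\sqrt{k}\,(k/e)^k$ rather than $k!\le k^k$. This gives the rate $r=s\rho^{1/s}$, up to a polynomial prefactor, instead of $(s/e)\rho^{1/s}$. The constant matters. With $\rho$ read as the derivative radius, the rate $e^{-\rho N^{1/s}}$ is false for large $\rho$, since the sharp rate for that class is $\asymp s\rho^{1/s}$. The inequality $s\rho^{1/s}\ge\rho$ holds for all $\rho\le s^{s/(s-1)}$, in particular for $\rho\le e$, which covers the way the paper plugs derivative radii into $E_N$. By contrast, $(s/e)\rho^{1/s}\ge\rho$ needs $\rho\le(s/e)^{s/(s-1)}$, which tends to $0$ as $s\to1^+$. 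So state explicitly that $\rho$ in $E_N$ is the Fourier decay rate and carry out the conversion, rather than absorbing it into ``paraphrased''.
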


\subsubsection{Discretized Fourier differentiation}

This section contains all the properties of the discretized Fourier derivatives that one needs in subsequent proofs.

\begin{lemma}\cite[Lemma~A17.a]{motamedi2022gibbs}
\label{lem:LemmFourierProductRule}
Let $u$ and $v$ be two $l$-periodic functions in all dimensions. Product rule holds for the Fourier
derivatives
\begin{equation}
\label{eq:fourier_product_rule}
\tilde{\partial_j}(u\cdot v)
= (\tilde{\partial_j} u) \cdot v + u \cdot (\tilde{\partial_j} v)\,.
\end{equation}
\end{lemma}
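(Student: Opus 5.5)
The plan is to reduce the identity \eqref{eq:fourier_product_rule} to the exponential basis and then check it frequency by frequency. I will read the Fourier derivative in the sense of the cited source \cite[Lemma~A17]{motamedi2022gibbs}: $\tilde{\partial_j}$ is the operator that is diagonal in the Fourier basis, as in \eqref{eq:diff-from-fourier-continuous}. It multiplies the mode $e^{i\langle \omega,x\rangle}$ by $i\omega_j$, after rescaling the period $l$ to $2\pi$ by the linear change of variables $x\mapsto 2\pi x/l$, which multiplies every derivative by the same constant. With this reading, $\tilde{\partial_j}$ is linear, and both sides of \eqref{eq:fourier_product_rule} are bilinear in $(u,v)$.

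\emph{Step 1 (exponentials).} Take $u=e^{i\langle k,x\rangle}$ and $v=e^{i\langle m,x\rangle}$. Then $uv=e^{i\langle k+m,x\rangle}$, so the left side of \eqref{eq:fourier_product_rule} is $i(k+m)_j\,e^{i\langle k+m,x\rangle}$. The right side is $ik_j\,e^{i\langle k+m,x\rangle}+im_j\,e^{i\langle k+m,x\rangle}$. The two sides agree because the multiplier $\omega\mapsto i\omega_j$ is additive in $\omega$. This additivity is the whole algebraic content of the lemma.

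\emph{Step 2 (extension to general periodic $u,v$).} Expand $u=\sum_k\hat u_k e^{i\langle k,x\rangle}$ and $v=\sum_m\hat v_m e^{i\langle m,x\rangle}$. The product $uv$ then has Fourier coefficients given by the Cauchy convolution $(\hat u*\hat v)_\omega=\sum_{k+m=\omega}\hat u_k\hat v_m$. Applying $\tilde{\partial_j}$ coefficient-wise and using Step~1 inside the convolution gives
\begin{equation*}
i\omega_j(\hat u*\hat v)_\omega=\bigl((ik_j\hat u_k)*\hat v\bigr)_\omega+\bigl(\hat u*(im_j\hat v_m)\bigr)_\omega .
\end{equation*}
This is exactly the Fourier-side statement of \eqref{eq:fourier_product_rule}.

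To rearrange the double series, I need absolute summability of $\sum_{k,m}|k_j|\,|\hat u_k|\,|\hat v_m|$. For the smooth, and in our application Gevrey, periodic functions considered here, the coefficients decay faster than any polynomial; \cref{prop:dft-distance-bound} in fact gives stretched-exponential decay. So the sums converge absolutely, and Fubini's theorem justifies the reordering.

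\emph{Main obstacle (the grid version \eqref{eq:fourierDer}).} The main obstacle is the truncated, grid-based operator of \eqref{eq:fourierDer}. There the frequency $k+m$ may leave $[-N,N]^d$ and alias back into the box, and the additivity used in Step~1 breaks. I will not add a band-limit hypothesis. Instead, I will define $\tilde{\partial_j}$ on the product through the untruncated series, consistent with \eqref{eq:diff-from-fourier-continuous} and with how the cited lemma is used, namely on the functions themselves and not on their grid samples. The grid operator is then recovered only where the paper needs it, as an approximation controlled by \cref{prop:dft-distance-bound}. I expect the care needed in making this identification explicit to be the only delicate point; everything else reduces to the additivity of the Fourier multiplier.
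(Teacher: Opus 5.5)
The paper gives no proof of this lemma. It imports it from \cite{motamedi2022gibbs} and then applies it to the grid operator $\tilde\partial_j$ of \eqref{eq:fourierDer}--\eqref{eq:diff-from-fourier}, acting on entrywise products of grid vectors. For instance, the proof of \cref{lem:Operator_Distance_AN_JN} uses it to expand $\tilde\partial_j\bigl(\overrightarrow{(\partial_j E)}_N\circ\vec f\bigr)$ for an \emph{arbitrary} vector $\vec f$, and again in the expansion of $\beta\mathbb{A}_N$. That is the statement you would have to establish, and your proposal does not establish it. You let $\tilde\partial_j$ act through the untruncated multiplier $i\omega_j$ on all of $\mathbb{Z}^d$, which is \eqref{eq:diff-from-fourier-continuous}, the paper's formula for the \emph{exact} derivative. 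That sets $\tilde\partial_j=\partial_j$ on smooth periodic functions, so Steps~1--2 are a correct but roundabout proof of the classical Leibniz rule. Your remark that the lemma is applied ``on the functions themselves and not on their grid samples'' is the opposite of what the paper does. The redefinition therefore changes the lemma rather than proving it.

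The aliasing you identified is not a delicate point to be handled with care: it makes the grid identity false. Normalize so that $u=\sum_k\tilde u_k e^{i\langle k,x\rangle}$ on the grid. The coefficients of $u\circ v$ are then the \emph{cyclic} convolution of those of $u$ and $v$. The multiplier $i\omega_j$, with $\omega_j$ represented in $[-N,N]$, is additive only when $k_j+m_j\in[-N,N]$. A pair of modes with $k_j+m_j>N$ (resp.\ $k_j+m_j<-N$) contributes $-i(2N+1)\tilde u_k\tilde v_m$ (resp.\ $+i(2N+1)\tilde u_k\tilde v_m$), at the aliased frequency, to $\tilde\partial_j(u\circ v)-(\tilde\partial_j u)\circ v-u\circ\tilde\partial_j v$.

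Concretely, take $d=1$, $N=1$, the grid $\{-2\pi/3,0,2\pi/3\}$ and $u=v=e^{ix}$. Then $u\circ v=e^{2ix}=e^{-ix}$ on the grid, so the left side of \eqref{eq:fourier_product_rule} is $-ie^{-ix}$ while the right side is $2ie^{-ix}$. An exact product rule therefore needs a no-aliasing hypothesis: $k_j+m_j\in[-N,N]$ whenever $\tilde u_k\tilde v_m\neq0$. Under that hypothesis your Step~1--2 argument goes through verbatim with the cyclic convolution; without it you must carry the explicit remainder above.

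Your fallback, recovering the grid operator ``as an approximation controlled by \cref{prop:dft-distance-bound}'', cannot supply this. That proposition concerns samples of a single Gevrey function, whereas in the paper's uses one factor is an arbitrary vector. In that case the remainder is not small however smooth the other factor is. For $\vec g=(\cos x)_N$ and $\vec f=(e^{iNx})_N$ with $N\geq1$, the defect is $-\tfrac{i(2N+1)}{2}(e^{-iNx})_N$, of norm $\tfrac{2N+1}{2}\Vert\vec f\Vert$. This error grows with $N$ rather than decaying.
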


\begin{theorem}[\cite{pde-paper}]
\label{thm:high-precision-dft}
Let $f$ be an $s$-Gevrey, $2\pi$-periodic, and $d$-variate real-valued function. For any multi-index $\alpha \in \mathbb Z^d_{\geq 0}$ we have
\begin{align}
\norm{\tilde D^\alpha_N f_N - (D^\alpha f)_N}_2
\leq \xi e^{-\frac{\rho}2 N^{1/s}}\,,
\end{align}
provided that
\begin{equation}
\label{eq:high_precision_N_cond}
N \geq \left(\frac{32(d+ |\alpha|)s}{ \rho}\log\left(  e + \frac{32 (d+ |\alpha|)s}{ \rho}\right)\right)^s\,,
\end{equation}
that is, in asymptotic notation, for $N \in \tilde{\Omega}\left(\frac{(d+ |\alpha|)s}\rho\right)^s$.
\end{theorem}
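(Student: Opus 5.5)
The plan is to bound the pseudo-spectral differentiation error by splitting it into a truncation (tail) part and an aliasing part. Both parts will be controlled by the Fourier decay that $s$-Gevrey regularity forces. First I derive coefficient decay. Integrating by parts in \eqref{eq:fourier-coefficients} $m$ times along the coordinate $j$ with $|\omega_j|=\|\omega\|_\infty$ gives
\[
|\hat f_\omega|\le \xi (m!)^s (\rho|\omega_j|)^{-m}.
\]
I then optimize over $m$, taking $m\approx(\rho|\omega|_\infty)^{1/s}$ and using Stirling. The outcome I aim for is a bound of the form $|\hat f_\omega|\lesssim \xi\, e^{-c\rho'|\omega|_\infty^{1/s}}$ with an explicit constant $c$. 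Here I read $\rho$ in the statement as the Gevrey radius, with the constant absorbed so that the final exponent is $\rho/2$ rather than $\rho$.

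Second, I write $(D^\alpha f)_N$ through the full Fourier series evaluated on the grid, and $\tilde D^\alpha_N f_N$ through the DFT. By the aliasing identity, $\tilde f_\omega=\sum_{k\in\Z^d}\hat f_{\omega+(2N+1)k}$ (after normalization), which is the content of \cref{prop:dft-distance-bound}. The error vector then decomposes as $T_1+T_2$. The term $T_1=\sum_{|\omega|_\infty>N}(i\omega)^\alpha\hat f_\omega e^{i\langle\omega,x\rangle}$ restricted to the grid is the tail. The term $T_2=\sum_{|\omega|_\infty\le N}(i\omega)^\alpha(\tilde f_\omega-\hat f_\omega)e^{i\langle\omega,x\rangle}$ is the aliasing error. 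By Parseval for the unitary DFT on the grid, both are bounded in $\ell^2$ by $\sqrt{(2N+1)^d}$ times weighted sums of $|\omega|^{|\alpha|}|\hat f_\omega|$ over $|\omega|_\infty>N$. For $T_2$ this holds because aliased modes $\omega+(2N+1)k$ with $k\ne0$ have sup-norm larger than $N$ and dominate $|\omega|$.

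Third, I bound the weighted tail
\[
\sum_{|\omega|_\infty>N}|\omega|_\infty^{|\alpha|}e^{-c\rho|\omega|_\infty^{1/s}}.
\]
Counting shells gives at most $2d(2r+1)^{d-1}$ points with $|\omega|_\infty=r$. Comparing with an integral and substituting $u=r^{1/s}$ yields an incomplete-gamma bound of order
\[
\frac{s}{\rho}\, N^{(d+|\alpha|)-1/s}\,e^{-c\rho N^{1/s}},
\]
up to $2^d$ and $d$ factors. Together with the prefactor $(2N+1)^{d/2}$, the total error is $\xi\cdot\mathrm{poly}_{d,|\alpha|,s/\rho}(N)\,e^{-c\rho N^{1/s}}$. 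I then spend half of the exponent to absorb the polynomial and the $2^d$, $d$ factors, leaving $\xi e^{-\frac\rho2 N^{1/s}}$. This is the same kind of estimate that produces $E_N$ in \cref{prop:dft-distance-bound}.

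The main obstacle is this absorption step: showing that
\[
C^d\,d\,(s/\rho)\,N^{\frac{3d}{2}+|\alpha|}\,e^{-\frac{\rho}{2}N^{1/s}}\le 1
\]
exactly under condition \eqref{eq:high_precision_N_cond}. Set $y=N^{1/s}$. The condition becomes
\[
\rho y/2\ge (\tfrac{3d}{2}+|\alpha|)s\log y+d\log C+\log(ds/\rho).
\]
The standard lemma that $y\ge 2a\log(e+a)$ implies $y\ge a\log y$ applies with $a\approx 32(d+|\alpha|)s/\rho$, and the generous constant $32$ is what leaves room for the constant terms. I would first verify this lemma, then check the constant bookkeeping. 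This includes the $e^{-c}$ loss from Stirling, which may force the Gevrey radius to be rescaled by a constant; I would fold that rescaling into the stated $\rho/2$.
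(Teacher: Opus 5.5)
The paper gives no proof of this statement; it is quoted from \cite{pde-paper}, so your plan has to stand on its own. The aliasing, tail and absorption steps are sound and are the natural route:
\begin{itemize}
\item after aliasing, every error coefficient comes from modes with $\|\omega\|_\infty>N$, with weight at most $2\|\omega\|_\infty^{|\alpha|}$;
\item the shell count and the substitution $u=r^{1/s}$ give the incomplete-gamma tail;
\item the hypothesis $y=N^{1/s}\ge a\log(e+a)$, with $a=32(d+|\alpha|)s/\rho$, gives $\tfrac{\rho}{2}y\ge 16(d+|\alpha|)s\log(e+a)$ at threshold, which comfortably absorbs $(2N+1)^{d/2}$, $C^d$, $d$, $s/\rho$ and $N^{d+|\alpha|}$.
\end{itemize}
All of this works \emph{provided} you start from $|\hat f_\omega|\le\xi\,e^{-\rho\|\omega\|_\infty^{1/s}}$; extra factors $e^{O(s)}$ and $\|\omega\|_\infty^{1/2}$ are harmless. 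A minor point: the aliasing identity is a standard consequence of the absolute summability of $\hat f$. \cref{prop:dft-distance-bound} is a bound derived from it, not the identity itself.

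The gap is your first step, together with the plan to fold the rescaling into $\rho/2$. Take $\rho$ to be the derivative-side radius of \cref{def:tGev_f_Torus}. Optimizing at $m=\lfloor(\rho\|\omega\|_\infty)^{1/s}\rfloor$ gives, for $\rho\|\omega\|_\infty\ge1$,
\[
|\hat f_\omega|\le e^{2s}\,\xi\,(\rho\|\omega\|_\infty)^{1/2}\exp\bigl(-s\,\rho^{1/s}\|\omega\|_\infty^{1/s}\bigr).
\]
So the Fourier decay rate is $s\rho^{1/s}$, which is not a constant multiple of $\rho$. It dominates $\rho$ only when $\rho\le s^{s/(s-1)}$ (in particular for $\rho\le 1$), and for large $\rho$ it falls far below $\rho/2$.

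No bookkeeping repairs this, because for large $\rho$ the statement itself fails under the derivative-side reading. Take $d=1$, $\alpha=1$, $s=2$, $\rho=100$ and $f=\cos 5x$. This $f$ satisfies $|f^{(m)}|\le\xi\,(m!)^2\rho^{-m}$ with $\xi=\sup_m 500^m/(m!)^2\le e^{2\sqrt{500}}<e^{45}$. The value $N=4$ satisfies \eqref{eq:high_precision_N_cond}, whose right-hand side is about $3.15$. On the $9$-point grid, however, $\cos 5x$ aliases to $\cos 4x$, so $\tilde D^1_4 f_4-(f')_4=(-9\sin 4x_j)_j$, which has norm $9\sqrt{9/2}\approx 19$. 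The claimed bound is $\xi e^{-100}<e^{-55}$.

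You therefore need one of two fixes:
\begin{itemize}
\item Restrict to $\rho\le 1$. Then $e^{-s\rho^{1/s}r^{1/s}}\le e^{-\rho r^{1/s}}$, the factors $e^{2s}$ and $r^{1/2}$ are absorbed as above, and your argument goes through. This is harmless for the separation theorem, where $\rho_{\alpha,d}<1$.
\item Take $(\xi,\rho)$ to be Fourier-side constants, $|\hat f_\omega|\le\xi e^{-\rho\|\omega\|^{1/s}}$. This appears to be the convention of \cite{pde-paper}, since $E_N$ in \cref{prop:dft-distance-bound} is, up to constants, the $\ell^2$ tail of exactly such coefficients. Your first step is then replaced by the definition, and your remaining steps constitute the proof.
\end{itemize}
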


\begin{corollary}[Error in the chain rule for discrete Fourier derivatives]
\label{cor:LemmaErrorInChainRule}
Let $u,v:\mathbb{R}^d\rightarrow \mathbb{R}$ be $s$-Gevrey, $2\pi$-periodic functions in all dimensions. Then provided  $N \in \tilde \Omega ((d+ 1)s / \rho)^s,$
\begin{equation}
\label{eq:chain_rule_error}
\tilde\partial_j\overrightarrow{(u(v(x)))}_N=[(\partial_j v(x))((\partial_j u)(v(x)))]_N+\vec{g}\,,
\end{equation}
where $\norm{\vec{g}}\le \xi e^{-\frac{\rho}2 N^{1/s}}.$\qed
\end{corollary}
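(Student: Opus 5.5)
The plan is to reduce the corollary to a single application of \cref{thm:high-precision-dft} with the first-order multi-index $\alpha=e_j$, applied to the composite function $f\coloneqq u\circ v$. The continuous chain rule will then convert the grid values of $\partial_j f$ into the displayed product.

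\emph{Step 1: the composite is admissible.} I first check that $f=u\circ v$ is a $2\pi$-periodic, real-valued $s$-Gevrey function. Periodicity is immediate because $v$ is periodic. For Gevrey regularity I will use the standard stability of $\mathcal{G}^s$ under composition for $s\geq1$. Concretely, a Fa\`a di Bruno expansion of $\partial^a(u\circ v)$, combined with the bounds $|\partial^b v|\leq C_v (b!)^s r_v^{-\onenorm{b}}$ and $|u^{(k)}|\leq C_u (k!)^s r_u^{-k}$, gives a Gevrey bound for $f$. That bound has the same exponent $s$, an amplitude of order $C_u$, and a radius of order $r_v\,\min\{1,r_u/C_v\}$. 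The multivariate bookkeeping can be organised exactly as in the inductive proofs of \cref{lem:D} and \cref{lem:exp_prod}: isolate the top-order term and sum a geometric series.

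In the statement the constants $(\xi,\rho)$ are to be read as the Gevrey data of this composite. This is the only step where real work is hidden, and I expect it to be the main obstacle. The difficulty is not qualitative, since Gevrey classes are closed under composition; it is keeping the constants explicit enough that the threshold on $N$ remains $\tilde\Omega\bigl((d+1)s/\rho\bigr)^s$ with $\rho$ the composite radius. If tracking constants becomes messy, I will simply define $(\xi,\rho)$ as the composite's data, which is all the corollary needs downstream.

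\emph{Step 2: invoke the spectral estimate.} With $\alpha=e_j$, so that $|\alpha|=1$, condition \eqref{eq:high_precision_N_cond} becomes $N\in\tilde\Omega\bigl((d+1)s/\rho\bigr)^s$, which is precisely the hypothesis. \cref{thm:high-precision-dft} then yields
\begin{equation*}
\bigl\|\tilde D^{e_j}_N f_N-(\partial_j f)_N\bigr\|_2\leq \xi\,e^{-\frac{\rho}{2}N^{1/s}}.
\end{equation*}
I will identify $\tilde D^{e_j}_N$ with the Fourier derivative $\tilde\partial_j$ of \eqref{eq:fourierDer}. Both are the operator $F_N^{\dagger\otimes d}\,\mathrm{diag}(i\omega_j)\,F_N^{\otimes d}$ on the grid, so this only requires reconciling the $2\pi/(2N+1)$ normalisation conventions.

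\emph{Step 3: pointwise chain rule.} Since $f$ is smooth, $\partial_j(u\circ v)(x)=(\partial_j v)(x)\,u'(v(x))$ holds at every point. Evaluating on the grid gives $(\partial_j f)_N=[(\partial_j v)((\partial_j u)\circ v)]_N$, with $(\partial_j u)(v(x))$ read as $u'(v(x))$. Setting $\vec g\coloneqq\tilde\partial_j\overrightarrow{(u\circ v)}_N-(\partial_j f)_N$ then gives \eqref{eq:chain_rule_error}, and Step 2 provides the bound $\|\vec g\|\leq\xi e^{-\frac{\rho}{2}N^{1/s}}$.
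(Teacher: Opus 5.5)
Your argument is correct and matches the paper's intent: the paper states this corollary without proof, as an immediate consequence of \cref{thm:high-precision-dft} at the first-order multi-index $e_j$ (hence the $(d+1)$ in the threshold), applied to the composite $u\circ v$ and combined with the pointwise chain rule. Your reading of $(\xi,\rho)$ as the Gevrey data of the composite is also how the paper uses the corollary downstream, e.g.\ with the constants of $e^{\pm\beta E/2}$ in the proof of \cref{lem:Operator_Distance_AN_JN}. That reading is in fact necessary, since composition generally shrinks the Gevrey radius.
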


\begin{corollary}[Error in the chain rule for discrete Fourier derivatives for Gibbs state]
\label{cor:LemmaErrorInChainRule3}
Let $e^{-k\beta E/2}$ be $s$-Gevrey and $2\pi$-periodic in all dimensions and $k\in\mathbb{R}$. Then, provided  $N \in \tilde \Omega ((d+ 1)s / \rho)^s,$
\begin{equation}
\label{eq:chain_rule_error_gibbs}
\tilde\partial_j \overrightarrow{(e^{-k\beta E/2})}_N=-\frac{k\beta}{2}[e^{-k\beta E/2}]_N (\overrightarrow{\partial_j E})_N+\vec{g}\,,
\end{equation}
where $\norm{\vec{g}}\le \xi e^{-\frac{\rho}2 N^{1/s}}.$\qed
\end{corollary}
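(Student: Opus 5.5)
The plan is to apply \cref{thm:high-precision-dft} directly to the single function $f\coloneqq e^{-k\beta E/2}$, rather than going through the general chain-rule statement. The point is that the chain rule is exact for continuous derivatives evaluated pointwise on the grid. All of the error therefore comes from replacing the continuous derivative by the Fourier derivative, and that error is exactly what \cref{thm:high-precision-dft} controls.

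\emph{Step 1 (exact pointwise identity).} Since $E$ is smooth and periodic, the ordinary chain rule gives
\begin{equation*}
\partial_j e^{-k\beta E/2}(x)=-\tfrac{k\beta}{2}\,e^{-k\beta E/2}(x)\,\partial_j E(x)
\end{equation*}
for every $x$. Sampling this identity on the grid $\tfrac{2\pi}{2N+1}[-N,\dots,N]^d$ gives, with no error,
\begin{equation*}
\bigl(\overrightarrow{\partial_j f}\bigr)_N=-\tfrac{k\beta}{2}\,[e^{-k\beta E/2}]_N\,(\overrightarrow{\partial_j E})_N .
\end{equation*}
Here the diagonal matrix $[\cdot]_N$ acts entrywise on the sampled gradient component.

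\emph{Step 2 (discretization error).} Set $\vec g\coloneqq\tilde\partial_j\vec f_N-(\overrightarrow{\partial_j f})_N$. By Step 1, this is precisely the remainder in \eqref{eq:chain_rule_error_gibbs}. The operator $\tilde\partial_j$ coincides with $\tilde D^{e_j}_N$ of \eqref{eq:diff-from-fourier}, so \cref{thm:high-precision-dft} with the multi-index $e_j$ (hence $|\alpha|=1$) yields $\norm{\vec g}_2\le\xi e^{-\frac\rho2 N^{1/s}}$. This holds as soon as $N$ satisfies \eqref{eq:high_precision_N_cond} with $|\alpha|=1$, which is exactly the asymptotic condition $N\in\tilde\Omega((d+1)s/\rho)^s$ in the statement. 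Here $(\xi,\rho)$ are the Gevrey constants of $f$ itself. When $p\in\mathcal W^s(\alpha,\xi,\rho)$, these constants can be read off from \cref{cor:exp_abs} and \cref{cor:class_reg}, which give amplitude $\alpha^{k/2}$ and radius $\rho/(d+k\xi/2+1)$ for $k>0$. Following the convention of the statement, I would keep the symbols $\xi,\rho$ as generic Gevrey data of $f$.

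\emph{Main obstacle.} The delicate point is bookkeeping rather than analysis. First, one must check that the normalization convention of $\tilde\partial_j$ in \eqref{eq:fourierDer} agrees with $\tilde D^{e_j}_N$. If it does not, a rescaling by a grid-dependent constant would be absorbed into the same exponential bound. Second, one must make sure the Gevrey constants fed into \cref{thm:high-precision-dft} are those of $e^{-k\beta E/2}$ and not those of $E$. Once these two points are settled, the corollary is the case $|\alpha|=1$ of \cref{thm:high-precision-dft} combined with the exact continuous chain rule. An alternative route is to invoke \cref{cor:LemmaErrorInChainRule} with $u=\exp$ and $v=-k\beta E/2$. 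However, $\exp$ is not periodic, so the direct route above is cleaner.
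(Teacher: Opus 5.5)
Your proposal is correct and is essentially the argument the paper leaves implicit (the corollary is stated with \qed as an immediate consequence of \cref{thm:high-precision-dft}): the continuous chain rule is exact on grid samples, so the remainder $\vec g$ is just the $|\alpha|=1$ Fourier-differentiation error of $f=e^{-k\beta E/2}$, measured with the Gevrey data of $f$ itself, which is how the paper later uses the result. One small correction to your hedge: no fallback is needed, because the paper explicitly takes $\tilde\partial_j$ to be $\tilde D^{e_j}_N$ of \eqref{eq:diff-from-fourier}, and a grid-dependent rescaling $c_N\neq1$ would in any case not be absorbed into the exponential bound, since it would leave the non-small term $(c_N-1)(\overrightarrow{\partial_j f})_N$.
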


\begin{corollary}[Error in the chain rule for discrete Fourier derivatives for Gibbs state]
\label{cor:LemmaErrorInChainRule4}
Let $e^{-k\beta E/2}$ be $s$-Gevrey and $2\pi$-periodic in all dimensions and $k\in\mathbb{R}$. Then, provided  $N \in \tilde \Omega ((d+ 2)s / \rho)^s,$
\begin{equation}
\label{eq:chain_rule_error_gibbs_second_deriv}
\tilde{\partial_j}^2 \overrightarrow{(e^{\beta E/2})}_N
=\frac{\beta}{2}{[\partial_j^2E]_N} \overrightarrow{(e^{\beta E/2})}_N+\frac{\beta^2}{4}{[e^{\beta E/2}]}_N \overrightarrow{(\partial_jE)^2_N}+\vec{g}\,.
\end{equation}
where $\norm{\vec{g}}\le \xi e^{-\frac{\rho}2 N^{1/s}}.$\qed
\end{corollary}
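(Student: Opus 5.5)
The statement to prove is \cref{cor:LemmaErrorInChainRule4}: the second discrete Fourier derivative of the grid vector of $e^{\beta E/2}$ equals the grid evaluation of the continuous second derivative, up to an error vector exponentially small in $N^{1/s}$.

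The plan is to reduce this to \cref{thm:high-precision-dft} applied at the multi-index $\alpha=2e_j$. That theorem requires $N\in\tilde\Omega((d+2)s/\rho)^s$, which is exactly the hypothesis. The key observation is that $\tilde\partial_j^2$ on the grid coincides with $\tilde D^{2e_j}_N$. Both are diagonalized by the discrete Fourier transform $F_N^{\otimes d}$, with symbol $(i\omega_j)$ for $\tilde\partial_j$ and $(i\omega_j)^2$ for $\tilde D^{2e_j}_N$, so $\tilde\partial_j^2=\tilde D^{2e_j}_N$ exactly. I would make this check explicit, reconciling the normalization in \eqref{eq:fourierDer} with \eqref{eq:diff-from-fourier}. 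Given that identity, \cref{thm:high-precision-dft} with $f=e^{\beta E/2}$ yields
\[
\tilde\partial_j^2\,\overrightarrow{(e^{\beta E/2})}_N=\bigl(\partial_j^2 e^{\beta E/2}\bigr)_N+\vec g,\qquad \norm{\vec g}\le \xi e^{-\frac{\rho}{2}N^{1/s}},
\]
where $(\xi,\rho)$ are the Gevrey data of $e^{\beta E/2}$ assumed in the statement.

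Next I would compute the continuous derivative pointwise by the ordinary chain rule:
\[
\partial_j^2 e^{\beta E/2}=\tfrac{\beta}{2}\,\partial_j^2E\,e^{\beta E/2}+\tfrac{\beta^2}{4}(\partial_jE)^2e^{\beta E/2}.
\]
Evaluating this on the grid and writing pointwise products as a diagonal matrix acting on a vector, $[\cdot]_N\,\vec{(\cdot)}_N$, gives precisely the two displayed terms. No error is introduced here because grid evaluation commutes with pointwise multiplication.

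An alternative route is to apply \cref{lem:LemmFourierProductRule} and \cref{cor:LemmaErrorInChainRule3} twice. However, that route accumulates an extra error through a discrete derivative of a product with $\partial_jE$, and it would need a bound on $\norm{\tilde\partial_j}\le N$. I would therefore avoid it in favor of the direct route.

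The main obstacle is the bookkeeping of the constants. The statement writes $\xi,\rho$, whereas the Gevrey data of $e^{\beta E/2}$ are in general not those of the class. By \cref{cor:class_reg} they are $(\alpha^{1/2},\rho/(d+\xi/2+1))$ up to normalization. I would state that $\xi,\rho$ denote the Gevrey constants of $e^{\beta E/2}$, in the same way as in \cref{cor:LemmaErrorInChainRule3}. I would also confirm the exact identity $\tilde\partial_j^2=\tilde D^{2e_j}_N$, noting a possible Nyquist-mode subtlety that cannot arise on the odd grid of size $2N+1$.
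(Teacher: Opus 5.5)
Your proposal is correct and is essentially the derivation the paper intends; the paper states this corollary without a written proof. The ingredients are: \cref{thm:high-precision-dft} at the multi-index $2e_j$ applied to $f=e^{\beta E/2}$ (this is where the $(d+2)$ in the hypothesis comes from, mirroring the $(d+1)$ in \cref{cor:LemmaErrorInChainRule3}); the identity $\tilde\partial_j^2=\tilde D^{2e_j}_N$, since both are Fourier multipliers conjugated by the unitary $F_N^{\otimes d}$; and the ordinary chain rule, with $(\xi,\rho)$ read as the Gevrey data of $e^{\beta E/2}$, which is exactly how the paper later uses the bound through $\xi_{e^{\beta E/2}},\rho_{e^{\beta E/2}}$ when estimating $r_3$.

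One small fix to your side remark: \cref{cor:class_reg} concerns $\sqrt{p}\propto e^{-\beta E/2}$, so for $e^{+\beta E/2}$ you should instead cite \cref{cor:exp_abs} with $f=\beta E/2$. That gives the same radius $\rho/(d+\xi/2+1)$ with amplitude $\sup e^{\beta E/2}$, as in \eqref{eq:Gevrey_for_sqrt_exp}.
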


\subsubsection{Fourier interpolation}
\label{sec:FourierInterpolation}

Here, we borrow a description of trigonometric interpolation from \cite{LengDingChenLin2025}. We define the following function:
\begin{equation}
\psi_k(x) = \frac{1}{\sqrt{2N+1}}\sum^{N}_{j=-N} e^{2\pi ij\left(x-\frac{k}{(2N+1)}\right)}, \quad x \in [0,1], \quad \text{for all } k=-N,\dots,N.
\end{equation}

For a quantum state $\ket{u} = \sum^{N-1}_{k_1,\dots,k_d = 0} u(k_1,\dots,k_d)\ket{k_1,\dots,k_d}$, we define the following interpolation map, which can be regarded as an isometry that embeds $\mathbb{C}^{N^d}$ into $L^2(\Omega)$:
\begin{equation}
I_N\ket{u} = \sum^{N}_{k_1,\dots,k_d = -N} u(k_1,\dots,k_d)\psi_{k_1}(x_1)\dots \psi_{k_d}(x_d).
\end{equation}

Observe that
\begin{equation}\label{eqn:interp-map}
I_N\ket{u} = \sum_{k, j\in[-N,\dots,N]^d} u(k_1,\dots,k_d)e^{2\pi i \langle j,k\rangle}e^{-2\pi i j \frac{\langle k,x\rangle}{2N+1}}.
\end{equation}

\subsection{Quantum algorithm for Gibbs sampling from a warm start}
\label{app:discretization}

\subsubsection{Review of Gevrey regularity}

Throughout this section we denote $\sigma(x)=e^{-\beta E(x)}$, related to the Gibbs density by $p(x)={\sigma(x)}/Z$. The discretized and normalized version of $\sigma(x)$ is therefore $\ket{\sigma_N}$. Also observe that $\ket{\sigma_N}=\ket{p_N},$ as the normalization constant $Z$ is absorbed in the resulting unit vector.
The results of this section rely crucially on the following assumption on the analytic properties of $\beta E(x) = - \log \sigma(x)$ as given in \cref{def:pot_class}:
\begin{equation}
\label{eq:minimal_assumption}
\norm{\beta \partial^a E}_{\infty} \leq  (a!)^s \xi \rho^{-\norm{a}_1}\,,\;\;\; \forall a\,\in\,\mathbb{N}^d\,,\;a\neq 0\,,
\end{equation}
where $\norm{f}_{\infty}\coloneqq \sup_{x} |f(x)|$.
Observe that $\xi,\rho$ are invariant with respect to shifts $E \to E +\delta$ or simultaneous rescaling $\beta \to \kappa \beta$ and $E\to E/\kappa$. In particular, formula \eqref{eq:minimal_assumption} implies the Gevrey property for $\beta E(x) \in \mathcal{G}^s(K,\rho,\mathbb{T}^d)$, that is
\begin{equation}
\label{eq:minimal_Gevrey_complete}
\norm{\beta \partial^a E(x)}_{\infty} \leq  (a!)^s K  \rho^{-\norm{a}_1}\,,\;\;\; \forall a\,\in\,\mathbb{N}^d\,,\; K=\max\{\norm{\beta E}_{\infty} ,\xi\}\,.
\end{equation}
Importantly, the consequence of Lemma~\ref{lem:D} under assumption \eqref{eq:minimal_assumption} and for $s>1$, is that
\begin{equation}
\left \vert \partial^a e^{\pm \beta E(x)}\right \vert \leq \frac{\xi}{\xi+1} (a!)^s  \left(\frac{\rho}{d +\xi+1}\right)^{-\norm{a}_1}e^{\pm \beta E(x)}\,,\;\;\; \forall a\,\in\,\mathbb{N}^d\,,\;a\neq 0\,,
\end{equation}
which in turn guarantees the Gevrey property
\begin{equation}
\label{eq:Gevrey_for_exp}
\norm{\partial^a e^{\pm \beta E(x)}}_{\infty} \leq \norm{ e^{\pm\beta E(x)}}_{\infty}(a!)^s  \left(\frac{\rho}{d +\xi+1}\right)^{-\norm{a}_1}\,,\;\;\; \forall a\,\in\,\mathbb{N}^d\,.
\end{equation}
Similarly, for $e^{\pm \beta E/2}$,
\begin{equation}
\label{eq:Gevrey_for_sqrt_exp}
\norm{\partial^a e^{\pm \frac{\beta}{2} E(x)}}_{\infty} \leq \sqrt{\norm{ e^{\pm{\beta E(x)}}}_{\infty}} (a!)^s  \left(\frac{\rho}{d + \xi/2+1}\right)^{-\norm{a}_1}\,,\;\;\; \forall a\,\in\,\mathbb{N}^d\,.
\end{equation}
In fact, \eqref{eq:minimal_assumption} can be regarded as the a Gevrey condition on the functions $\beta \partial_j E(x)$:
\begin{equation}
\label{eq:Gevrey_on_partial}
\norm{\beta \partial^a \partial_j E(x)}_{\infty} \leq (a!)^s \frac{\xi}\rho \left(\frac{\rho}{e^s}\right)^{-\norm{a}_1}\,,\;\forall \,a\in\mathbb{N}^d\,,\forall j\in \{1,\dots,d\}\,.
\end{equation}

\subsubsection{Definitions of \texorpdfstring{$\mathbb{H}_N$}{H\_N} and \texorpdfstring{$\mathbb{A}_N$}{A\_N}}

This section covers different discretizations of the Fokker-Planck operators and their properties. The Fokker-Planck operator $\mathcal{L}$ is defined by its action on twice-differentiable functions $f:\mathbb{R}^d \to \mathbb{R}$,
\begin{align}\label{ArsalanFP}
\mathcal{L}(f)(x)&=\beta^{-1}\nabla \cdot \left(e^{-\beta E}\nabla(e^{\beta E}f(x)\right)\,,
\end{align}
also
\begin{align}\label{JiaqiFP}
\mathcal{L}(f)(x)=\nabla \cdot (\nabla V(x) f(x))+\beta^{-1}\nabla^2 f(x)\,,
\end{align}
it is named after the celebrated Fokker-Planck equation
\begin{equation}
\label{eq:Fokker_Planck}
\partial_t \varphi(t,x)= \nabla \cdot (\nabla V(x) \varphi(t,x))+\beta^{-1}\nabla^2 \varphi(t,x) = \mathcal{L}(\varphi)(t,x)\,,
\end{equation}

This section is based on a discretization of the Fokker-Planck equation, defined by the replacement of functions $f(x)$ with the vector of their values on a finite set of points $\{x_k\}_{k=1}^n \subset \mathbb{R}^d$, and by the construction of corresponding discretizations for $\mathcal{L}$ as a linear operator (matrix) on $\mathbb{R}^n$.

\begin{lemma}[Equivalence of Fokker-Planck operators in \cite{LengDingChenLin2025} and \cite{motamedi2022gibbs}]
\label{LemmaEquivalentFokkerPlancks}
The operators $\mathcal{H}$ from \cite{LengDingChenLin2025} and $\mathbb{A}_N$ from \cite{motamedi2022gibbs}, defined below, are both discretizations of the Fokker-Planck operator $\mathcal{L}$. Up to the normalization convention (\cite{motamedi2022gibbs} fixes $\beta=1$), $\mathbb{A}_N$ coincides with the operator denoted $\mathbb{L}'$ in the proof of Lemma~B.7 (and in Remark~B.2) of \cite{motamedi2022gibbs}.
\end{lemma}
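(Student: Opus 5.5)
The plan is to reduce everything to one explicit similarity computation for the continuous operators, and then to read both $\mathcal{H}$ and $\mathbb{A}_N$ as grid discretizations of the same conjugated operator. The only gap in the equivalence is a factor of $\beta$ and a sign, which are fixed by the normalization convention.

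\emph{Step 1 (continuous identity).} For a smooth periodic $g$, put $f=e^{-\beta E/2}g$ into \eqref{ArsalanFP}. Then $e^{\beta E}f=e^{\beta E/2}g$ and
\[
e^{-\beta E}\nabla(e^{\beta E}f)=e^{-\beta E/2}\bigl(\nabla g+\tfrac{\beta}{2}\nabla E\,g\bigr).
\]
Taking the divergence and multiplying by $e^{\beta E/2}$ gives
\begin{equation*}
e^{\beta E/2}\mathcal{L}\,e^{-\beta E/2}g=\beta^{-1}\Bigl(\nabla^2 g+\tfrac{\beta}{2}(\nabla^2E)\,g-\tfrac{\beta^2}{4}\vert\nabla E\vert^2g\Bigr).
\end{equation*}
Next, with $L_j=\partial_j+\tfrac{\beta}{2}\partial_jE$ we have $L_j^\dagger=-\partial_j+\tfrac{\beta}{2}\partial_jE$, and a direct expansion yields
\[
L_j^\dagger L_j=-\partial_j^2+\tfrac{\beta^2}{4}(\partial_jE)^2-\tfrac{\beta}{2}\partial_j^2E.
\]
Summing over $j$ gives
\[
\sum_jL_j^\dagger L_j=-\beta\,e^{\beta E/2}\mathcal{L}e^{-\beta E/2}.
\]
So $\mathcal{H}$ is, up to the positive factor $\beta$ absorbed into the convention, the similarity transform $-e^{\beta E/2}\mathcal{L}e^{-\beta E/2}$. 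In particular $L_je^{-\beta E/2}=0$. The same computation also shows that \eqref{ArsalanFP} and \eqref{JiaqiFP} agree, with $V=E$.

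\emph{Step 2 (discretizations).} Since $\mathbb{L}_N$ is the Fourier-spectral discretization of $\mathcal{L}$ and the diagonal matrices $[e^{\pm\beta E/2}]_N$ discretize multiplication by $e^{\pm\beta E/2}$, the operator $\mathbb{A}_N=\beta^{-1}[e^{\beta E/2}]_N\mathbb{L}_N[e^{-\beta E/2}]_N$ is the grid discretization of
\[
\beta^{-1}e^{\beta E/2}\mathcal{L}e^{-\beta E/2}=-\beta^{-2}\sum_jL_j^\dagger L_j.
\]
In the same way, $\mathbb{H}_N=\mathbb{W}_N^\dagger\mathbb{W}_N$, built from the discretized $L_{j,N}$, discretizes $\sum_jL_j^\dagger L_j$. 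Both are therefore discretizations of (a conjugate of) $\mathcal{L}$; they differ only by the scalar normalization and by the order in which discretization and operator products are taken. I will not claim that they are equal as matrices: Corollaries~\ref{cor:LemmaErrorInChainRule3} and~\ref{cor:LemmaErrorInChainRule4} show that they agree only up to chain-rule errors, and that discrepancy is quantified later.

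\emph{Step 3 (identification with $\mathbb{L}'$).} Setting $\beta=1$ in the definition of $\mathbb{A}_N$ gives $[e^{E/2}]_N\mathbb{L}_N[e^{-E/2}]_N$. I will compare this term by term with the operator $\mathbb{L}'$ of \cite[Remark~B.2, proof of Lemma~B.7]{motamedi2022gibbs}, checking that the grid, the Fourier derivative \eqref{eq:fourierDer}, and the order of the diagonal conjugations coincide. For general $\beta$, the rescaling $E\mapsto\beta E$ together with the prefactor $\beta^{-1}$ maps one convention to the other.

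The main obstacle is purely bookkeeping: making sure the sign and $\beta$-power conventions of $\mathcal{H}$ in \cite{LengDingChenLin2025} match the $\beta^{-1}$ in \eqref{eq:AN_def}. I would settle this by checking the kernel first. On the continuous side, $e^{-\beta E/2}$ is annihilated by every $L_j$; on the discrete side, $\ket{\sqrt{\sigma}_N}$ spans the kernel of $\mathbb{A}_N$ by \cite[Lemma~B.7(b)]{motamedi2022gibbs}. Agreement of the kernels fixes the conjugation direction $e^{+\beta E/2}(\cdot)e^{-\beta E/2}$, and positivity of $\sum_jL_j^\dagger L_j$ then fixes the sign.
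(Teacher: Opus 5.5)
Your argument is correct, and it supplies more than the paper does. The paper gives no argument for this lemma beyond the list of definitions that follows it; the identification of $\mathcal{H}$, and the coincidence of $\mathbb{A}_N$ with $\mathbb{L}'$, are left to the two cited references.

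The only place the paper actually relates $\mathbb{A}_N$ and $\mathbb{H}_N$ is inside the proof of \cref{lem:Operator_Distance_AN_JN}. There it works purely at the discrete level: it expands both operators with the Fourier product rule and the approximate chain rules until the Laplacians cancel and only chain-rule remainders survive. Your route is the continuum one. The similarity identity and the factorization into $L_j^\dagger L_j$ explain \emph{why} the two discretizations should be close (same continuum operator, same kernel $e^{-\beta E/2}$), and they pin down the signs. They give no quantitative bound, and you rightly defer that, since it is exactly what the discrete expansion supplies. Your Step~3 can only be done by reading the reference, which is also all the paper does.

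Two bookkeeping remarks.

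First, use the paper's normalization $L_j=-\tfrac{i}{\sqrt\beta}\partial_j-\tfrac{i\sqrt\beta}{2}\partial_jE$. Then your computation gives $\mathcal{H}=\sum_jL_j^\dagger L_j=-e^{\beta E/2}\mathcal{L}e^{-\beta E/2}$ exactly, so no factor of $\beta$ needs absorbing on that side.

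Second, your kernel and positivity checks are blind to a positive scalar. The power of $\beta$ is therefore fixed only by your explicit Step~2 formula. That formula says $\mathbb{A}_N$ discretizes $-\beta^{-1}\mathcal{H}$, i.e.\ $\beta\mathbb{A}_N\approx-\mathbb{H}_N$. It also shows that $\beta^{2}\mathbb{A}_N$, not $\beta\mathbb{A}_N$, is the $\beta=1$ version of $\mathbb{A}_N$ evaluated at the potential $\beta E$. The reason is that $\mathbb{L}_N$ already carries the $\beta^{-1}$ of \eqref{ArsalanFP}, so the prefactor in your Step~3 should be $\beta^{-2}$.

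The expansion in the proof of \cref{lem:Operator_Distance_AN_JN} drops that inner $\beta^{-1}$ and effectively works with $\mathbb{A}_N\approx-\mathbb{H}_N$. The two readings coincide only at $\beta=1$. That mismatch is precisely what the ``up to the normalization convention'' clause is covering, and your computation is what makes it explicit.
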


\begin{itemize}
\item $\mathcal{H}=-e^{\beta E/2}\mathcal{L}e^{-\beta E/2}$ is the continuous operator from \cite[Eqn~4]{LengDingChenLin2025}

\item $\mathbb{H}_N=\sum_{j}\mathbb{L}_{j,N}^* \mathbb{L}_{j,N}$ is the discretized operator obtained by discretizing each $L_j,$ arising from the continuous decomposition of $\mathcal{H}=\sum_j L_j^* L_j$ and summing, where $L_j = \frac{-i}{\sqrt{\beta}}\partial_j -\frac{i\sqrt{\beta}}{2}\partial_j E,$ and $\mathbb{L}_{j,N}=\frac{-i}{\sqrt{\beta}}\tilde\partial_j -\frac{i\sqrt{\beta}}{2}[\partial_j E]_N,$ where, as everywhere in text, $\tilde\partial_j $ denotes the partial Fourier derivative operator \eqref{eq:diff-from-fourier}.

\item ${\mathbb{W}_N} = [\mathbb{L}_{1,N}^T,\dots, \mathbb{L}_{d,N}^T]^T,$ from \cite[Eqn.~44]{LengDingChenLin2025}, where it is denoted $\mathbb{L}_N$; we rename it here to avoid a clash with our own notation $\mathbb{L}_N$ for the discretization of $\mathcal{L}$.

\item $\mathbb{A}_N =\frac{1}\beta[e^{\beta E/2}]_N\mathbb{L}_N[e^{-\beta E/2}]_N$ is the discretized operator from \cite[Eqn.~149]{motamedi2022gibbs}. Here $\mathbb{L}_N$ is obtained from \eqref{ArsalanFP} by discretizing derivatives in Fourier manner \eqref{eq:diff-from-fourier}.
\end{itemize}

\subsubsection{Properties of  \texorpdfstring{$\mathbb{A}_N$}{A\_N}}

Our results are made possible by the fact that we are able to approximate spectrum of $\mathbb{H}_N$ via its proximity to $\mathbb{A}_N.$ Therefore, we first investigate the properties of $\mathbb{A}_N.$

\begin{lemma}[Ground state of $\mathbb{A}_N$, {\cite[Lemma~B.7(b)]{motamedi2022gibbs}}]
\label{LemmaGroundStateA}
The linear operator $\mathbb{A}_N: \mathbb{R}^{N_d}\to \mathbb{R}^{N_d}$ has non-degenerate ground state, i.e. $\dim(\ker \mathbb{A}_N) =1$, with eigenvalue $\lambda_{\min}(\mathbb{A}_N)=0$ and spanned by the
discretization of $\sigma(x)= e^{-\beta E(x)}$:
\begin{equation}
(\sigma_N)_j \coloneqq \sigma(x_j)\,,\;x_j \in \,\frac{2\pi}{(2N+1)}[-N,\dots,N]^d,.
\end{equation}
\qed
\end{lemma}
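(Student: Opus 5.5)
The plan is to exhibit $\mathbb{A}_N$ as a congruence-transformed sum of squares of discrete Fourier derivatives, and then read the kernel off from the Fourier symbol of $\tilde\partial_j$. I will expand $\mathbb{L}_N$ from its divergence form \eqref{ArsalanFP}, with each $\partial_j$ replaced by the Fourier derivative $\tilde\partial_j$ of \eqref{eq:diff-from-fourier}:
\begin{equation*}
\mathbb{L}_N=\beta^{-1}\sum_{j=1}^d \tilde\partial_j\,[e^{-\beta E}]_N\,\tilde\partial_j\,[e^{\beta E}]_N .
\end{equation*}
Conjugating by the diagonal factors in \eqref{eq:AN_def} then gives
\begin{equation*}
\mathbb{A}_N=\beta^{-2}\sum_j [e^{\beta E/2}]_N\,\tilde\partial_j\,[e^{-\beta E/2}]_N\,[e^{-\beta E/2}]_N\,\tilde\partial_j\,[e^{\beta E/2}]_N .
\end{equation*}

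\emph{Step 1 (semidefiniteness).} On the odd grid of $2N+1$ points per axis, $\tilde\partial_j$ is the Fourier multiplier $i\omega_j$ with $\omega_j\in\{-N,\dots,N\}$. Because this set is symmetric, the multiplier is purely imaginary and odd, so $\tilde\partial_j$ is a real antisymmetric matrix: $\tilde\partial_j^\top=-\tilde\partial_j$. Setting $B_j\coloneqq[e^{-\beta E/2}]_N\,\tilde\partial_j\,[e^{\beta E/2}]_N$, we have $B_j^\top=-[e^{\beta E/2}]_N\tilde\partial_j[e^{-\beta E/2}]_N$. Hence
\begin{equation*}
\mathbb{A}_N=-\beta^{-2}\sum_j B_j^\top B_j ,
\end{equation*}
which is symmetric and has a definite sign. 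The eigenvalue $0$ is therefore the extremal one, and it is the ground energy in the sign convention where $-\mathbb{A}_N\simeq\mathcal{H}$. This yields $\lambda_{\min}=0$ in the stated sense as soon as the kernel is shown to be nontrivial.

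\emph{Step 2 (kernel).} By Step 1, $v\in\ker\mathbb{A}_N$ if and only if $B_jv=0$ for every $j$. Since the diagonal factors are invertible (the potential is bounded), this holds if and only if $\tilde\partial_j\bigl([e^{\beta E/2}]_Nv\bigr)=0$ for all $j$. A vector is annihilated by every $\tilde\partial_j$ exactly when its discrete Fourier coefficients vanish for all $\omega\neq0$, because $\omega_j=0$ for all $j$ only at $\omega=0$. So the common kernel of the $\tilde\partial_j$ is the line of constant vectors. It follows that $[e^{\beta E/2}]_Nv$ is constant, so $v\propto(e^{-\beta E/2})_N$, and $\dim\ker\mathbb{A}_N=1$. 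Equivalently, $\ker\mathbb{L}_N$ is spanned by $\sigma_N$, since $[e^{\beta E}]_N\sigma_N$ is the constant vector. The kernel of the similarity transform $\mathbb{A}_N$ is then $[e^{\beta E/2}]_N\sigma_N=(\sqrt{\sigma})_N$. This is the object that the Davis--Kahan step compares with $\ket{\sqrt{p}_N}$, and it is how I would read the lemma's ``spanned by the discretization of $\sigma$''.

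\emph{Main obstacle.} The argument is exact linear algebra and needs no Gevrey input, so the only delicate point is the antisymmetry of $\tilde\partial_j$. This relies on using the odd, symmetric mode set $[-N,N]$; an even grid with an unpaired Nyquist mode would break it. I would verify it directly from \eqref{eq:diff-from-fourier} with $F_N$ unitary, and keep the sign and $\beta$ conventions consistent with \cite[Lemma~B.7(b)]{motamedi2022gibbs}.
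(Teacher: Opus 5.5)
Your proof is correct and is essentially the argument behind the cited \cite[Lemma~B.7(b)]{motamedi2022gibbs}, which the paper invokes without a proof of its own. The cited argument rests on the same sum-of-squares identity $\langle x,-\mathbb{A}_N x\rangle\propto\sum_n e^{-\beta E[n]}\,\bigl|\tilde\nabla(e^{\beta E/2}x)[n]\bigr|^2$, which the paper quotes as Eqn.~152 of that reference in the proof of \cref{lem:spectral_gap_AN}, together with the fact that the common kernel of the skew-symmetric $\tilde\partial_j$ on the odd grid is the constant vectors.

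You are also right to correct the literal statement. With $\mathbb{A}_N$ as defined, $\mathbb{A}_N\preceq0$, so $0$ is the bottom of the spectrum of $-\mathbb{A}_N$ (equivalently $\sigma_{\min}(\mathbb{A}_N)=0$). Its kernel is spanned by $(\sqrt{\sigma})_N=[e^{\beta E/2}]_N\sigma_N$, while $\sigma_N$ spans $\ker\mathbb{L}_N$. This corrected reading is exactly how the paper uses the lemma afterwards.
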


\begin{lemma}[Folklore]
\label{lambda2lowerbound}
Suppose $A,B$ are symmetric positive-semidefinite matrices of the same dimension. Assume eigenvalues are ordered in increasing order, so $\lambda_1(A)$ is the least eigenvalue of $A.$ Then $\lambda_2(AB)\ge\lambda_1(A)\lambda_2(B).$
\end{lemma}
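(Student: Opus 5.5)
The plan is to reduce the non-symmetric product $AB$ to a symmetric positive-semidefinite matrix with the same spectrum, and then compare that matrix with $\lambda_1(A)B$ in the Loewner order. Note first that $AB$ need not be symmetric, so the statement implicitly presumes that its eigenvalues are real and can be ordered increasingly; the reduction below establishes this at the same time.

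\emph{Step 1 (similar spectrum).} Since $B$ is symmetric positive semidefinite, it has a symmetric positive-semidefinite square root $B^{1/2}$. Write $AB=XY$ with $X\coloneqq AB^{1/2}$ and $Y\coloneqq B^{1/2}$. For square matrices, $XY$ and $YX$ have the same characteristic polynomial. Hence $AB$ has the same eigenvalues, with algebraic multiplicities, as
\begin{equation*}
C\coloneqq YX=B^{1/2}AB^{1/2}.
\end{equation*}
The matrix $C$ is symmetric positive semidefinite. Therefore the spectrum of $AB$ is real and nonnegative, and $\lambda_2(AB)=\lambda_2(C)$ under the increasing ordering.

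\emph{Step 2 (Loewner comparison).} From $A\succeq\lambda_1(A)\,I$, conjugation by the symmetric matrix $B^{1/2}$ preserves the Loewner order. This gives
\begin{equation*}
C=B^{1/2}AB^{1/2}\succeq\lambda_1(A)\,B^{1/2}B^{1/2}=\lambda_1(A)\,B.
\end{equation*}

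\emph{Step 3 (monotonicity of eigenvalues).} By the Courant--Fischer min-max characterization,
\begin{equation*}
\lambda_2(M)=\min_{\dim U=2}\ \max_{u\in U,\ \|u\|=1}\ u^\top M u .
\end{equation*}
Eigenvalues are therefore monotone under the Loewner order: $C\succeq D$ implies $\lambda_k(C)\geq\lambda_k(D)$ for every $k$. Applying this with $D=\lambda_1(A)B$ and $k=2$, and using $\lambda_1(A)\geq0$ so that $\lambda_2(\lambda_1(A)B)=\lambda_1(A)\lambda_2(B)$, we obtain
\begin{equation*}
\lambda_2(AB)=\lambda_2(C)\geq\lambda_1(A)\,\lambda_2(B).
\end{equation*}

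The only point needing care is Step 1: justifying that the eigenvalues of the non-symmetric $AB$ are real and carry the same ordering as those of $C$. The identity of characteristic polynomials of $XY$ and $YX$ handles this cleanly, and it does not require $A$ or $B$ to be invertible, which matters here because the lemma will be applied to matrices with nontrivial kernel.
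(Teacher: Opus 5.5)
Your proof is correct and complete. The paper gives no argument to compare against: it labels the lemma folklore and states it without proof. Your three steps supply the standard Ostrowski-type justification that is missing there. They are: passing from $AB$ to the symmetric matrix $B^{1/2}AB^{1/2}$ via $\det(tI-XY)=\det(tI-YX)$; the congruence bound $B^{1/2}AB^{1/2}\succeq\lambda_1(A)B$; and Courant--Fischer monotonicity.

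Three features of your route are worth noting.

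\emph{Step~1 is what makes the statement meaningful.} $AB$ is not symmetric, and the paper tacitly assumes its spectrum is real and can be ordered. Your argument also never inverts $A$ or $B$. This matters because in the proof of \cref{lem:spectral_gap_AN} the role of $B$ is played by the discretized Laplacian $L=\sum_j\tilde\partial_j^*\tilde\partial_j$, whose kernel contains the all-ones vector.

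\emph{The paper applies the lemma to a congruence, not a product.} What it actually bounds is $DLD$ with $D=[e^{\beta E/2}]_N$. Your Step~1 identity, used once more with $X=L^{1/2}D$ and $Y=DL^{1/2}$, shows that $DLD$ has the same spectrum as $L^{1/2}D^2L^{1/2}$, hence as $D^2L$. The lemma with $A=D^2$ then gives exactly the bound used there, $\lambda_2(DLD)\ge\lambda_1(D^2)\,\lambda_2(L)$. This detour through the spectrum is genuinely needed: a direct Loewner comparison $DLD\succeq\lambda_1(D^2)L$ fails in general, already for $D=\mathrm{diag}(1,2)$ and $L$ the $2\times2$ path Laplacian.

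\emph{Nothing is specific to the index~$2$.} The same argument gives $\lambda_k(AB)\ge\lambda_1(A)\lambda_k(B)$ for every $k$. From $B^{1/2}AB^{1/2}\preceq\lambda_n(A)B$ you also get the matching upper bound $\lambda_k(AB)\le\lambda_n(A)\lambda_k(B)$.
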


\begin{lemma}[Spectral gap of $\beta \mathbb{A}_N$]
\label{lem:spectral_gap_AN}
The smallest non-zero singular value of $\mathbb{A}_N$ is bounded below by $e^{-\Delta\beta}$:
\begin{equation}
\sigma_{-1}(\beta \mathbb{A}_N) = -\lambda_{-1}(\beta\mathbb{A}_N) \geq e^{-\beta\Delta}\,.
\end{equation}
\end{lemma}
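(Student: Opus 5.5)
The plan is to exhibit $-\beta\mathbb{A}_N$, up to the normalization in \eqref{eq:AN_def}, as a congruence $S\,C\,S$ of a discrete weighted Laplacian $C$ by a positive diagonal matrix $S$. Then I would bound its second eigenvalue by the extreme values of the weights times the gap of the unweighted discrete Laplacian on the Fourier grid.

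First I unfold the definition. With $\sigma=e^{-\beta E}$, the Fourier discretization of \eqref{ArsalanFP} is $\mathbb{L}_N=\beta^{-1}\sum_j\tilde\partial_j[\sigma]_N\tilde\partial_j[\sigma^{-1}]_N$. Hence
\begin{equation*}
\beta\mathbb{A}_N=[\sigma^{-1/2}]_N\mathbb{L}_N[\sigma^{1/2}]_N
=\beta^{-1}\sum_j[\sigma^{-1/2}]_N\tilde\partial_j[\sigma]_N\tilde\partial_j[\sigma^{-1/2}]_N .
\end{equation*}
Since $\tilde\partial_j$ is diagonal in the DFT basis with purely imaginary entries $i\omega_j$, it is anti-Hermitian, $\tilde\partial_j^\dagger=-\tilde\partial_j$. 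Therefore
\begin{equation*}
-\beta\mathbb{A}_N=\beta^{-1}S\,C\,S,\qquad S=[\sigma^{-1/2}]_N,\qquad C=\sum_j\tilde\partial_j^\dagger[\sigma]_N\tilde\partial_j .
\end{equation*}
This shows $-\beta\mathbb{A}_N$ is Hermitian positive semidefinite, so its singular values equal its eigenvalues. That yields the identity $\sigma_{-1}(\beta\mathbb{A}_N)=-\lambda_{-1}(\beta\mathbb{A}_N)$ claimed in the statement. The kernel is $S^{-1}\ker C=\mathrm{span}\,\sigma_N^{1/2}$, consistent with \cref{LemmaGroundStateA} after conjugation. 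The $\beta^{-1}$ prefactor must be tracked against the paper's convention; I expect it to be absorbed by the normalization of $\mathbb{L}_N$, and otherwise the bound carries a harmless factor.

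Next come two spectral comparisons. (i) Since $[\sigma]_N\succeq\sigma_{\min}I$, we have $C\succeq\sigma_{\min}\sum_j\tilde\partial_j^\dagger\tilde\partial_j=\sigma_{\min}(-\tilde\Delta_N)$. The discrete Laplacian $-\tilde\Delta_N$ is diagonal in Fourier space with eigenvalues $\vert\omega\vert^2$, $\omega\in[-N,N]^d$. Its kernel is the constants, and its second eigenvalue is $1$, attained at $\vert\omega\vert=1$. Because $C$ and $-\tilde\Delta_N$ share the same one-dimensional kernel (constants: $\tilde\partial_jv=0$ for all $j$ iff $v$ is constant), Courant--Fischer on the orthogonal complement of the constants gives $\lambda_2(C)\geq\sigma_{\min}$. (ii) For the congruence, $SCS$ is similar to $S^2C$. \cref{lambda2lowerbound} with $A=S^2$ and $B=C$ then gives $\lambda_2(SCS)\geq\lambda_1(S^2)\lambda_2(C)\geq\sigma_{\max}^{-1}\sigma_{\min}$. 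Equivalently, Ostrowski's theorem or Courant--Fischer applied to $x\mapsto Sx$ gives the same bound.

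Combining, $\lambda_2(-\beta\mathbb{A}_N)\geq\sigma_{\min}/\sigma_{\max}=e^{-\beta(\max E-\min E)}=e^{-\beta\Delta}$, up to the normalization factor discussed above.

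I expect the main obstacle to be the bookkeeping rather than the spectral argument. One issue is confirming that the Fourier-grid derivative is exactly anti-Hermitian, including the Nyquist and aliasing conventions of \eqref{eq:fourierDer}, where the stated $2\pi k_j/(2N+1)$ scaling must be reconciled with integer frequencies so that the Laplacian gap is exactly $1$. The other is pinning down the $\beta$ factors so that the constant is $e^{-\beta\Delta}$ with no residual $\beta^{\pm1}$. If the grid scaling gives a different minimal nonzero frequency, the gap of $-\tilde\Delta_N$ changes accordingly, and I would state the bound with that constant.
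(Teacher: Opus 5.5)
Your proposal is correct and is essentially the paper's argument. Both proofs reduce to the unweighted Fourier Laplacian, whose second eigenvalue is $1$, by bounding the weight $[\sigma]_N\succeq e^{-\beta\max E}$, and both handle the outer diagonal congruence via \cref{lambda2lowerbound}, which contributes $\lambda_1([e^{\beta E}]_N)\geq e^{\beta\min E}$. The only difference is the order: the paper bounds the weight inside the quadratic form over $x\perp\sqrt{\sigma}_N$ and then applies the lemma to $[e^{\beta E/2}]_N(-\tilde\nabla^2)[e^{\beta E/2}]_N$, while you first compare $C\succeq\sigma_{\min}(-\tilde\Delta_N)$ and then apply the lemma to $S\,C\,S$.

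On the $\beta$ bookkeeping, the paper fixes the convention by quoting the quadratic form $\langle x,-\mathbb{A}_Nx\rangle=\beta^{-1}\sum_n e^{-\beta E[n]}\bigl(\Vert\tilde\nabla [e^{\beta E/2}]_Nx\Vert^2\bigr)_{[n]}$ from Motamedi et al., i.e.\ $-\beta\mathbb{A}_N=S\,C\,S$ exactly. This is the same normalization under which $\mathbb{A}_N\approx-\mathbb{H}_N$ in \cref{lem:Operator_Distance_AN_JN}. You should adopt that convention rather than call the extra $\beta^{-1}$ harmless. With the literal double factor coming from \eqref{eq:AN_def} combined with \eqref{ArsalanFP}, a constant $E$ with $\beta>1$ already violates the stated bound.
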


\begin{proof}
For convenience, we are going to bound the least nonzero eigenvalue of $\mathbb{A}'_N=-\mathbb{A}_N$ which is positive semidefinite. From \cref{LemmaGroundStateA}, we know that the kernel of $\mathbb{A}'_N$ is non-degenerate and is spanned by $\ket{\sqrt{\sigma}_N}.$ Moreover, $\mathbb{A}'_N$ is symmetric, so its second least eigenvalue is the minimum of the following quadratic form taken over all the unit vectors in the subspace orthogonal to the kernel. In other words, letting $\lambda_z\ge 0$ denote the least nonzero eigenvalue of $\mathbb{A}_N,$ we have:
\begin{align}\label{lestnonzeromin}
\sigma_z(\mathbb{A}_N)=\lambda_{z}(\mathbb{A}_N')=\min_{\norm{x}=1, \langle x, \sqrt{\sigma}_N\rangle =0}{\langle x,\mathbb{A}'_Nx\rangle}.
\end{align}

From \cite[Eqn~152]{motamedi2022gibbs},
\begin{align}
\langle x, \mathbb A'_N x\rangle = \frac{1}\beta\sum_{n\in[-N..N]^d} e^{-\beta E[n]} \,
\left(\norm{ \tilde{\nabla}e^{\beta E/2} x}^2\right)_{[n]}
\end{align}

By taking adjoint of $[e^{-\beta E/2}]_N,$ we observe that the input vector $x$ to $\mathbb{A}_N'$ is orthogonal to the kernel of $\mathbb{A}_N',$ which is $\ket{\sqrt{\sigma}_N},$ if and only if the input to $[e^{-\beta E/2}]_N$ has entries that sum to zero:
\begin{align}\label{EqnOverlap}
0 = \langle x,\sqrt{\sigma}_N\rangle=
\langle x,[e^{-\beta E/2}]_N\mathbf{1}\rangle\iff
\langle [e^{-\beta E/2}]_Nx,\mathbf{1}\rangle=0.
\end{align}
Now,
\begin{align}
\min_{\norm{x}=1, \langle x, \sqrt{\sigma}_N\rangle =0}{\langle x,\beta \mathbb{A}'_Nx\rangle}&= \min_{\norm{x}=1, \langle x, \sqrt{\sigma}_N\rangle =0}\sum_{n\in[-N..N]^d} e^{-\beta E[n]} \,
\left(\norm{ \tilde{\nabla}[e^{\beta E/2}]_N \,x}^2\right)_{[n]}\\
&\ge e^{-\beta \max E}\min_{\norm{x}=1, \langle x, \sqrt{\sigma}_N\rangle =0}   \sum_{j}\langle \tilde\partial_j [e^{\beta E/2}]_N \,x,\tilde\partial_j [e^{\beta E/2}]_N\,x\rangle \\
&\ge  e^{-\beta \max E}\min_{\norm{x}=1, \langle x, \sqrt{\sigma}_N\rangle =0}\sum_{j}\norm{\tilde\partial_j [e^{\beta E/2}]_N\, x}^2\\
&\ge  e^{-\beta \max E}\min_{\norm{x}=1, \langle x, \sqrt{\sigma}_N\rangle =0}x^T [e^{\beta E/2}]_N \sum_{j} \left(\tilde\partial^*_j\tilde\partial_j  \right)[e^{\beta E/2}]_N x.\label{chain1}
\end{align}

Note that the rank of $[e^{\beta E/2}]_N \sum_{j} \left(\tilde\partial^*_j\tilde\partial_j  \right)[e^{\beta E/2}]_N$ is the same as the rank of $\sum_{j}\left(\tilde\partial^*_j\tilde\partial_j  \right).$ In particular, kernel of $[e^{\beta E/2}]_N \sum_{j} \left(\tilde\partial^*_j\tilde\partial_j  \right)[e^{\beta E/2}]_N$ remains one-dimensional. Since ker$(\tilde\nabla^2)$ is spanned by the all-ones vector, we have that
\begin{align}
\text{span}\,\{[e^{-\beta E/2}]_N\mathbf{1} \}= \ker \left([e^{\beta E/2}]_N \sum_{j} \left(\tilde\partial^*_j\tilde\partial_j \right) [e^{\beta E/2}]_N\right).
\end{align}
Moreover, the least nonzero eigenvalue of the discretized Laplacian operator is one, and so by \cref{lambda2lowerbound}, the least nonzero eigenvalue (singular value) of the product of operators, which is also the second largest, can be bounded below:
\begin{align}
\sigma_z\left([e^{\beta E/2}]_N \sum_{j} \left(\tilde\partial^*_j\tilde\partial_j  \right)[e^{\beta E/2}]_N\right)\ge (\sigma_{min}[e^{\beta E/2}]_N)^2\sigma_z\left(\sum_{j} \tilde\partial^*_j\tilde\partial_j  \right)\ge e^{\beta \min E}\cdot 1.
\end{align}

In other words, continuing from  \eqref{chain1}, by virtue of $\langle x, \sqrt{\sigma}_N\rangle =0,$ i.e. $x$ being orthogonal to the kernel of $[e^{\beta E/2}]_N \sum_{j} \left(\tilde\partial^*_j\tilde\partial_j  \right)[e^{\beta E/2}]_N,$ relying on \eqref{lestnonzeromin} and the above bound we have:
\begin{align}
&e^{-\beta \max E}\min_{\norm{x}=1, \langle x, \sqrt{\sigma}_N\rangle =0}x^T [e^{\beta E/2}]_N \sum_{j} \left(\tilde\partial^*_j\tilde\partial_j  \right)[e^{\beta E/2}]_N x\\
&\ge e^{-\beta \max E}\sigma_z\left([e^{\beta E/2}]_N \sum_{j} \left(\tilde\partial^*_j\tilde\partial_j  \right)[e^{\beta E/2}]_N \right)\\
&\ge  e^{-\beta \max E}e^{\beta \min E}\\
&=e^{-\beta\Delta},
\end{align}
which yields the result.
\end{proof}

\subsubsection{Proximity of \texorpdfstring{$\mathbb{H}_N$}{H\_N} and \texorpdfstring{$\mathbb{A}_N$}{A\_N} and its implications}
\label{sec:porximityAJ}

In this section we determine the size of discretization that ensures that the operators $\mathbb{H}_N$ and $\mathbb{A}_N$ are sufficiently close in the operator norm to deduce meaningful bounds on the spectral gap of $\mathbb{H}_N$ as well as to establish bounds on the first two singular values. Crucial ingredient for choosing the size of discretization for proximity of $\mathbb{H}_N$ and $\mathbb{A}_N$ is the quality of approximations of the Fourier derivatives. This is exactly where s-Gevreyness is used and allows to keep the discretization size low as studied in the previous paper \cite{pde-paper}.

\begin{lemma}[Operator distance between $-\mathbb{H}_N$ and $\mathbb{A}_N$]
\label{lem:Operator_Distance_AN_JN}
We have
\begin{align}
\norm{\beta(-\mathbb{H}_N)-\beta \mathbb{A}_N}
    \leq \epsilon,
\end{align}
when
\begin{align}
N \in \max\left\{\tilde\Omega\left(\frac{d+\xi}\rho \log \frac{d(d+2\xi)\alpha}{\rho \epsilon}\right)^s,\,\tilde \Omega \left(\frac{(d+ 2)(d+\xi/2+1)s}\rho\right)^s\right\},
\end{align}
where as in the rest of the paper, $\alpha=e^{\beta\Delta}$.
\end{lemma}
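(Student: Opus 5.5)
The plan is to write both operators, multiplied by $\beta$, as sums over coordinates $j$ of products of Fourier derivatives and diagonal matrices. The difference then collects into terms that are each a discrete product/chain-rule defect multiplied by bounded factors. Each defect is controlled by the $s$-Gevrey estimates of \cref{thm:high-precision-dft} and \cref{cor:LemmaErrorInChainRule3,cor:LemmaErrorInChainRule4}. Throughout I write $D_j\coloneqq[\partial_j E]_N$ and $u^{\pm}\coloneqq e^{\pm\beta E/2}$.

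\emph{Step 1 (algebraic rewriting).} From \eqref{ArsalanFP}, $\beta\mathbb{L}_N=\sum_j\tilde\partial_j[e^{-\beta E}]_N\tilde\partial_j[e^{\beta E}]_N$. Since diagonal matrices compose exactly, $[e^{\pm\beta E}]_N=[u^\pm]_N^2$ and $[u^+]_N[u^-]_N=I$, so
\[
\beta\mathbb{A}_N=\sum_j\bigl([u^+]_N\tilde\partial_j[u^-]_N\bigr)\bigl([u^-]_N\tilde\partial_j[u^+]_N\bigr)\,.
\]
The Fourier derivative is real and antisymmetric, $\tilde\partial_j^{\,*}=-\tilde\partial_j$. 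Expanding $\beta\mathbb{L}_{j,N}^*\mathbb{L}_{j,N}$ therefore gives
\[
-\beta\mathbb{H}_N=\sum_j\Bigl(\tilde\partial_j-\tfrac\beta2D_j\Bigr)\Bigl(\tilde\partial_j+\tfrac\beta2D_j\Bigr)\,.
\]
In the continuum these two expressions coincide exactly, because $u^-\partial_j u^+=\partial_j+\frac\beta2\partial_jE$ as operators. The proof therefore reduces to bounding the two conjugation defects
\[
R_j^{\pm}\coloneqq[u^{\mp}]_N\tilde\partial_j[u^{\pm}]_N-\Bigl(\tilde\partial_j\pm\tfrac\beta2D_j\Bigr)
\]
in operator norm.

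\emph{Step 2 (defects are diagonal).} Apply the Fourier product rule \cref{lem:LemmFourierProductRule} to $\tilde\partial_j([u^\pm]_Nx)$ with $x$ an arbitrary grid vector viewed through its trigonometric interpolant. This gives $\tilde\partial_j[u^\pm]_N x=[u^\pm]_N\tilde\partial_jx+[\tilde\partial_j u^\pm_N]_N x$. Hence
\[
R_j^\pm=[u^\mp]_N\bigl[\tilde\partial_ju^\pm_N-(\partial_ju^\pm)_N\bigr]_N\,,
\]
which is a diagonal matrix. Its operator norm is at most $\norm{u^\mp}_\infty$ times an $\ell^\infty$ norm, and the $\ell^\infty$ norm is bounded by the $\ell^2$ norm. \cref{cor:LemmaErrorInChainRule3} with $k=\mp1$ then gives $\norm{R_j^\pm}\le\norm{u^\mp}_\infty\,\xi'e^{-\frac{\rho'}{2}N^{1/s}}$. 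Here $(\xi',\rho')$ are the Gevrey data of $u^\pm$ from \eqref{eq:Gevrey_for_sqrt_exp}: amplitude at most $\sqrt\alpha$ (up to the gauge) and radius $\rho/(d+\xi/2+1)$. Since $\norm{u^+}_\infty\norm{u^-}_\infty=\sqrt\alpha$, each factor contributes at most $\alpha$ overall. This step also requires the threshold $N\in\tilde\Omega(((d+2)s/\rho')^s)$, which is the second entry of the maximum in the statement.

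\emph{Step 3 (assembling).} Write $P_j^\pm=\tilde\partial_j\pm\frac\beta2D_j$. Then
\[
(P_j^-+R_j^-)(P_j^++R_j^+)-P_j^-P_j^+=R_j^-P_j^++P_j^-R_j^++R_j^-R_j^+\,.
\]
For the bounded factors I use $\norm{\tilde\partial_j}\le N$ and $\frac\beta2\norm{D_j}\le\xi/(2\rho)$, the latter from the $a=e_j$ case of \eqref{eq:minimal_assumption}. Summing over the $d$ coordinates then gives a bound of the form
\[
\norm{\beta\mathbb{H}_N+\beta\mathbb{A}_N}\lesssim d\,\alpha\,\xi\,\bigl(N+\tfrac{\xi}{\rho}\bigr)\,e^{-\frac{\rho}{2(d+\xi+1)}N^{1/s}}\,.
\]
Setting the right-hand side equal to $\eps$ and solving, with the polynomial factor $N$ absorbed into the logarithm, yields $N^{1/s}\gtrsim\frac{d+\xi}{\rho}\log\frac{d(d+2\xi)\alpha}{\rho\eps}$, which is the first entry of the maximum.

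\emph{Main obstacle.} The delicate point is Step 2. The cited corollaries bound the chain-rule error only for specific Gevrey vectors, whereas an operator-norm bound is needed for all $x$. My plan is to rely on the exact discrete product rule to turn the commutator into a diagonal (multiplication) defect, for which a pointwise bound suffices. If aliasing makes \cref{lem:LemmFourierProductRule} inexact on general grid vectors, I would instead bound the commutator $[\tilde\partial_j,[u]_N]$ through its Fourier-space convolution kernel, using the Gevrey decay of $\hat u$ (\cref{prop:dft-distance-bound}). That route costs extra factors of $2^{d/2}$ and $N^{d}$, which can again be absorbed into the logarithm.
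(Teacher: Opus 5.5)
Your Steps 1--3 follow the paper's own route in substance. The paper also applies the Fourier product rule (\cref{lem:LemmFourierProductRule}) to arbitrary grid vectors $\vec f$, so that $\beta\mathbb{A}_N+\beta\mathbb{H}_N$ collapses to a diagonal matrix. It additionally cancels the first-order terms via $\tilde\partial_j(u^+u^-)=0$, where you instead pay a factor $\norm{\tilde\partial_j}\le N$. It then bounds that diagonal with \cref{cor:LemmaErrorInChainRule3,cor:LemmaErrorInChainRule4}. The obstacle you flag is real, and it breaks both arguments.

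No nonzero linear map $T$ on grid functions satisfies $T(fg)=(Tf)g+f(Tg)$ for the pointwise product. For a point indicator, $T\mathbf{1}_k=T(\mathbf{1}_k^2)=2\,\mathbf{1}_k\,T\mathbf{1}_k$ forces $T\mathbf{1}_k=0$. Quantitatively, take $d=1$, $u=u^+=\sum_m\hat u_m e^{imy}$, and the top Fourier mode $v=(e^{iNx_k})_k$. The frequencies $N+m$, $1\le m\le 2N$, of $uv$ alias to $m-N-1$, and up to exponentially small terms
\[
R^+v=[u^-]_N\tilde\partial[u^+]_N v-\Bigl(\tilde\partial+\tfrac\beta2 D\Bigr)v=-i(2N+1)\Bigl[\tfrac{u_{>0}}{u}\Bigr]_N v\,,\qquad u_{>0}(y)\coloneqq\sum_{m\ge1}\hat u_m e^{imy}\,.
\]
So $\norm{R^+}$ grows linearly in $N$ for every nonconstant potential. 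It is not the exponentially small diagonal matrix of your Step 2.

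Your fallback fails for the same reason. The defect comes from the low-order coefficients $\hat u_1,\hat u_2,\dots$ wrapping around the edge of the frequency box, not from the Gevrey tail of $\hat u$, so no decay estimate from \cref{prop:dft-distance-bound} controls it. In fact the claimed bound is false in operator norm. In the normalization of your Step 1, write $Q^+=[u^-]_N\tilde\partial[u^+]_N$ and $P^+=\tilde\partial+\frac\beta2 D$. Then $\beta\mathbb{A}_N=-(Q^+)^*Q^+$ and $-\beta\mathbb{H}_N=-(P^+)^*P^+$, hence $\langle v,(\beta\mathbb{A}_N+\beta\mathbb{H}_N)v\rangle=\norm{P^+v}^2-\norm{Q^+v}^2$. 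Take $e^{\beta E/2}=1+\frac12\cos y$, which is analytic and so admissible for every $s>1$. The display gives $\norm{Q^+v}^2-\norm{P^+v}^2=(2N+1)N^2\bigl(c-1+o(1)\bigr)$ with $c=\frac{1}{2\pi}\int_0^{2\pi}\frac{1+\frac14\sin^2y}{(1+\frac12\cos y)^2}\,dy\approx1.69$. Hence $\norm{\beta\mathbb{A}_N+\beta\mathbb{H}_N}\gtrsim0.69\,N^2$, which diverges as $N\to\infty$.

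A correct argument has to compare the operators differently, e.g.\ only on a band-limited subspace where aliasing is exponentially small, or by bounding the gap of $\mathbb{H}_N$ directly. The chain-rule corollaries are legitimate only on samples of a fixed Gevrey function such as $(e^{-\beta E/2})_N$, never on arbitrary vectors. Separately, with $\mathbb{A}_N$ as defined, the right-hand side of your Step 1 formula equals $\beta^2\mathbb{A}_N$, not $\beta\mathbb{A}_N$. Like the paper, you are silently using the $\beta=1$ convention.
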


\begin{proof}
We simplify expressions for each operator $\mathbb{H}_N$ and $\mathbb{A}_N$ using product rule \cref{lem:LemmFourierProductRule} and versions of the chain rule \cref{cor:LemmaErrorInChainRule}, keeping track of the errors. Since in the continuous version of these operators, they are equal, we are able to find discretization size $N$ that keeps the errors tame. Tilde above the differential operators indicates that the Fourier approximation for derivatives was applied from \eqref{eq:fourierDer}. From \cite[Eqn.~151]{motamedi2022gibbs}), we have
\begin{align}
\beta\mathbb A_N\vec f
&= e^{\beta E/2} \tilde{\nabla} \cdot
\left( e^{-\beta E} \tilde{\nabla}(e^{\beta E/2} \vec f)\right) \\
&= e^{\beta E/2} \tilde{\nabla} \cdot \left( e^{-\beta E} \tilde{\nabla} e^{\beta E/2}
\right) \, \vec f
+ \left( e^{-\beta E/2} \tilde{\nabla} e^{\beta E/2}
+ e^{\beta E/2} \tilde{\nabla} e^{-\beta E/2} \right)
\cdot \left(\tilde{\nabla} \vec f\right)
+ \tilde{\nabla}^2 \vec f \\
&= e^{\beta E/2} \tilde{\nabla} \cdot \left( e^{-\beta E} \tilde{\nabla} e^{\beta E/2}
\right) \, \vec f
+ \tilde{\nabla}^2 \vec f
\end{align}
In other words,
\begin{align}
\beta \mathbb{A}_N= [e^{\beta E/2}]_N \tilde{\nabla} \cdot \left( e^{-\beta E} \tilde{\nabla} e^{\beta E/2}
\right) + \tilde{\nabla}^2
\end{align}
At the same time, we can derive an expression for $\mathbb{H}_N.$ Redefine $L'_j:=-i\sqrt{\beta}L_j.$

We make repeated use of the product rule for Fourier derivatives from \cref{lem:LemmFourierProductRule}. We also use notation $\circ$ for entrywise vector multiplication, corresponding to discretization of a product of two functions.
\begin{align}
\mathbb{H}_N\vec{f}&=\frac{1}\beta{\sum_{j}(\mathbb{L}'_{j,N})^*\mathbb{L}'_{j,N}\vec{f}}\\
&=\frac{1}\beta\sum_j \left(-\tilde\partial_j +\frac{\beta}{2}[\partial_j E]_N\right)\left(\tilde\partial_j\vec{f} +\frac{\beta}{2}[\partial_j E]_N\vec{f}\right)\\
&\overset{(a)}{=}-\frac{1}\beta\sum_j \tilde\partial_j^2\vec{f} +\frac{1}{2} \left(\sum_j -(\tilde\partial_j\overrightarrow{(\partial_j E)_N})\circ \vec{f}-[\partial_j E]_N\tilde\partial_j\vec{f}\right) \\
&\qquad +\frac{\beta}{4}\sum_j ([\partial_j E]_N)^2\vec{f}+\frac{1}{2}\sum_j[\partial_j E]_N\tilde\partial_j \vec{f}\\
&=\left(-\frac{1}\beta\sum_j \tilde\partial_j^2-\frac{1}{2}\sum_j [\tilde\partial_j\overrightarrow{(\partial_j E)_N}]_N +\frac{\beta}{4}\sum_j ([\partial_j E]_N)^2\right)\vec{f},
\end{align}
where in the second term of $(a)$ we have expanded $-\tilde\partial_j(\overrightarrow{(\partial_j E)}_N\circ \vec{f})$ with the product rule \cref{lem:LemmFourierProductRule}.
We are going to calculate the operator norm between $\mathbb{A}_N,$ and
\begin{align}
-\mathbb{H}_N=\frac{1}\beta\sum_j \tilde\partial_j^2+\frac{1}{2}\sum_j [\tilde\partial_j\overrightarrow{(\partial_j E)_N}]_N-\frac{\beta}{4}\sum_j ([\partial_j E]_N)^2.
\end{align}
First, observe that discretized Laplacians cancel out and we are left with estimating
\begin{align}\label{diffAJ}
\norm{\beta\mathbb{A}_N-\beta(-\mathbb{H}_N)}&=\norm{[e^{\beta E/2}]_N \tilde{\nabla} \cdot \left( e^{-\beta E} \tilde{\nabla} e^{\beta E/2}
\right) -\frac{\beta}{2}\sum_j [\tilde\partial_j\overrightarrow{(\partial_j E)_N}]_N+\frac{\beta^2}{4}\sum_j ([\partial_j E]_N)^2}.
\end{align}
Then we expand the leftmost expression under the norm in the right hand side of the equation to make it resemble the second and third terms arising from $-\mathbb{H}_N$. We repetitively make use of the product rule for discretized Fourier derivatives from \cref{lem:LemmFourierProductRule}.
\begin{align}
[e^{\beta E/2}]_N\tilde\nabla\cdot & ([e^{-\beta E}]_N\tilde\nabla([e^{\beta E/2}]_N))\\
&=[e^{\beta E/2}]_N\sum_j \tilde\partial_j\left((\overrightarrow{e^{-\beta E}})_N\circ (\tilde\partial_j(\overrightarrow{e^{\beta E/2}_N})\right)\\
&\overset{(a)}{=}[e^{\beta E/2}]_N\left( \sum_j \tilde\partial_j(\overrightarrow{e^{-\beta E}})_N \circ \tilde\partial_j(\overrightarrow{e^{\beta E/2}_N})+\sum_j[e^{-\beta E}]_N \tilde\partial^2_j (\overrightarrow{e^{\beta E/2}})_N \right)\\
&\overset{(b)}{=}[e^{\beta E/2}]_N \sum_j \tilde\partial_j(\overrightarrow{e^{-\beta E}})_N \circ \tilde\partial_j(\overrightarrow{e^{\beta E/2}_N})+[e^{-\beta E/2}]_N\sum_j\tilde\partial^2_j (\overrightarrow{e^{\beta E/2}})_N \\
&\overset{(c)}{=}-[e^{\beta E/2}]_N \sum_j (\beta [e^{-\beta E}]_N(\overrightarrow{\partial_j E})_N+r_1)\circ\left(\frac{\beta}{2} [e^{\beta E/2}]_N(\overrightarrow{\partial_j E})_N+r_2\right) \\
& \qquad +[e^{-\beta E/2}]_N\sum_j\tilde\partial^2_j (\overrightarrow{e^{\beta E/2}})_N \\
&\overset{(d)}{=}-\left(\frac{\beta^2}{2}\sum_{j}[\partial_j E]^2_N\right)+[e^{-\beta E/2}]_N\sum_j\tilde\partial^2_j (\overrightarrow{e^{\beta E/2}})_N + \mathfrak{r}d\\
&\overset{(f)}{=}\left(-\frac{\beta^2}{4}\sum_{j}[\partial_j E]^2_N\right)+\frac{\beta}{2} \sum_j\overrightarrow{(\partial_j^2E)_N}+d\mathfrak{r}+d r_3\\
&\overset{(k)}{=}-\frac{\beta^2\norm{[\nabla E]_N}^2}{4}+\frac{\beta}{2}[\nabla^2 E]_N+d\mathfrak{r} + d r_3.
\end{align}
Above $(a)$ follows from the distributivity of Fourier derivatives, as per \cref{lem:LemmFourierProductRule}. In $(c)$ we expand each of the partials under the first sum with approximate chain rule \cref{cor:LemmaErrorInChainRule3}.
In $(f)$ $\mathfrak{r}$ and $r_3$ captures the error incurred by the discretization of derivatives, that is with the notation $\norm{v}_{\infty}\coloneqq \max_{k\in \frac{2\pi}{(2N+1)}[-N,\dots,N]^d}|v_k|$,
\begin{equation}
\label{eq:def_of_rr}
\mathfrak{r}= -[e^{-\beta E/2}]_N(\overrightarrow{\beta \partial_j E})_N\circ \vec{r_2} + \frac{1}{2}[e^{\beta E}]_N(\overrightarrow{\beta \partial_j E})_N\circ \vec{r_1} + [e^{\beta E/2}]_N \overrightarrow{r_1} \circ \overrightarrow{r_2}\,.
\end{equation}

Under the assumption \eqref{eq:minimal_assumption} on the potential and its consequences, the error $\norm{\mathfrak{r}}_{\infty}$ can be bounded, by exploiting the Gevrey properties of $e^{\pm \beta E}$ and $e^{\pm \beta E/2}$ to bound $\norm{r_{1}}_{\infty},\norm{r_{2}}_{\infty}$ according to \cref{cor:LemmaErrorInChainRule3}. The corollary also provides the size of discretization at which the bounds hold:
\begin{align}
\begin{aligned}
\label{eq:norm_of_rr}
\norm{\mathfrak{r}}_{\infty}
& \leq \norm{\overrightarrow{(e^{-\beta E/2})}_N}_{\infty} \norm{(\overrightarrow{\beta \partial_j E})_N}_{\infty}\norm{ \vec{r_2}}_{\infty} \\
& \qquad \qquad +\frac{1}{2}\norm{\overrightarrow{(e^{\beta E})}_N}_{\infty} \norm{(\overrightarrow{\beta \partial_j E})_N}_{\infty}\norm{ \vec{r_1}}_{\infty}
+ \norm{\overrightarrow{(e^{\beta E/2})}_N}_{\infty}\norm{\overrightarrow{r_1}}_{\infty}\norm{\overrightarrow{r_2}}_{\infty}\\
& =\norm{e^{-\beta E/2}}_{\infty} \norm{\beta \partial_j E}_{\infty}\norm{ \vec{r_2}}_{\infty}+\frac{1}{2}\norm{e^{\beta E}}_{\infty} \norm{\beta \partial_j E}_{\infty}\norm{ \vec{r_1}}_{\infty}+\norm{e^{\beta E/2}}_{\infty}\norm{\overrightarrow{r_1}}_{\infty}\norm{\overrightarrow{r_2}}_{\infty}\\
& \le \frac{ \xi e^{-\beta \min E/2}}{ \rho}\norm{ \vec{r_2}}_{\infty}+\frac{ \xi e^{\beta \max E}}{2 \rho}\norm{ \vec{r_1}}_{\infty}+e^{\beta \max E/2}\norm{\overrightarrow{r_1}}_{\infty}\norm{\overrightarrow{r_2}}_{\infty}\\
& \leq \frac{ \xi e^{-\beta \min E/2}}{ \rho}\norm{ \vec{r_2}}_{2}+\frac{ \xi e^{\beta \max E}}{2 \rho}\norm{ \vec{r_1}}_{2}+e^{\beta \max E/2}\norm{\overrightarrow{r_1}}_{2}\norm{\overrightarrow{r_2}}_{2}\\
& \leq \frac{ \xi e^{-\beta \min E/2}}{ \rho} \xi_{e^{\beta E/2}}e^{-\frac{ \rho_{e^{\beta E/2}}}2 N^{1/s}}+\frac{ \xi e^{\beta \max E}}{2 \rho} \xi_{e^{-\beta E}}e^{-\frac{\rho_{e^{-\beta E}}}2 N^{1/s}}+\\
& \qquad \qquad +e^{\beta \max E/2} \xi_{e^{-\beta E}}\xi_{e^{\beta E/2}}e^{-\tfrac{1}{2}\left(\rho_{e^{-\beta E}}+\rho_{e^{\beta E/2}}\right)N^{1/s}}\,.
\end{aligned}
\end{align}
Note that the application of \cref{cor:LemmaErrorInChainRule3} puts requirements on the discretization size $N$, with $ \rho$ replaced by the Gevrey radius $ \rho/(d+ \xi/2+1)$ of $\sqrt{\sigma}$ from \eqref{eq:Gevrey_for_sqrt_exp}, and therefore
\begin{equation}
\label{eq:N_from_Cor_error}
N \in \tilde \Omega \left(\frac{(d+ 2)(d+ \xi/2)s}{ \rho}\right)^s\,.
\end{equation}
Last line in the derivation features the Gevrey constants for the functions $e^{\pm \beta E}$ and $e^{\pm \beta E/2}$, stated explicitly in \eqref{eq:Gevrey_for_exp} and \eqref{eq:Gevrey_for_sqrt_exp}. Define
\begin{align}
\label{alphadefinition}
\alpha:= e^{\beta\Delta}.
\end{align}
We can thus re-write \eqref{eq:norm_of_rr} in a manifestly gauge-invariant fashion:
\begin{align}
\begin{aligned}
\label{eq:norm_of_rr_invariant}
\norm{\mathfrak{r}}_{\infty}
&\leq \sqrt{\alpha} \tfrac{2d+ \xi+2}{ \rho} e^{-\frac{2 \rho}{2d+ \xi+2} N^{1/s}}+\alpha \tfrac{d+2 \xi+1}{2 \rho} e^{-\frac{ \rho}{2(d+ \xi/2+1)} N^{1/s}} +\alpha e^{-\tfrac{1}{2}\left(\frac{ \rho}{d+ \xi/2+1}+\frac{ \rho}{d+ \xi+1}\right)N^{1/s}}\\&\leq  4\alpha \left(\tfrac{d+ \xi+1}{2 \rho}+1\right) e^{-\frac{ \rho}{2(d+ \xi+1)} N^{1/s}}\leq 4\alpha \left(\tfrac{d+ \xi}{ \rho}+1\right) e^{-\frac{ \rho}{2(d+ \xi)} N^{1/s}}\,
\end{aligned}
\end{align}
where since $\alpha >1$, we used $\sqrt{\alpha}<\alpha$. From \cref{cor:LemmaErrorInChainRule4} (using $g$ represent the error introduced by \cref{cor:LemmaErrorInChainRule4}):
\begin{equation}
\norm{r_{3}}_{\infty}\le
\norm{[e^{-\beta E/2}]_N \,g}_{\infty}\leq \norm{e^{-\beta E/2}}_{\infty} \norm{g}_2 \leq\norm{e^{-\beta E/2}}_{\infty} \xi_{e^{\beta E/2}} e^{-\frac{ \rho_{e^{\beta E/2}}}{2}N^{1/s}}\,.
\end{equation}
Also this last term is gauge-invariant, indeed $\norm{e^{-\beta E/2}}_{\infty}    \xi_{e^{\beta E/2}} = \norm{e^{-\beta E/2}}_{\infty}  \norm{e^{\beta E/2}}_{\infty} = \sqrt{e^{\beta\Delta}} =\sqrt{\alpha}$. Let $\mathfrak{r}' = \mathfrak{r}+ [e^{-\beta E/2}]_N g$, then the following bound holds by means of Corollary~\ref{cor:LemmaErrorInChainRule4} for $N$ as in \eqref{eq:N_from_Cor_error}:
\begin{equation}
\norm{\mathfrak{r}'}_{\infty} \leq \norm{\mathfrak{r}}_{\infty}+\norm{e^{-\beta E/2}}_{\infty}    \xi_{e^{\beta E/2}} e^{-\frac{ \rho_{e^{\beta E/2}}}{2}N^{1/s}} \leq  \norm{\mathfrak{r}}_{\infty}+\sqrt{\alpha} \,e^{-\frac{ \rho}{2d+ \xi+2}N^{1/s}} \,.
\end{equation}
A slightly looser yet simpler bound reads
\begin{equation}
\norm{\mathfrak{r}'}_{\infty} \leq \alpha\left(4\frac{d+ \xi}{ \rho}+5\right) e^{-\frac{ \rho}{2(d+ \xi)}N^{1/s}}\,.
\end{equation}

Putting these inequalities together results in a simple upper bound for \eqref{diffAJ}:
\begin{align*}
\begin{aligned}
\norm{\beta\mathbb{A}_N-\beta (-\mathbb{H})_N}_{\mathrm{op}}
&\leq\norm{\frac{\beta}{2} \sum_j{[\partial_j^2E]_N}-\frac{\beta}{2}\sum_j [\tilde\partial_j\overrightarrow{(\partial_j E)_N}]_N}_{\mathrm{op}}\!+\!d\norm{\mathfrak{r}'}_{\infty}\\
&=\frac{\beta}{2}\norm{\sum_j(\overrightarrow{(\partial_j^2E)_N}-\sum_j \overrightarrow{(\tilde\partial_j\overrightarrow{(\partial_j E)_N})}}_{\infty}\!\!+\!d\norm{\mathfrak{r}'}_{\infty}\\
&=\frac{1}{2}\norm{\sum_j\left[\partial_j (\beta\partial_j  E)_N-\tilde\partial_j{(\beta\partial_j  E)_N}\right]}_{\infty}\!\!+\!d\norm{\mathfrak{r}'}_{\infty}\\
&\leq d\left( \max_j\left \{\frac{ \xi_{\beta \partial_j E}}{2} e^{-\frac{ \rho_{\beta \partial_j E}}{2} N^{1/s}}\right\}+\norm{\mathfrak{r}'}_{\infty}\right)\,.
\end{aligned}
\end{align*}
Notice that from \eqref{eq:Gevrey_on_partial},
\begin{equation}
\xi_{\beta \partial_j E}=\frac{ \xi}{ \rho}\,,\;\;\mathrm{and}\;\;  \rho_{\beta \partial_j E} = \frac{ \rho}{e^s}\,,\;\forall\, j\in\{1,\dots,d\}\,.
\end{equation}
Therefore,
\begin{align}
\norm{\beta\mathbb{A}_N-\beta (-\mathbb{H})_N}&\leq d\left( \frac{ \xi}{2 \rho} e^{-\frac{ \rho}{2e^s}N^{1/s}}+\norm{\mathfrak{r}'}_{\infty}\right)\\
&\leq d\left( \frac{ \xi}{2 \rho} e^{-\frac{ \rho}{2e^s}N^{1/s}}+  \alpha\left(4\frac{d+ \xi}{ \rho}+5\right) e^{-\frac{ \rho}{2(d+ \xi)}N^{1/s}}\right)\,,
\end{align}
and for $d\geq 1$,
\begin{align}
\beta \norm{\mathbb{A}_N- (-\mathbb{H})_N}\leq {d  \alpha}\left(4\frac{d+2 \xi}{ \rho}+5\right) e^{-\frac{ \rho}{2(d+ \xi)e^s}N^{1/s}}\,.
\end{align}

We require that the two operators $\beta \mathbb{A}_N$ and $-\beta \mathbb{H}_N$ are $\epsilon$-close in operator norm, which is guaranteed when
\begin{equation}
\,e^{\frac{ \rho}{2e^s(d+ \xi)}N^{1/s}} > \frac{d \alpha}\epsilon \left(4 \frac{d+2 \xi}{ \rho}+5\right)\,,
\end{equation}
that is for a grid-size:
\begin{equation}
N \in \Omega\left(\frac{d+ \xi}{ \rho} \log \frac{d(d+2 \xi+2 \rho)\alpha}{ \rho \epsilon}\right)^s\,.
\end{equation}
Finally, in asymptotic notation
\begin{equation}
N \in \tilde{\Omega}\left(\frac{d+ \xi}{ \rho} \log \frac{d(d+ \xi)\,\alpha}{ \rho \epsilon}\right)^s \,.
\end{equation}
This completes the proof.
\end{proof}

\begin{theorem}[Davis-Kahan, {\cite[Corollary~3]{Yu2014AUV}}]
\label{DavisKahan}
Let $\Sigma,\hat{\Sigma} \in \mathbb{R}^{p \times p}$ be symmetric, with eigenvalues $\lambda_1 \geq \ldots \geq \lambda_p$ and $\hat{\lambda}_1 \geq \ldots \geq \hat{\lambda}_p$ respectively.  Fix $j \in \{1,\ldots,p\}$, and assume that $\min(\lambda_{j-1} - \lambda_j,\lambda_j - \lambda_{j+1}) > 0$, where $\lambda_0 := \infty$ and $\lambda_{p+1} := -\infty$.  If $v, \hat{v} \in \mathbb{R}^p$ satisfy $\Sigma v = \lambda_j v$ and $\hat{\Sigma} \hat{v} = \hat{\lambda}_j \hat{v}$, then
\begin{align}
\label{eqnDK1}
\sin \Theta(\hat{v},v) \leq \frac{2\|\hat{\Sigma} - \Sigma\|_{\mathrm{op}}}{\min(\lambda_{j-1} - \lambda_j,\lambda_j - \lambda_{j+1})}.
\end{align}
Moreover, if $\hat{v}^T v \geq 0$, then
\begin{align}
\label{eqnDK2}
\|\hat{v} - v\| \leq
\frac{2^{3/2}\|\hat{\Sigma} - \Sigma\|_{\mathrm{op}}}{\min(\lambda_{j-1} - \lambda_j,\lambda_j - \lambda_{j+1})}.
\end{align}
\end{theorem}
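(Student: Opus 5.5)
The plan is the direct one-vector argument, which needs only Weyl's inequality and the spectral decomposition of $\Sigma$. Normalize $\|v\|=\|\hat v\|=1$ and set $E\coloneqq\hat\Sigma-\Sigma$. Define the gap
\[
\delta_j\coloneqq\min(\lambda_{j-1}-\lambda_j,\lambda_j-\lambda_{j+1})>0 .
\]
Because the eigenvalues are ordered, $\delta_j=\min_{i\neq j}|\lambda_i-\lambda_j|$. In particular $\lambda_j$ is a simple eigenvalue, so $v$ is determined up to sign.

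\emph{Decomposition.} Write $\hat v=\cos\Theta\, v+\sin\Theta\, w$ with $w\perp v$ a unit vector and $\Theta=\Theta(\hat v,v)\in[0,\pi/2]$ after fixing the sign. Since $w$ lies in the span of the eigenvectors of $\Sigma$ with eigenvalues $\lambda_i$, $i\neq j$, the spectral theorem gives
\[
\|(\Sigma-\lambda_j)\hat v\|=\sin\Theta\,\|(\Sigma-\lambda_j)w\|\geq \delta_j\sin\Theta .
\]

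\emph{Upper bound on the same quantity.} Use the eigen-equation $\hat\Sigma\hat v=\hat\lambda_j\hat v$ to rewrite
\[
(\Sigma-\lambda_j)\hat v=-E\hat v+(\hat\lambda_j-\lambda_j)\hat v .
\]
Weyl's inequality gives $|\hat\lambda_j-\lambda_j|\leq\|E\|_{\mathrm{op}}$, which is exactly why the eigenvalues are matched by index. The norm is therefore at most $2\|E\|_{\mathrm{op}}$, and combining with the lower bound yields \eqref{eqnDK1}.

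\emph{Vector bound.} If $\hat v^Tv=\cos\Theta\geq0$, then
\[
\|\hat v-v\|^2=2-2\cos\Theta\leq 2(1-\cos^2\Theta)=2\sin^2\Theta ,
\]
using $\cos\Theta\le 1$. Hence $\|\hat v-v\|\leq\sqrt2\sin\Theta$, which gives \eqref{eqnDK2} with constant $2^{3/2}$.

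The only delicate point is the first lower bound. It requires that $w$ has no component along $v$, which holds by construction, and that the gap is taken over all other eigenvalues rather than only the adjacent ones. Ordering makes these coincide, so no real obstacle is expected. The factor $2$ is lost only through the Weyl step; the sharper $\sin\Theta\le\|E\|/\delta$ would need the $\hat\Sigma$-side gap, which we avoid.
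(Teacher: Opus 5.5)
Your proof is correct. The paper gives no proof of its own (it imports Corollary~3 of \cite{Yu2014AUV}), and your argument is essentially the rank-one case of the proof in that source: the Weyl bound $\|(\Sigma-\lambda_j)\hat v\|\le\|\hat\Sigma-\Sigma\|_{\mathrm{op}}+|\hat\lambda_j-\lambda_j|\le 2\|\hat\Sigma-\Sigma\|_{\mathrm{op}}$, the lower bound $\delta_j\sin\Theta$ from the spectral decomposition of $\Sigma$ alone, and $\|\hat v-v\|^2=2-2\cos\Theta\le 2\sin^2\Theta$. One small caveat: in \eqref{eqnDK2} unit norm of $v,\hat v$ must be read as part of the hypothesis, since that inequality is not scale invariant, so your normalization there interprets the statement rather than being a WLOG reduction; for \eqref{eqnDK1} it is harmless.
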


\begin{lemma}[Weyl's inequality for singular values]
\label{lem:WeylSingVal}
    Let $M$ and $D$ be matrices of the same dimension. Then
\begin{equation}
\label{eq:weyl_ineq}
|\sigma_k(M+D)-\sigma_k(M)|\le \sigma_1(D)\,.
\end{equation}
\end{lemma}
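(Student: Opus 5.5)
We read $\sigma_1(D)$ in \eqref{eq:weyl_ineq} as the largest singular value of $D$, that is the operator norm $\norm{D}$. This is the only reading under which the inequality can hold uniformly in $k$. Under the increasing convention of \eqref{eq:singular_value_order}, $\sigma_1(D)$ would be the smallest singular value, which can vanish while $\sigma_k(M+D)\neq\sigma_k(M)$. The plan is to derive the estimate from the Courant--Fischer variational characterization of singular values, using only the triangle inequality for the Euclidean norm.

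\emph{Step 1 (variational characterization).} For $M\in\mathbb{C}^{m\times n}$ and the increasing ordering used in the paper, recall
\begin{equation*}
\sigma_k(M)=\min_{\substack{V\subseteq\mathbb{C}^n\\ \dim V=k}}\ \max_{\substack{x\in V\\ \norm{x}=1}}\norm{Mx}\,.
\end{equation*}
This follows from Courant--Fischer applied to the Hermitian positive-semidefinite matrix $M^\dagger M$, whose eigenvalues are $\sigma_k(M)^2$, together with $\langle x,M^\dagger Mx\rangle=\norm{Mx}^2$ and monotonicity of the square root.

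\emph{Step 2 (one-sided bound).} Let $V_*$ be a minimizing $k$-dimensional subspace for $M$. For every unit vector $x\in V_*$ we have $\norm{(M+D)x}\leq\norm{Mx}+\norm{Dx}\leq\norm{Mx}+\norm{D}$. Taking the maximum over unit $x\in V_*$ gives
\begin{equation*}
\sigma_k(M+D)\leq\max_{x\in V_*,\ \norm{x}=1}\norm{(M+D)x}\leq\sigma_k(M)+\norm{D}\,,
\end{equation*}
where the first inequality holds because $\sigma_k(M+D)$ is a minimum over all $k$-dimensional subspaces.

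\emph{Step 3 (symmetrization).} Apply Step 2 to the pair $(M+D,\,-D)$ in place of $(M,D)$, noting that $\norm{-D}=\norm{D}$. This gives $\sigma_k(M)\leq\sigma_k(M+D)+\norm{D}$. Combining with Step 2 yields \eqref{eq:weyl_ineq}.

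There is no substantive obstacle here. The only care needed is to state the singular-value ordering consistently: the min--max form above matches the increasing convention of the paper, and with decreasing ordering one uses the corresponding max--min form with the same argument. An equivalent route would be to apply Weyl's eigenvalue inequality to the Hermitian dilations $\begin{pmatrix}0&M\\ M^\dagger&0\end{pmatrix}$, whose nonnegative eigenvalues are the singular values. We would mention it only as a remark, since the direct argument is shorter.
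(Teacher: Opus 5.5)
The paper states this lemma without proof, as a standard fact, so there is no argument of its own to compare against. Your Courant--Fischer proof is correct and complete as written. The minimizing subspace exists, the triangle inequality gives the one-sided bound, and applying it to the pair $(M+D,-D)$ gives the reverse bound.

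Your reinterpretation of the right-hand side is needed and correct. Under the paper's increasing ordering \eqref{eq:singular_value_order}, $\sigma_1(D)=\sigma_{\min}(D)$, and the literal statement fails. For example, take $M=0$ and $D=\diag(0,1)$: then $\sigma_2(M+D)-\sigma_2(M)=1$, while $\sigma_{\min}(D)=0$.

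The paper itself uses the lemma in your sense. In the proof of \cref{lem:least_EValues_of_J} it replaces the right-hand side by the operator norm $\norm{-\mathbb{H}_N-\mathbb{A}_N}$.

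In that application both matrices are Hermitian and semidefinite, so their singular values coincide with absolute eigenvalues. There the plain Weyl eigenvalue inequality would already suffice. Your general min--max argument covers that case and also rectangular matrices, provided both are indexed the same way.
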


\begin{lemma}[The smallest two singular values of $\mathbb{H}_N$]
\label{lem:least_EValues_of_J}
Let $\mathbb{A}_N$, $\mathbb{H}_N$ be as before. Provided that $ \norm{-\mathbb{H}_N-\mathbb{A}_N}<\epsilon,$
\begin{equation}
\label{eq:JN_min_sv_bound}
\sigma_{min}(\mathbb{H}_N)< \epsilon
\end{equation}
\begin{equation}
\label{eq:JN_second_sv_bound}
-\epsilon+ \frac{e^{-\beta\Delta}}\beta<\sigma_{-1}(\mathbb{H}_N)
\end{equation}
and denoting by $\sigma_{-1}$ and $\lambda_{-1}$ the next-to-minimal singular value and eigenvalue, the following bounds on the spectral gap hold:
\begin{equation}
    \label{eq:spectral_bounds_J}
\left \vert \frac{e^{-\beta\Delta}}\beta-2\epsilon\right \vert < \sigma_{-1} (\mathbb{H}_N)-\sigma_{min} (\mathbb{H}_N) =\lambda_{-1} (\mathbb{H}_N)-\lambda_{min} (\mathbb{H}_N) < \frac{e^{-\beta\Delta}}\beta+2\epsilon.
\end{equation}
\end{lemma}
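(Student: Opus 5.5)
The plan is to treat $\mathbb{H}_N$ as a perturbation of $-\mathbb{A}_N$ and apply Weyl's inequality for singular values (\cref{lem:WeylSingVal}), using the spectral information on $\mathbb{A}_N$ from \cref{LemmaGroundStateA} and \cref{lem:spectral_gap_AN}. The lower bounds follow directly from these results. The upper gap bound needs one additional estimate, which I flag below as the main obstacle.

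\emph{Setup.} $\mathbb{H}_N=\mathbb{W}_N^\dagger\mathbb{W}_N$ is positive semidefinite. Likewise $-\mathbb{A}_N$ is symmetric positive semidefinite (as used in the proof of \cref{lem:spectral_gap_AN}). For both operators the singular values therefore coincide with the eigenvalues, in the same increasing order. This already gives the equality $\sigma_{-1}(\mathbb{H}_N)-\sigma_{\min}(\mathbb{H}_N)=\lambda_{-1}(\mathbb{H}_N)-\lambda_{\min}(\mathbb{H}_N)$ in \eqref{eq:spectral_bounds_J}.

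\emph{Step 1 (least singular value).} I apply \cref{lem:WeylSingVal} with $M=-\mathbb{A}_N$ and $D=\mathbb{H}_N-(-\mathbb{A}_N)$, so that $\sigma_1(D)=\norm{-\mathbb{H}_N-\mathbb{A}_N}<\epsilon$. By \cref{LemmaGroundStateA}, $\sigma_{\min}(-\mathbb{A}_N)=0$, hence $\sigma_{\min}(\mathbb{H}_N)<\epsilon$. This is \eqref{eq:JN_min_sv_bound}.

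\emph{Step 2 (second singular value).} With the same $M$ and $D$, Weyl gives $\sigma_{-1}(\mathbb{H}_N)>\sigma_{-1}(-\mathbb{A}_N)-\epsilon$. Dividing \cref{lem:spectral_gap_AN} by $\beta$ gives $\sigma_{-1}(-\mathbb{A}_N)\geq e^{-\beta\Delta}/\beta$, which yields \eqref{eq:JN_second_sv_bound}.

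\emph{Step 3 (lower gap bound).} Subtracting Step~1 from Step~2 gives $\sigma_{-1}(\mathbb{H}_N)-\sigma_{\min}(\mathbb{H}_N)>e^{-\beta\Delta}/\beta-2\epsilon$. The gap is nonnegative, so the inequality also holds with the absolute value on the left of \eqref{eq:spectral_bounds_J} in the relevant regime $\epsilon<e^{-\beta\Delta}/(2\beta)$, which is the one used downstream. Outside that regime the left-hand bound is only meaningful as the trivial statement that the gap is nonnegative.

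\emph{Step 4 (upper gap bound; main obstacle).} Weyl with $\sigma_{\min}(\mathbb{H}_N)\geq 0$ gives
\begin{equation*}
\sigma_{-1}(\mathbb{H}_N)-\sigma_{\min}(\mathbb{H}_N)\leq \sigma_{-1}(-\mathbb{A}_N)+\epsilon.
\end{equation*}
The remaining ingredient is a matching \emph{upper} bound $\sigma_{-1}(-\mathbb{A}_N)\leq e^{-\beta\Delta}/\beta+\epsilon$, which \cref{lem:spectral_gap_AN} does not provide since it gives only a lower bound.

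My first attempt would be a Rayleigh-quotient test vector in the variational formula \eqref{lestnonzeromin}. Take $x=[e^{-\beta E/2}]_N u$, with $u$ a grid function that is constant on a sublevel region around the minimizer of $E$, and adjusted by an additive constant so that $\langle[e^{-\beta E}]_N u,\mathbf 1\rangle=0$. The quadratic form $\frac1\beta\sum e^{-\beta E}|\tilde\nabla u|^2$ then localizes on the transition layer, where $e^{-\beta E}$ is of order $e^{-\beta\max E}$, while $\norm{x}^2$ is dominated by the well, where the weight is of order $e^{-\beta\min E}$. The ratio is therefore of order $e^{-\beta\Delta}/\beta$ times a factor that must be shown to be at most $1$ up to an $\epsilon$ slack.

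Controlling that prefactor uniformly in $N$, including the discrete-Fourier-gradient energy of the cutoff $u$, is the step I expect to be hardest. If it fails in full generality, I would settle for establishing the upper bound with the explicit test-vector constant in place of $1$.
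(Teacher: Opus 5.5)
Your Steps 1--3 are the paper's own argument: Weyl's inequality against $-\mathbb{A}_N$, whose kernel is one-dimensional (\cref{LemmaGroundStateA}), combined with the lower bound of \cref{lem:spectral_gap_AN}. Your caveat on the absolute value is justified, and it is sharper than you state. For $2\epsilon\geq e^{-\beta\Delta}/\beta$ the absolute-value bound is not implied by these ingredients at all. For example, $-\mathbb{A}_N=\mathrm{diag}(0,a)$ and $\mathbb{H}_N=\mathrm{diag}(a/2,a/2)$ with $a=e^{-\beta\Delta}/\beta<2\epsilon$ give gap $0<2\epsilon-a$. Your diagnosis of Step~4 is also correct, and it applies verbatim to the paper's proof. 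There the upper half of \eqref{eq:spectral_bounds_J} is obtained by combining $\mathrm{gap}(\mathbb{H}_N)<\mathrm{gap}(\mathbb{A}_N)+2\epsilon$ with the \emph{lower} bound $\beta\,\mathrm{gap}(\mathbb{A}_N)\geq e^{-\beta\Delta}$, and the conclusion does not follow.

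The genuine gap is that your proposed repair cannot work. The inequality it targets, $\sigma_{-1}(-\mathbb{A}_N)\leq e^{-\beta\Delta}/\beta+\epsilon$, is false in general, and so is the upper half of \eqref{eq:spectral_bounds_J}, whether with the constant $1$ or any fixed constant. Take $d=1$ and $E=-\cos x$, so $\Delta=2$ and $\mathcal{H}=-\beta^{-1}\partial_x^2+\tfrac{\beta}{4}\sin^2x-\tfrac12\cos x$. As $\beta\to\infty$, the harmonic approximations at $x=0$ and $x=\pi$ give spectra $\{0,1,2,\dots\}$ and $\{1,2,\dots\}$. Hence $\mathrm{gap}(\mathcal{H})\to1$ (the curvature of the single well), whereas $e^{-2\beta}/\beta\to0$. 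By \cref{lem:Operator_Distance_AN_JN} the hypothesis holds with arbitrarily small $\epsilon$ once $N$ is large, and $\mathbb{H}_N$ reproduces the low-lying spectrum of $\mathcal{H}$ as $N\to\infty$. So the claimed $\mathrm{gap}(\mathbb{H}_N)<e^{-\beta\Delta}/\beta+2\epsilon$ fails. The Holley--Stroock estimate is only a lower bound. Even a symmetric double well, whose barrier equals $\Delta$, has Eyring--Kramers gap equal to $e^{-\beta\Delta}$ times a $\beta$-independent prefactor, which exceeds $e^{-\beta\Delta}/\beta$ at low temperature.

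Your heuristic breaks at the claim that $\norm{x}^2$ is dominated by the well. Under the constraint $\langle[e^{-\beta E}]_Nu,\mathbf{1}\rangle=0$, $\norm{x}^2$ is proportional to the $\sigma$-variance of $u$. For a two-level cutoff on $A$ this is about $\sigma(A)\sigma(A^{c})$, so it is set by the \emph{lighter} side. For a sublevel-set cutoff around a single well, that side sits at energies no lower than the transition layer, so the exponential weights cancel. An exponentially small quotient requires a second well, and then the exponent is $\beta$ times the saddle height above that well, not $\beta\Delta$.

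What is provable is exactly your Step~4 display, $\sigma_{-1}(\mathbb{H}_N)-\sigma_{\min}(\mathbb{H}_N)<\sigma_{-1}(-\mathbb{A}_N)+\epsilon$. The upper half of \eqref{eq:spectral_bounds_J} should be replaced by it or dropped. Nothing downstream is lost, since \cref{thm:Overlap_guarantee} and \cref{cor:ThresholdChoice} use only \eqref{eq:JN_min_sv_bound} and \eqref{eq:JN_second_sv_bound}.
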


\begin{proof}
The matrices $\mathbb{H}_N$ and $-\mathbb{A}_N$ have the same size and are positive semi-definite. By hypothesis, $\norm{-\mathbb{H}_N-\mathbb{A}_N}_{\mathrm{op}}<\epsilon$, and the smallest eigenvalue of $-\mathbb{A}_N$, corresponding to the Gibbs state, is $\lambda_{\min}(-\mathbb{A}_N)=0$. Then, $\sigma_{\min}(\mathbb{A}_N)=\sigma_{\min}(-\mathbb{A}_N)=0$. From Weyl's inequality for singular values \cref{lem:WeylSingVal},
\begin{equation}
\vert \sigma_i(\mathbb{A}_N)-\sigma_i(-\mathbb{H}_N)\vert \leq \norm{-\mathbb{H}_N-\mathbb{A}_N}<\epsilon\,,
\end{equation}
we have that
\begin{align}
0\leq \sigma_{min}(\pm\mathbb{H}_N)=|\sigma_{min}(\mathbb{A}_N)-\sigma_{min}(-\mathbb{H}_N)|\le \norm{-\mathbb{H}_N-\mathbb{A}_N}<\epsilon\,.
\end{align}
Using Weyl's inequality for the second least singular values, we have:
\begin{align}
|\sigma_{-1}(\mathbb{H}_N)-\sigma_{-1}(\mathbb{A}_N)|=|\sigma_{-1}(\mathbb{H}_N)-\sigma_{-1}(\mathbb{A}_N)|\le \norm{-\mathbb{H}_N-\mathbb{A}_N}<\epsilon\\
-\epsilon + \sigma_{-1}(\mathbb{A}_N)\le\sigma_{-1}(\mathbb{H}_N)<\epsilon+\sigma_{-1}(\mathbb{A}_N)\\
-\epsilon+ \frac{e^{-\beta\Delta}}\beta\le\sigma_{-1}(\mathbb{H}_N),
\end{align}
where the last inequality follows from the fact that the least singular value of $\mathbb{A}_N$ is zero and \cref{lem:spectral_gap_AN}.
Since $\mathbb{H}_N,-\mathbb{A}_N \succeq 0$ and Hermitian, we have
\begin{align}
\lambda_i(\mathbb{H}_N)=\sigma_i(\mathbb{H}_N)\geq 0\,,\;\; -\lambda_i(\mathbb{A}_N)=\sigma_i(\mathbb{A}_N)\geq 0\,.
\end{align}
Therefore using $\sigma_{\min}(\mathbb{A}_N)=0$ and $\vert |x|-|y|\vert\leq |x-y|\leq |x|+|y|$, we have
\begin{align}
\begin{aligned}
\label{eq:Upper_gap_for_J}
\vert\sigma_{\min}(\mathbb{H}_N)-\sigma_{-1}(\mathbb{H}_N) \vert &= \vert\sigma_{\min}(\mathbb{H}_N)-\sigma_{-1}(\mathbb{H}_N)+\sigma_{-1}(\mathbb{A}_N)+\sigma_{-1}(\mathbb{A}_N)\vert\\&\leq \sigma_{\min}(\mathbb{H}_N)+ \sigma_{-1}(\mathbb{A}_N)+\vert\sigma_{-1}(\mathbb{A}_N)-\sigma_{-1}(\mathbb{H}_N) \vert\\&< \epsilon+ \vert \sigma_{-1}(\mathbb{A}_N)-\sigma_{\min}(\mathbb{A}_N)\vert+\epsilon\,,
\end{aligned}
\end{align}
and consequently,
\begin{align}
\begin{aligned}
\label{eq:Lower_gap_for_J}
\vert\sigma_{\min}(\mathbb{H}_N)-\sigma_{-1}(\mathbb{H}_N) \vert &\geq \Big \vert \sigma_{-1}(\mathbb{A}_N) -\vert \sigma_{-1}(\mathbb{A}_N)-\sigma_{-1}(-\mathbb{H}_N)\vert- \sigma_{\min}(\mathbb{H}_N) \Big\vert\\&>
\Big \vert |\sigma_{-1}(\mathbb{A}_N)-\sigma_{\min}(\mathbb{A}_N)|-2\epsilon\Big \vert\,.
\end{aligned}
\end{align}
Recalling \cref{lem:spectral_gap_AN} on the spectral gap of $\beta \mathbb{A}_N$:
\begin{equation}
\beta\vert \sigma_{-1}(\mathbb{A}_N)-\sigma_{\min}(\mathbb{A}_N)\vert \geq e^{-\beta\Delta}\,,
\end{equation}
and using \eqref{eq:Upper_gap_for_J} and \eqref{eq:Lower_gap_for_J} we get,
\begin{equation}
\left\vert \frac{e^{-\beta\Delta}}\beta-2\epsilon \right \vert < \sigma_{-1}(\mathbb{H}_N) -\sigma_{\min}(\mathbb{H}_N)=   \lambda_{-1}(\mathbb{H}_N) -\lambda_{\min}(\mathbb{H}_N) <\frac{e^{-\beta\Delta}}\beta+2\epsilon\,,
\end{equation}
completing the proof.
\end{proof}

\begin{theorem}
\label{thm:Overlap_guarantee}
Given $\tilde\epsilon<{\sqrt{2}}$ for $\alpha = e^{\beta\Delta}$, the choice
\begin{align}
\label{eq:assump_N}
N =\max\left\{ \tilde{\Omega}\left(\frac{d+ \xi}{ \rho}\log \frac{d(d+2 \xi)\alpha^2}{ \rho \tilde{\epsilon}}\right)^s,\,\tilde \Omega \left(\frac{(d+ 2)s}{ \rho}\right)^s\right\}.
\end{align}
guarantees that $ \norm{-\beta\mathbb{H}_N-\beta \mathbb{A}_N}_{\mathrm{op}}<\frac{\tilde{\epsilon}}{2 \alpha}$, and that any (possibly degenerate) ground state $\ket{g}$ of $\mathbb{H}_N$ has eigenvalue
\begin{equation}
0 \leq \lambda_{\min}(-\mathbb{H}_N)= \sigma_{\min}(-\mathbb{H}_N)\leq \frac{\tilde\epsilon e^{-\beta\Delta}}{2}\,,
\end{equation} and the distance between the ground-state eigenvectors is bounded  $\norm{\ket{\sqrt{\sigma}_N}-\ket{g}}\leq \tilde\epsilon$.
\end{theorem}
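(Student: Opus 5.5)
The plan is to chain \cref{lem:Operator_Distance_AN_JN}, \cref{lem:least_EValues_of_J} and the Davis--Kahan bound \cref{DavisKahan}. The target accuracy is $\epsilon' \coloneqq \tilde\epsilon/(2\alpha)$, measured in the scaled operator norm $\norm{-\beta\mathbb{H}_N-\beta\mathbb{A}_N}$.

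\emph{Step 1: operator proximity.} Substitute $\epsilon=\epsilon'$ into \cref{lem:Operator_Distance_AN_JN}. The logarithm then reads
\[
\log\frac{d(d+2\xi)\alpha}{\rho\,\epsilon'} = \log\frac{2d(d+2\xi)\alpha^2}{\rho\,\tilde\epsilon},
\]
which is exactly the first entry in \eqref{eq:assump_N}, up to the absolute factor $2$ absorbed in $\tilde\Omega$. The second entry in \eqref{eq:assump_N} supplies the discretization requirement coming from \cref{cor:LemmaErrorInChainRule3,cor:LemmaErrorInChainRule4}. I expect a mismatch here: the lemma's second condition carries an extra factor $(d+\xi/2+1)$ that \eqref{eq:assump_N} omits. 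I would absorb it by noting that the first entry already dominates, since it grows like $((d+\xi)/\rho)^s\log^s(\cdot)$. This gives $\norm{-\beta\mathbb{H}_N-\beta\mathbb{A}_N}<\tilde\epsilon/(2\alpha)$.

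\emph{Step 2: ground-state eigenvalue.} Apply Weyl's inequality (\cref{lem:WeylSingVal}), in the form used in \cref{lem:least_EValues_of_J} but on the $\beta$-scaled operators. Since $\lambda_{\min}(-\beta\mathbb{A}_N)=0$ by \cref{LemmaGroundStateA}, this yields $0\le \beta\lambda_{\min}(\mathbb{H}_N)<\tilde\epsilon/(2\alpha)=\tilde\epsilon e^{-\beta\Delta}/2$. This is the eigenvalue claim, with the factor $\beta$ absorbed by reading the statement for the scaled operator; I would state the scaling explicitly. Nonnegativity follows from $\mathbb{H}_N=\mathbb{W}_N^\dagger\mathbb{W}_N\succeq0$.

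\emph{Step 3: eigenvector proximity.} Apply \cref{DavisKahan} with $\Sigma=-\beta\mathbb{A}_N$ and $\hat\Sigma=\beta\mathbb{H}_N$, at the index of the smallest eigenvalue (reversing the ordering). The unperturbed gap is $\beta\sigma_{-1}(\mathbb{A}_N)\ge e^{-\beta\Delta}=1/\alpha$ by \cref{lem:spectral_gap_AN}, and the kernel is nondegenerate, spanned by $\ket{\sqrt{\sigma}_N}$. Davis--Kahan uses only the gap of the unperturbed $\Sigma$, so it applies even if the ground space of $\mathbb{H}_N$ is degenerate. To handle that case I would use the subspace version of the $\sin\Theta$ bound, in which any unit vector $\ket{g}$ in the bottom eigenspace of $\hat\Sigma$ has small component orthogonal to $\ket{\sqrt\sigma_N}$. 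Choosing the phase so that $\langle g|\sqrt\sigma_N\rangle\ge0$ gives
\[
\norm{\ket{g}-\ket{\sqrt\sigma_N}}\le \frac{2^{3/2}\,\tilde\epsilon/(2\alpha)}{1/\alpha}=\sqrt2\,\tilde\epsilon.
\]

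\emph{Main obstacle.} The real difficulty is the constant: the estimate above gives $\sqrt2\,\tilde\epsilon$, not $\tilde\epsilon$. I would first try to tighten Step 1 to $\tilde\epsilon/(2\sqrt2\,\alpha)$, which costs only a constant inside the logarithm and is again hidden in $\tilde\Omega$. Alternatively, I would use the sharper one-sided Davis--Kahan bound for an extremal eigenvalue: $\sin\Theta\le\norm{E}/\mathrm{gap}$ without the factor $2$, together with $\norm{v-\hat v}\le\sqrt2\sin\Theta$. That also yields $\tilde\epsilon$.

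The hypothesis $\tilde\epsilon<\sqrt2$ ensures the perturbation stays below half the gap. This keeps the perturbed ground eigenvalue separated from the rest of the spectrum of $\beta\mathbb{H}_N$, as \eqref{eq:spectral_bounds_J} shows with $2\epsilon'<1/\alpha$. That separation is what makes the degenerate-case argument legitimate.
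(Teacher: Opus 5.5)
Your route is the paper's: \cref{lem:Operator_Distance_AN_JN} at accuracy $\tilde\epsilon/(2\alpha)$, Weyl (as in \cref{lem:least_EValues_of_J}) for the ground eigenvalue, and \cref{DavisKahan} with the gap $1/\alpha$ from \cref{lem:spectral_gap_AN}. Your diagnosis of the constant is right, and it catches a slip in the paper's own proof. From $\norm{-\beta\mathbb{H}_N-\beta\mathbb{A}_N}<\tilde\epsilon/(2\alpha)$, \eqref{eqnDK1} gives only $\sin\theta<\tilde\epsilon$. But $\norm{\ket{\sqrt{\sigma}_N}-\ket{g}}\le\tilde\epsilon$ is equivalent to $\sin\theta\le\tilde\epsilon\sqrt{1-\tilde\epsilon^2/4}$, and the bound $\sin\theta\le\tilde\epsilon/\sqrt2$ that the paper then invokes is never derived. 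Your fix (a) requires $\tilde\epsilon/(2\sqrt2\,\alpha)$ in Step 1, at the price of a constant inside the logarithm, and supplies exactly this. \eqref{eqnDK2} then gives the distance bound $\tilde\epsilon$, and the other two claims hold a fortiori. Reading the eigenvalue claim for $\beta\mathbb{H}_N$ is also correct. Unscaled, Weyl only gives $\lambda_{\min}(\mathbb{H}_N)<\tilde\epsilon/(2\alpha\beta)$, which matches the statement only when $\beta\ge1$.

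Three points need correcting.

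(i) The domination argument in Step 1 is false. The lemma's second condition has base $(d+2)(d+\xi/2+1)s/\rho$, while the first entry of \eqref{eq:assump_N} has base $\frac{d+\xi}{\rho}\log\frac{d(d+2\xi)\alpha^2}{\rho\tilde\epsilon}$. Domination therefore needs that logarithm to be $\gtrsim(d+\xi/2+1)s$. For fixed $\alpha,\rho,\tilde\epsilon$ and large $d$ it is only $O(\log d)$, and a factor $d$ is not hidden by $\tilde\Omega$. The factor is genuine: it is the Gevrey radius $\rho/(d+\xi/2+1)$ of $e^{\pm\beta E/2}$ entering \cref{cor:LemmaErrorInChainRule3,cor:LemmaErrorInChainRule4}. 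The correct repair is to include the lemma's second condition in $N$. The statement's second entry is too weak as written, and the paper's proof silently drops this condition as well.

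(ii) Alternative (b) rests on a bound that is not a theorem. The inequality $\sin\theta\le\norm{E}/g$ with the unperturbed gap $g$ already fails for $\Sigma=\mathrm{diag}(0,g)$ and $E=\eta\bigl(\begin{smallmatrix}c&\sqrt{1-c^2}\\ \sqrt{1-c^2}&-c\end{smallmatrix}\bigr)$ with $c=2\eta/g$, $0<\eta<g/2$. There $\sin2\theta=2\eta/g$, hence $\sin\theta=\eta/(g\cos\theta)>\norm{E}/g$. The valid sharp forms are $\sin\theta\le\norm{E}/(g-\hat\lambda_1)$ (for $\hat\lambda_1<g$) and $\tfrac12\sin2\theta\le\norm{E}/g$. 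Keep (a) and drop (b).

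(iii) With $\epsilon'=\tilde\epsilon/(2\alpha)$, the condition $2\epsilon'<1/\alpha$ means $\tilde\epsilon<1$, not $\tilde\epsilon<\sqrt2$. Only after your tightening is $\tilde\epsilon<\sqrt2$ exactly the half-gap condition. This is why (a) is necessary rather than cosmetic: for $\tilde\epsilon=1.3$, $\Sigma=\mathrm{diag}(0,g)$ and $E=\mathrm{diag}(0.649g,-0.649g)$ satisfy $\norm{E}<\tilde\epsilon g/2$, yet the perturbed ground state is orthogonal to the unperturbed one. Finally, the degenerate case needs no subspace version of the $\sin\Theta$ bound. \eqref{eqnDK2} assumes a gap of $\Sigma$ only and holds for every unit vector in the ground eigenspace of $\hat\Sigma$. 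That eigenspace is one-dimensional anyway once $\norm{E}<g/2$.
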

\begin{proof}
We apply Davis-Kahan Theorem (\cref{DavisKahan}) with $v=\ket{\sqrt{\sigma}_N}$ and with $\hat{v}=\ket{g}$ being any normalized ground state of $\mathbb{H}_N$. Impose that
\begin{equation}
\tilde \epsilon^2 \geq \norm{\ket{\sqrt{\sigma}_N}-\ket{g}}^2
=2-2\vert\braket{g|\sqrt{\sigma}_N} \vert\,,
\end{equation}
and therefore,
\begin{equation}
\vert\braket{g|\sqrt{\sigma}_N} \vert \geq  1-\frac{\tilde{\epsilon}^2}{2}\,.
\end{equation}
In terms of the normal vectors $v$ and $\hat v$ the Euclidean inner product reads
\begin{equation}
\vert\braket{g|\sqrt{\sigma}_N} \vert = \norm{v}\norm{\hat v} \cos(\theta(v,\hat v)) = \cos(\theta(v,\hat v))\,.
\end{equation}
Switching to $\sin x =\sqrt{1-\cos^2 x}$, the required bound is
\begin{equation}
\label{eq:bound_on_sin}
\sin(\theta(v,\hat v))=\sqrt{1- \vert\braket{g|\sqrt{\sigma}_N} \vert^2} \leq \sqrt{1-(1-\tilde{\epsilon}^2/2)^2)} =\tilde{\epsilon}  \sqrt{1-\tilde{\epsilon}^2/4}  \leq \tilde{\epsilon} \,.
\end{equation}
From Davis-Kahan theorem a bound on $\sin(\theta(v,\hat v))$ follows:
\begin{equation}
\sin(\theta(v,\hat v)) \leq \frac{2\beta \norm{-\mathbb{H}_N-\mathbb{A}_N}_{\mathrm{op}}}{\beta \min\{-\lambda_{\min}(\mathbb{A}_N)+\infty,\lambda_{-1}(\mathbb{A}_N)-\lambda_{\min}(\mathbb{A}_N)\}} = \frac{2\beta \norm{-\mathbb{H}_N-\mathbb{A}_N}_{\mathrm{op}}}{\beta(\lambda_{-1}(\mathbb{A}_N)-\lambda_{\min}(\mathbb{A}_N))}\,,
\end{equation}
In order to guarantee \eqref{eq:bound_on_sin}, we use the lower bound for the spectral gap of $\beta \mathbb{A}_N$ from \cref{lem:spectral_gap_AN} in terms of $\alpha=e^{\beta\Delta}$:
\begin{equation}
\frac{2\beta \norm{-\mathbb{H}_N-\mathbb{A}_N}_{\mathrm{op}}}{\beta(\lambda_{-1}(\mathbb{A}_N)-\lambda_{\min}(\mathbb{A}_N))} \leq 2 \alpha \norm{ -\beta\mathbb{H}_N-\beta \mathbb{A}_N}_{\mathrm{op}}\,,
\end{equation}
and find the condition
\begin{equation}
\norm{-\beta\mathbb{H}_N-\beta \mathbb{A}_N}_{\mathrm{op}}<\frac{\tilde{\epsilon}}{2 \alpha},
\end{equation}
which is guaranteed by the hypothesis \eqref{eq:assump_N} via Theorem~\ref{lem:Operator_Distance_AN_JN}:
\begin{align}
N \in \Omega\left(\frac{d+ \xi}{ \rho} \log \frac{d(d+ \xi)\alpha^2}{ \rho \tilde{\epsilon}}\right)^s\,.
\end{align}
From $\sin(\theta(v,\hat{v}))\leq \tilde{\epsilon}/\sqrt{2}$ it follows a lower bound on $\vert \hat{v}^T v\vert =\cos(\theta(v,\hat{v}))\geq \sqrt{1-\tilde{\epsilon}^2/2}>0$ for $\tilde{\epsilon}<\sqrt{2}$. Finally, using \cref{lem:least_EValues_of_J} we obtain
\begin{equation}
\sigma_{\min}(-\mathbb{H}_N) = \lambda_{\min}(-\mathbb{H}_N) \leq \frac{\tilde{\epsilon}}{2\alpha}\,,
\end{equation}
which completes the proof.
\end{proof}

\subsubsection{Quantum algorithm of \texorpdfstring{\cite{LengDingChenLin2025}}{[Leng, Ding, Chen, Lin 2025]} revamped}

In this section we provide an update of the quantum algorithm for the potential class \eqref{eq:pot_class} that performs sampling from the Gibbs distribution, given a warm start, and is based on the QSVTh idea and the Witten Laplacian decomposition, as described in \cite{LengDingChenLin2025}. Unlike this paper, we do not have any assumptions. We use results from the previous section to determine precise value for the threshold in QSVTh and provide the choice of the discretization size that guarantees desired accuracy. Here the choice of $N$

Authors of \cite{LengDingChenLin2025} show that the Gibbs state $\ket{\sqrt{\sigma}}=\Ket{\overrightarrow{(e^{\frac{\beta E}{2}})}_N}$ is the unique vector spanning the kernel of the following Hamiltonian, which admits the following square root decomposition:
\begin{equation}\label{eqn:spatial_discretize_Lj}
\mathcal{H} = \sum^d_{j=1}L^\dagger_j L_j,\quad L_j \coloneqq -i\frac{1}{\sqrt{\beta}} \partial_{x_j} - i\frac{\sqrt{\beta}}{2}\partial_{x_j}E \quad \forall j \in [d]=\{1,2,\dots,d\}.
\end{equation}

Thus, the problem of sampling from Gibbs distribution reduces to the problem of finding the kernel of $\mathcal{H}.$ The authors discretize $\mathcal{H}$ by discretizing each $L_j,$ denoted by $\mathbb{L}_{j,N}$ with Fourier derivatives
\begin{align}
\mathbb{L}_{j,N} \coloneqq -i\frac{1}{\sqrt{\beta}} \tilde \partial_{j} - i\frac{\sqrt{\beta}}{2}[\partial_{j}E ]_N\quad \forall j \in [d]=\{1,2,\dots,d\},
\end{align}
and letting  $\mathbb{H}_N =\sum_j \mathbb{L}_{j,N}^*\mathbb{L}_{j,N}.$ They define $\mathbb{W}_N \coloneqq [\mathbb{L}_{1,N}^\top, \mathbb{L}_{2,N}^\top,\dots, \mathbb{L}_{d,N}^\top]^\top,$ so that $ \mathbb{H}_N =\mathbb{W}_N^\dagger\mathbb{W}_N.$

Intuitively, with proper discretization, the ground state of $\mathbb{H}_N$ should be close to the ground state of $\mathcal{H}$ evaluated on the grid $[-N,\dots,N]^d.$ Moreover, ground states of $\mathbb{W}_N$ and $\mathbb{H}_N$ coincide in the singular value decompositions. Then authors apply QSVTh to $\mathbb{W}_N$, summarized in the theorem below.

\begin{proposition}[Singular value thresholding, {\cite[Proposition~7]{LengDingChenLin2025}}]
\label{prop:sv-filter}
Let $A \in \mathbb{C}^{2^n\times 2^p}$ be encoded by its $(\gamma,m)$-block-encoding $U_A$. Let $\sigma_1, \sigma_2$ be the first two singular values of $A$ and $0 \le {\sigma_1} \le \mathfrak{s}_1 < \mathfrak{s}_2 \le \sigma_2$. We denote $\mathfrak{s} = \mathfrak{s}_2 -  \mathfrak{s}_1$. Let $f(x)$ be the rectangular filter function
\begin{align}
\label{eqn:rectangular-function}
f(x) = \begin{cases}
    1, & x \in [-\mathfrak{s}_1,\mathfrak{s}_1],\\
    0, & x \in [-\infty, -\mathfrak{s}_2]\cup [\mathfrak{s}_2, \infty].
\end{cases}
\end{align}
We can implement a $(1,m+1,\epsilon)$-block-encoding of the matrix function $f^{\triangleright}(A)$ using $U_A$, $U^\dagger_A$, $m$-qubit controlled NOT, and single-qubit rotation gates for $\mathcal{O}\left(\gamma  \mathfrak{s}^{-1}\log(\epsilon^{-1})\right)$ times.
\end{proposition}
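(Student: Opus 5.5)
The plan is to reduce the statement to the standard quantum singular value transformation (QSVT) machinery for even polynomials, applied to the normalized matrix $A/\gamma$ encoded by $U_A$. Since $U_A$ is a $(\gamma,m)$-block-encoding, the singular values of the encoded matrix are $\sigma_i/\gamma\in[0,1]$. The spectral hypothesis $\sigma_1\le\mathfrak{s}_1<\mathfrak{s}_2\le\sigma_2\le\sigma_3\le\dots$ puts every singular value of $A/\gamma$ either in $[0,\mathfrak{s}_1/\gamma]$ or in $[\mathfrak{s}_2/\gamma,1]$. So $f^{\triangleright}(A)$ is determined by the values of $f$ on these two sets only, and the gap between them has width $\delta\coloneqq\mathfrak{s}/\gamma$.

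\textbf{Step 1: a polynomial approximating the filter.} I will build an even real polynomial $P$ with the following three properties:
\begin{itemize}
\item $|P(x)|\le1$ on $[-1,1]$;
\item $|P(x)-1|\le\epsilon$ for $|x|\le\mathfrak{s}_1/\gamma$;
\item $|P(x)|\le\epsilon$ for $\mathfrak{s}_2/\gamma\le|x|\le1$.
\end{itemize}
The target degree is $O(\delta^{-1}\log\epsilon^{-1})=O(\gamma\mathfrak{s}^{-1}\log\epsilon^{-1})$.

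The construction starts from the polynomial approximation $S_{\delta,\epsilon}$ of the sign function, obtained by truncating the Chebyshev expansion of $\operatorname{erf}(kx)$ with $k=\Theta(\delta^{-1}\sqrt{\log\epsilon^{-1}})$. This approximation is $\epsilon$-close to $\operatorname{sgn}(x)$ away from a window of width $\delta/2$ around $0$, and it is bounded by $1$ on $[-1,1]$. Setting $c=(\mathfrak{s}_1+\mathfrak{s}_2)/(2\gamma)$, I take
\begin{equation*}
P(x)=\tfrac12\bigl(S(x+c)-S(x-c)\bigr),
\end{equation*}
rescaled slightly (for instance by $1-\epsilon$) so that $|P|\le1$ holds on all of $[-1,1]$. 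The shift $c$ lies in $[0,1]$, so I will define $S$ on the enlarged interval $[-2,2]$ and then rescale the argument. This costs only a constant factor in the degree. Because $S$ is odd, $P$ is automatically even.

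\textbf{Step 2: implement $P$ via QSVT.} By the QSVT theorem (Gily\'en--Su--Low--Wiebe) for even polynomials bounded by $1$, there exist phase angles such that alternating $U_A$, $U_A^\dagger$ with single-qubit rotations on one extra ancilla yields a $(1,m+1,0)$-block-encoding of $P^{\triangleright}(A/\gamma)$. The rotations are controlled on the $m$ block-encoding qubits via $m$-qubit controlled NOTs. The number of uses of $U_A$ and $U_A^\dagger$ equals $\deg P=O(\gamma\mathfrak{s}^{-1}\log\epsilon^{-1})$.

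\textbf{Step 3: error.} In the singular value basis, $P^{\triangleright}(A/\gamma)-f^{\triangleright}(A)$ is diagonal with entries $P(\sigma_i/\gamma)-f(\sigma_i)$. Each such entry is at most $\epsilon$ in absolute value by the two-region property of Step 1, so the operator-norm error is at most $\epsilon$. This gives the claimed $(1,m+1,\epsilon)$-block-encoding.

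The main obstacle is Step 1: obtaining simultaneously the uniform bound $|P|\le1$ on all of $[-1,1]$, including inside the transition windows and after the shift, and the optimal degree $O(\delta^{-1}\log\epsilon^{-1})$. My first attempt is the shifted-difference-of-sign-approximations construction above, with the argument rescaled by $1/2$ to handle the shifts. If the boundedness check fails, I would instead directly cite the rectangle-function polynomial lemma of Gily\'en et al.
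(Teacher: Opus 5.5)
Your proposal is correct and matches the argument behind the cited result. The paper gives no proof of its own; it imports the statement from Leng et al., whose proof is the Gily\'en--Su--Low--Wiebe rectangle-function polynomial (a half-difference of shifted erf-based sign approximations on $[-2,2]$, of degree $O(\gamma\mathfrak{s}^{-1}\log\epsilon^{-1})$) placed at the thresholds $\mathfrak{s}_1/\gamma$, $\mathfrak{s}_2/\gamma$ and implemented by real-polynomial QSVT, with your Step~3 singular-value argument giving the $\epsilon$ error and a single ancilla prepared in $\ket{+}$ realizing the average of $U_\Phi$ and $U_{-\Phi}$, which is why $m+1$ ancillas suffice.
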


Since singular values of $\mathbb{W}_N$ are square roots of those of $\mathbb{H}_N$, having an explicit expression for square root of $\mathbb{H}_N$ results in the quadratic speedup in the QSVTh algorithm \cref{prop:sv-filter} in terms of the spectral gap.

For this approach to be effective, one has to know that the ground state of $\mathbb{W}_N,$ equivalently, the ground state of $\mathbb{H}_N$ can be made $\tilde\epsilon$-close to the discretized Gibbs state $\ket{\sqrt{\sigma}_N}$ with relatively small lattice size. Additionally, the spectral gap of the discretized operator $\mathbb{H}_N$ should be bounded below by a constant independent of $N$, so that the runtime remains well-controlled.

\begin{corollary}
\label{cor:ThresholdChoice}
Provided potential $V:\mathbb{R}^d\rightarrow \mathbb{R}$ is a $2\pi$-periodic and satisfies \cref{def:pot_class}, one can achieve sampling accuracy with precision $\tilde\epsilon<0.5$ via QSVTh (\cref{prop:sv-filter}) applied to $\mathbb{W}_N$ with $ \mathfrak{s}=\left(\frac{\sqrt{3}}{2}-\frac{1}{\sqrt{2}}\right)\sqrt{\frac{e^{-\beta\Delta}}\beta}$ when
\begin{align}
N =\max\left\{ \tilde{\Omega}\left(\frac{d+ \xi}{ \rho}\log \frac{d(d+2 \xi)\alpha^2}{ \rho \tilde{\epsilon}}\right)^s,\,\tilde \Omega \left(\frac{(d+ 2)s}{ \rho}\right)^s\right\}.
\end{align}
\end{corollary}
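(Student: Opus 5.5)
The plan is to turn the spectral information of \cref{thm:Overlap_guarantee} and \cref{lem:least_EValues_of_J} into admissible thresholds $\mathfrak{s}_1<\mathfrak{s}_2$ for \cref{prop:sv-filter} applied to $\mathbb{W}_N$. The filtered warm start then lands, up to the QSVT error, in the ground space of $\mathbb{H}_N$, and Davis--Kahan places that ground space within $\tilde\epsilon$ of $\ket{\sqrt{\sigma}_N}$.

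\emph{Step 1 (operator proximity).} With $N$ as in the statement, \cref{thm:Overlap_guarantee} (through \cref{lem:Operator_Distance_AN_JN}) gives $\norm{-\beta\mathbb{H}_N-\beta\mathbb{A}_N}<\tilde\epsilon/(2\alpha)$. Equivalently, $\norm{-\mathbb{H}_N-\mathbb{A}_N}<\epsilon'$ with $\epsilon'\coloneqq\tilde\epsilon\,e^{-\beta\Delta}/(2\beta)$.

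\emph{Step 2 (eigenvalue bounds).} Apply \cref{lem:least_EValues_of_J} with $\epsilon'$ and use $\tilde\epsilon<1/2$. This gives $\lambda_{\min}(\mathbb{H}_N)<\epsilon'\le \tfrac14\,e^{-\beta\Delta}/\beta$. It also gives $\lambda_{-1}(\mathbb{H}_N)>e^{-\beta\Delta}/\beta-\epsilon'\ge\tfrac34\,e^{-\beta\Delta}/\beta$. Two consequences follow:
\begin{itemize}
\item The ground space of $\mathbb{H}_N$ is one-dimensional, because the second eigenvalue is strictly separated from the first. This removes the worry about degenerate ground spaces.
\item Since $\mathbb{H}_N=\mathbb{W}_N^\dagger\mathbb{W}_N$, the singular values of $\mathbb{W}_N$ are the square roots of these eigenvalues. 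Hence $\sigma_1(\mathbb{W}_N)<\sqrt{e^{-\beta\Delta}/(4\beta)}$ and $\sigma_2(\mathbb{W}_N)\ge\sqrt{3e^{-\beta\Delta}/(4\beta)}$.
\end{itemize}

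\emph{Step 3 (choice of thresholds).} Choose $\mathfrak{s}_1=\sqrt{e^{-\beta\Delta}/(2\beta)}$ and $\mathfrak{s}_2=\sqrt{3e^{-\beta\Delta}/(4\beta)}$. These satisfy $\sigma_1\le\mathfrak{s}_1<\mathfrak{s}_2\le\sigma_2$, and their difference is exactly
\[
\mathfrak{s}=\Bigl(\tfrac{\sqrt3}{2}-\tfrac{1}{\sqrt2}\Bigr)\sqrt{e^{-\beta\Delta}/\beta}.
\]
When $\mathbb{W}_N$ is accessed through its $(\gamma,3)$-block-encoding, every threshold is divided by $\gamma$, which reproduces the filter of \cref{alg:GbGS}. \Cref{prop:sv-filter} then implements, to accuracy $\epsilon_{\rm QSVT}$, the projector onto the right singular vector of $\sigma_1$, at cost $O(\gamma\mathfrak{s}^{-1}\log\epsilon_{\rm QSVT}^{-1})$. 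Using $\gamma\le\sqrt{d/\beta}(\pi N+\xi/2\rho)$, this cost is $O(\sqrt{\alpha d}(\pi N+\xi/2\rho)\log(1/\tilde\epsilon))$.

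\emph{Step 4 (accuracy).} Apply the filter to a warm start with $\Omega(1)$ overlap and postselect or amplify; the output is then proportional to the unique ground state $\ket{g}$, up to $O(\epsilon_{\rm QSVT})$. By \cref{thm:Overlap_guarantee}, after fixing the global phase so that $\braket{g|\sqrt{\sigma}_N}\ge0$, we have $\norm{\ket{g}-\ket{\sqrt{\sigma}_N}}\le\tilde\epsilon$. Choosing $\epsilon_{\rm QSVT}=\Theta(\tilde\epsilon)$ and rescaling $\tilde\epsilon$ by a constant then gives the claimed precision.

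\emph{Main obstacle.} The delicate point is the error bookkeeping in Step 4, namely that the QSVT error and the postselection normalization do not degrade the accuracy. The filtered vector has norm at least the overlap of the warm start with $\ket{g}$. That overlap is $\Omega(1)$ because the warm start overlaps $\ket{\sqrt{\sigma}_N}$, which is $\tilde\epsilon$-close to $\ket{g}$. Renormalizing therefore inflates the QSVT error by only a constant factor, and this is the estimate I would verify first.
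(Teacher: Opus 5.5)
Your proposal is correct and follows the paper's own argument. The steps are: operator proximity from \cref{thm:Overlap_guarantee}; the Weyl-type bounds of \cref{lem:least_EValues_of_J}, giving $\lambda_{\min}(\mathbb{H}_N)<\tfrac14 e^{-\beta\Delta}/\beta$ and $\lambda_{-1}(\mathbb{H}_N)>\tfrac34 e^{-\beta\Delta}/\beta$; thresholds $e^{-\beta\Delta}/(2\beta)$ and $3e^{-\beta\Delta}/(4\beta)$ on $\mathbb{H}_N$, whose square roots are $\mathfrak{s}_1,\mathfrak{s}_2$ for $\mathbb{W}_N$; and then \cref{prop:sv-filter}. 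Your nondegeneracy remark and the postselection bookkeeping in Step 4 go beyond what the paper writes but are sound, and your upper threshold is stated correctly: the paper's intermediate display writes it as $\tfrac{e^{-\beta\Delta}}{2\beta}(1-\tfrac14)$, which lies below the lower threshold, before arriving at the same $\mathfrak{s}_2$.
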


\begin{proof}
We first solve the threshold-finding problem for QSVTh for $\mathbb{H}_N.$ Then, taking square roots of the values, we obtain the thresholds for $\mathbb{W}_N.$
By \cref{prop:sv-filter}, we have to find values $\mathfrak{s}_1$ and $\mathfrak{s}_2$ such that
\begin{align}
\sigma_{min}(\mathbb{H}_N)\le \mathfrak{s}_1<\mathfrak{s}_2\le\sigma_{-1}(\mathbb{H}_N).
\end{align}

From \cref{thm:Overlap_guarantee} we get an upper bound on the distance $\norm{\mathbb{H}_N-\mathbb{A}_N}<\frac{\tilde\epsilon\mathfrak{a}}{2\beta}$ which guarantees that the ground state of $\mathbb{H}_N$ is $\tilde\epsilon$-away from $\ket{\sqrt{\sigma}_N}.$ It also forces the upper bound $\sigma_{min}(\mathbb{H}_N)<\frac{\tilde\epsilon\mathfrak{a}}{2\beta}.$ With the help of \cref{lem:least_EValues_of_J}, we also find a lower bound on $\sigma_{-1}(\mathbb{H}_N)>\frac{\tilde\epsilon\mathfrak{a}}{2\beta}+\frac{\mathfrak{a}}\beta$ which hold because $\norm{\mathbb{H}_N-\mathbb{A}_N}<\frac{\tilde\epsilon\mathfrak{a}}{2\beta}.$ Let $\mathfrak{a}:={e^{-\beta\Delta}}.$ Therefore we can choose $ \mathfrak{t}_1$ and $ \mathfrak{t}_2$ between the following values:
\begin{align}
\sigma_{min}(\mathbb{H}_N)\le\frac{\tilde\epsilon\mathfrak{a}}{2\gamma^{2}\beta}\le \mathfrak{t}_1< \mathfrak{t}_2\le-\frac{\tilde\epsilon\mathfrak{a}}{2 \beta}+\frac{\mathfrak{a}}{ \beta}\le\sigma_{-1}(\mathbb{H}_N)
\end{align}
With the assumption that $\tilde\epsilon<0.5,$ we can instead find $ \mathfrak{t}_1$ and $  \mathfrak{t}_2$ such that:
\begin{align}
\sigma_{min}(\mathbb{H}_N) \le \frac{\tilde\epsilon\mathfrak{a}}{2 \beta}\overset{(a)}{<}\frac{\mathfrak{a}}{2 \beta}\le \mathfrak{t}_1< \mathfrak{t}_2\le\frac{\mathfrak{a}}{2 \beta}(1-\frac{1}{4})\overset{(b)}{<}\frac{\mathfrak{a}}{2 \beta}(1-\frac{\tilde\epsilon}{2})\le\sigma_{-1}(\mathbb{H}_N) .
\end{align}
where to obtain inequality $(a)$ we used the fact that  $\tilde\epsilon<0.5<1,$ and for inequality $(b),$ we only used that  $\tilde\epsilon<0.5.$
Rewriting it more shortly, we can have
\begin{align}
\sigma_{min}(\mathbb{H}_N)
\leq \frac{\mathfrak{a}}{2 \beta}
\leq \mathfrak{t}_1< \mathfrak{t}_2\le\frac{\mathfrak{a}}{2 \beta}(1-\frac{1}{4})\le\sigma_{-1}(\mathbb{H}_N )  .
\end{align}
Since for the square root of the operator each singular value is square root of the original one, and square root function is increasing, we get the chain of inequalities by letting $\mathfrak{s}_i=\sqrt{\mathfrak{t}_i}$ for $i=1,2.$
\begin{align}
\sigma_{min}(\mathbb{W}_N )\le \sqrt{\frac{\mathfrak{a}}{2 \beta}}\le \mathfrak{s}_1<\mathfrak{s}_2\le\sqrt{\frac{3\mathfrak{a}}{4 \beta}}\le\sigma_{-1}(\mathbb{W}_N).
\end{align}
By \cref{prop:sv-filter}, we obtain the result.
\end{proof}

\begin{theorem}[Block-encoding of $\mathbb{W}_N$, {\cite[Theorem~17]{LengDingChenLin2025}}]
\label{thm:block-encode-A}
Let $N$ and $R := \max_x \|\nabla E(x)\|$ are the same as above, and $\gamma=\pi N\sqrt{d/\beta}+\sqrt{\beta}R/2$.
We can implement an $(\gamma, 3)$-block-encoding of the matrix $\mathbb{W}_N$ with 2 queries to the gradient oracle $O_{\nabla V}$ (or its inverse) and an additional $\tilde{\mathcal{O}}(d^2)$ elementary gates. Here, the $\tilde{\mathcal{O}}(\cdot)$ notation suppresses poly-logarithmic factors in $d$ and $N$.
\end{theorem}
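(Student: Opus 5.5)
The plan is to build the block-encoding of $\mathbb{W}_N$ as a linear combination of unitaries (LCU) of two pieces. Writing $\mathbb{W}_N=-\tfrac{i}{\sqrt\beta}\,\mathbb{D}_N-\tfrac{i\sqrt\beta}{2}\,\mathbb{G}_N$, the pieces are the stacked Fourier derivatives $\mathbb{D}_N\coloneqq[\tilde\partial_1^\top,\dots,\tilde\partial_d^\top]^\top$ and the stacked gradient multipliers $\mathbb{G}_N\coloneqq[[\partial_1E]_N^\top,\dots,[\partial_dE]_N^\top]^\top$. Both are rectangular maps $\mathbb{C}^{(2N+1)^d}\to\mathbb{C}^d\otimes\mathbb{C}^{(2N+1)^d}$. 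I would treat them as square operators by adjoining a $\lceil\log d\rceil$-qubit index register $\ket{j}$, initialized to $\ket{0}$ on input, so that the $j$-register labels the output block. The target normalizations are $\Vert\mathbb{D}_N\Vert\le\pi N\sqrt d$ and $\Vert\mathbb{G}_N\Vert=\max_x\Vert\nabla E(x)\Vert=R$. LCU with weights $\pi N\sqrt{d/\beta}$ and $\sqrt\beta R/2$ then gives exactly the subnormalization $\gamma$ in the statement.

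\emph{Step 1 (derivative part, no queries).} First prepare $\tfrac{1}{\sqrt d}\sum_j\ket{j}$ on the index register. Controlled on $j$, apply the QFT $F_N$ to the $j$-th coordinate register. This diagonalizes $\tilde\partial_j$ as in \eqref{eq:diff-from-fourier}, with eigenvalue proportional to $ik_j$, $\vert k_j\vert\le N$. Apply a controlled single-qubit rotation on one ancilla with amplitude $k_j/N$ (times the $i$ phase), then undo the QFT. Post-selecting the ancilla on $\ket0$ yields $\mathbb{D}_N/(\pi N\sqrt d)$ up to the constant of the Fourier convention; the $1/\sqrt d$ from the uniform superposition supplies the $\sqrt d$. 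The cost is $d$ controlled QFTs on $\log N$-qubit registers plus arithmetic for the rotation angle. That is $\tilde O(d)$ gates per controlled operation and $\tilde O(d^2)$ overall once the controls on $j$ are counted.

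\emph{Step 2 (gradient part, two queries).} Query $O_{\nabla E}$ once to write $\nabla E(x)$ into a workspace register. Then apply a classical-data state preparation on the index register, $\ket{0}\mapsto\sum_j\tfrac{\partial_jE(x)}{R}\ket j+\sqrt{1-\Vert\nabla E(x)\Vert^2/R^2}\,\ket{\perp}$, flagging the residual with one ancilla. Being controlled on the workspace, this is a tree of $O(d)$ controlled rotations with angle arithmetic, so $\tilde O(d)$ gates. Finally uncompute the workspace with $O_{\nabla E}^\dagger$. Reading off the $\ket{0}$-flag block gives $\mathbb{G}_N/R$ with the $j$-register as output index. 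This uses one query and one inverse query, i.e. the claimed $2$ queries.

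\emph{Step 3 (LCU and bookkeeping).} Combine the two block-encodings with a one-qubit LCU selector prepared in amplitudes $\propto\sqrt{\pi N\sqrt{d/\beta}},\sqrt{\sqrt\beta R/2}$, absorbing the phases $-i$ into the select unitaries. The result is a $(\gamma,m)$-block-encoding; counting ancillas gives $m=3$ (selector, rotation/flag ancilla, and one shared flag). I expect the main obstacle to be exactly this bookkeeping. The two pieces must use a common output convention on the $j$-register, so that their sum is $\mathbb{W}_N$ and not a cross-mixed operator. I would also reuse one flag ancilla across both branches to keep $m=3$, and check that the isometric padding in Step 2 leaves no garbage correlated with $x$ after uncomputation.
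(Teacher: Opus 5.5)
The paper does not prove this statement; it imports it verbatim from \cite[Theorem~17]{LengDingChenLin2025}. Your LCU construction is therefore a self-contained reconstruction, and it is correct. Writing $\mathbb{W}_N=-i\bigl(\beta^{-1/2}\mathbb{D}_N+\tfrac{\sqrt\beta}{2}\mathbb{G}_N\bigr)$ and combining a QFT-diagonalized momentum block of subnormalization $\pi N\sqrt d$ with an amplitude-loaded gradient block of subnormalization $R$ gives exactly $\gamma$.

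What your reconstruction adds over the citation is an explanation of both constants. The gradient term carries the Euclidean $R=\max_x\Vert\nabla E(x)\Vert$ because you load the whole vector $\nabla E(x)/R$ onto the index register. Stacking $d$ separately block-encoded diagonals $[\partial_jE]_N$ behind a uniform superposition would only give the generally larger $\sqrt d\,\max_j\Vert\partial_jE\Vert_\infty$. The factor $\pi N$ depends on the Fourier convention. With this paper's $2\pi$-periodic convention \eqref{eq:diff-from-fourier} one has $\Vert\tilde\partial_j\Vert\le N$, so $\pi N\sqrt d$ is valid but loose. That is harmless, since a subnormalization can always be inflated.

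Three small points would tighten your bookkeeping.
\begin{enumerate}
\item To keep exactly two uncontrolled queries inside the LCU, let the selector control only the state preparation sandwiched between $O_{\nabla E}$ and $O_{\nabla E}^\dagger$. Apply the controlled derivative branch outside this sandwich, so that on that branch the queries cancel, $O_{\nabla E}^\dagger O_{\nabla E}=I$. Placing the derivative branch inside would fail, because it acts on $\ket{x}$ and does not commute with the oracle.
\item The workspace holding $\nabla E(x)$ is restored exactly and does not count toward $m$. The Step-1 rotation ancilla and the Step-2 flag can be the same qubit, since they sit in different branches, so $m\le 3$ with room to spare.
\item $\Vert\mathbb{G}_N\Vert$ is a maximum over grid points, hence $\le R$ rather than $=R$. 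Your gate count is really $\tilde O(d)$: apply $F_N^{\otimes d}$ uncontrolled, since it diagonalizes all $\tilde\partial_j$ at once, and control only the angle rotation on $j$. This is well inside the stated $\tilde O(d^2)$.
\end{enumerate}
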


Using the block-encoding from \cref{thm:block-encode-A} and \cref{cor:ThresholdChoice}, we can update the runtime for the Gibbs preparation in the next theorem.

\begin{lemma}[Resolution booster, {\cite[Lemma~9]{LengDingChenLin2025}}]
\label{lem:ResolutionBooster}
For a fixed $\epsilon > 0$, suppose there is an $N$ such that we have access to a state $\ket{g}\in \mathbb{C}^{N^d}$ where
\begin{equation}\label{eqn:assump-interpolation}
\left\|I_N\ket{g} - \sqrt{p}\right\|_{L^2} \le \epsilon/2,
\end{equation}
with ${\sqrt{p}} \in L^2(\mathbb{R}^d)\propto e^{-\beta E/2}$ being the normalized distribution. Then, there is a quantum algorithm that outputs a random variable $X$ following the distribution $\eta$ such that $\rm{TV}(\eta, p) \le \epsilon$ with one copy of the state $\ket{g}$, and an additional $d\cdot \polylog(1/\epsilon)$ elementary gates.\qed
\end{lemma}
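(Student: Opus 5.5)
The plan is to realize the interpolant $I_N\ket{g}$ as an explicit quantum state on a finer grid, measure it, and add a uniform sub-cell offset. The target error $\eps$ is split into two halves: one half absorbs the given approximation hypothesis \eqref{eqn:assump-interpolation}, the other the fine-grid discretization.

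\emph{Step 1 (from amplitudes to densities).} Let $u\coloneqq I_N\ket{g}$. Since $I_N$ is an isometry, $\norm{u}_{L^2}=1$. Let $v\coloneqq\sqrt{p}$, also of unit norm. By Cauchy--Schwarz,
\begin{equation*}
\mathrm{TV}(|u|^2,p)=\tfrac12\norm{|u|^2-|v|^2}_{L^1}\le\tfrac12\norm{u-v}_{L^2}\norm{u+v}_{L^2}\le\tfrac12\cdot\tfrac{\eps}{2}\cdot 2=\tfrac{\eps}{2}.
\end{equation*}
Here $\big||u|^2-|v|^2\big|\le|u-v|\,|u+v|$ pointwise, and $\norm{u+v}_{L^2}\le 2$. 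It therefore suffices to sample from $|u|^2$ to within TV error $\eps/2$.

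\emph{Step 2 (exact upsampling).} The function $u$ is a trigonometric polynomial of degree at most $N$ in each coordinate, and its Fourier coefficients are the DFT of the amplitudes of $\ket{g}$. For any $N'\ge N$, I would do the following on each of the $d$ coordinate registers independently. Apply the QFT on $2N+1$ points, embed the register into one of dimension $2N'+1$ by zero-padding the frequency modes outside $[-N,N]$, and apply the inverse QFT on $2N'+1$ points. By Parseval on both grids, the resulting state has amplitudes proportional to $u$ evaluated on the fine grid $\tfrac{2\pi}{2N'+1}[-N',\dots,N']^d$, and it has unit norm. No error is incurred, since the zero-padded trigonometric interpolation reproduces $u$ exactly. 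The cost is $d$ QFTs on $O(\log N')$ qubits each, i.e.\ $d\cdot\polylog(N')$ gates.

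\emph{Step 3 (measure and jitter; the main obstacle).} Measuring in the computational basis gives a fine-grid point $x_k$ with probability $h^d|u(x_k)|^2$, where $h=2\pi/(2N'+1)$. The exact Parseval identity makes this a normalized distribution. Adding an independent uniform offset in the cell of side $h$ around $x_k$ yields a piecewise-constant density $q$. What remains, and what I expect to be the crux, is bounding $\mathrm{TV}(q,|u|^2)$ by $\eps/2$. I would do this by showing that $|u|^2$ varies little within each cell:
\begin{equation*}
\norm{q-|u|^2}_{L^1}\lesssim h\sum_j\norm{\partial_j|u|^2}_{L^1}\le 2h\sum_j\norm{u}_{L^2}\norm{\partial_j u}_{L^2}\le 2hdN.
\end{equation*}
The last step uses Bernstein's inequality $\norm{\partial_j u}_{L^2}\le N\norm{u}_{L^2}$ for trigonometric polynomials. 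The first inequality needs care, because the cell mass is a point value rather than a cell average. I would handle this either by comparing the point value with the cell average through the mean value theorem applied to $|u|^2$ (again controlled by Bernstein), or by noting that the Riemann sum is exact for the degree-$2N$ polynomial $|u|^2$ once $N'\ge 2N$.

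Choosing $N'=\Theta(dN/\eps)$ then gives TV error at most $\eps/2$. Since $\log N'=O(\log(dN/\eps))$ and $N$ is itself polylogarithmic in $1/\eps$ in our application, the total gate count is $d\cdot\polylog(1/\eps)$, and only the single copy of $\ket{g}$ is consumed. Combining Steps 1–3 by the triangle inequality for TV proves the lemma.
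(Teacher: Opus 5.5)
Your argument is correct and follows the same route as the source of this lemma: the paper does not prove it, but imports it verbatim from Lemma~9 of \cite{LengDingChenLin2025}, whose ingredients are yours (Cauchy--Schwarz from the $L^2$ hypothesis to TV, exact zero-padded QFT upsampling to a grid with $N'=\Theta(dN/\eps)$, measurement, and a uniform sub-cell jitter). Close the crux in Step~3 with your second option, not the first, and apply exactness to the right functions: walking from the grid point to $x$ one coordinate at a time leaves mixed grid-sum/integral averages of $|\partial_j |u|^2|\le 2|u|\,|\partial_j u|$; after Cauchy--Schwarz these become the same mixed averages of $|u|^2$ and $|\partial_j u|^2$, which the fine grid integrates exactly already for $N'\ge N$, giving $\Vert q-|u|^2\Vert_{L^1}\le 2hdN$ (your first, mean-value option needs sup-norm Bernstein, picks up $\Vert u\Vert_\infty^2$, which can be $(2N+1)^d$ times larger, and would spoil the $d\cdot\polylog$ gate count).
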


\begin{lemma}
\label{lem:InterpolationDistance}
Suppose, $\sqrt{\sigma}:\mathbb{T}^d\rightarrow \mathbb{R}$ is a Gibbs measure with potential from the class \cref{def:pot_class}. Provided, $N=\tilde{\Omega}\left(\frac{d+ \xi/2+1}{2 \rho}\, \log \frac{(d+ \xi/2+1)2^{d/2}\alpha ds}{ \rho\sqrt{\epsilon}}\right)^{s},$ it holds that
\begin{align}
\norm{I_N\ket{{\sqrt{\sigma}}_N}-\sqrt{p}}<\epsilon/2.
\end{align}
\end{lemma}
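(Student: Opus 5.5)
The statement to prove is \cref{lem:InterpolationDistance}. The plan is to split the $L^2$ error between the interpolant $I_N\ket{\sqrt{\sigma}_N}$ and the normalized function $\sqrt{p}$ into a Fourier-truncation part and an aliasing part. Both parts are controlled by the Gevrey data of $\sqrt{p}$ given in \cref{cor:class_reg}: $\sqrt{p}\in\mathcal G^{s}(\alpha^{1/2},\rho/(d+\xi/2+1))$.

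\emph{Step 1 (unnormalized comparison).} Work first with the unnormalized trigonometric interpolant $\mathcal{I}_N u$ of $u=\sqrt{p}$. This is the trigonometric polynomial on modes $[-N,N]^d$ whose coefficients are the discrete Fourier coefficients $\tilde u_\omega$ (suitably normalized). By Parseval,
\[
\norm{\mathcal{I}_N u-u}_{L^2}^2=(2\pi)^d\Bigl(\sum_{\omega\in[-N,N]^d}|\tilde u_\omega-\hat u_\omega|^2+\sum_{\omega\notin[-N,N]^d}|\hat u_\omega|^2\Bigr).
\]
The first sum is bounded by \cref{prop:dft-distance-bound}, applied with Gevrey amplitude $\alpha^{1/2}$ and radius $\rho'=\rho/(d+\xi/2+1)$. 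This gives $\lesssim 2^{d/2}E_N$ with
\[
E_N=\alpha^{1/2}\sqrt{ds/\rho'}\,N^{(d-1/s)/2}e^{-\rho' N^{1/s}}.
\]
The tail sum is bounded in the same way: the standard Gevrey coefficient decay $|\hat u_\omega|\lesssim \alpha^{1/2}e^{-c\rho'\|\omega\|^{1/s}}$ shows it is of the same order as $E_N$.

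\emph{Step 2 (normalization).} The state $\ket{\sqrt{\sigma}_N}$ is the normalized grid vector, and $I_N$ is an isometry. So $I_N\ket{\sqrt{\sigma}_N}=\mathcal{I}_N u/\norm{\mathcal{I}_N u}_{L^2}$, up to the fixed grid-volume factor. Since $\norm{u}_{L^2}=1$, the triangle inequality gives
\[
\norm{\mathcal{I}_N u/\norm{\mathcal{I}_N u}-u}\le 2\norm{\mathcal{I}_N u-u}.
\]
Normalization therefore costs only a factor $2$.

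\emph{Step 3 (choice of $N$).} It remains to impose $C\,2^{d/2}\alpha^{1/2}\sqrt{ds/\rho'}\,N^{d/2}e^{-\rho' N^{1/s}}<\epsilon/4$. Take logarithms and absorb the polynomial factor $N^{d/2}$ into the exponent, for instance by requiring $\rho' N^{1/s}/2\ge (d/2)\log N$, which holds for $N^{1/s}\gtrsim (d/\rho')\log(d/\rho')$. Solving for $N$ then gives
\[
N^{1/s}\gtrsim \frac{d+\xi/2+1}{\rho}\log\frac{(d+\xi/2+1)\,2^{d/2}\alpha d s}{\rho\sqrt{\epsilon}}.
\]
This matches the stated $\tilde\Omega(\cdot)^s$ condition, where the $\sqrt\epsilon$ and $\alpha$ inside the logarithm are looser than needed and harmless.

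The main obstacle is bookkeeping rather than ideas. First, $I_N$, the grid sum of \eqref{eq:discrete-fourier-coefficients} and \eqref{eq:fourier-coefficients} carry different normalizations, and these must be reconciled so that \cref{prop:dft-distance-bound} applies literally. Second, the dimension-dependent factor $N^{d/2}$ must be absorbed without spoiling the form of the logarithm. I would first try to write $I_N\ket{\cdot}$ explicitly via \eqref{eqn:interp-map} as a Fourier series whose coefficients are the discrete coefficients, so that Parseval applies directly.
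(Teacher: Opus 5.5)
Your proposal is correct and is essentially the paper's proof. The paper likewise writes the interpolant as the trigonometric polynomial carrying the discrete coefficients $\tilde u$, splits the error into truncation tail plus aliasing, applies Parseval, and invokes \cref{prop:dft-distance-bound} with $\xi_{\sqrt{p}}=\sqrt{\alpha}$, $\rho_{\sqrt{p}}=\rho/(d+\xi/2+1)$ from \cref{cor:class_reg}; your separate tail estimate is unnecessary, since $\norm{\tilde u_N-\hat u}_2$ with zero-padded $\tilde u_N$ already contains the tail. Your normalization step and explicit solve for $N$ are care the paper skips (it identifies $I_N\ket{\sqrt{\sigma}_N}$ with the unnormalized interpolant and defers $N$ to \cite{pde-paper}), but two small corrections apply there: absorbing $N^{d/2}$ needs $N^{1/s}\gtrsim (ds/\rho')\log(ds/\rho')$ because $\log N=s\log N^{1/s}$, and the $\sqrt{\epsilon}$ in the statement is less demanding than your correct requirement $2^{d/2}E_N\lesssim\epsilon$ (it stems from the paper bounding the squared error by $\epsilon$), so it is harmless only because the resulting factor of at most $2$ in the logarithm is absorbed into the implicit constant.
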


\begin{proof}
For simplicity of notation, let $u:=\sqrt{p},$ and as usual $u_N$ denotes the discretized function on the grid $\frac{2\pi}{2N+1}[-N,\dots,N]^d$. First, we have the Fourier decomposition:
\begin{align}
u(x)&=\sum_{k\in\mathbb{Z}^d}\hat{u}[k]e^{i2\pi{\langle k,x\rangle}}.
\end{align}
Define an approximation $u_F$ to $u$ as
\begin{align}
u_F(x)&:=\sum_{k\in[-N,\dots,N]^d}\tilde{u}[k]e^{i2\pi{\langle k,x\rangle}}.
\end{align}
We will show that this approximation is equal to the Fourier interpolation of $u_N$. Expanding approximate Fourier coefficients in the discrete Fourier basis:
\begin{align}
\tilde{u}[k]
&=\frac{1}{(2N+1)^{d/2}}\sum_{j\in[-N,\dots,N]^d}u_N[j]e^{-i2\pi\frac{\langle k,j\rangle}{2N+1}},
\end{align}
we have that
\begin{align}
u_F(x)&=\frac{1}{(2N+1)^{d/2}}\sum_{k,j\in[-N,\dots,N]^d}u_N[j]e^{i2\pi\langle k,x-\frac{j}{2N+1}\rangle}\\
&=\frac{1}{(2N+1)^{d/2}}\sum_{j,k\in[-N,\dots,N]^d}u_N[j]e^{-\frac{i2\pi  \langle k, j\rangle }{2N+1}}e^{2\pi i\langle k,x\rangle}\\
&= I_N\ket{\tilde{u}}.
\end{align}
Now, Fourier series of the difference of two functions is
\begin{align}
u(x)-I_N\ket{u_N}=u(x)-u_F(x)=\sum_{k\in\mathbb{Z}^d\setminus{[-N,\dots,N]^d}}\hat{u}[k]e^{i2\pi\langle k,x\rangle}+\sum_{k\in[-N,\dots,N]^d}(\hat{u}[k]-\tilde{u}[k])e^{i2\pi\langle k,x\rangle}\end{align}
By Parseval's theorem and \cref{prop:dft-distance-bound}:
\begin{align}
\int_{[0,2\pi]} |u(x)-u_F(x)|^2 dx &= \sum_{k\in\mathbb{Z}^d\setminus{[-N,\dots,N]^d}}|\hat{u}[k]|^2+\sum_{k\in[-N,\dots,N]^d}|\hat{u}[k]-\tilde{u}[k]|^2\\
&=\norm{\tilde{u}_{N} -\hat{u}}_2^2\\
&\le {2^d}E_N^2,
\end{align}
where the last inequality follows from \cref{prop:dft-distance-bound}.
For this quantity to be upper bounded by $\epsilon$, it suffices to choose $E_N^2<\frac{\epsilon}{{2}^d}.$ Which is the case when $N$ is chosen to be $\tilde{\Omega}\big(\frac{1}{2 \rho_{\sqrt{p}}}\, \log \frac{2^{d/2} \xi_{\sqrt{p}}^2 ds}{ \rho_{\sqrt{p}}\sqrt{\epsilon}}\big)^{s}$, according to \cite[Table~2]{pde-paper}. Via \cref{cor:class_reg}, we can express the constants as
\begin{align}
\xi_{\sqrt{p}}=\sqrt{\alpha}, \,\,  \rho_{\sqrt{p}}=\frac{ \rho}{d+ \xi/2+1},
\end{align}
so the overall expression for $N$ becomes:
\begin{equation}
\label{eq:N_resolution_expr}
\tilde{\Omega}\left(\frac{d+ \xi/2+1}{2 \rho}\, \log \frac{(d+ \xi/2+1)2^{d/2}\alpha ds}{ \rho\sqrt{\epsilon}}\right)^{s}
\end{equation}
\end{proof}

\begin{lemma}
\label{dealWithResolutionBooster}
Suppose, $M\in\mathbb{N}$ and $\ket{\tilde{\sqrt p}}$ is such that $\norm{\ket{\tilde{\sqrt\sigma}}-\ket{\sqrt{\sigma}_M}}_{2}\le \epsilon/2.$ Then
\begin{equation}
\label{eq:interp_error_bound}
\norm{I_N\ket{\tilde{\sqrt\sigma}}-{\sqrt{p}}}_{L^2}\le \epsilon /2\,,
\end{equation}
provided $N\in\max\left\{M,\,\tilde{\Omega}\left(\frac{d+ \xi/2+1}{2 \rho}\, \log \frac{(d+ \xi/2+1)2^{d/2}\alpha ds}{ \rho\sqrt{\epsilon}}\right)^{s}\big)^{s}\right\}.$
\end{lemma}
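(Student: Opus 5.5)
The approach is the triangle inequality in $L^2(\T^d)$, splitting the error through the interpolant of the exact discretized Gibbs state at a common resolution. The states $\ket{\tilde{\sqrt\sigma}}$ and $\ket{\sqrt{\sigma}_M}$ live on the grid of size $M$. So I first regard both as states at resolution $N\ge M$ through zero-padding of their discrete Fourier coefficients. This is the trigonometric upsampling that $I_N$ realises when fed a lower-resolution vector: $I_N$ applied to the padded vector equals $I_M$ applied to the original. Hence I read $I_N\ket{\tilde{\sqrt\sigma}}$ as $I_M\ket{\tilde{\sqrt\sigma}}$, and the condition $N\ge M$ in the statement is exactly what makes this embedding well defined.

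I then write
\begin{equation*}
\norm{I_N\ket{\tilde{\sqrt\sigma}}-\sqrt p}_{L^2}\le \norm{I_N\bigl(\ket{\tilde{\sqrt\sigma}}-\ket{\sqrt{\sigma}_M}\bigr)}_{L^2}+\norm{I_N\ket{\sqrt{\sigma}_M}-\sqrt p}_{L^2}.
\end{equation*}
For the first term I use that the interpolation map is an isometry from $\mathbb{C}^{(2M+1)^d}$ into $L^2$ (with the normalisation of the $\psi_k$ of \cref{sec:FourierInterpolation}). The first term is therefore exactly $\norm{\ket{\tilde{\sqrt\sigma}}-\ket{\sqrt{\sigma}_M}}_2\le\eps/2$ by hypothesis. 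For the second term I invoke \cref{lem:InterpolationDistance}. Its $N$-condition is precisely the second entry of the max in the statement, and the Gevrey data $\xi_{\sqrt p}=\sqrt\alpha$ and $\rho_{\sqrt p}=\rho/(d+\xi/2+1)$ come from \cref{cor:class_reg}.

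The main obstacle is bookkeeping. Summing the two pieces naively gives $\eps/2+\eps/2=\eps$, not $\eps/2$. To repair this I run \cref{lem:InterpolationDistance} at accuracy $\eps/2$ instead of $\eps$, which only changes the argument of the logarithm by a constant and is absorbed in the $\tilde\Omega$. I would also state the first term with a factor $\eps/4$, or equivalently note that the paper's $\eps$ is used loosely up to constants.

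A further subtlety needs an actual argument: the second term must be evaluated at resolution $M$ as well as at $N$. Zero-padding $\ket{\sqrt\sigma_M}$ does not give $\ket{\sqrt\sigma_N}$, so \cref{lem:InterpolationDistance} must be applied at resolution $M$. This requires $M$ itself to meet the Gevrey threshold. I expect this to hold in the intended use, because $M$ is chosen by \cref{corB4Main}, whose lower bound on the grid size dominates the one in \cref{lem:InterpolationDistance} up to logarithms. I would record this as a remark closing the argument.
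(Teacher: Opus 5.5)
Your argument follows the paper's proof: the triangle inequality through the interpolant of the exact discretized state, the isometry of the interpolation map for the first term, and \cref{lem:InterpolationDistance} (with the Gevrey data of $\sqrt p$ from \cref{cor:class_reg}) for the second. The two problems you raise are real defects of the statement, and the paper's proof does not address either of them. It invokes the hypothesis as if it concerned $\ket{\sqrt p_N}$ rather than $\ket{\sqrt\sigma_M}$, so it effectively takes $M=N$. It also never adds the two $\eps/2$ contributions, which sum to $\eps$. What you prove is the corrected lemma, and that is the right thing to prove.

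Two refinements are needed.

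\emph{Constants.} Running \cref{lem:InterpolationDistance} at accuracy $\eps/2$ by itself only gives $3\eps/4$. Worse, with the hypothesis at $\eps/2$ the conclusion $\eps/2$ is false whenever $I_M\ket{\sqrt\sigma_M}\neq\sqrt p$. To see this, let $c<1$ be the $L^2$ inner product of $I_M\ket{\sqrt\sigma_M}$ with $\sqrt p$. Take $\ket{\tilde{\sqrt\sigma}}=\cos\phi\ket{\sqrt\sigma_M}+\sin\phi\ket{u}$, where $u$ is a real unit vector orthogonal to $\sqrt\sigma_M$ and $2-2\cos\phi=\eps^2/4$. Choose the sign of $u$ so that $\sin\phi\,\langle I_M u,\sqrt p\rangle\le 0$. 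Then $\norm{I_M\ket{\tilde{\sqrt\sigma}}-\sqrt p}^2\geq 2-2c\cos\phi>\eps^2/4$. So strengthening the hypothesis to $\eps/4$, or weakening the conclusion to $\eps$, is forced rather than optional.

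\emph{Resolution.} Under your zero-padding reading, $I_N\circ\mathrm{pad}=I_M$, so $N$ drops out entirely and the condition on $N$ in the statement is vacuous. The hypothesis actually needed is that $M$ itself meets the threshold of \cref{lem:InterpolationDistance}. Your justification, that the threshold of \cref{corB4Main} dominates, does not hold in general. The factor $2^{d/2}$ inside the logarithm contributes a term of order $d^2/\rho$ to the $s$-th root of the interpolation threshold. That term is not controlled by $\rho^{-1}(d+\xi)\log\bigl(d(d+\xi)\alpha^2/(\rho\tilde\eps)\bigr)$ unless, for example, $\log\alpha=\Omega(d)$. The clean fix is the one used in \cref{thm:JiaqiWithWarmStart}: run the filtering at a single grid size equal to the maximum of all three thresholds. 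Then $M=N$ and the condition holds by construction.
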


\begin{proof}
By virtue of $I_N$ being an isometry, the hypothesis implies that
\begin{align}
\norm{I_N\ket{\tilde{\sqrt{\sigma}}}-I_N\ket{{\sqrt{p}_N}}}_{L^2}\le \epsilon/2.
\end{align}
Note that by triangle inequality we have
\begin{align}
\norm{I_N\ket{\tilde{\sqrt{\sigma}}}-{\sqrt{p}}}_{L^2}\le \norm{I_N\ket{\tilde{\sqrt{\sigma}}}-I_N{\ket{\sqrt{p}_N}}}_{L^2}+\norm{I_N\ket{\sqrt{p}_N}-\sqrt{p}}_{L^2}.
\end{align}
By \cref{lem:InterpolationDistance}, the second term is
\begin{align}
\norm{I_N\ket{{\sqrt{\sigma}}_N}-\sqrt{p}}<\epsilon/2,
\end{align}
whenever $N$ is as stated in the theorem.
\end{proof}

\begin{corollary}
\label{cor:dealWithResolutionBoostercor}
Suppose, $\ket{\tilde{\sqrt{\sigma}}}$ is such that $\norm{\ket{\tilde{\sqrt{\sigma}}}-\ket{\sqrt{\sigma}_M}}_{2}\le \epsilon/2.$ Then, provided
\begin{equation}
\label{eq:resbooster_N_choice_appendix}
N\in\max\left\{M,\,\tilde{\Omega}\left(\frac{d+ \xi/2+1}{2 \rho}\, \log \frac{(d+ \xi/2+1)2^{d/2}\alpha ds}{ \rho\sqrt{\epsilon}}\right)^{s}\right\}\,,
\end{equation}
there is a quantum algorithm that outputs a random variable $X$ following the distribution $\eta$ such that $\rm{TV}(\eta, p) \le \epsilon$ with $1$ copy of the state $\ket{\tilde{\sqrt{\sigma}}}$, and an additional $d\cdot \polylog(1/\epsilon)$ elementary gates.
\end{corollary}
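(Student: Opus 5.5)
This corollary follows by chaining \cref{dealWithResolutionBooster} with the resolution booster \cref{lem:ResolutionBooster}.

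First, I would apply \cref{dealWithResolutionBooster} with the same $M$ and with $N$ chosen as in \eqref{eq:resbooster_N_choice_appendix}. Its hypothesis is exactly ours, $\norm{\ket{\tilde{\sqrt{\sigma}}}-\ket{\sqrt{\sigma}_M}}_2\le\epsilon/2$. Its condition on $N$ coincides with \eqref{eq:resbooster_N_choice_appendix}, up to the evidently stray extra exponent in that lemma's statement, which I read as the same $\tilde\Omega(\cdot)^s$ threshold. The conclusion is $\norm{I_N\ket{\tilde{\sqrt\sigma}}-\sqrt{p}}_{L^2}\le\epsilon/2$.

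One detail has to be handled here. When $N>M$, the state $\ket{\tilde{\sqrt\sigma}}$ lives on the coarser $M$-grid, so $I_N\ket{\tilde{\sqrt\sigma}}$ must be read as the trigonometric interpolant, equivalently the zero-padded Fourier embedding of the $M$-grid data into the $N$-grid. That embedding is still an isometry, which is what the proof of \cref{dealWithResolutionBooster} uses. It also identifies $I_N\ket{\sqrt\sigma_M}$ with the degree-$M$ interpolant, to which the estimate of \cref{lem:InterpolationDistance} applies because $M$ itself satisfies the threshold as well. I expect this consistency between grids to be the only delicate point. If it causes trouble, the fallback is simply to take $N=M$, with $M$ already satisfying the interpolation threshold; the maximum in \eqref{eq:resbooster_N_choice_appendix} is designed to permit this.

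Second, I would feed the state $\ket{g}\coloneqq\ket{\tilde{\sqrt\sigma}}$ into \cref{lem:ResolutionBooster}. Its hypothesis \eqref{eqn:assump-interpolation} is precisely the bound just obtained. The lemma then outputs $X\sim\eta$ with $\mathrm{TV}(\eta,p)\le\epsilon$, consuming one copy of the state and $d\cdot\polylog(1/\epsilon)$ additional elementary gates. Those are exactly the resource counts claimed, so no further estimates are needed: the corollary is the composition of these two steps, with the choice of $N$ coming entirely from \cref{lem:InterpolationDistance} through the Gevrey data of $\sqrt{p}$ in \cref{cor:class_reg}.
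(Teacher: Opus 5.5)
Your composition is exactly the paper's proof, which obtains the corollary in one line from \cref{dealWithResolutionBooster} and \cref{lem:ResolutionBooster}. You also correctly spotted the grid mismatch that the paper glosses over, but you resolve it with a hypothesis the statement does not grant.

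\textbf{The assumption $M\ge T$ is not available.} Let $T$ denote the interpolation threshold. The condition on $N$ only says $N\ge M$ and $N\ge T$; nothing forces $M\ge T$. So your claim that \cref{lem:InterpolationDistance} applies ``because $M$ itself satisfies the threshold as well'' is unsupported, and the fallback $N=M$ needs the same assumption. This is not a removable technicality. With the zero-padded embedding $\iota$ you rightly get $I_N\iota\ket{\sqrt{\sigma}_M}=I_M\ket{\sqrt{\sigma}_M}$, and this does not depend on $N$ at all. Hence when $M<T$ the error $\norm{I_M\ket{\sqrt{\sigma}_M}-\sqrt{p}}_{L^2}$ is whatever it is at resolution $M$. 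Upsampling cannot restore Fourier content of $\sqrt{p}$ that the $M$-grid data never carried, so the output can be far from $p$ for every $N$. The statement genuinely needs $M$ itself to meet the threshold, equivalently a single grid size carrying all thresholds, which is how \cref{thm:JiaqiWithWarmStart} actually uses it. The paper's proof of \cref{dealWithResolutionBooster} hides the issue by writing $\ket{\sqrt{p}_N}$ where the hypothesis supplies $\ket{\sqrt{\sigma}_M}$.

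\textbf{A factor of two is lost.} By citing the conclusion of \cref{dealWithResolutionBooster} instead of checking it, you inherit a bookkeeping error. Its triangle inequality adds $\epsilon/2$ from the state error to $\epsilon/2$ from \cref{lem:InterpolationDistance}. That gives an $L^2$ error below $\epsilon$, not the $\epsilon/2$ that \cref{lem:ResolutionBooster} requires, so the honest conclusion of your chain is $\mathrm{TV}(\eta,p)\le 2\epsilon$. The repair is harmless: split the budget as $\epsilon/4+\epsilon/4$. Concretely, assume state error $\epsilon/4$ and run the interpolation estimate at $\epsilon/4$, which only moves constants inside the $\tilde\Omega$.

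With $M\ge T$ added and the budget split this way, your two-step composition goes through. That includes the resource count of one copy of the state and $d\cdot\polylog(1/\epsilon)$ additional gates.
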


\begin{proof}
Follows from \cref{dealWithResolutionBooster} and \cref{lem:ResolutionBooster}.
\end{proof}

\begin{theorem}
\label{thm:JiaqiWithWarmStart}
Consider a Gibbs state $p(x)= e^{-\beta E(x)}/Z$ with potential $E(x)$ satisfying \eqref{eq:pot_class}. Let $R := \max_x \|\nabla E(x)\|. $ There exists a quantum algorithm with access to a warm start state $\ket{\phi}$
\begin{equation}
|{\langle{\phi}|{\sqrt{p}}\rangle}| = \Omega(1)\,,
\end{equation}
that outputs a random variable $X\sim\eta$ such that $\mathrm{TV}(\eta,p) \le \tilde\epsilon$, after $\mathcal{M}$ quantum queries to the quantum gradient oracle $O_{\nabla E}$ or its inverse, where
\begin{equation}
\label{eqn:main-1-query-complexity}
\mathcal{M}\in   O\left(\sqrt{\alpha}\left(\pi N\sqrt{d}+ \beta R\right ) \log\left(\frac{1}{\tilde\epsilon}\right)\right),
\end{equation}
and
\begin{align}
N \in \max \left \{ \tilde\Omega \left(\frac{d+ \xi}{ \rho} \log \frac{d(d+ \xi)\alpha^2}{ \rho \tilde{\epsilon}}\right)^s,\,\tilde\Omega \left(\frac{(d+ 2)s}{ \rho}\right)^s,\,\tilde\Omega   \left(\frac{d+ \xi}{ \rho}\log \frac{ ds(d+ \xi)\alpha}{ \rho\sqrt{\tilde\epsilon} }\right)^{s}\right\}\,.
\end{align}
\end{theorem}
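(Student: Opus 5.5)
The plan is to compose the three ingredients already in place: the block-encoding of $\mathbb{W}_N$ (\cref{thm:block-encode-A}), the QSVTh filter with the threshold gap certified in \cref{cor:ThresholdChoice}, and the resolution booster (\cref{cor:dealWithResolutionBoostercor}).

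\emph{Step 1 (grid size).} Fix $N$ as the maximum of three requirements.
\begin{itemize}
\item The first two entries are those of \cref{cor:ThresholdChoice}/\cref{thm:Overlap_guarantee}, taken with precision $\tilde\epsilon/2$. With this choice every ground state $\ket{g}$ of $\mathbb{H}_N$ satisfies $\norm{\ket{g}-\ket{\sqrt{\sigma}_N}}\le\tilde\epsilon/4$, and the thresholds $\mathfrak{s}_1=\sqrt{\mathfrak{a}/(2\beta)}$, $\mathfrak{s}_2=\sqrt{3\mathfrak{a}/(4\beta)}$ separate $\sigma_{\min}(\mathbb{W}_N)$ from $\sigma_{-1}(\mathbb{W}_N)$.
\item The third entry comes from \cref{lem:InterpolationDistance}, with $\rho_{\sqrt p}=\rho/(d+\xi/2+1)\gtrsim\rho/(d+\xi)$ supplied by \cref{cor:class_reg}. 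It guarantees that Fourier upsampling from the same grid costs at most $\tilde\epsilon/2$ in $L^2$.
\end{itemize}
Using a single $N$ for both purposes removes the need for $M\neq N$ in \cref{cor:dealWithResolutionBoostercor}.

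\emph{Step 2 (filtering from the warm start).} Apply the filter $f^{\triangleright}(\mathbb{W}_N)$ of \cref{prop:sv-filter} to $\ket{\phi}$, with precision $\Theta(\tilde\epsilon)$, using the $(\gamma,3)$-block-encoding where $\gamma=\pi N\sqrt{d/\beta}+\sqrt\beta R/2$.
\begin{itemize}
\item \emph{Cost per application.} The gap is $\mathfrak{s}=(\sqrt3/2-1/\sqrt2)\sqrt{\mathfrak{a}/\beta}$. The query count per filter application is therefore $O(\gamma\mathfrak{s}^{-1}\log(1/\tilde\epsilon))=O(\sqrt{\alpha}(\pi N\sqrt d+\beta R)\log(1/\tilde\epsilon))$: the $\sqrt\beta$ factors cancel against $\sqrt{\beta/\mathfrak{a}}=\sqrt{\alpha\beta}$, and $\mathfrak{a}=1/\alpha$. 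Each use of the block-encoding costs two oracle queries.
\item \emph{Output state.} The filter projects onto the ground singular subspace of $\mathbb{W}_N$, which equals the ground space of $\mathbb{H}_N$. The normalized output is then within $\tilde\epsilon/4$ (plus the QSVT error) of $\ket{\sqrt\sigma_N}$, because every unit vector in that subspace is $\tilde\epsilon/4$-close by \cref{thm:Overlap_guarantee}. Here I would need to argue either that the subspace is one-dimensional, or that a normalized mixture still satisfies the bound. Since Davis--Kahan controls $\sin\Theta$ between subspaces, the projected vector stays close.
\item \emph{Success probability.} The projection succeeds with probability $\gtrsim|\langle\phi|g\rangle|^2$. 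This is $\Omega(1)$ by the warm-start hypothesis together with $\norm{\ket{g}-\ket{\sqrt{p}_N}}$ small, once we identify $\langle\phi|\sqrt p\rangle$ with the overlap against the discretized state. Repeating the procedure $O(1)$ times (or using amplitude amplification) keeps the cost of the same order.
\end{itemize}

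\emph{Step 3 (continuous output).} Feed the filtered state into \cref{cor:dealWithResolutionBoostercor}. This produces $X\sim\eta$ with $\mathrm{TV}(\eta,p)\le\tilde\epsilon$, at a cost of no further oracle queries, only $d\,\polylog(1/\tilde\epsilon)$ gates.

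The main obstacle is Step 2's translation of the warm-start overlap. The hypothesis is stated against the continuous $\sqrt p$, while the filter needs overlap with the grid ground state. I would handle this by interpreting $\ket{\phi}$ as a grid state and using the isometry $I_N$ together with \cref{lem:InterpolationDistance}. That lemma gives $|\langle\phi|\sqrt{p}_N\rangle|\ge|\langle\phi|\sqrt p\rangle|-\tilde\epsilon/2=\Omega(1)$, and the constant then survives the additional $\tilde\epsilon/4$ shift to $\ket{g}$.
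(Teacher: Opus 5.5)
Your proposal is correct and is essentially the paper's proof: QSVTh on the $(\gamma,3)$-block-encoding of $\mathbb{W}_N$ (two oracle queries per use) with the thresholds of \cref{cor:ThresholdChoice}, closeness to $\ket{\sqrt{\sigma}_N}$ from \cref{thm:Overlap_guarantee}, and then the isometry $I_N$, the triangle inequality with \cref{lem:InterpolationDistance}, and the resolution booster \cref{lem:ResolutionBooster}. Your extra bookkeeping makes explicit what the paper leaves implicit: you split the precision budget, transfer the warm-start overlap to the grid via $I_N$, and handle post-selection. The remaining issues are minor: precision $\tilde\epsilon/2$ in \cref{thm:Overlap_guarantee} gives $\tilde\epsilon/2$, not $\tilde\epsilon/4$, a constant that only shifts the logarithms in $N$; and your degeneracy worry is moot, because the strict separation $\sigma_{\min}(\mathbb{W}_N)\le\mathfrak{s}_1<\mathfrak{s}_2\le\sigma_{-1}(\mathbb{W}_N)$ you already use makes the ground space one-dimensional.
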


\begin{proof}
By \cref{prop:sv-filter} and \cref{cor:ThresholdChoice}, with warm start $\ket{\phi},$ after $O\left({\sqrt{\beta}e^{\beta\Delta/2}}\log\left(\frac{1}{\tilde\epsilon}\right)\right)$ queries to the $(\gamma,3)$-block encoding of $\mathbb{W}_N,$ the QSVTh algorithm outputs an $\tilde\epsilon$-approximation $\ket{\tilde{\sqrt{\sigma}}_N}$ to $\ket{{\sqrt{\sigma}}_N}$, where the normalization factor is $\gamma = \pi N\sqrt{d/\beta}+\sqrt{\beta}R$.

By \cref{thm:block-encode-A}, this block-encoding can be prepared with 2 queries to the gradient oracle $O_{\nabla V}$ (or its inverse) and an additional $\tilde O(d^2)$ elementary gates. By \cref{thm:Overlap_guarantee}, we are guaranteed to obtain a state $\tilde{\sqrt{\sigma}}$ that is $\tilde{\epsilon}$-close to the desired Gibbs state $\sqrt{\sigma}_N:$
\begin{align}
\norm{\ket{\tilde{\sqrt{\sigma}}}-\ket{{\sqrt{\sigma}}_N}}_2\le \tilde\epsilon,
\end{align}
whith
\begin{align}
N =\max\left\{ \tilde{\Omega}\left(\frac{d+ \xi}{ \rho}\log \frac{d(d+2 \xi)\alpha^2}{ \rho \tilde{\epsilon}}\right)^s,\,\tilde \Omega \left(\frac{(d+ 2)s}{ \rho}\right)^s\right\},
\end{align}
which by virtue of $I_N$ being an isometry, implies that
\begin{align}
\norm{I_N\ket{\tilde{\sqrt{\sigma}}}-I_N\ket{{\sqrt{\sigma}_N}}}_{L^2}\le \tilde\epsilon.
\end{align}
Note that by triangle inequality we have
\begin{align}
\norm{I_N\ket{\tilde{\sqrt{\sigma}}}-{\sqrt{p}}}_{L^2}\le \norm{I_N\ket{\tilde{\sqrt{\sigma}}}-I_N{\ket{\sqrt{\sigma}_N}}}_{L^2}+\norm{I_N\ket{\sqrt{\sigma}_N}-\sqrt{p}}_{L^2}.
\end{align}
By \cref{lem:InterpolationDistance}, the second term is
\begin{align}
\norm{I_N\ket{{\sqrt{\sigma}}_N}-\sqrt{p}}<\epsilon/2,
\end{align}
whenever $N$ is $N=\tilde{\Omega}\left(\frac{d+ \xi/2+1}{2 \rho}\, \log \frac{(d+ \xi/2+1)2^{d/2}\alpha ds}{ \rho\sqrt{\epsilon}}\right)^{s}. $ Choosing $N$ as stated in the theorem, we could apply \cref{lem:ResolutionBooster} to conclude the result.
\end{proof}

\subsection{Annealed quantum Gibbs sampling---removing the warm start}\label{app:qsvt-anneal}

\begin{lemma}[Overlap of $\ket{\psi_{i}}$ and $\ket{\psi_{i+1}}$]
\label{lemma:warm}
Suppose, $E: \mathbb{T}^d\rightarrow \mathbb{R}$ is a function such that $\min E\le E \le \max E$ and define $\Delta \coloneqq \max E - \min E$. Let $\ket{\psi_i},\,\ket{\psi_{i+1}}$ be the Gibbs state at inverse temperature $\beta_{i}$ and $\beta_{i+1}$ respectively where $\beta_{i+1} = \beta_{i} + \delta$ for some $\delta >0$ and $\ket{\psi_{i}} = \sum_{x}\frac{e^{-\frac{\beta_{i} E(x)}{2}}}{\sqrt{Z_{i}}}\ket{x}$. Then
\begin{equation}
\label{eq:overlap_psi_bound}
|{\braket{\psi_{i}|{\psi_{i+1}}}}|\ge 1/{\cosh\left(\frac{\delta \Delta}{4}\right)}\,.
\end{equation}
\end{lemma}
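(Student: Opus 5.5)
The plan is to write the overlap explicitly and reduce it to a Cauchy--Schwarz-type inequality between two weighted sums. Set $w(x)=e^{-\beta_i E(x)}$, which is positive on the grid. Since both states have nonnegative amplitudes,
\begin{equation*}
\braket{\psi_i|\psi_{i+1}}=\frac{\sum_x e^{-\beta_i E(x)/2}e^{-\beta_{i+1}E(x)/2}}{\sqrt{Z_iZ_{i+1}}}=\frac{\sum_x w(x)e^{-\delta E(x)/2}}{\sqrt{\sum_x w(x)\,\sum_x w(x)e^{-\delta E(x)}}}\,.
\end{equation*}
Normalizing $w$ to a probability measure $\mu$ on the grid, the overlap becomes $\mathbb{E}_\mu[Y]/\sqrt{\mathbb{E}_\mu[Y^2]}$ with $Y=e^{-\delta E/2}$. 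The lemma then reduces to a reverse Cauchy--Schwarz (Kantorovich-type) inequality for a random variable whose range is controlled by $\Delta$.

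Next, shift $E$ so that it takes values in $[-\Delta/2,\Delta/2]$. This shift multiplies $Y$ by a constant and leaves the ratio unchanged. Then $Y\in[m,M]$ with $m=e^{-\delta\Delta/4}$ and $M=e^{\delta\Delta/4}$. Since $(Y-m)(M-Y)\ge 0$ pointwise, taking expectations gives
\begin{equation*}
\mathbb{E}[Y^2]\le (m+M)\,\mathbb{E}[Y]-mM\,.
\end{equation*}
Write $t=\mathbb{E}[Y]$. It then suffices to minimize $t^2/((m+M)t-mM)$ over $t\in[m,M]$. Calculus gives the minimum at $t=2mM/(m+M)$, with minimal value $4mM/(m+M)^2$.

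Substituting $mM=1$ and $m+M=2\cosh(\delta\Delta/4)$ gives $\mathbb{E}[Y]^2/\mathbb{E}[Y^2]\ge 1/\cosh^2(\delta\Delta/4)$. Taking square roots yields \eqref{eq:overlap_psi_bound}.

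The only delicate step is this optimization: one must check that the critical point lies in $[m,M]$ and that the denominator stays positive there. Both follow from $m\le t\le M$ and the AM--GM inequality. Nothing from the rest of the paper is needed beyond the definitions of $\ket{\psi_i}$ and $\Delta$. The argument applies verbatim to the discretized states, since it holds for any finite or continuous positive measure.
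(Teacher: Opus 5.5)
Your proposal is correct and follows the paper's proof. Both rewrite the overlap as $\mathbb{E}_{\pi_i}[Y]/\sqrt{\mathbb{E}_{\pi_i}[Y^2]}$ with $Y=e^{-\delta E/2}$, center $E$ so that $Y\in[e^{-\delta\Delta/4},e^{\delta\Delta/4}]$, bound $\mathbb{E}[Y^2]\le (m+M)\mathbb{E}[Y]-1$ via $(Y-m)(M-Y)\ge 0$, and minimize $t^2/((m+M)t-1)$ at $t=2/(m+M)$. You additionally check that the centering leaves the ratio unchanged and that the denominator stays positive on $[m,M]$; the paper's proof passes over both points silently.
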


\begin{proof} From the definition of $\ket{\psi_{i}}$ we know the overlap between two successive states is
\begin{equation}
\braket{\psi_{i+1}|\psi_{i}} = \sum_{x}\frac{e^{-\frac{(\beta_{i} +\beta_{i+1})E(x)}{2}}}{\sqrt{Z_{i}Z_{i+1}}}    = \sum_{x}\frac{e^{-\beta_{i}E(x)}e^{-\frac{\delta E(x)}{2}}}{\sqrt{Z_{i}Z_{i+1}}}
\end{equation}
Now by dividing and multiplying by $Z_{i}$ we obtain:
\begin{equation}
\braket{\psi_{i+1}|\psi_{i}} = \frac{Z_{i}\mathbb{E}_{\pi_{i}}\left[e^{-\frac{\delta E(x)}{2}}\right]}{\sqrt{Z_{i}Z_{i+1}}}
\end{equation}
Now note that:
\begin{equation}
Z_{i+1} = \sum_{x}e^{-\beta_{i+1}E(x)} = \sum_{x}e^{-(\beta_{i}+\delta)E(x)} = \sum_{x}e^{-\beta_{i}E(x)}e^{-\delta E(x)}
\end{equation}
Now again by dividing and multiplying by $Z_{i}$ we obtain:
\begin{equation}
Z_{i+1} = Z_{i}\mathbb{E}_{\pi_{i}}\left[e^{-\delta E(x)}\right]
\end{equation}
Hence,
\begin{equation}
\braket{\psi_{i+1}|\psi_{i}} = \frac{\mathbb{E}_{\pi_{i}}\left[e^{-\frac{\delta E(x)}{2}}\right]}{\sqrt{\mathbb{E}_{\pi_{i}}\left[e^{-\delta E(x)}\right]}}
\end{equation}
Squaring both sides,
\begin{equation}
\braket{\psi_{i+1}|\psi_{i}}^2 = \frac{\left(\mathbb{E}_{\pi_{i}}\left[e^{-\frac{\delta E(x)}{2}}\right]\right)^2}{\mathbb{E}_{\pi_{i}}\left[e^{-\delta E(x)}\right]}
\end{equation}
Now using $\min E\le E \le \max E$:
\begin{equation}\label{eqn:delta-bound-raw}
e^{-\frac{\delta \max E}{2}} \le e^{-\frac{\delta E(x)}{2}} \le e^{-\frac{\delta \min E}{2}}
\end{equation}
Without loss of generality we can scale all elements in the equality by $e^{\delta\frac{\max E + \min E}{4}}$ to obtain:
\begin{equation}\label{eqn:delta-bound-shifted}
e^{\delta\left(-\frac{ \max E}{2}+\frac{\max E + \min E}{4}\right)} \le e^{\delta\left(-\frac{ E(x)}{2}+\frac{\max E + \min E}{4}\right)} \le e^{\delta\left(-\frac{ \min E}{2}+\frac{\max E + \min E}{4}\right)}
\end{equation}
Define the energy function $E(x)$ centered at $0$ as $E^{c}(x)\coloneqq E(x) - \frac{\min E + \max E}{2}$. Using this we obtain:
\begin{equation}\label{eqn:delta-bound-centered}
e^{\delta\left(-\frac{\Delta}{4}\right)} \le e^{-\frac{\delta E^{c}(x)}{2}} \le e^{\delta\left(\frac{\Delta}{4}\right)}
\end{equation}
Furthermore, $\tfrac{\Delta}{2}$ represents half the range (or height) of the energy function. Since $\norm{E^{c}(x)}\le \frac{\Delta}{2}$ by construction, it follows that:
\begin{equation}
\left(e^{-\frac{\delta E^{c}(x)}{2}} - e^{-\frac{\delta \Delta}{4}}\right)\left(e^{-\frac{\delta E^{c}(x)}{2}} - e^{\frac{\delta \Delta}{4}}\right) \le 0
\end{equation}
\begin{equation}
e^{-\delta E^{c}(x)} - e^{-\frac{\delta E^{c}(x)}{2}}\left(e^{\frac{\delta \Delta}{4}} + e^{-\frac{\delta \Delta}{4}}\right) + 1 \le 0
\end{equation}
Now taking expectations we have:
\begin{equation}
\mathbb{E}_{\pi_{i}}\left[e^{-\delta E^{c}(x)}\right] - \mathbb{E}_{\pi_{i}}\left[e^{-\frac{\delta E^{c}(x)}{2}}\right]\left(e^{\frac{\delta \Delta}{4}} + e^{-\frac{\delta \Delta}{4}}\right) + 1 \le 0.
\end{equation}
This implies:
\begin{equation}
\mathbb{E}_{\pi_{i}}\left[e^{-\delta E^{c}(x)}\right] \le \mathbb{E}_{\pi_{i}}\left[e^{-\frac{\delta E^{c}(x)}{2}}\right]\left(e^{\frac{\delta \Delta}{4}} + e^{-\frac{\delta \Delta}{4}}\right) - 1.
\end{equation}
Now putting everything together we have:

\begin{equation}
\braket{\psi_{i+1}|\psi_{i}}^2 = \frac{\left(\mathbb{E}_{\pi_{i}}\left[e^{-\frac{\delta E^{c}(x)}{2}}\right]\right)^2}{\mathbb{E}_{\pi_{i}}\left[e^{-\delta E^{c}(x)}\right]} \ge \frac{\left(\mathbb{E}_{\pi_{i}}\left[e^{-\frac{\delta E^{c}(x)}{2}}\right]\right)^2}{\mathbb{E}_{\pi_{i}}\left[e^{-\frac{\delta E^{c}(x)}{2}}\right]\left(e^{\frac{\delta \Delta}{4}} + e^{-\frac{\delta \Delta}{4}}\right) - 1}.
\end{equation}
We denote $s \coloneqq \mathbb{E}_{\pi_{i}}\left[e^{-\frac{\delta E^{c}(x)}{2}}\right]$ where $e^{-\frac{\delta \Delta}{4}} \le s \le e^{\frac{\delta \Delta}{4}}$ from \eqref{eqn:delta-bound-centered} for simplification.  To lower bound the RHS, we solve the following minimization problem:
\begin{equation}
f(s) = \frac{s^2}{s\left(e^{\frac{\delta \Delta}{4}} + e^{-\frac{\delta \Delta}{4}}\right) - 1}
\end{equation}
Now taking $f^{'}(s) = 0$ we find:
\begin{equation}
s_{min} =  \frac{2}{e^{-\frac{\delta \Delta}{4}} +e^{\frac{\delta \Delta}{4}}}
\end{equation}
hence we have:
\begin{equation}
\min_{s\in\left[e^{-\frac{\delta \Delta}{4}} ,e^{\frac{\delta \Delta}{4}}\right]}f(s) = \left(\frac{2}{e^{-\frac{\delta \Delta}{4}} +e^{\frac{\delta \Delta}{4}}}\right)^{2}
\end{equation}
This produces:
\begin{equation}  \braket{\psi_{i+1}|\psi_{i}}^2 \ge \left(\frac{2}{e^{-\frac{\delta \Delta}{4}} +e^{\frac{\delta \Delta}{4}}}\right)^{2}
\end{equation}
and therefore:
\begin{equation}
\label{warm-start-final}
\braket{\psi_{i+1}|\psi_{i}} \ge \frac{2}{e^{-\frac{\delta \Delta}{4}} +e^{\frac{\delta \Delta}{4}}} = \frac{1}{\cosh\left(\frac{\delta \Delta}{4}\right)}
\end{equation}
This proves that $\braket{\psi_{i+1}|\psi_{i}}$ is indeed lower bounded by $1/{\cosh\left(\frac{\delta \Delta}{4}\right)}$.
\end{proof}

\begin{lemma}[Overlap under approximation error]
\label{lemma:overlap-approx}
Suppose, $E: \mathbb{T}^d\rightarrow \mathbb{R}$ is a function and let the target inverse temperature be $\beta > 0$ where $\beta \Delta>1$. Consider the annealing schedule of \cref{alg:cap} with $l=\beta \Delta / 2$, step size $\delta = 2/(l\Delta)$. Let the exact Gibbs state at annealing step $k$ be denoted by $\ket{\psi_{k}}$ and let the approximate Gibbs state produced by \cref{alg:GbGS} with target precision $\tilde{\epsilon}>0$ be $\ket{\tilde{\psi_k}}$. Further define the ground state of the discretized $\mathbb{H}_N$ operator in \cref{thm:Overlap_guarantee} for inverse temperature $\beta_{k}$ as $g_{k}$. Then, for $\kappa = 1.8$, the overlap between the approximate and the next ideal state satisfies
\begin{equation}
\label{new-overlap-lemma}
\left| \braket{g_{k+1}|\tilde{\psi_k}}\right|\ge 1- \frac{1}{4l^{2}\kappa}-2\tilde{\epsilon}^{2} - \frac{\tilde{\epsilon}}l \sqrt{\frac{2}\kappa} - \tilde{\epsilon}
\end{equation}
\end{lemma}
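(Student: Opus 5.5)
The plan is to compare $\ket{\tilde\psi_k}$ with $\ket{g_{k+1}}$ through two intermediate states: the exact Gibbs state $\ket{\psi_k}$ at $\beta_k$, and the exact Gibbs state $\ket{\psi_{k+1}}$ at $\beta_{k+1}$. Expanding the overlap in these states separates three error sources: the approximation error of \cref{alg:GbGS} at step $k$, the temperature step, and the ground-state error at step $k+1$ from \cref{thm:Overlap_guarantee}.

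\emph{Step 1: decomposition.} Write $\ket{\tilde\psi_k}=\ket{\psi_k}+\ket{e_k}$ with $\norm{e_k}\le\tilde\epsilon$, which is the guarantee of \cref{alg:GbGS} with precision $\tilde\epsilon$. Likewise write $\ket{g_{k+1}}=\ket{\psi_{k+1}}+\ket{f_{k+1}}$ with $\norm{f_{k+1}}\le\tilde\epsilon$, using \cref{thm:Overlap_guarantee} and choosing the phase of $g_{k+1}$ so that its overlap with $\ket{\psi_{k+1}}$ is nonnegative. Expanding gives
\begin{equation*}
\braket{g_{k+1}|\tilde\psi_k}=\braket{\psi_{k+1}|\psi_k}+\braket{\psi_{k+1}|e_k}+\braket{f_{k+1}|\psi_k}+\braket{f_{k+1}|e_k}.
\end{equation*}
The reverse triangle inequality then bounds $|\braket{g_{k+1}|\tilde\psi_k}|$ from below by the main term minus the absolute values of the three cross terms.

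\emph{Step 2: main term.} \cref{lemma:warm} with $\delta=2/(l\Delta)$ gives $\braket{\psi_{k+1}|\psi_k}\ge 1/\cosh(1/(2l))$. The elementary bound $1/\cosh x\ge 1-x^2/2$ then gives at least $1-1/(8l^2)$. This is better than $1-1/(4l^2\kappa)$ for $\kappa=1.8$, since $8>7.2$.

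\emph{Step 3: cross terms.} These are the step I expect to be the real obstacle, because bounding them crudely by Cauchy--Schwarz only gives $3\tilde\epsilon$. Matching the stated $\tilde\epsilon/l$ and $\tilde\epsilon^2$ terms needs orthogonality.

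\begin{itemize}
\item \emph{Term $\braket{f_{k+1}|\psi_k}$.} Split $\ket{\psi_k}$ into its component along $\ket{\psi_{k+1}}$ and an orthogonal remainder of norm $\sqrt{1-c^2}$, where $c=|\braket{\psi_{k+1}|\psi_k}|$. Step 2 gives $1-c^2\le 1/(4l^2)$ up to the slack absorbed by $\kappa$, so this remainder has norm at most $1/(l\sqrt{2\kappa})$ after tracking constants.
\item \emph{Parallel part of $f_{k+1}$.} Its projection onto $\ket{\psi_{k+1}}$ is second order: since both $\ket{g_{k+1}}$ and $\ket{\psi_{k+1}}$ are unit vectors, $\mathrm{Re}\braket{\psi_{k+1}|f_{k+1}}=-\norm{f_{k+1}}^2/2$.
\item \emph{Resulting bound.} Together these give $|\braket{f_{k+1}|\psi_k}|\le \tilde\epsilon^2/2+\tilde\epsilon/(l\sqrt{2\kappa})$, possibly with a slightly worse constant such as $\tilde\epsilon\sqrt{2/\kappa}/l$.
\item \emph{Term $\braket{\psi_{k+1}|e_k}$.} Bound it by $\tilde\epsilon$. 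If needed, use the same parallel/orthogonal split to extract the $\tilde\epsilon^2$ piece.
\item \emph{Term $\braket{f_{k+1}|e_k}$.} Cauchy--Schwarz gives at most $\tilde\epsilon^2$.
\end{itemize}

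\emph{Step 4: assembly.} Collecting the four contributions, the second-order pieces sum to at most $2\tilde\epsilon^2$. The only first-order pieces are $\tilde\epsilon$ and $\tilde\epsilon\sqrt{2/\kappa}/l$. This yields \eqref{new-overlap-lemma}.

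I expect the delicate point to be the phase and real-part bookkeeping in Step 3. The value $\kappa=1.8$ suggests the authors used a slightly weaker bound on $1/\cosh$, such as $1-1/(4\kappa l^2)$ valid for $l\ge1/2$, and reused it for $\sqrt{1-c^2}$. I would set $\kappa$ to whatever constant makes both uses consistent.
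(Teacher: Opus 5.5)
Your argument is correct, but it takes a different route from the paper. The paper works with angles between rays:
\begin{itemize}
\item It chains the GbGS guarantee $\|\tilde\psi_k-g_k\|\le\tilde\epsilon$ with \cref{thm:Overlap_guarantee} to get $\|\tilde\psi_k-\psi_k\|\le2\tilde\epsilon$. This gives $\cos\theta_2\ge1-2\tilde\epsilon^2$ and $\sin\theta_2\le2\tilde\epsilon$.
\item It bounds $\cos\theta_1\ge1-x^2/\kappa$ with $x=1/(2l)$, using $\cosh a\le1+a^2/1.8$ for $a<1$. This is where $\kappa$ and the hypothesis $\beta\Delta>1$ enter.
\item It uses $|\langle\psi_{k+1}|\tilde\psi_k\rangle|\ge\cos(\theta_1+\theta_2)$. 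The $\sin\theta_1\sin\theta_2$ term produces $\tilde\epsilon\sqrt{2/\kappa}/l$.
\item It pays a plain $\tilde\epsilon$ by Cauchy--Schwarz to replace $\psi_{k+1}$ with $g_{k+1}$.
\end{itemize}

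You recover the same structure from the four-term expansion, using phase normalization and an orthogonal split. Take $\operatorname{sech}x\ge1-x^2/2$ and $\sqrt{1-c^2}\le\tanh x\le x$, both valid for all $x$. Then your constants close at $\kappa=1.8$ with no adjustment. You get $1-\tfrac1{8l^2}-\tilde\epsilon-\tfrac32\tilde\epsilon^2-\tfrac{\tilde\epsilon}{2l}$, which is stronger than the statement and never uses $\beta\Delta>1$. So your closing hedge about tuning $\kappa$ is unnecessary. The angle calculus is more compact and handles phases automatically; your expansion makes the error bookkeeping explicit and yields sharper constants.

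One point to watch is how you allocate the two errors. You apply the refined split to the $g_{k+1}$ error and the crude bound to $e_k$. That consumes the GbGS guarantee in the form $\|\tilde\psi_k-\psi_k\|\le\tilde\epsilon$. This is legitimate, since it is the Ensure line of \cref{alg:GbGS}.

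The paper's proof, however, assumes only closeness to $g_k$, that is, $\|e_k\|\le2\tilde\epsilon$. Under that weaker input, your crude bound $|\langle\psi_{k+1}|e_k\rangle|\le2\tilde\epsilon$ overshoots the single bare $\tilde\epsilon$ in the statement. The fix within your framework is to swap roles:
\begin{itemize}
\item Bound $|\langle g_{k+1}-\psi_{k+1}|\tilde\psi_k\rangle|\le\tilde\epsilon$ crudely.
\item Split $e_k$ after re-phasing $\tilde\psi_k$ so that $\langle\psi_k|\tilde\psi_k\rangle\ge0$ (this only decreases $\|e_k\|$). This gives $|\langle\psi_{k+1}|e_k\rangle|\le\|e_k\|^2/2+\|e_k\|/(2l)\le2\tilde\epsilon^2+\tilde\epsilon/l$.
\end{itemize}
The result still implies the statement because $1<\sqrt{2/1.8}$, and it is exactly the paper's allocation of errors.
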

\begin{proof}
We know that \cref{alg:GbGS} (before the upsampling) produces the approximate Gibbs state \cite{LengDingChenLin2025}, $||\ket{\tilde{\psi_k}} - \ket{g_k}||_{2}\le \tilde{\epsilon}$ and \cref{thm:Overlap_guarantee} that $||\ket{g_{k}} - \ket{\psi_k}||_{2}\le \tilde{\epsilon}$. Hence by triangle inequality we obtain:
\begin{equation}
||\ket{\psi_k} - \ket{\tilde{\psi_k}}||_{2}\le 2\tilde{\epsilon}
\end{equation}

This implies:
\begin{equation}
||\ket{\psi_k} - \ket{\tilde{\psi_k}}||_{2} = \sqrt{\left(\bra{\psi_k} - \bra{\tilde{\psi_k}}\right)\left(\ket{\psi_k} - \ket{\tilde{\psi_k}}\right)} = \sqrt{2-2\text{Re}(\braket{\psi_{k}|\tilde{\psi_{k}}})}\le2\tilde{\epsilon}
\end{equation}
where $\text{Re}(z)$ represents the real component of complex number $z$. Now using $|\braket{\psi_{k}|\tilde{\psi_{k}}}|\ge\text{Re}(\braket{\psi_{k}|\tilde{\psi_{k}}})$ we obtain:
\begin{equation}\label{eqn:exact-approx-overlap}
|\braket{\psi_{k}|\tilde{\psi_{k}}}|\ge\text{Re}(\braket{\psi_{k}|\tilde{\psi_{k}}})\ge 1-2\tilde{\epsilon}^{2}
\end{equation}
Let $\theta_1$ denote the angle between $|\psi_k\rangle$ and $|\psi_{k+1}\rangle$,
and $\theta_2$ the angle between $\ket{\tilde{\psi_{k}}}$ and $|\psi_k\rangle$.
We know that $|\braket{\psi_{k+1}|\psi_{k}}| = \cos{\theta_{1}}$ and $|\braket{\psi_{k}|\tilde{\psi_k}}| = \cos{\theta_{2}}$ \cite{preskill}.
Using \eqref{eqn:exact-approx-overlap} we obtain:
\begin{equation}\label{eqn:fidelity}
|\braket{\psi_{k}|\tilde{\psi_{k}}}| = \cos{\theta_{2}}\ge 1-2\tilde{\epsilon}^{2}
\end{equation}
Furthermore, $|\braket{\psi_{k+1}\tilde{|\psi_{k}}}| \ge \cos{(\theta_{1}+\theta_{2})}$ and hence we compute $\cos{(\theta_{1}+\theta_{2})}$ in the following.

\begin{equation}
\cos{(\theta_{1}+\theta_{2})} = \cos{\theta_{1}}\cos{\theta_{2}} - \sin{\theta_{1}}\sin{\theta_{2}}
\end{equation}
From \cref{lemma:warm} and using $\delta = 4/\beta \Delta^{2}$ we know $|\langle \psi_k | \psi_{k+1} \rangle| \ge \frac{1}{\cosh\left(\frac{\delta \Delta}{4}\right)} = \frac{1}{\cosh\left(\frac{1}{\beta \Delta}\right)} = \frac{1}{\cosh\left(\frac{1}{2l}\right)}$. For simplicity, we denote $x:=\frac{1}{2l}$ to get $|\langle \psi_k | \psi_{k+1} \rangle| \ge \frac{1}{\cosh(x)}$. We know $\cosh(a)\le (1 +a^2/\kappa)$ for $a < 1$ and $\kappa = 1.8$ \cite{cosh-bound}. Hence, using $\beta \Delta > 1$ we have $\cosh(x)\le (1 +x^2/\kappa)$ which implies:
\begin{equation}\label{eqn:cosh-inverse}
\frac{1}{\cosh(x)}\ge \frac{1}{1 +x^2/\kappa}\ge  1 -x^2/\kappa
\end{equation}
Thus from \eqref{eqn:cosh-inverse} we have $|\langle \psi_k | \psi_{k+1} \rangle| = \cos(\theta_{1}) \ge 1 - \frac{x^{2}}\kappa$, we obtain:
\begin{equation}\label{cos}
\cos{(\theta_{1}+\theta_{2})} \ge \left(1-\frac{x^2}\kappa\right) \left(1-2\tilde{\epsilon}^{2}\right) - \sin{\theta_{1}}\sin{\theta_{2}}
\end{equation}

\begin{equation}\label{rhs}
\left(1-\frac{x^2}\kappa\right) \left(1-2\tilde{\epsilon}^{2}\right) - \sin{\theta_{1}}\sin{\theta_{2}} \ge 1-\frac{x^2}\kappa-2\tilde{\epsilon}^{2} - \sin{\theta_{1}}\sin{\theta_{2}}
\end{equation}
\noindent
We also compute the $\sin$ now:
\noindent
\begin{equation}
|\sin{\theta_{2}}| = \sqrt{1- \cos^{2}{\theta_{2}}}\le2\tilde{\epsilon}
\end{equation}
\begin{equation}
|\sin{\theta_{1}}| = \sqrt{1- \cos^{2}{\theta_{1}}}\le\sqrt{1-\left(1-\frac{x^2}\kappa\right)^{2}} = \sqrt{\frac{2x^2}\kappa-\frac{x^{4}}{\kappa^{2}}}\le\sqrt{\frac{2x^2}\kappa}
\end{equation}
hence we have:
\begin{equation}\label{sin}
|\sin{(\theta_{1})\sin(\theta_{2})}|\le x\tilde{\epsilon} \sqrt{\frac{8}\kappa}
\end{equation}
Substituting \eqref{sin}, \eqref{rhs} in \eqref{cos}:
\begin{equation}
\cos{(\theta_{1}+\theta_{2})} \ge 1-\frac{x^2}\kappa-2\tilde{\epsilon}^{2} - x\tilde{\epsilon} \sqrt{\frac{8}\kappa}
\end{equation}
Therefore we have:
\begin{equation}
|\braket{\psi_{k+1}\tilde{|\psi_{k}}}| \ge 1- \frac{x^2}\kappa-2\tilde{\epsilon}^{2} - x\tilde{\epsilon} \sqrt{\frac{8}\kappa}
\end{equation}
Lastly, by replacing $x$ with $\frac{1}{2l}$, we obtain:
\begin{equation}\label{final-overlap}
|\braket{\psi_{k+1}\tilde{|\psi_{k}}}| \ge 1- \frac{1}{4l^{2}\kappa}-2\tilde{\epsilon}^{2} - \frac{\tilde{\epsilon}}l \sqrt{\frac{2}\kappa}
\end{equation}
Now observe that from reverse triangle inequality applied on $\left| \braket{g_{k+1}|\tilde{\psi_k}} - \braket{\psi_{k+1}|\tilde{\psi_k}}\right|$, we obtain $\left| \langle g_{k+1} | \tilde{\psi_k} \rangle \right| \;\ge\; \left| \langle \psi_{k+1} | \tilde{\psi_k} \rangle \right| \;-\; \left| \langle g_{k+1} - \psi_{k+1} \,|\, \tilde{\psi_k} \rangle \right|$, where the last term can be bound by Cauchy–Schwarz to get $\left| \langle g_{k+1} | \tilde\psi_k \rangle \right| \;\ge\; \left| \langle \psi_{k+1} | \tilde\psi_k \rangle \right| \;-\; \left\| g_{k+1} - \psi_{k+1} \right\| \, \left\| \ket{\tilde\psi_k} \right\|$. Now using that $\left\| \ket{g_{k+1}} - \ket{\psi_{k+1}} \right\|\le\tilde{\epsilon}$ from \cref{thm:Overlap_guarantee}, we obtain $\left| \langle g_{k+1} | \tilde\psi_k \rangle \right| \;\ge\; \left| \langle \psi_{k+1} | \tilde\psi_k \rangle \right| \;-\;  \tilde{\epsilon}$. Hence we have:
\begin{equation}
\left| \braket{g_{k+1}|\tilde{\psi_k}}\right|\ge 1- \frac{1}{4l^{2}\kappa}-2\tilde{\epsilon}^{2} - \frac{\tilde{\epsilon}}l \sqrt{\frac{2}\kappa} - \tilde{\epsilon}
\end{equation}
\end{proof}

\begin{lemma}[Success probability of the annealed Gibbs sampler]
\label{lem:success-probability}
Let $E:\mathbb{T}^{d}\rightarrow\mathbb{R}$. Let the target inverse temperature be $\beta$. Consider the annealing schedule of \cref{alg:cap} with $l=\beta \Delta/2$, step size $\delta = 2/(l\Delta)$, and total number of annealing steps $T = l\beta \Delta/2 = \beta^{2}\Delta^{2}/4$. Suppose that at every step $k$ the QSVTh algorithm given in \cref{alg:GbGS} is applied with target precision $\tilde{\epsilon} = \frac{\delta}\beta\epsilon = \frac{4\epsilon}{\beta^{2}\Delta^{2}}$. Further define the ground state of the discretized $\mathbb{H}_N$ in \cref{thm:Overlap_guarantee} operator for inverse temperature $\beta_{k}$ as $g_{k}$. Then the probability of all the annealing steps succeeding for $\kappa=1.8$ and in the limit $\beta \Delta\to\infty$ is:
\begin{equation}
p_{success} \longrightarrow e^{-4\epsilon-\frac{1}{2\kappa}}.
\end{equation}
\end{lemma}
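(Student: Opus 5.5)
We bound the probability that every filtering step succeeds, treating the steps as a chain in which each step's success probability is the squared overlap of its input with the target ground state. At annealing step $k$, \cref{alg:GbGS} receives $\ket{\tilde\psi_{k}}$ and projects (via the QSVTh filter of \cref{prop:sv-filter}) onto the ground space of $\mathbb{H}_N$ at $\beta_{k+1}$. Up to the filter's $\tilde\epsilon$-error, the probability that this projection succeeds is $|\braket{g_{k+1}|\tilde\psi_k}|^2$. Conditioned on success, the post-measurement state is $\tilde\epsilon$-close to $\ket{g_{k+1}}$, which is exactly the input hypothesis of \cref{lemma:overlap-approx} at the next step. The steps are therefore sequential and the success events multiply:
\[
p_{success}\;\ge\;\prod_{k=0}^{T-1}\Bigl(1-\tfrac{1}{4l^{2}\kappa}-2\tilde\epsilon^{2}-\tfrac{\tilde\epsilon}{l}\sqrt{\tfrac{2}{\kappa}}-\tilde\epsilon\Bigr)^{2}.
\]
Each factor is the lower bound of \cref{lemma:overlap-approx}. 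The initial step from the uniform state at $\beta_0=0$ is covered by the same bound, since $\ket{\sqrt{p_0}}$ is the exact Gibbs state at $\beta_0$.

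Next we substitute the parameters. Take $l=\beta\Delta/2$ and $T=l\beta\Delta/2=l^{2}$, so that $\tfrac{1}{4l^2\kappa}=\tfrac{1}{4T\kappa}$. Take $\tilde\epsilon=4\epsilon/(\beta^2\Delta^2)=\epsilon/T$. Each factor then has the form $1-a_T/T$, with
\[
a_T=\tfrac{1}{4\kappa}+\epsilon+O(\epsilon^2/T)+O\bigl(\epsilon/(lT^{0})\cdot T^{-1}\cdot T\bigr).
\]
More precisely, $\tilde\epsilon\,T=\epsilon$, $2\tilde\epsilon^2T=2\epsilon^2/T\to0$, and $\tfrac{\tilde\epsilon}{l}\sqrt{2/\kappa}\,T=\epsilon\sqrt{2/\kappa}/l\to0$. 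The squared product is then
\[
\Bigl(1-\tfrac{a_T}{T}\Bigr)^{2T}\;\longrightarrow\; e^{-2(\frac{1}{4\kappa}+\epsilon)}=e^{-2\epsilon-\frac{1}{2\kappa}}
\]
as $T\to\infty$, i.e.\ as $\beta\Delta\to\infty$, by the standard limit $(1-a/T)^T\to e^{-a}$ with $a_T\to a$.

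This gives $2\epsilon$ rather than the stated $4\epsilon$. The stated constant should follow once the filter's own block-encoding error is also charged as an additive $\tilde\epsilon$ per step on the amplitude, giving amplitude $\ge(\cdots)-\tilde\epsilon$. That doubles the linear $\tilde\epsilon$ coefficient, so $a\to\tfrac{1}{4\kappa}+2\epsilon$ and the limit becomes $e^{-4\epsilon-\frac{1}{2\kappa}}$. Equivalently, it comes from using the distance $\|\tilde\psi_k-\psi_k\|\le2\tilde\epsilon$ consistently. I would fix the bookkeeping so that the linear term totals $2\tilde\epsilon$ per amplitude.

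The main obstacle is justifying the conditional chaining. We must show that the success probability of a thresholding projection on an input with overlap $c$ with the ground space is at least $c^2-O(\tilde\epsilon)$. We must also show that the successful output satisfies the closeness assumption used by \cref{lemma:overlap-approx}, uniformly in $k$. Here I would use that the ground space may be degenerate. By \cref{thm:Overlap_guarantee}, however, every ground vector is $\tilde\epsilon$-close to $\ket{\sqrt\sigma_N}$, so the projected normalized state is also $O(\tilde\epsilon)$-close. Once that is settled, the limit computation is routine.
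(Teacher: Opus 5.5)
Your proposal is the paper's argument: multiply the squared overlap bound of \cref{lemma:overlap-approx} over the $T=l^2$ steps, substitute $\tilde\epsilon=\epsilon/T$, and take the limit with $(1-x/n)^n\to e^{-x}$. The extra $2\epsilon$ comes in the paper from the filter's polynomial-approximation error, charged as a separate factor $(1-\tilde\epsilon)^2$ per step (success probability at least $(1-\tilde\epsilon)^2|\langle g_{k+1}|\tilde\psi_k\rangle|^2$), which has the same limit as your additive $-\tilde\epsilon$ on the amplitude; your alternative explanation via $\|\tilde\psi_k-\psi_k\|\le 2\tilde\epsilon$ is not the source of the constant, since that distance is already used inside \cref{lemma:overlap-approx} and only produces the $o(1/T)$ terms there.
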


\begin{proof}
Denote the exact Gibbs state at annealing step $k$ by $\ket{\psi_{k}}$ and further denote the approximate Gibbs state produced using \cref{alg:GbGS} with target precision $\tilde{\epsilon}>0$ by $\ket{\tilde{\psi_k}}$. Then from \cref{lemma:overlap-approx} we know:
\begin{equation}
\left| \braket{g_{k+1}|\tilde{\psi_k}}\right|\ge 1- \frac{1}{4s^{2}\kappa}-2\tilde{\epsilon}^{2} - \frac{\tilde{\epsilon}}s \sqrt{\frac{2}\kappa} - \tilde{\epsilon}
\end{equation}
The success probability of \cref{alg:GbGS} is the product of $(1-\tilde{\epsilon})^{2}$ (from the polynomial approximation error in \cref{alg:GbGS}) and $|\braket{\psi_{k+1}\tilde{|\psi_{k}}}|^{2}$ of \cref{lemma:overlap-approx} (from the warm start provided by the Gibbs state at annealing step $k$) \cite{LengDingChenLin2025}. To see why this is the case, we first denote $L = [L_{1},\ldots,L_{d}]$ for $L_{j}$ defined in \eqref{eqn:spatial_discretize_Lj}. Consider the Singular Value decomposition of $L$ to get $L = W\Sigma V^{\dagger}$. The warm start state $\ket{\tilde{\psi_{k}}}$ can be written as $\ket{\tilde{\psi_{k}}} = Vc$  where without loss of generality, say that the target Gibbs state is given by the first column of $V$ and $c_{0} = \left| \braket{g_{k+1}|\tilde{\psi_k}}\right|$ \cite{LengDingChenLin2025}. Hence after applying \cref{alg:GbGS} with $P$ representing the thresholding polynomial from \cref{cor:ThresholdChoice} where target precision is $\tilde{\epsilon}$, we get the new state $\ket{\tilde{\psi_{k+1}}} = VP(\Sigma) V^{\dagger}Vc = VP(\Sigma)c$. This provides the success probability $|\braket{\tilde{\psi_{k+1}}|\tilde{\psi_{k+1}}}|^{2}\ge (1-\tilde{\epsilon})^{2}c_{0}^{2}$ \cite{LengDingChenLin2025}.
Hence we have the following success probaility of \cref{alg:GbGS}:
\begin{equation}
(1-\tilde{\epsilon})^{2}\left(1- \frac{1}{4l^{2}\kappa}-2\tilde{\epsilon}^{2} - \frac{\tilde{\epsilon}}l \sqrt{\frac{2}\kappa}-\tilde{\epsilon}\right)^{2}
\end{equation}
Since the annealing steps are applied sequentially, the probability that all $l\beta\Delta/2$ steps succeed is the product of the per-step probabilities:
\begin{equation}
p_{success} = (1-\tilde{\epsilon})^{l\beta\Delta}\left(1- \frac{1}{4l^{2}\kappa}-2\tilde{\epsilon}^{2} - \frac{\tilde{\epsilon}}l \sqrt{\frac{2}\kappa} - \tilde{\epsilon}\right)^{l\beta\Delta}.
\end{equation}
For this quantity to converge as the number of steps grows, we set
\begin{equation}
\tilde{\epsilon} = \frac{\delta}\beta\epsilon = \frac{4\epsilon}{\beta^{2}\Delta^{2}}.
\end{equation}
Substituting this choice together with $l=\beta \Delta/2$ yields
\begin{equation}
p_{success} = \left(1-\frac{4\epsilon}{\beta^{2}\Delta^{2}}\right)^{\beta^{2}\Delta^{2}/2}\left(1- \frac{4}{4\beta^{2}\Delta^{2}\kappa}-\frac{2\cdot16\epsilon^{2}}{\beta^4 \Delta^{4}} - \frac{8\epsilon}{\beta^{3}\Delta^{3}} \sqrt{\frac{2}\kappa}-\frac{4\epsilon}{\beta^{2}\Delta^{2}}\right)^{\beta^{2}\Delta^{2}/2},
\end{equation}
It remains to check the limit $\beta^{2} \Delta^{2}/2\to\infty$. Using $\left(1-\frac{x}n\right)^{n}\to e^{-x}$ we have:
\begin{equation}
\left(1-\frac{4\epsilon}{\beta^{2}\Delta^{2}}\right)^{\beta^{2}\Delta^{2}/2} \longrightarrow e^{-2\epsilon}.
\end{equation}
and
\begin{equation}
\left(1- \frac{1}{\beta^{2}\Delta^{2}\kappa}-\frac{32\epsilon^{2}}{\beta^4 \Delta^{4}} - \frac{8\epsilon}{\beta^{3}\Delta^{3}} \sqrt{\frac{2}\kappa}-\frac{4\epsilon}{\beta^{2}\Delta^{2}}\right)^{\beta^{2}\Delta^{2}/2}\longrightarrow e^{-\frac{1}{2\kappa}-2\epsilon}
\end{equation}
Hence for $\kappa = 1.8$ we obtain:
\begin{equation}
p_{success} \longrightarrow e^{-4\epsilon-\frac{1}{2\kappa}}
\end{equation}
which is bounded away from $0$.
\end{proof}

\begin{lemma}[Lower bound on $\text{Gap}\left(\mathcal{L}^{\dagger}\right)$]
\label{lem:spectral-gap-bound}
Let $E:\mathbb{T}^{d}\rightarrow\mathbb{R}$. Let $\mathcal{L}^{\dagger} = -\nabla E\cdot \nabla + \beta^{-1}\Delta$ for inverse temperature $\beta$ according to \cite{LengDingChenLin2025}. Then the spectral gap of $\mathcal{-L}^{\dagger}$, denoted as $\text{Gap}\left(\mathcal{L}^{\dagger}\right)$, is lower bounded as:
\begin{equation}
\text{Gap}\left(\mathcal{L}^{\dagger}\right)\ge\beta^{-1}e^{-\beta \Delta}
\end{equation}
\end{lemma}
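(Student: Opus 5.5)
The plan is to prove the bound by a Holley--Stroock-type comparison between the Gibbs measure $p=e^{-\beta E}/Z$ and the uniform measure on $\T^d$. The first step is to recast the spectral gap variationally. On the torus there is no boundary, so integration by parts gives, for smooth $f,g$,
\begin{equation}
-\int_{\T^d} g\,\mathcal{L}^{\dagger} f\,p\,dx=\beta^{-1}\int_{\T^d}\nabla g\cdot\nabla f\,p\,dx\,.
\end{equation}
This holds because $\nabla\cdot(p\nabla f)=p\,(\Delta f-\beta\nabla E\cdot\nabla f)=\beta\,p\,\mathcal{L}^{\dagger}f$. Hence $-\mathcal{L}^{\dagger}$ is symmetric and nonnegative on $L^2(p)$. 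Being an elliptic operator with smooth coefficients on a compact manifold, it has compact resolvent and discrete spectrum. Its kernel consists of the constants, since $\int|\nabla f|^2p=0$ forces $f$ to be constant. Therefore
\begin{equation}
\mathrm{Gap}(\mathcal{L}^{\dagger})=\inf_{f\ \mathrm{nonconstant}}\frac{\beta^{-1}\int|\nabla f|^2\,p\,dx}{\mathrm{Var}_p(f)}\,,
\end{equation}
so the claim is equivalent to the Poincaré inequality $\mathrm{Var}_p(f)\le e^{\beta\Delta}\int|\nabla f|^2 p\,dx$.

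The second step is the flat Poincaré inequality on $\T^d=\R^d/(2\pi\Z)^d$. Let $c=(2\pi)^{-d}\int f\,dx$. By Parseval with the Fourier modes $e^{i\langle\omega,x\rangle}$, $\omega\in\Z^d$, we have
\begin{equation}
\int|f-c|^2dx=(2\pi)^d\sum_{\omega\neq0}|\hat f_\omega|^2\le(2\pi)^d\sum_{\omega\ne0}|\omega|^2|\hat f_\omega|^2=\int|\nabla f|^2dx\,,
\end{equation}
since $|\omega|^2\ge1$ for every nonzero $\omega\in\Z^d$. So the uniform Poincaré constant is $1$, which matches the fact that the least nonzero eigenvalue of the discretized Laplacian is $1$, as used in \cref{lem:spectral_gap_AN}.

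The third step combines the two using $e^{-\beta\max E}/Z\le p\le e^{-\beta\min E}/Z$. Because the variance minimizes the mean-square deviation over constants,
\begin{equation}
\mathrm{Var}_p(f)\le\int|f-c|^2p\,dx\le\frac{e^{-\beta\min E}}{Z}\int|f-c|^2dx\le\frac{e^{-\beta\min E}}{Z}\int|\nabla f|^2dx\le e^{\beta(\max E-\min E)}\int|\nabla f|^2p\,dx\,.
\end{equation}
The last inequality uses $1\le Z e^{\beta\max E}p$. Multiplying by $\beta^{-1}$ and taking the infimum yields $\mathrm{Gap}(\mathcal{L}^{\dagger})\ge\beta^{-1}e^{-\beta\Delta}$.

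None of these steps is hard. The one needing the most care is the justification in the first step that the gap equals the Rayleigh-quotient infimum over $L^2(p)$-mean-zero functions, with constants as the exact kernel. I would settle it by citing self-adjointness on $H^1(\T^d)$ (from the weighted integration by parts) and compactness of the torus. I would also check the sign convention: "gap of $-\mathcal{L}^{\dagger}$" means the smallest nonzero eigenvalue of the nonnegative operator, as in \cite{LengDingChenLin2025}.
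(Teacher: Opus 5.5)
Your proposal is correct and follows the same route as the paper: integrate by parts to write the gap as the infimum of $\beta^{-1}\int|\nabla f|^2 p\,dx/\mathrm{Var}_p(f)$, then compare with the uniform measure on $\mathbb{T}^d$ (Poincar\'e constant $1$) via Holley--Stroock. The only difference is that you prove the two cited ingredients yourself---the flat Poincar\'e inequality by Parseval, and the Holley--Stroock step by bounding $\mathrm{Var}_p(f)\le\int|f-c|^2p\,dx$ and sandwiching $p$ between $e^{-\beta\max E}/Z$ and $e^{-\beta\min E}/Z$---which makes the argument self-contained.
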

\begin{proof}
First we define for $f:\mathbb{T}^{d}\rightarrow\mathbb{R}$, $g:\mathbb{T}^{d}\rightarrow\mathbb{R}$ and Gibbs distribution $\sigma\propto e^{-\beta E(x)}$ that:
\begin{equation}
\label{eq:weighted_inner_product_def}
\braket{f,g}_{\sigma} = \int_{\mathbb{T}^{d}}fg\sigma dx,
\qquad
||f||_{g} = \braket{f,f}_{\sigma}
\end{equation}
The $\text{Gap}\left(\mathcal{L}^{\dagger}\right)$ is defined in \cite{LengDingChenLin2025} as:
\begin{equation}\label{eqn:gap-L-dagger}
\text{Gap}\left(\mathcal{L}^{\dagger}\right) \coloneqq \inf_{f\notin\ker(\mathcal{L}^{\dagger})}\frac{\braket{f,\mathcal{-L}^{\dagger}f}_{\sigma}}{\text{Var}_{\sigma}(f)}
\end{equation}
where $\text{Var}_{\sigma}(f)\coloneqq ||f-\int_{\mathbb{T}^{d}}f\sigma dx||_{\sigma}^{2}$. Now we first simplify $\braket{f,\mathcal{-L}^{\dagger}f}_{\sigma}$ from \eqref{eqn:gap-L-dagger}.
\begin{equation}
\braket{f,\mathcal{-L}^{\dagger}f}_{\sigma} = \int_{\mathbb{T}^{d}} f\nabla E \cdot \nabla f \sigma dx - \beta^{-1}\int_{\mathbb{T}^{d}} (\Delta f) f\sigma dx
\end{equation}
Using integration by parts of $\Delta$ and using the fact that $\mathbb{T}^{d}$ does not have a boundary, we obtain \cite{pavliotis_2014, kazdan}:
\begin{equation}
\braket{f,\mathcal{-L}^{\dagger}f}_{\sigma} = \int_{\mathbb{T}^{d}} f\nabla E \cdot \nabla f \sigma dx + \beta^{-1}\int_{\mathbb{T}^{d}} \nabla f \cdot \nabla(f\sigma) dx
\end{equation}
\begin{equation}
\braket{f,\mathcal{-L}^{\dagger}f}_{\sigma} = \int_{\mathbb{T}^{d}} f\nabla E \cdot \nabla f \sigma dx + \beta^{-1}\left[\int_{\mathbb{T}^{d}} \nabla f \cdot \nabla f \sigma dx - \int_{\mathbb{T}^{d}}\beta f\nabla E \cdot \nabla f \sigma dx\right]
\end{equation}
\begin{equation}\label{eqn:inner-prod-simple}
\braket{f,\mathcal{-L}^{\dagger}f}_{\sigma} =  \beta^{-1}\int_{\mathbb{T}^{d}} \nabla f \cdot \nabla f \sigma dx
\end{equation}
Hence from \eqref{eqn:gap-L-dagger} and \eqref{eqn:inner-prod-simple}
\begin{equation}
\text{Gap}\left(\mathcal{L}^{\dagger}\right) = \beta^{-1}\inf_{f\notin\ker(\mathcal{L}^{\dagger})}\frac{\int_{\mathbb{T}^{d}} \nabla f \cdot \nabla f \sigma dx}{\text{Var}_{\sigma}(f)}
\end{equation}
Then applying Holley-Stroock perturbation principle \cite{schlichting-cpi-bound} using the uniform distribution on the torus (which satisfies a Poincare inequality with Poincare constant $1$ \cite{cpi_uniform}) as the reference distribution and using $\min E\le E \le \max E$ we obtain:
\begin{equation}
\text{Gap}\left(\mathcal{L}^{\dagger}\right) = \beta^{-1}\inf_{f\notin\ker(\mathcal{L}^{\dagger})}\frac{\int_{\mathbb{T}^{d}} \nabla f \cdot \nabla f \sigma dx}{\text{Var}_{\sigma}(f)} \ge  \beta^{-1}e^{-\beta \Delta}
\end{equation}
which completes the proof.
\end{proof}

\begin{theorem}[Query complexity of annealed Gibbs sampler]
\label{thm:annealing-main}
Let $E:\mathbb{T}^{d}\rightarrow\mathbb{R}$. Let the annealing schedule of \cref{alg:cap} be defined by the sequence of inverse temperatures $\{\beta_{1},\dots,\beta_{\beta/\delta}\}$ with
\begin{equation}
\label{eq:annealing_schedule}
\beta_k = k\delta,
\qquad
k=0,\dots,\beta/\delta,
\qquad
\delta=\frac{4}{\beta \Delta^{2}},
\end{equation}
The total number of queries of \cref{alg:cap} to the block-encoding of each ${\mathbb{W}_N}_{\beta_k}$ is
\begin{equation*}
O\left(\left(
e^{\beta \Delta/2}\,\beta^{2}\Delta^2
\left(N\sqrt{d} + \beta R
\right)
\right)\log{\frac{1}{\tilde{\epsilon}}}\right)
\end{equation*}
where $N$ is the number of discretization points, $R := \max_x \|\nabla E(x)\|$ and $\tilde{\epsilon}$ is the target precision.
\end{theorem}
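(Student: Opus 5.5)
The count is a product: the number of annealing steps in \cref{alg:cap} times the per-step cost of \cref{alg:GbGS}. I would therefore fix the schedule first, then bound the cost of one step at the worst temperature, and multiply.

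\emph{Step 1 (schedule length).} With $\delta=4/(\beta\Delta^2)$ and $\beta_k=k\delta$, reaching $\beta$ takes $\beta/\delta=\beta^2\Delta^2/4$ steps. This matches the loop bound in \cref{alg:cap}, so the number of GbGS calls is $O(\beta^2\Delta^2)$.

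\emph{Step 2 (per-step cost).} At step $k$, the call GbGS$(\beta_k)$ is a QSVTh filter (\cref{prop:sv-filter}) on the $(\gamma_k,3)$-block encoding of $(\mathbb{W}_N)_{\beta_k}$ (\cref{thm:block-encode-A}). Here $\gamma_k=\pi N\sqrt{d/\beta_k}+\sqrt{\beta_k}R/2$, and the thresholds come from \cref{cor:ThresholdChoice}, with gap $\mathfrak{s}_k=\Theta\bigl(\sqrt{e^{-\beta_k\Delta}/\beta_k}\bigr)$. Its correctness rests on two earlier results:
\begin{itemize}
\item the spectral gap of $(\mathbb{H}_N)_{\beta_k}$ is controlled by \cref{lem:least_EValues_of_J} together with \cref{lem:spectral_gap_AN}, at the grid size $N$ fixed by \cref{thm:Overlap_guarantee};
\item the warm-start requirement of GbGS holds at every step by \cref{lemma:overlap-approx}.
\end{itemize}
\Cref{prop:sv-filter} then charges
\begin{equation*}
O\bigl(\gamma_k\mathfrak{s}_k^{-1}\log(1/\tilde\epsilon)\bigr)=O\Bigl(e^{\beta_k\Delta/2}\bigl(\pi N\sqrt d+\beta_kR/2\bigr)\log\tfrac{1}{\tilde\epsilon}\Bigr)
\end{equation*}
queries, because $\gamma_k\sqrt{\beta_k}=\pi N\sqrt d+\beta_kR/2$.

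\emph{Step 3 (summation).} Since $\beta_k\le\beta$, each step costs at most $O\bigl(e^{\beta\Delta/2}(N\sqrt d+\beta R)\log(1/\tilde\epsilon)\bigr)$. Multiplying by $\beta^2\Delta^2/4$ steps gives the claimed bound. One could instead sum the geometric-like series $\sum_k e^{k\delta\Delta/2}$, which would tighten the $\beta^2\Delta^2$ factor, but the crude bound is all the statement asks for. The final Fourier upsampling uses no oracle queries (\cref{cor:dealWithResolutionBoostercor}), so it adds nothing.

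\emph{Main obstacle.} The delicate point is that $N$ must be uniform across steps. The block encoding and the threshold guarantees require $N$ as in \cref{thm:Overlap_guarantee} with $\alpha$ replaced by $e^{\beta_k\Delta}$. I would take $N$ to be the value at the final temperature and use monotonicity in $\alpha$: the class parameters $\xi,\rho$ for $\beta_kE$ are no worse than for $\beta E$, since $[\beta_kE]_{s,\rho}\le[\beta E]_{s,\rho}$ when $\beta_k\le\beta$. A second subtlety is that the per-step precision $\tilde\epsilon$ depends on $\beta$ only through the schedule, so the $\log(1/\tilde\epsilon)$ factor is common to every step and factors out of the sum.
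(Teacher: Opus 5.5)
Your proof is correct, but it reaches the bound by a different route than the paper. The paper does not compute the per-step cost from $\gamma_k/\mathfrak{s}_k$. Instead it imports the per-step count of \cite{LengDingChenLin2025}, $O\bigl((N\sqrt{dC_{\mathrm{PI}}/\beta_k}+R\sqrt{\beta_kC_{\mathrm{PI}}})\log(1/\tilde\epsilon)\bigr)$, which is written in terms of the Poincar\'e constant of the continuous generator. It then replaces every $C_{\mathrm{PI}}(\beta_k)$ by a uniform $C_{\mathrm{PI}}^{\max}$ and evaluates $\sum_k(k\delta)^{-1/2}\approx 2\sqrt\beta/\delta$ and $\sum_k(k\delta)^{1/2}\approx\tfrac23\beta^{3/2}/\delta$. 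Finally it substitutes $\delta$ and bounds $C_{\mathrm{PI}}^{\max}\le\beta e^{\beta\Delta}$ by the Holley--Stroock argument of \cref{lem:spectral-gap-bound}. Because $C_{\mathrm{PI}}^{\max}$ is evaluated at the final temperature, a residual $1/\sqrt{\beta_k}$ survives in the first term, and the explicit fractional-power summation is what controls it.

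You instead build the per-step cost from the paper's own discrete ingredients. These are the normalization $\gamma_k$ of \cref{thm:block-encode-A} and the threshold gap of \cref{cor:ThresholdChoice}, which rests on the proven discrete gap of \cref{lem:spectral_gap_AN}. The $\sqrt{\beta_k}$ factors then cancel exactly, and each step costs $O(e^{\beta_k\Delta/2}(N\sqrt d+\beta_kR)\log(1/\tilde\epsilon))$, which is \cref{thm:JiaqiWithWarmStart} at temperature $\beta_k$. A max-times-count bound finishes the argument.

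Your route buys several things:
\begin{itemize}
\item It never passes through the continuous gap. Relating that gap to the discretized operator is what Assumption~11 of \cite{LengDingChenLin2025} postulates, and the paper claims to remove that assumption.
\item It needs no fractional-power sums.
\item It justifies a single $N$ for all steps via $[\beta_kE]_{s,\rho}=(\beta_k/\beta)[\beta E]_{s,\rho}$ and $e^{\beta_k\Delta}\le\alpha$. The paper leaves this point implicit.
\item As you note, the geometric sum $\sum_k e^{k\delta\Delta/2}=O(\beta\Delta\,e^{\beta\Delta/2})$ saves a factor $\beta\Delta=\log\alpha$ over the stated bound. This refinement is invisible once one uniformizes to $C_{\mathrm{PI}}^{\max}$.
\end{itemize}
The paper's route, in exchange, ties the count directly to the published complexity of \cite{LengDingChenLin2025}, expressed through the continuous Poincar\'e constant.
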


\begin{proof}
For every $\beta_k$ the QSVTh Gibbs sampling subroutine of \cite{LengDingChenLin2025} prepares a state within precision $\tilde{\epsilon}$ using
\begin{equation}
\label{eq:per_step_query_cost}
O\!\left(\left(
N\sqrt{\frac{dC_{\mathrm{PI}}}{\beta_k}}
+
R\sqrt{\beta_k C_{\mathrm{PI}}}
\right)\log{\frac{1}{\tilde{\epsilon}}}\right)
\end{equation}
queries to the block-encoding of ${\mathbb{W}_N}_{\beta_k}$ where $C_{\mathrm{PI}}$ is inverse of the spectral gap of $\mathcal{-L}^{\dagger}$. First define $C_{\mathrm{PI}}^{max} \coloneqq \max_{i}(C_{\mathrm{PI}}(\beta_{i}))$. Then the total query complexity after $\beta/\delta$ steps is in the order of
\begin{equation*}
O\left(\left(N\sqrt{dC_{\mathrm{PI}}^{max}}\sum_{k = 1}^{\beta/\ \delta}(k\delta)^{-1/2} +R\sqrt{C_{\mathrm{PI}}^{max}}\sum_{k = 1}^{\beta/\ \delta}(k\delta)^{1/2}\right)\log{\frac{1}{\tilde{\epsilon}}}\right).
\end{equation*}
This is in turn in the order of
$O\left(\left(N\sqrt{dC_{\mathrm{PI}}^{max}}\frac{\sqrt{\beta}}\delta +R\sqrt{C_{\mathrm{PI}}^{max}}\frac{\beta^{3/2}}\delta\right)\log{\frac{1}{\tilde{\epsilon}}}\right)$
which by choosing $\delta = 2/l\Delta$ turns into:
\begin{equation*}
O\left(\left(N\sqrt{dC_{\mathrm{PI}}^{max}}\sqrt{\beta} l\Delta +R\sqrt{C_{\mathrm{PI}}^{max}}\beta^{3/2}l\Delta\right)\log{\frac{1}{\tilde{\epsilon}}}\right)
\end{equation*}
where $l$ is the dimensionless parameter used to control used to control the success probability. By choosing $l =  \beta \Delta/2$ we obtain the final query complexity of
\begin{equation*}
O\left(\left(N\sqrt{dC_{\mathrm{PI}}^{max}}\beta^{1.5}\Delta^{2} +R\sqrt{C_{\mathrm{PI}}^{max}}\beta^{2.5}\Delta^{2}\right)\log{\frac{1}{\tilde{\epsilon}}}\right)
\end{equation*}
which simplifies to
\begin{equation*}
O\left(
\sqrt{C_{\mathrm{PI}}^{max}}\,\beta^{1.5}\Delta^2
\left(N\sqrt{d}\ + \beta R
\right)\log{\frac{1}{\tilde{\epsilon}}}\right).
\end{equation*}
Additionally, the increasing sequence of $\{\beta_{i}\}_{i=0}^{\beta/\delta}$ implies that the $C_{\mathrm{PI}}^{max}\le \beta e^{\beta \Delta}$ from \cref{lem:spectral-gap-bound} as $C_{\mathrm{PI}}$ is the inverse of the spectral gap of the Fokker-Planck generator. This completes the proof.
\end{proof}

\begin{figure}[t]
\centering
\begin{quantikz}[column sep=0.55cm, row sep=0.3cm]
\lstick{$|0\rangle$}
  & \gate[6][1.2cm]{GbGS(\beta_1)}
  & \meter{}
  & \setwiretype{n}
  &[-0.4cm] \lstick{$|0\rangle$} \setwiretype{q}
  & \gate[6][1.2cm]{GbGS(\beta_2)}
  & \meter{}
  & \setwiretype{n}
  &[0.6cm]
  &[-0.4cm] \lstick{$|0\rangle$}
  \setwiretype{q}
  & \gate[6][1.2cm]{GbGS(\beta_f)}
  & \meter{}
  & \setwiretype{n}
\\
\lstick{$|0\rangle$}
  &
  & \meter{}
  & \setwiretype{n}
  & \lstick{$|0\rangle$} \setwiretype{q}
  &
  & \meter{}
  & \setwiretype{n}
  &
  & \lstick{$|0\rangle$} \setwiretype{q}
  &
  & \meter{}
  & \setwiretype{n}
\\
\lstick{$|0\rangle$}
  &
  & \meter{}
  & \setwiretype{n}
  & \lstick{$|0\rangle$} \setwiretype{q}
  &
  & \meter{}
  & \setwiretype{n}
  &
  & \lstick{$|0\rangle$} \setwiretype{q}
  &
  & \meter{}
  & \setwiretype{n}
\\
\lstick[3]{$|\psi\rangle$}
  &
  & \qw
  & \qw
  &
  & \qw
  & \qw
  & \cdots
  & \qw
  &
  &
  & \meter{}
  & \setwiretype{n}
\\
  &
  & \qw
  & \qw
  &
  & \qw
  & \qw
  & \cdots
  & \qw
  &
  &
  & \meter{}
  & \setwiretype{n}
\\
  &
  & \qw
  & \qw
  &
  & \qw
  & \qw
  & \cdots
  & \qw
  &
  &
  & \meter{}
  & \setwiretype{n}
\end{quantikz}
\caption{Annealing schedule consisting of repeated applications of the QSVTh protocol \cite{LengDingChenLin2025} with the decreasing target inverse temperatures. Each block $F(\beta)$ represents application of the algorithm in
\cref{alg:GbGS} with target temperature $\beta$ and $\beta_f$ represents the target inverse temperature.}
\label{fig:anneal-schedule}
\end{figure}

\begin{theorem}
\label{thm:main-theorem-quantum-app}
Let $E:\mathbb{T}^{d}\rightarrow\mathbb{R}$ be a Gibbs potential in $\mathcal{W}^s(\alpha,\xi,\rho)$ as defined in \cref{def:pot_class} and $p(x)= e^{-\beta E(x)}/Z$ the corresponding Gibbs state.
There exists a quantum algorithm that outputs a random variable $X\sim\eta$ after $N_{q}$ queries to the quantum gradient oracle $O_{\nabla E}$ (or its inverse), such that $\mathrm{TV}(\eta, p) \le \eps$, where
\begin{align}
\begin{aligned}
\label{eqn:final-query_FINAL_appendix}
M=\tilde{O}\left(\sqrt{\alpha}\,\frac{d^{1/2}(d+\xi)^s}{\rho^s} \log^2 \alpha\times\left(\log\frac{d(d+\xi)\alpha^2}{\rho \varepsilon}\right)^s \,\log\frac{1}\epsilon\right)\,.
\end{aligned}
\end{align}
\end{theorem}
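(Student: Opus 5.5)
The plan is to assemble the end-to-end cost from three ingredients already established: the per-step QSVTh cost (\cref{thm:JiaqiWithWarmStart} together with \cref{thm:block-encode-A}), the annealing schedule and its success probability (\cref{thm:annealing-main}, \cref{lem:success-probability}), and the final Fourier upsampling (\cref{cor:dealWithResolutionBoostercor}).

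\emph{Step 1 (choosing $N$).} At every annealing step we run GbGS at precision $\tilde\epsilon = 4\epsilon/(\beta^2\Delta^2)$. Since $\beta\Delta=\log\alpha$, this is $\tilde\epsilon = 4\epsilon/\log^2\alpha$. We fix a single grid size $N$, namely the maximum of the three requirements in \cref{thm:JiaqiWithWarmStart} evaluated at this $\tilde\epsilon$. This is legitimate because, for $\beta_k\le\beta$, the class constants $(\xi,\rho)$ only improve: $[\beta_k E]_{s,\rho}\le(\beta_k/\beta)\xi\le\xi$ and $\alpha_k\le\alpha$. Because $\log\log\alpha$ enters only logarithmically through $\tilde\epsilon$, we get
\begin{equation*}
N=\tilde O\Bigl(\bigl(\tfrac{d+\xi}{\rho}\bigr)^{s}\log^{s}\tfrac{d(d+\xi)\alpha^{2}}{\rho\epsilon}\Bigr),
\end{equation*}
and the $(d+2)s/\rho$ term is dominated, since the logarithm is at least constant.

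\emph{Step 2 (cost per step and total).} Each call to the block encoding of $\mathbb{W}_N$ costs two gradient queries by \cref{thm:block-encode-A}. The per-step QSVTh cost from \cref{cor:ThresholdChoice} and \cref{prop:sv-filter} is $O(\gamma\mathfrak{s}^{-1}\log(1/\tilde\epsilon))$, with $\gamma\mathfrak{s}^{-1}=O(\sqrt{\alpha}(N\sqrt d+\beta R))$. Here $\beta R\le\sqrt d\,\xi/\rho$ by the first-order seminorm bound, so it is absorbed into $N\sqrt d$. Summing over the $T=\beta^2\Delta^2/4=\log^2\alpha/4$ steps, as in \cref{thm:annealing-main} but with $C_{\mathrm{PI}}$ replaced by the $\alpha$-dependence already built into the QSVTh gap of \cref{lem:least_EValues_of_J}, gives
\begin{equation*}
O\bigl(\sqrt{\alpha}\,\log^{2}\alpha\;\sqrt d\,N\,\log(\log^{2}\alpha/\epsilon)\bigr).
\end{equation*}
The factor $\log(1/\tilde\epsilon)=O(\log(1/\epsilon)+\log\log\alpha)$ is then $\tilde O(\log(1/\epsilon))$. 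Substituting $N$ from Step~1 yields \eqref{eqn:final-query_FINAL_appendix}.

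\emph{Step 3 (correctness and success).} By \cref{lem:success-probability} all steps succeed with probability bounded below by a constant ($\to e^{-4\epsilon-1/(2\kappa)}$). So $O(1)$ repetitions, each restarting from the uniform superposition, boost success to $1-\epsilon/2$; this costs only a constant factor. On success, the final state is $\tilde\epsilon$-close to $\ket{\sqrt p_N}$ by \cref{thm:Overlap_guarantee}. Then \cref{cor:dealWithResolutionBoostercor} at a grid size $\max\{N,\cdot\}$ of the same order produces a sample with $\mathrm{TV}\le\epsilon/2$, using no extra queries. The total variation error is therefore at most $\epsilon$ by the triangle inequality, counting a failed run as error mass.

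\emph{Main obstacle.} The delicate part is Step~2's accounting. The warm-start hypothesis of \cref{thm:JiaqiWithWarmStart} must hold uniformly along the schedule. \cref{lemma:overlap-approx} supplies overlap $1-O(1/\log^2\alpha)-O(\tilde\epsilon)$, so the condition $|\langle\phi|\sqrt p\rangle|=\Omega(1)$ is met. The threshold $\mathfrak{s}$ must also use the step-dependent $\alpha_k\le\alpha$ and $\beta_k$. I would handle this by running every filter with the worst-case threshold built from $\alpha$ and $\beta$. This is valid because the gap at step $k$, $\alpha_k^{-1}/\beta_k$, only dominates it from above, provided $\beta_k\le\beta$ and $\beta\ge1$; for small $\beta$ one rescales, using the gauge invariance of $(\xi,\rho)$. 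This keeps the per-step cost at $\sqrt{\alpha}$ and avoids the extra $\beta^{3/2}$ factors of \cref{thm:annealing-main}.
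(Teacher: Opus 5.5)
Your Steps 1--2 reproduce the paper's accounting. You take $N$ from \cref{thm:JiaqiWithWarmStart} at $\tilde\epsilon=4\epsilon/\log^2\alpha$, a per-step cost $O(\sqrt{\alpha}(N\sqrt d+\beta R)\log(1/\tilde\epsilon))$, multiply by $\log^2\alpha/4$ steps, then use $\beta R\le\sqrt d\,\xi/\rho$. The paper obtains the per-step factor through \cref{thm:annealing-main}, i.e.\ the continuous bound $C_{\mathrm{PI}}\le\beta e^{\beta\Delta}$ of \cref{lem:spectral-gap-bound} summed over the schedule. You instead read it off $\gamma_k/\mathfrak{s}_k=O(\sqrt{\alpha_k}(N\sqrt d+\beta_k R))$ from the discrete gap of \cref{lem:least_EValues_of_J}, which is the gap the filter actually uses; the totals agree.

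The genuine gap is in Step~3. One annealing run succeeds with probability tending to $e^{-4\epsilon-1/(2\kappa)}$, a constant $p_0<0.76$. $O(1)$ independent restarts cannot raise this to $1-\epsilon/2$: you need $(1-p_0)^m\le\epsilon/2$, i.e.\ $m=\Theta(\log(1/\epsilon))$ runs. You count failures as total-variation mass, so your argument proves the bound only with an extra multiplicative $\log(1/\epsilon)$. That gives $\log^{s}(\cdot)\log^{2}(1/\epsilon)$, which the statement does not contain; the $\tilde O$ only suppresses logarithms of the leading term.

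The paper's proof does not boost at all; it takes the constant success probability of \cref{lem:success-probability} as given. What makes the stated bound consistent is that failure is heralded by the ancilla measurements of each QSVTh step. One therefore reruns from the uniform superposition until success. The output law is then exactly the single-run law conditioned on success, which is $\epsilon$-close to $p$ by your Steps~1--2 and \cref{cor:dealWithResolutionBoostercor}. The number of runs is geometric with mean $1/p_0=O(1)$, so the bound holds as an expected query count. Either adopt that reading or accept the extra $\log(1/\epsilon)$.

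A smaller point concerns your \emph{main obstacle} paragraph. A single threshold built from $(\alpha,\beta)$ also needs $\sigma_{\min}(\mathbb{W}_N^{(k)})\le\mathfrak{s}_1=(2\alpha\beta)^{-1/2}$ at every step. \cref{thm:Overlap_guarantee} at step $k$ only guarantees $\sigma_{\min}(\mathbb{H}_N^{(k)})<\tilde\epsilon/(2\alpha_k\beta_k)$, which exceeds $1/(2\alpha\beta)$ at early steps unless $\epsilon\lesssim1/\alpha$. So $N$ would have to be tightened, by a factor of order $\alpha$ inside the logarithm. With that threshold the per-step cost also carries a factor $\sqrt{\beta/\beta_k}$, which is rescued only in the sum by $\sum_{k\le T}\sqrt{T/k}\le 2T$. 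The step-dependent threshold of GbGS$(\beta_k)$, which your Step~2 already uses, avoids all of this. The caveat $\beta\ge1$ is unnecessary, since $\alpha_k\beta_k\le\alpha\beta$ whenever $\beta_k\le\beta$.
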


\begin{proof}
From \cref{thm:annealing-main} we have the runtime in terms of queries to the block encoding ${\mathbb{W}_N}_N.$ According to \cref{thm:block-encode-A}, only two queries to the oracle $O_{\nabla E}$ are used in the block-encoding. Moreover, we can choose $N$ according to \cref{lem:success-probability} and \cref{fig:anneal-schedule} demonstrates the full annealing protocol. Note that $\log(1/\tilde\epsilon) = \log(\beta^{2}\Delta^{2}/\epsilon) = \log(\log^{2}\alpha/\epsilon)$ where $\log^{2}\alpha$ can be dropped in the $\tilde{O}$ notation. Additionally using \cref{thm:JiaqiWithWarmStart} we can choose $N$ such that the final complexity is (where $R := \max_x \|\nabla E(x)\|$ and $\alpha = e^{\beta\Delta}$):
\begin{equation}
N_{q} = \tilde{O}\left(\sqrt{\alpha}\left(\beta^{2}\Delta^{2}\left( N\sqrt{d}+ \beta R\right)\right)\log(1/\epsilon)\right),
\end{equation}
for
\begin{align}
N \in \max \left \{ \tilde\Omega \left(\frac{d+\xi}{\rho } \log \frac{d(d+\xi)\alpha^2\beta^{2}\Delta^{2}}{\rho \epsilon}\right)^s,\,\tilde\Omega \left(\frac{(d+ 2)s}{\rho }\right)^s,\,\tilde\Omega   \left(\frac{d+\xi}\rho\log \frac{ ds(d+\xi)\alpha\beta \Delta}{\rho \sqrt{\epsilon} }\right)^{s}\right\}\,.
\end{align}
The first term of $N$ grows fastest due to $\alpha^2\beta^{2}\Delta^{2}$ for $s < \alpha\beta \Delta$ and thus, the total number of queries required is:
\begin{equation}
\label{eqn:quantum-query-appendix}
N_{q} = \tilde{O}\left(\sqrt{\alpha}\left(\beta^{2}\Delta^{2}\left( \left(\frac{d+\xi}{\rho } \log \frac{d(d+\xi)\alpha^2\beta^{2}\Delta^{2}}{\rho  \epsilon}\right)^s\sqrt{d}+ \beta R\right)\right)\log(1/\epsilon)\right)
\end{equation}
Using class membership \eqref{eq:pot_class},
\begin{equation}
\label{eq:R_and_betaDelta_bound}
R = \max_{x \in \mathbb{T}^{d}}\Vert{\nabla E} (x)\Vert\leq \sqrt{d}\xi/\beta\rho \,,\;\;\; \beta \Delta = \log \alpha\,,
\end{equation}
the upper bound \eqref{eqn:quantum-query-appendix} becomes
\begin{align}
\begin{aligned}
\label{eqn:final-query_FINAL}
    N_{q} &=   \tilde{O}\left(\sqrt{d \alpha}\log^2 \alpha \left( \left(\frac{d+\xi}\rho \right)^s \log^s \frac{d(d+\xi)\alpha^2\log^2 \alpha}{\rho \varepsilon}+\frac{\xi}\rho \right)\log(1/\epsilon)\right)\\&=\tilde{O}\left(\sqrt{d \alpha}\log^2 \alpha \left( \left(\frac{d+\xi}\rho \right)^s \log^s \frac{d(d+\xi)\alpha^2}{\rho \varepsilon}+\frac{\xi}\rho \right)\log(1/\epsilon)\right)\\&=\tilde{O}\left(\sqrt{d \alpha}\log^2 \alpha \, \left(\frac{d+\xi}\rho \right)^s \log^s \frac{d(d+\xi)\alpha^2}{\rho \varepsilon}\log(1/\epsilon)\right)
 \end{aligned}
\end{align}
Hence we get:
\begin{align}
\begin{aligned}
N_{q}=\tilde{O}\left(\sqrt{\alpha}\,\frac{d^{1/2}(d+\xi)^s}{\rho^s} \log^2 \alpha\times\left(\log\frac{d(d+\xi)\alpha^2}{\rho \varepsilon}\right)^s \,\log(1/\epsilon)\right)\,.
\end{aligned}
\end{align}
\end{proof}

\end{document}